    \newtheorem{theorem}{Proposition}[section]
    \newtheorem{corollary}{Corollary}[theorem]
    \newtheorem{lemma}[theorem]{Lemma}
    \newtheorem{remark}{Remark}[section]
    \theoremstyle{definition}
    \newtheorem{definition}{Definition}[section]
\date{}
\begin{document}

\title{n Distinguishable Particles interacting via Two-Body Delta Potentials in One Spatial Dimension}
\author{Antonio Moscato\footnote{antonio\_moscato@ymail.com}}
\maketitle

\justifying

\begin{abstract}
    \justifying
    \noindent This paper studies a system of $n \in \mathbb{N}: \, n \geq 2$ non-relativistic, spinless quantum particles moving on the real line and interacting via a two-body delta potential. The Hamiltonian of such a system is proved to be affiliated to the resolvent algebra of the case, $\mathcal{R}\left( \mathbb{R}^{2n},\sigma \right)$; it is further shown the existence of a $\text{C}^{\ast}-$dynamical system and of a subalgebra $\pi_S\left( \mathfrak{S}_0 \right)^{-1} \subset \mathcal{R}\left( \mathbb{R}^{2n},\sigma \right)$, stable under time evolution, where $\pi_S$ is the Schr{\"o}dinger representation of the resolvent algebra.
\end{abstract}

\section{Introduction}
\justifying
The system investigated is made up of $n \in \mathbb{N}: \, n \geq 2$ distinguishable particles, interacting via a two-body delta potential in one spatial dimension. The symbolic Hamiltonian governing the system is

\begin{equation}\label{Symbolic Hamiltonian two-body}
    H = -\sum_{i=1}^n \, \frac{1}{2m_i} \frac{\partial^2}{\partial x_i^2} - g\sum_{1 \leq i < j \leq n} \, \delta\left( x_i - x_j \right) \equiv H_0 - g\sum_{1 \leq i < j \leq n} \, \delta\left( x_i - x_j \right),
\end{equation}

\noindent where $g \in \mathbb{R} \setminus \{0\}$ is the coupling constant, $m_i \in \mathbb{R} \setminus \{0\}$ the mass of the $i^{\text{th}}-$particle, $\left(H_0,\mathcal{D}_{H_0}\right)$ the free Hamiltonian. Purpose of the paper is showing that (\ref{Symbolic Hamiltonian two-body}) is affiliated to $\mathcal{R}\left( \mathbb{R}^{2n},\sigma \right)$, the resolvent algebra on the symplectic space $\left( \mathbb{R}^{2n}, \sigma \right)$. Such a $\text{C}^{\ast}-$algebraic formalism was introduced by D. Buchholz and H. Grundling in \cite{01_BG} and has proved useful, since then, for both the finite and the infinite dimensional quantum mechanical modeling cases (see \cite{01_BG}, \cite{02_B}, \cite{03_B}, \cite{04_BG}, \cite{06_M}). The adopted strategy to prove the announced purpose is briefly sketched: given an even smooth function of compact support $v$\footnote{It does not harm generality assuming $\int_{\mathbb{R}} \, v^2 = 1$.}, said $V \doteq v^2$ and
\begin{equation*}
    V_\epsilon : \, x \in \mathbb{R} \longmapsto V_{\epsilon}\left( x \right) \doteq \frac{1}{\epsilon} V\left( \frac{x}{\epsilon} \right) \in \mathbb{R}, \quad \epsilon > 0,
\end{equation*}

\noindent the Schr{\"o}dinger Hamiltonians
\begin{equation}\label{Approximating Hamiltonians}
    H_\epsilon = - \sum_{i=1}^n \, \frac{1}{2m_i} \frac{\partial^2}{\partial x_i^2} - g \sum_{1 \leq i < j \leq n} \, V_\epsilon \left( x_i - x_j \right) \equiv H_0 - g \sum_{1 \leq i < j \leq n} \, V_\epsilon^{(ij)}, \quad \epsilon > 0,
\end{equation}

\noindent self-adjoint on $\mathcal{D}_{H_0}$, are considered. \cite{04_BG}, prop. 4.1 guarantees that $R_{H_{\epsilon}}\left( z \right) \in \pi_S\left[ \mathcal{R}\left( \mathbb{R}^{2n},\sigma \right) \right]$ for all $\epsilon > 0, \; z \in i\mathbb{R}\setminus\{0\}$, where $\pi_S$ is the Schr{\"o}dinger representation of the resolvent algebra. the result then follows from $H_{\epsilon}$ converging to $H$ in the norm resolvent sense. Starting from section 2, what required to prove the result is exposed. $\hfill \square$

\vfill
%%%%%%%%%%%%%%%%%%%%%%%%%%%%%%%%%%%%%%%%%%%%%%%%%%%%%%%%%%%%%%%%

\section{Preliminaries}
\justifying

\begin{remark}
    Given $i,j \in \{1,\ldots,n\}$ such that $i<j$, the coordinate transformation $\vec{f}_{\left(ij\right)}: \textbf{x} \in \mathbb{R}^n \longmapsto \vec{f}_{\left(ij\right)}\left( \textbf{x} \right) \equiv \left(r_{(ij)}, R_{(ij)}, y_1, \ldots, \hat{y}_i, \ldots, \hat{y}_{j}, \ldots, y_n \right) \in \mathbb{R}^n$, (hats represent omission) where
    \begin{equation}\label{Coordinate Transformation}
        \vec{f}\left( \textbf{x} \right) = 
        \begin{cases}
            R_{(ij)} & = \frac{m_i x_i + m_j x_j}{m_i + m_j} \\
            r_{(ij)} & = x_i - x_j \\
            y_k & = x_k, \quad k \neq i,j
        \end{cases},
    \end{equation}
    \noindent is considered. The corresponding jacobian is identically equal to $1$, as can be easily verified, and the inverse transformation is
    \begin{align*}
        \vec{f}_{\left(ij\right)}^{-1}: \, \left(r_{(ij)}, R_{(ij)}, y_1, \ldots, \hat{y}_i, \ldots, \hat{y}_{j}, \ldots, y_n \right) \in \mathbb{R}^{n} \longmapsto & \vec{f}_{\left(ij\right)}^{-1} \left(r_{(ij)}, R_{(ij)}, y_1, \ldots, \hat{y}_i, \ldots, \hat{y}_{j}, \ldots, y_n \right) \\
        & \equiv \left( x_1, \ldots, x_n \right) \in \mathbb{R}^n
    \end{align*}
    
    \noindent where
    
    \begin{equation}
        \vec{f}_{\left(ij\right)}^{-1} \left(r_{(ij)}, R_{(ij)}, y_1, \ldots, \hat{y}_i, \ldots, \hat{y}_{j}, \ldots, y_n \right) = 
        \begin{cases}
            x_i = R_{(ij)} + \frac{m_j}{m_i + m_j} r_{(ij)} \\
            x_j = R_{(ij)} - \frac{m_i}{m_i + m_j} r_{(ij)} \\
            x_k = y_k, \quad k \neq i,j
        \end{cases}.
    \end{equation} $\hfill \square$
\end{remark}

\begin{definition}
    \justifying
    Given (\ref{Coordinate Transformation}), by introducing the Hilbert space
    \begin{equation*}
        \chi_{(ij)} \doteq L^2 \left( \mathbb{R}^n, \, dr_{(ij)}dR_{(ij)} dy_1 \cdots \widehat{dy_i} \cdots \widehat{dy_j} \cdots dy_n \right),
    \end{equation*}
    the (unitary) operator implementing (\ref{Coordinate Transformation}) is
    \begin{equation}\label{Coordinate transformation operator}
        U_{(ij)}: \, \psi \in L^2 \left( \mathbb{R}^n, dx_1 \cdots dx_i \cdots dx_j \cdots dx_n \right) \longmapsto U_{(ij)}\psi \doteq \psi \circ \vec{f}_{\left(ij\right)}^{-1} \in  \chi_{(ij)}.
    \end{equation}
    $\hfill \square$
\end{definition}

\begin{definition}
    \justifying
    Given $i,j \in \{1,\ldots,n\}: \, i<j, \; \epsilon > 0$, set $\underline{Y}_{(ij)} = \left( R_{(ij)}, y_1, \ldots, \hat{y}_i, \ldots, \hat{y}_j, \ldots, y_n \right) \in \mathbb{R}^{n-1}$, the scaling operator
    \begin{equation*}
        U_\epsilon^{(ij)}: \, \Tilde{\psi} \in \chi_{(ij)} \longmapsto U_\epsilon^{(ij)}\tilde{\psi} \in \chi_{(ij)} 
    \end{equation*}
    is introduced, where, should $\tilde{\psi}$ be continuous, $\left( U_\epsilon^{(ij)}\tilde{\psi} \right)\left(r_{(ij)}, \underline{Y}_{(ij)}\right) \doteq \sqrt{\epsilon} \Tilde{\psi} \left( \epsilon r_{(ij)}, \, \underline{Y}_{(ij)} \right)$. $\hfill \square$
\end{definition}

\begin{remark}
    \justifying
    By introducing the Hilbert space
    \begin{equation*}
        \chi_{(ij)}^{(red)} \doteq L^2\left( \mathbb{R}^{n-1}, \, dR_{(ij)} dy_1 \cdots \widehat{dy_i} \cdots \widehat{dy_j} \cdots dy_n \right),
    \end{equation*}
    \noindent on the one hand, $\chi_{(ij)} = L^2\left( \mathbb{R}, dr_{(ij)} \right) \otimes \chi_{(ij)}^{(red)}$, on the other hand, $U_{\epsilon}^{(ij)} \equiv u_{\epsilon}^{(ij)} \otimes \mathds{1}$, where
    \begin{equation}\label{Little scaling operator}
        u_{\epsilon}^{(ij)}: \, \varphi \in L^2\left( \mathbb{R},\ dr_{(ij)} \right) \longmapsto u_{\epsilon}^{(ij)}\varphi \in L^2\left( \mathbb{R}, dr_{(ij)} \right)
    \end{equation}
    \noindent and $\left(u_{\epsilon}^{(ij)}\varphi\right) \left(r_{(ij)}\right) = \sqrt{\epsilon}\varphi\left( \epsilon r_{(ij)} \right)$, should $\varphi$ be a continuous function. $u^{(ij)}_{\epsilon}$ is well-defined and unitary because of
    \begin{align*}
        \int_{\mathbb{R}} \, \abs{\left( u_{\epsilon}^{(ij)} \varphi \right) \left( r_{(ij)} \right)}^2 dr_{(ij)} \equiv \int_{\mathbb{R}} \, \epsilon \abs{\varphi \left(\epsilon r_{(ij)} \right)}^2 dr_{(ij)} = \left( \tilde{r}_{(ij)} = \epsilon r_{(ij)} \right) = \int_{\mathbb{R}} \, \abs{\varphi \left( \tilde{r}_{(ij)} \right)}^2 d\tilde{r}_{(ij)} \equiv \norm{\varphi}_2^2.
    \end{align*}
    $\hfill \square$
\end{remark}

\begin{definition}
    \justifying
    Let $v \in C_{0}^{\infty}\left( \mathbb{R} \right)$ be\footnote{The same letter will be used to denote the corresponding multiplication operator on $L^2\left( \mathbb{R},dr_{(ij)} \right)$.} even and such that $\int_{\mathbb{R}} v^2 = 1$. For all $i,j \in \{1,\ldots,n\}: \, i<j, \, \epsilon >0$, the bounded linear operator
    \begin{equation*}
        A_{\epsilon}^{(ij)} \doteq \left( v \otimes \mathds{1} \right) \frac{U_{\epsilon}^{(ij)}}{\sqrt{\epsilon}} U_{(ij)} \equiv \left(\frac{vu_{\epsilon}^{(ij)}}{\sqrt{\epsilon}} \otimes \mathds{1} \right) U_{(ij)}: \, L^2\left( \mathbb{R}^n, dx_1 \cdots dx_n \right) \longrightarrow \chi_{(ij)},
    \end{equation*}
    is introduced. $\hfill \square$
\end{definition}

\begin{remark}
    \justifying
    \begin{itemize}
        \item For all $i,j \in \{1,\ldots,n\}: \, i<j$, $\epsilon > 0$, $V_{\epsilon}^{(ij)} = A_{\epsilon}^{(ij) \, \ast} A_{\epsilon}^{(ij)}$ and
        \begin{equation}\label{Approximating Factorized Hamiltonian}
            H_{\epsilon} = H_0 - g \sum_{1 \leq i < j \leq n} \, V_{\epsilon}^{(ij)} \equiv H_0 - g \sum_{1 \leq i < j \leq n} \, A_{\epsilon}^{(ij) \, \ast} A_{\epsilon}^{(ij)}.
        \end{equation}
        \item From now on, interacting pairs $(ij)$ will be denoted by a greek index $\sigma, \nu, \ldots$, varying in $\mathcal{I}$; clearly, $\abs{\mathcal{I}} = \left( \begin{matrix}
            n\\
            2
        \end{matrix} \right) = $ number of interacting pairs.
    \end{itemize}
    $\hfill \square$
\end{remark}

\begin{definition}
    \justifying
    Let the Hilbert space $\chi = \underset{\sigma}{\bigoplus} \, \chi_{\sigma}$ be. Given $\epsilon > 0$, the bounded operator
    \begin{equation*}
        A_\epsilon: L^2\left( \mathbb{R}^n, \, dx_1 \cdots dx_n \right) \longrightarrow \chi
    \end{equation*}
    is defined, where
    \begin{equation*}
        A_{\epsilon} \psi \doteq \left( A_{\epsilon}^{\sigma_1} \psi, \ldots, A_{\epsilon}^{\sigma_{\abs{ \mathcal{I} }}} \psi \right)
    \end{equation*}
    for all $\psi \in L^2\left( \mathbb{R}^n, \, dx_1 \cdots dx_n \right)$. $\hfill \square$
\end{definition}

\begin{remark}
    \justifying
    For all $\epsilon > 0$, the foregoing definition allows for $g \sum_{\sigma} \, V_{\epsilon}^{\sigma}$ to be equal to $g A_{\epsilon}^{\ast} A_{\epsilon}$, hence $H_{\epsilon} = H_0 - gA_{\epsilon}^{\ast} A_{\epsilon}$. Therefore (see App. 1)
    \begin{equation}\label{epsilon Konno-Kuroda}
        \left( H_{\epsilon} - z \mathds{1} \right)^{-1} \equiv R_{H_{\epsilon}}\left( z \right) = R_{H_0}\left( z \right) + g \sum_{\sigma, \nu} \, \left[ A_{\epsilon}^{\sigma}R_{H_0}\left( \overline{z} \right) \right]^{\ast} \left[ \Lambda_{\epsilon}\left( z \right)^{-1} \right]_{\sigma \nu} \left[ A_{\epsilon}^{\nu}R_{H_0}\left( z \right) \right],
    \end{equation}
    for all $z \in \rho\left( H_{\epsilon} \right) \cap \rho \left( H_0 \right)$, where $\left[ \Lambda_{\epsilon}\left( z \right) \right]_{\sigma \nu} = \delta_{\sigma \nu} - g A_{\epsilon}^{\sigma} R_{H_0} \left( z \right) A_{\epsilon}^{\nu \, \ast} \in \mathfrak{B}\left(\chi_{\nu}, \, \chi_{\sigma} \right)$, $\sigma, \nu \in \mathcal{I}$. The entire analysis is based on the $\epsilon \downarrow 0$ behaviour of such a formula. $\hfill \square$
\end{remark}

%%%%%%%%%%%%%%%%%%%%%%%%%%%%%%%%%%%%%%%%%%%%%%%%%%%%%%%%%%%%%%%%

\section{The Limit of $A_{\epsilon}^{\sigma} R_{H_0}\left( z \right), \, z<0 $}
\justifying

\begin{remark}
    \justifying
    Given $U_\sigma$ as in (\ref{Coordinate transformation operator}), $U_{\sigma} H_0 = H_0^{\sigma} U_{\sigma}$ on $\mathcal{D}_{H_0}$, where
    \begin{equation*}
        H_0^{\sigma} = -\frac{1}{2\mu_{\sigma}} \frac{\partial^2}{\partial r_{\sigma}^2} - \frac{1}{2M_{\sigma}} \frac{\partial^2}{\partial R_{\sigma}^2} - \sum_{\underset{k\neq i,j}{k=1}}^n \, \frac{1}{2m_k} \frac{\partial^2}{\partial x_k^2},
    \end{equation*}
    \noindent hence
    \begin{equation*}
        U_{\sigma} \left( H_0 - z\mathds{1} \right)^{-1} = \left( H_0^{\sigma} - z\mathds{1} \right)^{-1} U_{\sigma}, \quad z \in \rho\left( H_0 \right) \equiv \rho \left( H_0^{\sigma} \right)
    \end{equation*}
    \noindent and
    \begin{align*}
        A_{\epsilon}^{\sigma} \left(H_0 - z\mathds{1} \right)^{-1} & = \left( v \otimes \mathds{1} \right) \frac{U_{\epsilon}^{\sigma}}{\sqrt{\epsilon}} U_{\sigma} \left( H_0 - z\mathds{1} \right)^{-1} = \left( v \otimes \mathds{1} \right) \frac{U_{\epsilon}^{\sigma}}{\sqrt{\epsilon}} \left( H_0^{\sigma} - z\mathds{1} \right)^{-1} U_{\sigma} = \\
        & = T_{\epsilon}^{\sigma} \left( z \right) U_{\sigma}
    \end{align*}
    \noindent collecting both the $z$ and $\epsilon$ dependence. Moreover, since $A_{\epsilon}^{\sigma} R_{H_0}\left( z \right) \in \mathfrak{B}\left(L^2\left( \mathbb{R}^n\right), \, \chi_{\sigma} \right)$, $T_{\epsilon}^{\sigma} \left( z \right) \in \mathfrak{B}\left( \chi_{\sigma} \right)$ for all $z \in \rho\left( H_0 \right)$.  $\hfill \square$
\end{remark}

\begin{definition}
        Denoted by $\mathfrak{F}_{\underline{Y}_{\sigma}}$ the Fourier operator on $\chi_{\sigma}^{(red)}$, the bounded operator
        \begin{align*}
            \left( \mathds{1} \otimes \mathfrak{F}_{\underline{Y}_{\sigma}} \right) T_{\epsilon}^{\sigma} \left( z \right) \left( \mathds{1} \otimes \mathfrak{F}_{\underline{Y}_{\sigma}}^{-1} \right) \doteq T_{\epsilon, \, \underline{P}_{\sigma}}^{\sigma} \left(z \right): L^2\left( \mathbb{R}, dr_{\sigma} \right) \otimes \Tilde{\chi}_{\sigma}^{(red)} \rightarrow L^2\left( \mathbb{R}, dr_{\sigma} \right) \otimes \Tilde{\chi}_{\sigma}^{(red)}
        \end{align*}
    is introduced, where $\Tilde{\chi}_{\sigma}^{(red)} = \mathfrak{F}_{\underline{Y}_{\sigma}} \, \chi_{\sigma}^{(red)}$. $\hfill \square$
\end{definition}

\begin{remark}
    \justifying
    By definition,
        \begin{align*}
            T_{\epsilon, \, \underline{P}_{\sigma}}^{\sigma} \left(z \right) & \equiv \left( \frac{vu_{\epsilon}^{\sigma}}{\sqrt{\epsilon}} \otimes \mathds{1} \right) \left( \mathds{1} \otimes \mathfrak{F}_{\underline{Y}_{\sigma}} \right) \left( H_0^{\sigma} - z\mathds{1} \right)^{-1} \left( \mathds{1} \otimes \mathfrak{F}_{\underline{Y}_{\sigma}}^{-1} \right) \equiv \\ & \equiv \left( 2\mu_{\sigma} \right) \left(\frac{vu_{\epsilon}^{\sigma}}{\sqrt{\epsilon}} \otimes \mathds{1} \right)  \left[ - \frac{\partial^2}{\partial r_{\sigma}^2} - \left( 2\mu_{\sigma} \right) \left( z - Q_{\sigma} \right) \mathds{1} \right]^{-1},
    \end{align*}
    \noindent where $Q_{\sigma} = \frac{P_\sigma^2}{2M_{\sigma}} + \sum_{\underset{k\neq i,j}{k=1}}^{n} \frac{p_k^2}{2m_k}$. In particular, for all $\psi \in L^2\left( \mathbb{R}, dr_{\sigma} \right) \otimes \tilde{\chi}_{\sigma}^{(red)}$,
    \begin{equation*}
        \left[ T_{\epsilon, \, \underline{P}_{\sigma}}^{\sigma} \left(z \right) \psi \right] \left( r_{\sigma}, \underline{P}_{\sigma} \right) = (2\mu_{\sigma})v(r_{\sigma}) \int_{\mathbb{R}} \, G^{(1)}_{\left( 2\mu_{\sigma} \right) \left( z - Q_{\sigma} \right)} \left( \epsilon r_{\sigma} - r^{\prime}_{\sigma}  \right) \psi\left(r_{\sigma}^{\prime}, \underline{P}_{\sigma} \right) dr_{\sigma}^{\prime},
    \end{equation*}
    \noindent i.e. on $\Tilde{\chi}_{\sigma}^{(red)}$, it behaves as a multiplication operator, while, on $L^2\left( \mathbb{R}, dr_{\sigma} \right)$, as an integral operator with kernel
    \begin{equation}
        \left( 2\mu_{\sigma} \right) v\left( r_{\sigma} \right) G^{(1)}_{ \left( 2\mu_{\sigma} \right) \left( z - Q_{\sigma} \right)} \left( \epsilon r_{\sigma} - r_{\sigma}^{\prime} \right).
    \end{equation}
    $\hfill \square$
\end{remark}

\begin{definition}
    Given $\sigma \in \mathcal{I}$, $z<0$, let $T_{0, \underline{P}_{\sigma}}^{\sigma} \left( z \right): \, \psi \in L^2\left( \mathbb{R}, dr_{\sigma} \right) \otimes \Tilde{\chi}_{\sigma}^{(red)} \longmapsto T_{0, \underline{P}_{\sigma}}^{\sigma} \left( z \right) \psi \in L^2\left( \mathbb{R}, dr_{\sigma} \right) \otimes \Tilde{\chi}_{\sigma}^{(red)}$ be such that
    \begin{equation*}
        \left[ T_{0, \, \underline{P}_{\sigma}}^{\sigma} \left(z \right) \psi \right] \left( r_{\sigma}, \underline{P}_{\sigma} \right) \doteq (2\mu_{\sigma})v(r_{\sigma}) \int_{\mathbb{R}} \, G^{(1)}_{ \left( 2\mu_{\sigma} \right) \left( z - Q_{\sigma} \right)} \left( - r^{\prime}_{\sigma}  \right) \psi\left(r_{\sigma}^{\prime}, \underline{P}_{\sigma} \right) dr_{\sigma}^{\prime},
    \end{equation*}
    with $Q_{\sigma}$ as above. Correspondingly
    \begin{equation}
        \left( \mathds{1} \otimes \mathfrak{F}_{\underline{Y}_{\sigma}}^{-1} \right) T_{0,\underline{P}_{\sigma}}^{\sigma} \left( z \right) \left( \mathds{1} \otimes \mathfrak{F}_{\underline{Y}_{\sigma}} \right) \doteq T_{0}^{\sigma}\left( z \right): \, L^2\left( \mathbb{R}, dr_{\sigma} \right) \otimes \chi_{\sigma}^{(red)} \longrightarrow L^2\left( \mathbb{R}, dr_{\sigma} \right) \otimes \chi_{\sigma}^{(red)}
    \end{equation}
    is introduced. $\hfill \square$
\end{definition}

\begin{lemma}\label{Limit of AR}
    Let $\sigma \in \mathcal{I}, \, z<0$ be. Then
    \begin{equation*}
        \underset{\epsilon \downarrow 0}{\lim} \, \norm{T_{\epsilon}^{\sigma} \left( z \right) - T_{0}^{\sigma} \left( z \right)}_{\mathfrak{B}\left( \chi_{\sigma} \right)} = 0.
    \end{equation*}
\end{lemma}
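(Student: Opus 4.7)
The plan is to pass to the partially Fourier--transformed picture and exploit its direct--integral structure. Since $\mathds{1}\otimes\mathfrak{F}_{\underline{Y}_{\sigma}}$ is unitary,
$$\|T_{\epsilon}^{\sigma}(z) - T_{0}^{\sigma}(z)\|_{\mathfrak{B}(\chi_\sigma)} = \|T^\sigma_{\epsilon, \underline{P}_\sigma}(z) - T^\sigma_{0, \underline{P}_\sigma}(z)\|_{\mathfrak{B}(L^2(\mathbb{R}, dr_\sigma) \otimes \widetilde{\chi}^{(red)}_\sigma)},$$
and the right--hand operator is decomposable with respect to $L^2(\mathbb{R}, dr_\sigma)\otimes \widetilde{\chi}^{(red)}_\sigma \simeq \int^{\oplus}_{\mathbb{R}^{n-1}} L^2(\mathbb{R}, dr_\sigma)\, d\underline{P}_\sigma$, since the entire $\underline{P}_\sigma$--dependence of its kernel is multiplicative through $Q_\sigma$. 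Hence its norm equals the essential supremum over $\underline{P}_\sigma\in\mathbb{R}^{n-1}$ of the fibrewise operator norms on $L^2(\mathbb{R}, dr_\sigma)$, and it is enough to exhibit an upper bound on those fibrewise norms which is uniform in $\underline{P}_\sigma$ and vanishes as $\epsilon\downarrow 0$.

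\textbf{Fibrewise estimate.} On each fibre, the difference acts as an integral operator with kernel
$$\Delta K_\epsilon(r_\sigma, r_\sigma'; \underline{P}_\sigma) = (2\mu_\sigma)v(r_\sigma)\Bigl[G^{(1)}_{2\mu_\sigma(z-Q_\sigma)}(\epsilon r_\sigma - r_\sigma') - G^{(1)}_{2\mu_\sigma(z-Q_\sigma)}(-r_\sigma')\Bigr].$$
Since $z<0$, $Q_\sigma\ge 0$ and, assuming positive reduced masses, $\mu_\sigma>0$, I would plug in the explicit form $G^{(1)}_{-\alpha^2}(x) = (2\alpha)^{-1}e^{-\alpha|x|}$ with $\alpha(\underline{P}_\sigma):=\sqrt{-2\mu_\sigma(z-Q_\sigma)} \ge \alpha_0:=\sqrt{-2\mu_\sigma z}>0$, dominate the operator norm by the Hilbert--Schmidt norm, and reduce the inner $r_\sigma'$--integral, via the rescaling $u=\alpha r_\sigma'$, to the closed--form identity $\int_{\mathbb{R}}\bigl(e^{-|b-u|}-e^{-|u|}\bigr)^2\,du = 2\bigl[1-e^{-b}(1+b)\bigr]$ for $b=\alpha\epsilon|r_\sigma|\ge 0$, obtaining
$$\|\Delta K_\epsilon(\cdot,\cdot;\underline{P}_\sigma)\|^2_{\mathrm{HS}} = \frac{2\mu_\sigma^2}{\alpha^3}\int_{\mathbb{R}} v(r_\sigma)^2\bigl[1 - e^{-\alpha\epsilon|r_\sigma|}(1+\alpha\epsilon|r_\sigma|)\bigr]\,dr_\sigma.$$

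\textbf{Uniform control and main obstacle.} Uniformity in $\underline{P}_\sigma$ is the crux, and is achieved by splitting on the size of $\alpha$. Fix $L>0$ with $\mathrm{supp}\,v\subset[-L,L]$: when $\alpha\le(\epsilon L)^{-1}$, the elementary inequality $1-e^{-x}(1+x)\le x^2/2$ on $[0,1]$ together with $\alpha\ge\alpha_0$ gives $\|\Delta K_\epsilon\|^2_{\mathrm{HS}}\le \mu_\sigma^2\epsilon^2 L^2\|v\|_2^2/\alpha_0$; when $\alpha>(\epsilon L)^{-1}$, the trivial bound $1-e^{-x}(1+x)\le 1$ and $\alpha^{-3}<(\epsilon L)^3$ give $\|\Delta K_\epsilon\|^2_{\mathrm{HS}}\le 2\mu_\sigma^2\epsilon^3 L^3\|v\|_2^2$. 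Either way the bound is independent of $\underline{P}_\sigma$ and is $O(\epsilon^2)$, so the essential supremum vanishes with $\epsilon$. The main obstacle is precisely this uniformity: the prefactor $\alpha^{-3}$ is unbounded as $\alpha\downarrow 0$ while the finite--difference factor is automatically small only when $\alpha\epsilon$ is small, so neither regime alone suffices -- the strict positivity of $\alpha_0$, guaranteed by the hypothesis $z<0$, is essential in matching the two estimates.
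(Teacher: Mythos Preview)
Your argument is correct. The reduction to the fibrewise problem via the direct--integral structure is justified by Remark~3.2 (the $\underline{P}_\sigma$--dependence enters only multiplicatively through $Q_\sigma$), the closed--form evaluation $\int_{\mathbb{R}}(e^{-|b-u|}-e^{-|u|})^2\,du = 2[1-e^{-b}(1+b)]$ checks out, and your two--regime split is exactly the right way to handle the competition between the prefactor $\alpha^{-3}$ and the finite--difference bracket; the resulting bound $\|T^\sigma_{\epsilon}(z)-T^\sigma_0(z)\|\le C\epsilon$ is sharp in order.

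The paper, by contrast, does not prove this lemma at all: it simply invokes Lemma~3.1 and Proposition~3.2 of the Griesemer--Hofacker--Linden reference \cite{05_GHL}. Your proof is therefore genuinely more self--contained. It also makes explicit two features that a citation hides: the quantitative $O(\epsilon)$ convergence rate, and the precise role of the hypothesis $z<0$ (providing the floor $\alpha_0>0$ without which the small--$\alpha$ regime would not close). The only caveat is your stated assumption $\mu_\sigma>0$; while the introduction allows $m_i\in\mathbb{R}\setminus\{0\}$, the analysis elsewhere in the paper (e.g.\ the constant $\mathfrak{C}=\sqrt{\max_\sigma \mu_\sigma/2}$ in Proposition~4.1 and the positivity of masses in Appendix~4) is tacitly under the same assumption, so this is not a gap relative to the paper.
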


\begin{proof}
    Directly from \cite{05_GHL}, Lemma 3.1 and Proposition 3.2.
\end{proof}

\begin{remark}
    \justifying
    Direct consequence of Lemma \ref{Limit of AR}, for all $\sigma \in \mathcal{I}, \, z<0$, is
    \begin{equation*}
        \underset{\epsilon \downarrow 0}{\lim} \; A_{\epsilon}^{\sigma} \left( H_0 - z\mathds{1} \right)^{-1} = \underset{\epsilon \downarrow 0}{\lim} \; T_{\epsilon}^{\sigma} \left( z \right) U_{\sigma} = T_{0}^{\sigma}\left( z \right) U_{\sigma} \doteq S^{\sigma}\left( z \right),
    \end{equation*}
    with $S^{\sigma}\left( z \right) \in \mathfrak{B}\left(L^2\left( \mathbb{R}^n, dx_1 \cdots dx_n \right), \chi_{\sigma} \right)$. $\hfill \square$
\end{remark}

\section{$\Lambda_{\epsilon}\left( z \right)$-related analysis}
\justifying

\begin{remark}
    \justifying
    Given $z \in \rho\left( H_0 \right), \, \sigma, \nu \in \mathcal{I}$, 
    \begin{align*}
        \left[ \Lambda_{\epsilon}\left( z \right) \right]_{\sigma \nu} & = \delta_{\sigma \nu} - g A^{\sigma}_{\epsilon} R_{H_0} \left( z \right) A_{\epsilon}^{\nu \ast} 
        \\ & = \delta_{\sigma \nu} - g A^{\sigma}_{\epsilon} R_{H_0} \left( z \right) A_{\epsilon}^{\nu \ast} + \delta_{\sigma \nu} g A_{\epsilon}^{\sigma} R_{H_0} \left( z \right) A_{\epsilon}^{\nu \ast} - \delta_{\sigma \nu} g A_{\epsilon}^{\sigma} R_{H_0} \left( z \right) A_{\epsilon}^{\nu \ast} \\
        & = \left[\mathds{1} - g A_{\epsilon}^{\sigma} R_{H_0}\left( z \right) A_{\epsilon}^{\sigma \ast} \right] \delta_{\sigma \nu} + \left( \delta_{\sigma \nu} - 1 \right) g A_{\epsilon}^{\sigma} R_{H_0} \left( z \right) A_{\epsilon}^{\nu \ast} \\
        & \equiv \left[ \Lambda_{\epsilon}\left( z \right)_{\text{diag}} \right]_{\sigma \nu} + \left[ \Lambda_{\epsilon}\left( z \right)_{\text{off}} \right]_{\sigma \nu}.
    \end{align*}
    \noindent Where it all makes sense,
    \begin{equation}\label{Inversion formula Lambda-Matrix}
        \left[ \Lambda_{\epsilon}\left( z \right) \right]^{-1} = \left\{ \mathds{1} + \left[ \Lambda_{\epsilon}\left( z \right)_{\text{diag}} \right]^{-1} \left[ \Lambda_{\epsilon}\left( z \right)_{\text{off}} \right] \right\}^{-1} \left[ \Lambda_{\epsilon} \left( z \right)_{\text{diag}} \right]^{-1},
    \end{equation}
    \noindent hence, aim of the section is finding a range of values for $z \in \rho\left(H_0\right)$ such that (\ref{Inversion formula Lambda-Matrix}) holds. $\hfill \square$
\end{remark}

\subsection{$\Lambda_{\epsilon}\left( z \right)_{\text{diag}}$}
\justifying

\begin{remark}
    \justifying
    Let $\sigma \in \mathcal{I}, \, \epsilon >0$ be. Set $\phi_{\epsilon}^{\sigma} \left( z \right) = g A_{\epsilon}^{\sigma} R_{H_0} \left( z \right) A_{\epsilon}^{\sigma \ast} \in \mathfrak{B}\left(\chi_{\sigma}\right)$, $z \in \rho\left( H_{0} \right)$,
    \begin{align*}
        \phi_{\epsilon}^{\sigma}\left( z \right) & = g \left( v \otimes \mathds{1} \right) \frac{U_\epsilon^{\sigma}}{\sqrt{\epsilon}} U_{\sigma} \left( H_0 - z \mathds{1} \right)^{-1} \left[ \left( v \otimes \mathds{1} \right) \frac{U_\epsilon^{\sigma}}{\sqrt{\epsilon}} U_{\sigma} \right]^{\ast} \\
        & = g \left( v \otimes \mathds{1} \right) \frac{U_\epsilon^{\sigma}}{\sqrt{\epsilon}} \left( H_0^{\sigma} - z \mathds{1} \right)^{-1} \frac{U_{\epsilon}^{\sigma \, \ast}}{\sqrt{\epsilon}} \left( v \otimes \mathds{1} \right).
    \end{align*}
    \noindent Moreover
    \begin{equation*}
        \left[ -\frac{1}{2\mu_{\sigma}} \frac{\partial^2}{\partial r_{\sigma}^2} - \left( z - Q_{\sigma} \right) \mathds{1} \right]^{-1} = \left( \mathds{1} \otimes \mathfrak{F}_{\underline{Y}_{\sigma}} \right) \left( H_{0}^{\sigma} - z \mathds{1} \right)^{-1} \left( \mathds{1} \otimes \mathfrak{F}_{\underline{Y}_{\sigma}}^{-1} \right)
    \end{equation*}
    \noindent allows for
    \begin{align*}
        \phi_{\epsilon}^{\sigma}\left( z \right) & = \frac{g}{\epsilon} \left\{ \left( vu_{\epsilon}^{\sigma} \otimes \mathds{1} \right) \left( \mathds{1} \otimes \mathfrak{F}_{\underline{Y}_{\sigma}}^{-1} \right) \left[ -\frac{1}{2\mu_{\sigma}} \frac{\partial^2}{\partial r_{\sigma}^2} - \left( z - Q_{\sigma} \right) \mathds{1} \right]^{-1} \left( \mathds{1} \otimes \mathfrak{F}_{\underline{Y}_{\sigma}} \right) \left( u_{\epsilon}^{\sigma \ast}v \otimes \mathds{1} \right) \right\} \\
        & = \left( \mathds{1} \otimes \mathfrak{F}_{\underline{Y}_{\sigma}}^{-1} \right) \left\{ g \left( v \otimes \mathds{1} \right) \frac{U_{\epsilon}^{\sigma}}{\sqrt{\epsilon}} \left[ -\frac{1}{2\mu_{\sigma}} \frac{\partial^2}{\partial r_{\sigma}^2} - \left( z - Q_{\sigma} \right) \mathds{1} \right]^{-1} \frac{U_{\epsilon}^{\sigma \ast}}{\sqrt{\epsilon}} \left( v \otimes \mathds{1} \right) \right\} \left( \mathds{1} \otimes \mathfrak{F}_{\underline{Y}_{\sigma}} \right).
    \end{align*}
    $\hfill \square$
\end{remark}

\begin{definition}
    \justifying
    Fixed $\sigma \in \mathcal{I}, \, \epsilon > 0, \, z \in \rho\left(H_0\right)$, the linear bounded operator $ \phi^{\sigma}_{\epsilon, \, \underline{P}_{\sigma}}\left(z\right)$ on $L^2\left( \mathbb{R}, dr_{\sigma} \right) \otimes \Tilde{\chi}_{\sigma}^{(\text{red})}$, where
    \begin{equation*}
        \phi^{\sigma}_{\epsilon, \, \underline{P}_{\sigma}}\left(z\right) \doteq g \left( v \otimes \mathds{1} \right) \frac{U_{\epsilon}^{\sigma}}{\sqrt{\epsilon}} \left[ -\frac{1}{2\mu_{\sigma}} \frac{\partial^2}{\partial r_{\sigma}^2} - \left( z - Q_{\sigma} \right) \mathds{1} \right]^{-1} \frac{U_{\epsilon}^{\sigma \ast}}{\sqrt{\epsilon}} \left( v \otimes \mathds{1} \right)
    \end{equation*}
    and $Q_{\sigma}$ as above, is introduced. $\hfill \square$
\end{definition}

\begin{remark}
    \justifying
    By observing that, for all $\psi, \varphi \in L^2\left( \mathbb{R}, dr_{\sigma} \right), \, \epsilon >0$,
    \begin{align*}
        \langle \varphi, u_{\epsilon}^{\sigma}  \psi \rangle & = \int_{\mathbb{R}} \, \overline{\varphi}\left( x \right) \left( u_{\epsilon}^{\sigma} \psi \right) \left( x \right) dx = \int_{\mathbb{R}} \, \overline{\varphi}\left( x \right) \sqrt{\epsilon} \psi \left( \epsilon x \right) dx = \int_{\mathbb{R}} \, \overline{\left[ \frac{1}{\sqrt{\epsilon}} \varphi\left( \frac{x^\prime}{\epsilon} \right) \right]} \psi \left( x^\prime \right) dx^{\prime} \\
        & = \langle u_{\epsilon}^{\sigma \ast} \varphi, \psi \rangle,
    \end{align*}
    
    \noindent for all $z<0$, $\varphi \in L^2\left( \mathbb{R}, dr_{\sigma} \right) \otimes \Tilde{\chi}_{\sigma}^{(\text{red})}$,
    
    \begin{align*}
        \left\{\left[ - \frac{\partial^2}{\partial r_{\sigma}^2} - \left( 2\mu_{\sigma} \right) \left( z - Q_{\sigma} \right) \mathds{1} \right]^{-1} \frac{u_{\epsilon}^{\sigma \ast}}{\sqrt{\epsilon}} \varphi\right\}(r_{\sigma}, \underline{P}_{\sigma}) & = \int_{\mathbb{R}} \, G^{(1)}_{\left[(2\mu_{\sigma})\left(z - Q_{\sigma}\right) \right]} \left( r_{\sigma}, \, r^{\prime}_{\sigma} \right) \frac{1}{\epsilon} \varphi \left( \frac{r_{\sigma}^{\prime}}{\epsilon},\underline{P}_{\sigma} \right) dr^{\prime}_{\sigma} = \\
        & = \int_{\mathbb{R}} \, G^{(1)}_{\left[(2\mu_{\sigma})\left(z - Q_{\sigma}\right) \right]} \left( r_{\sigma}, \, \epsilon r^{\prime}_{\sigma} \right) \varphi \left( r_{\sigma}^{\prime},\underline{P}_{\sigma} \right) dr_{\sigma}^{\prime}.
    \end{align*}
    
    \noindent Consequently
    
    \begin{align*}
        & \left[ \left\{ \frac{U_\epsilon}{\sqrt{\epsilon}} \left[ -\frac{\partial^2}{\partial r_{\sigma}^2} - \left( 2\mu_{\sigma} \right) \left( z - Q_{\sigma} \right) \mathds{1} \right]^{-1} \frac{U_{\epsilon}^{\ast}}{\sqrt{\epsilon}} \right\} \varphi \right] \left( r_{\sigma}, \underline{P}_{\sigma} \right) = \int_{\mathbb{R}} \, G^{(1)}_{\left[(2\mu_{\sigma})\left(z - Q_{\sigma}\right) \right]} \left( \epsilon r_{\sigma}, \, \epsilon r^{\prime}_{\sigma} \right) \varphi \left( r_{\sigma}^{\prime}, \underline{P}_{\sigma} \right) dr_{\sigma}^{\prime} = \\
        & = \int_{\mathbb{R}} \, \left[ \int_{0}^{\infty} \, \frac{e^{-\frac{\abs{\epsilon r_{\sigma} - \epsilon r_{\sigma}^{\prime}}^{2}}{4t} + \left(2\mu_{\sigma}\right) \left( z - Q_{\sigma} \right) t }}{ \sqrt{4 \pi t}} dt \right] \varphi \left( r_{\sigma}^{\prime}, \underline{P}_{\sigma}\right) dr^{\prime}_{\sigma} = \epsilon \int_{\mathbb{R}} \, G^{(1)}_{ \left[ \epsilon^2 \left( 2\mu_{\sigma} \right) \left( z - Q_{\sigma} \right) \right]} \left( r_{\sigma}, r^{\prime}_{\sigma} \right) \varphi \left( r^{\prime}_{\sigma}, \underline{P}_{\sigma} \right) dr^{\prime}_{\sigma} = \\
        & = \epsilon \left\{ \left[ -\frac{\partial^2}{\partial r_{\sigma}^2} - \epsilon^2 \left( 2\mu_{\sigma} \right) \left( z - Q_{\sigma} \right) \mathds{1} \right]^{-1} \varphi \right\} \left( r_{\sigma}, \underline{P}_{\sigma} \right),
    \end{align*}
    
    \noindent i.e.
    
    \begin{equation*}
        \frac{U_{\epsilon}^{\sigma}}{\sqrt{\epsilon}} \left[ -\frac{\partial^2}{\partial r_{\sigma}^2} - \left( 2\mu_{\sigma} \right) \left( z - Q_{\sigma} \right) \mathds{1} \right]^{-1} \frac{U_{\epsilon}^{\sigma \ast}}{\sqrt{\epsilon}} = \epsilon \left[ - \frac{\partial^2}{\partial r_{\sigma}^2} -\epsilon^2 \left( 2\mu_{\sigma} \right) \left( z - Q_{\sigma} \right) \mathds{1} \right]^{-1},
    \end{equation*}
    
    \noindent allowing to state that
    
    \begin{align*}
        \phi_{\epsilon, \underline{P}_{\sigma}}^{\sigma} \left( z \right) & = g \left( v\otimes \mathds{1} \right) \frac{U_{\epsilon}^{\sigma}}{\sqrt{\epsilon}} \left( 2\mu_{\sigma} \right) \left[ -\frac{\partial^2}{\partial r_{\sigma}^2} - \left( 2\mu_{\sigma}\right) \left( z - Q_{\sigma} \right) \mathds{1} \right]^{-1} \frac{U_{\epsilon}^{\sigma \ast}}{\sqrt{\epsilon}} \left( v \otimes \mathds{1} \right) \equiv \\
        & \equiv  \left( v\otimes \mathds{1} \right) \left\{\left( 2\mu_{\sigma} \right) g \epsilon \left[ -\frac{\partial^2}{\partial r_{\sigma}^2} - \epsilon^2 \left( 2 \mu_{\sigma} \right) \left( z - Q_{\sigma} \right) \mathds{1} \right]^{-1} \right\} \left( v\otimes \mathds{1} \right).
    \end{align*}
    $\hfill \square$
\end{remark}

\vfill 
\newpage
\begin{theorem}
    \justifying
    For all $\sigma \in \mathcal{I}, \, \epsilon > 0, \, z<0$, $\norm{\phi_{\epsilon, \, \underline{P}_{\sigma}}^{\sigma} \left( z \right)}_{\mathfrak{B}\left( \chi_{\sigma} \right)} \leq \frac{\mathfrak{C} \abs{g}}{ \sqrt{ \abs{z} }}$, where $\mathfrak{C} = \sqrt{\left( \underset{\sigma}{\max} \, \frac{\mu_{\sigma}}{2} \right)}$.
\end{theorem}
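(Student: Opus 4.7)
The plan is to read off the integral kernel of $\phi_{\epsilon,\underline{P}_{\sigma}}^{\sigma}(z)$ from the closed-form expression derived at the end of the preceding remark, dominate it pointwise by a rank-one positive kernel, and conclude via the standard majorization principle for integral operators on $L^{2}$.

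First, I would diagonalize in $\underline{P}_{\sigma}$ and regard $Q_{\sigma}$ as a real scalar $q\geq 0$. For $z<0$ and $q\geq 0$, the quantity $-\epsilon^{2}(2\mu_{\sigma})(z-q)=\epsilon^{2}(2\mu_{\sigma})(\abs{z}+q)$ is strictly positive, so the one-dimensional Helmholtz resolvent inside the braces is the integral operator on $L^{2}(\mathbb{R},dr_{\sigma})$ with Yukawa kernel
\[
\frac{e^{-\epsilon\sqrt{2\mu_{\sigma}(\abs{z}+q)}\,\abs{r_{\sigma}-r_{\sigma}^{\prime}}}}{2\epsilon\sqrt{2\mu_{\sigma}(\abs{z}+q)}}.
\]
The prefactor $(2\mu_{\sigma})g\epsilon$ cancels precisely the $\epsilon$ and the $2$ in the denominator; composing with $(v\otimes\mathds{1})$ on either side produces the $r$-kernel
\[
K_{q}(r_{\sigma},r_{\sigma}^{\prime})=g\,v(r_{\sigma})\,\sqrt{\tfrac{\mu_{\sigma}}{2(\abs{z}+q)}}\,e^{-\epsilon\sqrt{2\mu_{\sigma}(\abs{z}+q)}\,\abs{r_{\sigma}-r_{\sigma}^{\prime}}}\,v(r_{\sigma}^{\prime}).
\]

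The next step is the pointwise estimate $\abs{K_{q}(r_{\sigma},r_{\sigma}^{\prime})}\leq\abs{g}\sqrt{\mu_{\sigma}/(2\abs{z})}\,\abs{v(r_{\sigma})}\,\abs{v(r_{\sigma}^{\prime})}$, using $\abs{z}+q\geq\abs{z}$ and the fact that the Yukawa exponential is bounded by $1$. The majorant is the kernel of the rank-one positive operator $\psi\mapsto\abs{g}\sqrt{\mu_{\sigma}/(2\abs{z})}\,\abs{v}\,\langle\abs{v},\psi\rangle$, whose operator norm equals $\abs{g}\sqrt{\mu_{\sigma}/(2\abs{z})}\,\norm{v}_{2}^{2}=\abs{g}\sqrt{\mu_{\sigma}/(2\abs{z})}$ by the normalization $\int_{\mathbb{R}}v^{2}=1$. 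Applying the standard domination principle---$\abs{\langle\varphi,T\psi\rangle}\leq\langle\abs{\varphi},T^{\prime}\abs{\psi}\rangle$ whenever $\abs{K_{T}}\leq K_{T^{\prime}}$ pointwise and $K_{T^{\prime}}\geq 0$---then yields $\norm{K_{q}}_{\mathfrak{B}(L^{2}(\mathbb{R},dr_{\sigma}))}\leq\abs{g}\sqrt{\mu_{\sigma}/(2\abs{z})}$, uniformly in $q\geq 0$.

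Finally, since $\phi_{\epsilon,\underline{P}_{\sigma}}^{\sigma}(z)$ acts as a direct integral in $\underline{P}_{\sigma}$, its operator norm equals the essential supremum over $q$ of $\norm{K_{q}}$, still bounded by $\abs{g}\sqrt{\mu_{\sigma}/(2\abs{z})}\leq\abs{g}\mathfrak{C}/\sqrt{\abs{z}}$. No substantive obstacle is expected: the crucial cancellation of the two $\epsilon$-factors is already explicit in the remark preceding the statement, and the remainder is routine kernel bookkeeping together with the domination principle.
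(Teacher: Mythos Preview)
Your argument is correct and uses the same key ingredients as the paper's proof: the explicit Yukawa kernel coming from the one-dimensional resolvent, the pointwise bounds $e^{-x}\leq 1$ and $\abs{z}+q\geq\abs{z}$, and the normalization $\int v^{2}=1$. The paper carries out a direct estimate on product vectors $\eta\otimes\xi$ via H\"older's inequality, whereas you package the same estimates as domination by a rank-one kernel followed by a direct-integral supremum; the content is identical.
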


\begin{proof}
    \justifying
    Given $\eta \in L^2\left( \mathbb{R}, dr_{\sigma}\right), \, \xi \in \Tilde{\chi}_{\sigma}^{(\text{red})}$ arbitrary.
    \small{\begin{align*}
        & \norm{\phi_{\epsilon, \, \underline{P}_{\sigma}}^{\sigma} \left( z \right)\eta \otimes \xi}_{2}^2 = \int_{\mathbb{R}^n} \, \abs{ \left( g \sqrt{ \frac{\mu_{\sigma}}{2}} \right) \frac{ \xi\left( \underline{P}_{\sigma} \right) }{ \sqrt{ \abs{z - Q_{\sigma}} } } v\left( r_{\sigma} \right) \int_{\mathbb{R}} \, e^{ - \epsilon \sqrt{ \left( 2\mu_{\sigma} \right) \abs{ z - Q_{\sigma}} } \abs{r_{\sigma} - r^{\prime}_{\sigma} } } v\left( r^{\prime}_{\sigma} \right) \eta\left( r^{\prime}_{\sigma} \right) dr^{\prime}_{\sigma} }^2 dr_{\sigma} d\underline{P}_{\sigma} \leq \\
        & \leq \frac{g^2 \mu_{\sigma}}{2} \, \int_{\mathbb{R}^n} \, \frac{ \abs{\xi \left( \underline{P}_{\sigma} \right)}^2 }{ \abs{z - Q_{\sigma}} } v\left( r_{\sigma} \right)^2 \left[ \int_{\mathbb{R}} \, e^{ - \epsilon \sqrt{ \left( 2\mu_{\sigma} \right) \abs{ z - Q_{\sigma}} } \abs{r_{\sigma} - r^{\prime}_{\sigma} } } v \left(r^{\prime}_{\sigma} \right) \abs{\eta \left( r^{\prime}_{\sigma} \right) } dr^{\prime}_{\sigma} \right]^{2} dr_{\sigma} d\underline{P}_{\sigma} \leq \\
        & \leq \frac{g^2 \mu_{\sigma}}{2} \, \int_{\mathbb{R}^n} \, \frac{ \abs{\xi \left( \underline{P}_{\sigma} \right)}^2 }{ \abs{z - Q_{\sigma}} } v\left( r_{\sigma} \right)^2 \left[ \int_{\mathbb{R}} \, e^{ - 2\epsilon \sqrt{\left(2 \mu_{\sigma} \right) \abs{z - Q_{\sigma}}} \abs{r_{\sigma} - r^{\prime}_{\sigma}}} v\left( r^{\prime}_{\sigma} \right)^2 dr^{\prime}_{\sigma} \right] \left[ \int_{\mathbb{R}} \, \abs{\eta\left( r^{\prime}_{\sigma} \right)}^2 dr^{\prime}_{\sigma} \right] dr_{\sigma} d\underline{P}_{\sigma} \leq \\
        & \leq \frac{g^2 \mu_{\sigma}}{2} \, \left[ \int_{\mathbb{R}^{n-1}} \, \frac{ \abs{\xi \left( \underline{P}_{\sigma} \right)}^2 }{ \abs{z - Q_{\sigma}} } d\underline{P}_{\sigma} \right] \left[ \int_{\mathbb{R}} \, \abs{\eta\left( r^{\prime}_{\sigma} \right)}^2 dr^{\prime}_{\sigma} \right] \left[ \int_{\mathbb{R}^2} \, v\left(r_{\sigma}\right)^2 v\left( r^{\prime}_{\sigma} \right)^2 e^{-2\epsilon \sqrt{\left(2\mu_{\sigma}\right) \abs{z}} \abs{r_{\sigma} - r^{\prime}_{\sigma}} } dr_{\sigma} dr^{\prime}_{\sigma} \right] \leq  \\ 
        & \leq \frac{g^2 \mu_{\sigma}}{2 \abs{z} } \norm{\eta \otimes \xi}_2^2 \leq \left( \underset{\sigma}{\max} \, \mu_{\sigma} \right) \frac{g^2}{2\abs{z}} \, \norm{\eta \otimes \xi}_2^2,
    \end{align*}}
    \normalsize by using H{\"o}lder inequality in passing from the second to the third line. Therefore
    \begin{equation*}
        \norm{\phi_{\epsilon, \, \underline{P}_{\sigma}}^{\sigma} \left( z \right)}_{\mathfrak{B}\left( \chi_{\sigma} \right)} \leq \sqrt{\left( \underset{\sigma}{\max} \, \frac{\mu_{\sigma}}{2} \right) \frac{g^2}{\abs{z}}} \equiv \frac{\mathfrak{C} \abs{g}}{ \sqrt{ \abs{z} } }.
    \end{equation*}
\end{proof}

\begin{corollary}
    \justifying
    For all $\sigma \in \mathcal{I}, \, \epsilon > 0$, if $z \in \mathbb{R}^{-}: \, z < - \mathfrak{C}^2g^2$, $\Lambda_{\epsilon}\left(z\right)_{\text{diag}}$ is invertible on $\chi_{\sigma}$.
\end{corollary}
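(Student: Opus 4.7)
The plan is to reduce invertibility of $\Lambda_{\epsilon}(z)_{\text{diag}}$ to invertibility of its diagonal blocks and then apply the Neumann series using the norm estimate just established.

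First I would unpack the block structure. By the formula $[\Lambda_{\epsilon}(z)_{\text{diag}}]_{\sigma\nu} = [\mathds{1} - g A_{\epsilon}^{\sigma} R_{H_0}(z) A_{\epsilon}^{\sigma \ast}]\delta_{\sigma\nu} = (\mathds{1} - \phi_{\epsilon}^{\sigma}(z))\,\delta_{\sigma\nu}$, the operator $\Lambda_{\epsilon}(z)_{\text{diag}}$ is block-diagonal on $\chi = \bigoplus_{\sigma} \chi_{\sigma}$. Hence it is invertible iff each block $\mathds{1} - \phi_{\epsilon}^{\sigma}(z) \in \mathfrak{B}(\chi_{\sigma})$ is invertible, in which case its inverse is block-diagonal with entries $[\mathds{1} - \phi_{\epsilon}^{\sigma}(z)]^{-1}$. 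The phrasing of the corollary (invertibility ``on $\chi_{\sigma}$'') is therefore consistent with this reduction, and it suffices to treat one fixed $\sigma$.

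Second, I would transfer the bound of the preceding proposition to $\phi_{\epsilon}^{\sigma}(z)$ itself. By the remark preceding the norm estimate, $\phi_{\epsilon}^{\sigma}(z)$ and $\phi_{\epsilon,\,\underline{P}_{\sigma}}^{\sigma}(z)$ are related by conjugation with the unitary $\mathds{1} \otimes \mathfrak{F}_{\underline{Y}_{\sigma}}$, namely
\begin{equation*}
    \phi_{\epsilon}^{\sigma}(z) = \bigl(\mathds{1} \otimes \mathfrak{F}_{\underline{Y}_{\sigma}}^{-1}\bigr)\,\phi_{\epsilon,\,\underline{P}_{\sigma}}^{\sigma}(z)\,\bigl(\mathds{1} \otimes \mathfrak{F}_{\underline{Y}_{\sigma}}\bigr).
\end{equation*}
Since the operator norm is invariant under unitary conjugation, the previous proposition yields
\begin{equation*}
    \norm{\phi_{\epsilon}^{\sigma}(z)}_{\mathfrak{B}(\chi_{\sigma})} = \norm{\phi_{\epsilon,\,\underline{P}_{\sigma}}^{\sigma}(z)}_{\mathfrak{B}(\chi_{\sigma})} \leq \frac{\mathfrak{C}\abs{g}}{\sqrt{\abs{z}}}
\end{equation*}
for every $z < 0$ and every $\epsilon > 0$.

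Third, I would close the argument with a Neumann series. The hypothesis $z < -\mathfrak{C}^{2}g^{2}$ is equivalent to $\abs{z} > \mathfrak{C}^{2}g^{2}$, i.e.\ $\mathfrak{C}\abs{g}/\sqrt{\abs{z}} < 1$. Therefore $\norm{\phi_{\epsilon}^{\sigma}(z)}_{\mathfrak{B}(\chi_{\sigma})} < 1$, and the standard Neumann-series criterion gives that $\mathds{1} - \phi_{\epsilon}^{\sigma}(z)$ is invertible in $\mathfrak{B}(\chi_{\sigma})$ with $[\mathds{1} - \phi_{\epsilon}^{\sigma}(z)]^{-1} = \sum_{k \geq 0} [\phi_{\epsilon}^{\sigma}(z)]^{k}$; combined with the block-diagonal reduction of the first step, this is the claim. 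There is no real obstacle here: once one remembers that Fourier conjugation preserves the operator norm, the corollary is a one-line consequence of the proposition plus a contraction-mapping step.
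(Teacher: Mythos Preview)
Your proof is correct and follows essentially the same route as the paper: reduce to the diagonal blocks $\mathds{1}-\phi_{\epsilon}^{\sigma}(z)$, use the norm bound $\mathfrak{C}\abs{g}/\sqrt{\abs{z}}$ from the preceding proposition, and conclude by the Neumann criterion. You are in fact more explicit than the paper about one point it leaves tacit, namely that the bound on $\phi_{\epsilon,\,\underline{P}_{\sigma}}^{\sigma}(z)$ transfers to $\phi_{\epsilon}^{\sigma}(z)$ via unitary conjugation by $\mathds{1}\otimes\mathfrak{F}_{\underline{Y}_{\sigma}}$.
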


\begin{proof}
    \justifying
    By definition
    \begin{equation*}
        \left[ \Lambda_{\epsilon} \left( z \right)_{\text{diag}}\right]_{\sigma \sigma} = \mathds{1} - \phi_{\epsilon}^{\sigma} \left( z \right),
    \end{equation*}
    hence $\norm{\phi_{\epsilon}^{\sigma}\left( z \right)}_{\mathfrak{B}\left( \chi_{\sigma} \right)} < 1$ guarantees the invertibility of $\mathds{1} - \phi_{\epsilon}^{\sigma}\left( z \right)$ on $\chi_{\sigma}$.
    \begin{equation*}
        z < - \mathfrak{C}^2g^2 \implies \frac{\mathfrak{C}\abs{g}}{\sqrt{\abs{z}}} < 1 \implies \norm{\phi_{\epsilon}^{\sigma}\left( z \right)}_{\mathfrak{B}\left( \chi_{\sigma} \right)} < 1. 
    \end{equation*}
\end{proof}

\begin{remark}
    Given $\sigma \in \mathcal{I}$ and $z < - \mathfrak{C}^2g^2$, what above allows to state that
    \begin{equation*}
        \left[ \left( \Lambda_{\epsilon} \left(z\right)_{\text{diag}} \right)^{-1} \right]_{\sigma \sigma} = \left[ \mathds{1} -\phi_{\epsilon}^{\, \sigma} \left(z\right) \right]^{-1} = \sum_{n \in \mathbb{N}_0} \, \left[\phi_{\epsilon}^{\, \sigma} \left(z\right)\right]^n,
    \end{equation*}
    \noindent hence 
    \begin{equation*}
        \norm{ \left[ \left( \Lambda_{\epsilon}\left(z\right)_{\text{diag}} \right)^{-1} \right]_{\sigma \sigma} }_{\mathfrak{B}\left( \chi_{\sigma} \right)} \leq  \sum_{n \in \mathbb{N}_0} \, \norm{\phi_{\epsilon}^{\sigma} \left(z\right)}_{\mathfrak{B}\left( \chi_{\sigma} \right)}^{n} = \frac{1}{1 - \norm{\phi_{\epsilon}^{\sigma} \left(z\right)}_{\mathfrak{B}\left( \chi_{\sigma} \right)}} \leq \left[ 1 - \frac{\mathfrak{C}\abs{g}}{\sqrt{\abs{z}}} \right]^{-1}.
    \end{equation*}
    $\hfill \square$
\end{remark}

\subsection{Investigating $\Lambda_{\epsilon}\left( z \right)_{\text{diag}}$ as $\epsilon \rightarrow 0^+$}

\begin{definition}
    Given $\sigma \in \mathcal{I}$ and $z<0$, the linear operator
    \begin{equation*}
        \phi_{\, 0, \, \underline{P}_{\sigma}}^{\sigma} \left( z \right): \, \varphi \in \mathcal{D} \subseteq  L^2\left( \mathbb{R}, dr_{\sigma} \right) \otimes \Tilde{\chi}_{\sigma}^{\left( \text{red} \right)} \longmapsto \phi_{\, 0, \, \underline{P}_{\sigma}}^{\sigma} \left( z \right) \varphi \in L^2\left( \mathbb{R}, dr_{\sigma} \right) \otimes \Tilde{\chi}_{\sigma}^{\left( \text{red} \right)},
    \end{equation*}
    where
    \begin{equation*}
        \left[\phi_{\, 0, \, \underline{P}_{\sigma}}^{\sigma} \left( z \right) \varphi\right] \left(r_{\sigma}, \underline{P}_{\sigma} \right) \doteq \left( g \sqrt{\frac{\mu_{\sigma}}{2}} \right)  v\left(r_{\sigma}\right) \int_{\mathbb{R}} \, \frac{\varphi\left(r_{\sigma}^{\prime}, \underline{P}_{\sigma} \right)}{\sqrt{ \abs{z - Q_{\sigma}}}}  v\left(r^{\prime}_{\sigma}\right) dr_{\sigma}^{\prime}, \quad \varphi \in \mathcal{D},
    \end{equation*}
    is introduced. $\hfill \square$
\end{definition}

\begin{lemma}
    \justifying
    For arbitrary $\sigma \in \mathcal{I}, \, z<0$, $\mathcal{D} \equiv L^2\left( \mathbb{R}, dr_{\sigma} \right) \otimes \Tilde{\chi}_{\sigma}^{\left( \text{red} \right)}$, $\phi_{\, 0, \, \underline{P}_{\sigma}}^{\sigma} \left(z\right)$ is bounded and
    \begin{equation*}
        \norm{ \phi_{\epsilon, \, \underline{P}_{\sigma}}^{\sigma} \left(z\right) - \phi_{\, 0, \, \underline{P}_{\sigma}}^{\sigma} \left(z\right) }_{\mathfrak{B}\left( \chi_{\sigma} \right)} \underset{\epsilon \downarrow 0}{ \longrightarrow} 0.
    \end{equation*}
\end{lemma}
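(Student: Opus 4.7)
The plan is to exploit the direct-integral structure along the $\underline{P}_\sigma$-fibres of $L^2(\mathbb{R}, dr_\sigma) \otimes \tilde{\chi}_\sigma^{(\text{red})}$ and to control the fibrewise difference uniformly in $\underline{P}_\sigma$ via a Hilbert-Schmidt estimate. Recalling the computation preceding the lemma, the fibre of $\phi_{\epsilon, \underline{P}_\sigma}^\sigma(z)$ at $\underline{P}_\sigma$ is the integral operator on $L^2(\mathbb{R}, dr_\sigma)$ with kernel
$$K_\epsilon(r_\sigma, r_\sigma'; \underline{P}_\sigma) = g\sqrt{\mu_\sigma/2}\, v(r_\sigma)\, \frac{e^{-\epsilon a(\underline{P}_\sigma)|r_\sigma - r_\sigma'|}}{\sqrt{|z - Q_\sigma(\underline{P}_\sigma)|}}\, v(r_\sigma'), \qquad a(\underline{P}_\sigma) := \sqrt{(2\mu_\sigma)|z - Q_\sigma|},$$
while the definition of $\phi_{0, \underline{P}_\sigma}^\sigma(z)$ yields the same kernel with the exponential set to $1$; fibrewise it is the rank-one operator $\varphi \longmapsto (g\sqrt{\mu_\sigma/2}/\sqrt{|z - Q_\sigma|})\, v \langle v, \varphi\rangle$.

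\noindent\textbf{Boundedness.} Since $Q_\sigma \geq 0$ and $z < 0$, one has $|z - Q_\sigma| \geq |z|$, so the fibre norm of $\phi_{0, \underline{P}_\sigma}^\sigma(z)$ is bounded by $g\sqrt{\mu_\sigma/2}\, \norm{v}_2^2/\sqrt{|z|} = g\sqrt{\mu_\sigma/2}/\sqrt{|z|}$ uniformly in $\underline{P}_\sigma$. The direct-integral decomposition then extends $\phi_{0, \underline{P}_\sigma}^\sigma(z)$ to all of $L^2(\mathbb{R}, dr_\sigma) \otimes \tilde{\chi}_\sigma^{(\text{red})}$ with $\mathfrak{B}(\chi_\sigma)$-norm at most $\mathfrak{C}|g|/\sqrt{|z|}$.

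\noindent\textbf{Norm convergence.} The fibre difference is the integral operator with kernel $g\sqrt{\mu_\sigma/2}\, v(r_\sigma)\, (e^{-\epsilon a |r_\sigma - r_\sigma'|} - 1)/\sqrt{|z - Q_\sigma|}\, v(r_\sigma')$. Bounding the operator norm by the Hilbert-Schmidt norm, then using the elementary inequality $(1 - e^{-x})^2 \leq x$ for $x \geq 0$, together with $|r_\sigma - r_\sigma'| \leq L := \text{diam}(\text{supp}\, v)$ on $\text{supp}\, v \times \text{supp}\, v$, gives
$$\norm{(\phi_{\epsilon, \underline{P}_\sigma}^\sigma - \phi_{0, \underline{P}_\sigma}^\sigma)(\underline{P}_\sigma)}_{HS}^2 \leq \frac{g^2 \mu_\sigma/2}{|z - Q_\sigma|} \cdot \epsilon\, a(\underline{P}_\sigma)\, L = \frac{g^2 \mu_\sigma^{3/2} L\, \epsilon}{\sqrt{2|z - Q_\sigma|}} \leq \frac{g^2 \mu_\sigma^{3/2} L}{\sqrt{2|z|}}\, \epsilon,$$
which is uniform in $\underline{P}_\sigma$. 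Taking the essential supremum yields $\norm{\phi_{\epsilon, \underline{P}_\sigma}^\sigma(z) - \phi_{0, \underline{P}_\sigma}^\sigma(z)}_{\mathfrak{B}(\chi_\sigma)} = O(\sqrt{\epsilon}) \longrightarrow 0$.

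\noindent\textbf{Main obstacle.} The substantive point is the uniformity in $\underline{P}_\sigma$: pointwise convergence at fixed $\underline{P}_\sigma$ is immediate from dominated convergence, but the rate depends on $a(\underline{P}_\sigma)$. When $|z - Q_\sigma|$ is small the exponential approaches $1$ slowly, precisely where the prefactor $1/|z - Q_\sigma|$ is largest; when $|z - Q_\sigma|$ is large the prefactor compensates. The inequality $(1 - e^{-x})^2 \leq x$ is exactly what trades a single power of $a$ from the exponent against the prefactor, leaving a residual $\epsilon/\sqrt{|z - Q_\sigma|}$ that is uniformly dominated by $\epsilon/\sqrt{|z|}$.
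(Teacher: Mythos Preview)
Your argument is correct and follows essentially the same route as the paper: both proofs exploit that the operator is decomposable over the $\underline{P}_\sigma$-fibres, estimate the $r_\sigma$-integral kernel via an elementary bound on $1-e^{-x}$, and obtain uniformity in $\underline{P}_\sigma$ from $|z-Q_\sigma|\geq |z|$. The only noteworthy difference is quantitative: the paper uses $|e^{-x}-1|\leq x$, so that $|e^{-\epsilon a|r-r'|}-1|^2 \leq \epsilon^2(2\mu_\sigma)|z-Q_\sigma|\,|r-r'|^2$ cancels the $|z-Q_\sigma|^{-1}$ prefactor exactly and yields the sharper rate $\norm{\phi_{\epsilon,\underline{P}_\sigma}^\sigma(z)-\phi_{0,\underline{P}_\sigma}^\sigma(z)}\leq K\epsilon$, whereas your weaker inequality $(1-e^{-x})^2\leq x$ leaves a residual $|z-Q_\sigma|^{-1/2}$ and gives $O(\sqrt{\epsilon})$; for the purposes of the lemma either rate suffices.
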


\begin{proof}
    Let whatever $\eta \in L^2\left( \mathbb{R}, dr_{\sigma} \right), \, \xi \in \Tilde{\chi}_{\sigma}^{ \left(\text{red} \right) }$ be.
    \small{\begin{align*}
        & \norm{ \left[ \phi_{\,0, \, \underline{P}_{\sigma} }^{\sigma} \left(z\right) - \phi_{\epsilon, \, \underline{P}_{\sigma} }^{\sigma}\left(z\right) \right] \eta \otimes \xi }_{2}^{2} \equiv \int_{\mathbb{R}^n} \, \abs{ \left\{ \left[ \phi_{\,0, \, \underline{P}_{\sigma} }^{\sigma}\left(z\right) - \phi_{\epsilon, \, \underline{P}_{\sigma} }^{\sigma} \left(z\right) \right] \eta \otimes \xi \right\} \left(r_{\sigma}, \underline{P}_{\sigma} \right) }^2 dr_{\sigma}d\underline{P}_{\sigma} \\ 
        & = \int_{\mathbb{R}^n} \, \abs{ \left(g\sqrt{\frac{\mu_{\sigma}}{2}}\right) \frac{\xi \left( \underline{P}_{\sigma}\right)}{\sqrt{\abs{z - Q_{\sigma}}}} \int_{\mathbb{R}} \, v\left(r_{\sigma} \right) \left[ e^{-\epsilon \sqrt{\left( 2\mu_{\sigma} \right) \abs{z - Q_{\sigma}} } \abs{r_{\sigma} - r^{\prime}_{\sigma}} } - 1 \right] v\left(r^{\prime}_{\sigma} \right) \eta \left(r^{\prime}_{\sigma} \right) dr^{\prime}_{\sigma} }^2 dr_{\sigma} d\underline{P}_{\sigma} \\
        & \leq g^2 \left( \frac{\mu_{\sigma}}{2} \right) \int_{\mathbb{R}^n} \, \frac{\abs{\xi \left( \underline{P}_{\sigma} \right)}^2}{\abs{ z - Q_{\sigma}}} \left[ \int_{\mathbb{R}} \, \abs{e^{-\epsilon \sqrt{\left(2\mu_{\sigma}\right) \abs{z - Q_{\sigma}} } \abs{r_{\sigma} - r^{\prime}_{\sigma}} } - 1} v\left(r_{\sigma}\right) v\left(r^{\prime}_{\sigma} \right) \abs{\eta\left(r^{\prime}_{\sigma} \right)} dr^{\prime}_{\sigma} \right]^2 dr_{\sigma} d\underline{P}_{\sigma} \\
        & \leq \left( \frac{g^2\mu_{\sigma}}{2} \right) \int_{\mathbb{R}^n} \, \frac{\abs{\xi\left( \underline{P}_{\sigma} \right)}^2}{\abs{z-Q_{\sigma}}} \left\{ \left[ \int_{\mathbb{R}} \, \abs{e^{-\epsilon \sqrt{\left(2\mu_{\sigma}\right) \abs{z - Q_{\sigma}} } \abs{r_{\sigma} - r^{\prime}_{\sigma}} } - 1}^2 v\left(r_{\sigma}\right)^2 v\left(r^{\prime}_{\sigma} \right)^2 dr^{\prime}_{\sigma} \right] \left[ \int_{\mathbb{R}} \, \abs{\eta \left( r^{\prime}_{\sigma} \right)}^2 dr^{\prime}_{\sigma} \right]  \right\} dr_{\sigma} d\underline{P}_{\sigma} \\
        & \leq \left( \frac{g^2 \mu_{\sigma}}{2} \right) \left( \int_{\mathbb{R}} \, \abs{\eta \left( r^{\prime}_{\sigma} \right)}^2 dr^{\prime}_{\sigma} \right)\int_{\mathbb{R}^{n-1}} \, \left\{ \frac{\abs{\xi\left( \underline{P}_{\sigma}\right)}^2}{ \abs{z-Q_{\sigma}} } \left[\int_{\mathbb{R}^2} \, \abs{\epsilon \sqrt{\left(2\mu_{\sigma}\right) \abs{ z - Q_{\sigma}} } \abs{r_{\sigma} - r^{\prime}_{\sigma}}}^2 v\left(r_{\sigma}\right)^2 v\left(r^{\prime}_{\sigma} \right)^2 dr_{\sigma} dr^{\prime}_{\sigma} \right] d\underline{P}_{\sigma} \right\} \\
        & \leq \epsilon^2 \left(g^2 \mu_{\sigma}^{2}\right) \left[2 \int_{\mathbb{R}^2} \, v\left(r_{\sigma}\right)^2 v\left(r^{\prime}_{\sigma} \right)^2 \left(r_{\sigma}^2 + r^{\prime \, 2}_{\sigma} \right) dr_{\sigma} dr^{\prime}_{\sigma} \right] \norm{\eta \otimes \xi}_2^2,
    \end{align*}}
    
    \noindent \normalsize by using H{\"o}lder inequality in passing from the third to the fourth line. Since
    \begin{equation*}
        2 \int_{\mathbb{R}^2} \, v\left(r_{\sigma}\right)^2 v\left(r^{\prime}_{\sigma} \right)^2 \left(r_{\sigma}^2 + r^{\prime \, 2}_{\sigma} \right) dr_{\sigma} dr^{\prime}_{\sigma} < 4 \norm{V}_1^2 \left( \underset{\text{supp} \, v}{\sup} \, r_{\sigma}^2 \right) < \infty,
    \end{equation*}
    \noindent it results
    \begin{equation*}
        \norm{ \phi_{\,0, \, \underline{P}_{\sigma} }^{\sigma}\left( z \right) - \phi_{\epsilon, \, \underline{P}_{\sigma} }^{\sigma}\left( z \right) }_{\mathfrak{B}\left( \chi_{\sigma} \right)} \leq K \epsilon,
    \end{equation*}
    for some constant $K>0$.
\end{proof}

\vfill
\newpage

\begin{lemma}
    Given $z<0$ and $\phi_{\,0}^{\, \sigma} \left(z\right) \doteq \left( \mathds{1} \otimes \mathfrak{F}_{\underline{Y}_{\sigma}}^{-1} \right) \phi_{\,0, \, \underline{P}_{\sigma}}^{\, \sigma} \left(z\right) \left( \mathds{1} \otimes \mathfrak{F}_{\underline{Y}_{\sigma}} \right)$,  if $z < - \mathfrak{C}^2 g^2$, $\mathds{1} - \phi_{\,0}^{\sigma} \left( z \right)$ is invertible and
    \begin{equation*}
        \underset{\epsilon \downarrow 0}{ \lim} \, \left[ \mathds{1} - \phi_{\epsilon}^{\sigma}\left(z\right) \right]^{-1} = \left[ \mathds{1} -\phi_{\,0}^{\sigma}\left( z \right) \right]^{-1}.
    \end{equation*}
\end{lemma}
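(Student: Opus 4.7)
The plan is to reduce the statement to a standard Neumann‐series / second‐resolvent argument using the two results already at our disposal: the uniform operator norm bound $\norm{\phi_{\epsilon,\underline{P}_\sigma}^\sigma(z)}_{\mathfrak{B}(\chi_\sigma)}\le \mathfrak{C}\abs{g}/\sqrt{\abs{z}}$ (Proposition~4.1 above) and the norm convergence $\phi_{\epsilon,\underline{P}_\sigma}^\sigma(z)\to\phi_{0,\underline{P}_\sigma}^\sigma(z)$ of rate $O(\epsilon)$ (Lemma~4.1 above). Since $\phi_\epsilon^\sigma(z)$ (resp.\ $\phi_0^\sigma(z)$) is the conjugate of $\phi_{\epsilon,\underline{P}_\sigma}^\sigma(z)$ (resp.\ $\phi_{0,\underline{P}_\sigma}^\sigma(z)$) by the unitary $\mathds{1}\otimes\mathfrak{F}_{\underline{Y}_\sigma}$, both the norm bound and the $\epsilon\downarrow 0$ convergence transfer verbatim to $\phi_\epsilon^\sigma(z)$ and $\phi_0^\sigma(z)$.

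First, I would establish boundedness and the norm estimate for $\phi_0^\sigma(z)$. Passing to the limit $\epsilon\downarrow 0$ in the inequality $\norm{\phi_{\epsilon}^\sigma(z)}\le\mathfrak{C}\abs{g}/\sqrt{\abs{z}}$ (or, equivalently, retracing the calculation of Proposition~4.1 with the factor $e^{-\epsilon\sqrt{(2\mu_\sigma)\abs{z-Q_\sigma}}\abs{r_\sigma-r'_\sigma}}$ replaced by $1$, which is already how that estimate was bounded), one obtains
\begin{equation*}
    \norm{\phi_0^\sigma(z)}_{\mathfrak{B}(\chi_\sigma)}\ \le\ \frac{\mathfrak{C}\abs{g}}{\sqrt{\abs{z}}}.
\end{equation*}
The hypothesis $z<-\mathfrak{C}^2 g^2$ forces $\mathfrak{C}\abs{g}/\sqrt{\abs{z}}<1$, so $\mathds{1}-\phi_0^\sigma(z)$ is invertible by the Neumann series, with
\begin{equation*}
    \norm{[\mathds{1}-\phi_0^\sigma(z)]^{-1}}_{\mathfrak{B}(\chi_\sigma)}\ \le\ \Bigl[1-\tfrac{\mathfrak{C}\abs{g}}{\sqrt{\abs{z}}}\Bigr]^{-1}.
\end{equation*}

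For the convergence of inverses I would use the standard identity
\begin{equation*}
    [\mathds{1}-\phi_\epsilon^\sigma(z)]^{-1}-[\mathds{1}-\phi_0^\sigma(z)]^{-1} = [\mathds{1}-\phi_\epsilon^\sigma(z)]^{-1}\,[\phi_\epsilon^\sigma(z)-\phi_0^\sigma(z)]\,[\mathds{1}-\phi_0^\sigma(z)]^{-1}.
\end{equation*}
The remark preceding the subsection gives the uniform bound $\norm{[\mathds{1}-\phi_\epsilon^\sigma(z)]^{-1}}\le [1-\mathfrak{C}\abs{g}/\sqrt{\abs{z}}]^{-1}$ in $\epsilon>0$, the previous paragraph supplies the analogous bound for the $\epsilon=0$ factor, and Lemma~4.1 provides $\norm{\phi_\epsilon^\sigma(z)-\phi_0^\sigma(z)}\to 0$ (indeed with rate $K\epsilon$). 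Combining these three ingredients yields $\norm{[\mathds{1}-\phi_\epsilon^\sigma(z)]^{-1}-[\mathds{1}-\phi_0^\sigma(z)]^{-1}}\to 0$, which is the claim.

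The only delicate point is really the first step, namely making sure that the bound $\mathfrak{C}\abs{g}/\sqrt{\abs{z}}$ is also available for the limit operator $\phi_0^\sigma(z)$. This is not an obstacle in practice: the estimate of Proposition~4.1 was obtained after replacing the exponential $e^{-\epsilon\sqrt{(2\mu_\sigma)\abs{z-Q_\sigma}}\abs{r_\sigma-r'_\sigma}}$ by $1$ for the $\underline{P}_\sigma$ integration and by $e^{-2\epsilon\sqrt{(2\mu_\sigma)\abs{z}}\abs{r_\sigma-r'_\sigma}}\le 1$ for the $dr_\sigma dr'_\sigma$ integration, so the upper bound is manifestly uniform in $\epsilon\ge 0$ and therefore persists in the norm limit.
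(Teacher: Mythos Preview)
Your proposal is correct and follows essentially the same route as the paper: the paper's proof consists precisely of the direct computation of the bound $\norm{\phi_{0,\underline{P}_\sigma}^\sigma(z)}\le \mathfrak{C}\abs{g}/\sqrt{\abs{z}}$ (your ``retracing Proposition~4.1 with the exponential replaced by $1$'' option), after which invertibility and the convergence of inverses are treated as immediate consequences of the earlier results. Your write-up is in fact more explicit than the paper's on the last step, spelling out the resolvent-type identity $[\mathds{1}-\phi_\epsilon^\sigma(z)]^{-1}-[\mathds{1}-\phi_0^\sigma(z)]^{-1}=[\mathds{1}-\phi_\epsilon^\sigma(z)]^{-1}[\phi_\epsilon^\sigma(z)-\phi_0^\sigma(z)][\mathds{1}-\phi_0^\sigma(z)]^{-1}$ that the paper leaves implicit.
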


\begin{proof}
    Let $\eta \in L^2\left( \mathbb{R}, dr_{\sigma}\right), \, \xi \in \Tilde{\chi}_{\sigma}^{(\text{red})}$ be arbitrary.
    \begin{align*}
        & \norm{\phi_{\,0, \, \underline{P}_{\sigma}}^{\sigma} \eta \otimes \xi}_2^2 = \int_{\mathbb{R}^n} \, \abs{\left(g \sqrt{\frac{\mu_{\sigma}}{2}}\right) \frac{v\left(r_{\sigma}\right)}{\sqrt{\abs{z - Q_{\sigma}}}} \xi\left( \underline{P}_{\sigma}\right) \int_{\mathbb{R}} \, v\left(r^{\prime}_{\sigma}\right) \eta\left(r^{\prime}_{\sigma}\right) dr^{\prime}_{\sigma}}^2 dr_{\sigma} d\underline{P}_{\sigma} \\
        & \leq \left(g^2 \frac{\mu_{\sigma}}{2}\right) \int_{\mathbb{R}^n} \, \frac{\abs{\xi \left( \underline{P}_{\sigma} \right)}^2}{\abs{z - Q_{\sigma}}} v^2\left(r_{\sigma}\right) \left[ \int_{\mathbb{R}} \, \abs{\eta\left(r^{\prime}_{\sigma}\right)}  v\left(r^{\prime}_{\sigma}\right) dr^{\prime}_{\sigma} \right]^2 dr_{\sigma}d\underline{P}_{\sigma} \leq \left( g^2 \frac{\mu_{\sigma}}{2 \abs{z}} \right) \norm{\eta \otimes \xi}_2^2.
    \end{align*}
    As a consequence
    \begin{equation*}
        \norm{\phi_{\,0}^{\sigma}\left(z\right)}_{\mathfrak{B}\left( \chi_{\sigma} \right)} \equiv \norm{\phi_{\,0, \, \underline{P}_{\sigma}}^{\sigma} \left( z \right)}_{\mathfrak{B}\left( \chi_{\sigma} \right)} \leq \frac{\mathfrak{C}\abs{g}}{\sqrt{\abs{z}}}.
    \end{equation*}
\end{proof}

\begin{remark}
    Direct consequence of what above is that, as $z < - \mathfrak{C}^2 g^2$,
    \begin{align}
        \underset{\epsilon \downarrow 0}{\lim} \, \left[ \left(\Lambda_{\epsilon} \left(z\right)_{\text{diag}} \right)_{\sigma \sigma} \right]^{-1} & = \underset{\epsilon \downarrow 0}{\lim} \,  \left[ \mathds{1} - \phi_{\epsilon}^{\sigma} \left( z \right) \right]^{-1} = \left[ \mathds{1} - \phi_{0}^{\sigma} \left(z\right) \right]^{-1} = \\
        & = \left[ \left( \Lambda_{0}\left(z\right)_{\text{diag}} \right)_{\sigma \sigma} \right]^{-1} = \left[\left( \Lambda_{0} \left(z\right)_{\text{diag}} \right)^{-1}\right]_{\sigma \sigma}.
    \end{align}
    $\hfill \square$
\end{remark}

\subsection{$\left\{ \mathds{1} + \left[ \Lambda_{\epsilon}\left(z\right)_{\text{diag}}  \right]^{-1} \Lambda_{\epsilon}\left(z\right)_{\text{off}} \right\}^{-1}$-related investigations}

\begin{theorem}
    Let $\epsilon > 0, \, z < 0$ be arbitrary. (\ref{Norm estimate 1 Off-Diagonal Lambda}) and (\ref{Norm estimate 2 Off-Diagonal Lambda}) hold.
\end{theorem}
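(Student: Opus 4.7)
The strategy mirrors the diagonal analysis of Section~4.1, the essential new ingredient being the interplay between the two distinct coordinate frames $U_\sigma$ and $U_\nu$ associated with different interacting pairs $\sigma \neq \nu$. I start from
\begin{equation*}
    g A_\epsilon^\sigma R_{H_0}(z) A_\epsilon^{\nu \ast} = g \, (v \otimes \mathds{1}) \, \frac{U_\epsilon^\sigma}{\sqrt{\epsilon}} \, U_\sigma R_{H_0}(z) U_\nu^\ast \, \frac{U_\epsilon^{\nu \ast}}{\sqrt{\epsilon}} \, (v \otimes \mathds{1}),
\end{equation*}
use the intertwining $U_\sigma R_{H_0}(z) = R_{H_0^\sigma}(z) U_\sigma$ to bring the resolvent into $\sigma$-coordinates, and absorb the change of frame into the unitary $U_{\sigma \nu} \doteq U_\sigma U_\nu^\ast : \chi_\nu \to \chi_\sigma$. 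Conjugating by $\mathds{1} \otimes \mathfrak{F}_{\underline{Y}_\sigma}$ reduces the $\underline{Y}_\sigma$-dependence to multiplication by $(2\mu_\sigma)(z - Q_\sigma)$ and leaves a one-dimensional Green kernel in $r_\sigma$, just as in the diagonal case; the only novelty is the explicit affine dependence of $r_\nu$ on $(r_\sigma, R_\sigma, y_k)$ read off from (\ref{Coordinate Transformation}) and its inverse, which I record once and then treat as data.

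The two estimates are then produced by H\"older/Cauchy--Schwarz manipulations entirely in the vein of Proposition~4.1. For a uniform $|g|/\sqrt{|z|}$-type bound, I extract the prefactor $(2\mu_\sigma)/\sqrt{|z - Q_\sigma|}$ coming from the Fourier representation of $G^{(1)}$, apply Cauchy--Schwarz against the two $v$-multipliers, and exploit $\int v^2 = 1$ together with the compactness of $\mathrm{supp}\, v$ to close the integral, as was done for $\phi_{\epsilon, \underline{P}_\sigma}^\sigma(z)$. For an $\epsilon$-quantified bound of the type $K \epsilon$ (the off-diagonal analogue of the convergence rate already established for $\phi_\epsilon^\sigma$), I replace the factor $e^{-\epsilon(\cdots)}$ by its difference with an $\epsilon = 0$ reference kernel and use the elementary inequality $|e^{-\epsilon a} - 1| \leq \epsilon a$; the finiteness of $\int r^2 v(r)^2 \, dr$, guaranteed by $v \in C_0^\infty(\mathbb{R})$, then furnishes the desired order-$\epsilon$ remainder.

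The principal obstacle is that $u_\epsilon^\sigma$ and $u_\epsilon^\nu$ rescale genuinely different relative coordinates, so that after transfer to $\sigma$-coordinates $u_\epsilon^\nu$ acts on an affine combination $r_\nu = \alpha\, r_\sigma + \beta\, R_\sigma + \gamma\, y_k$ (with $\beta = 0$ in the disjoint case $\{i,j\} \cap \{k,\ell\} = \emptyset$, the two sub-cases needing to be distinguished). The cleanest route is to keep $\underline{Y}_\sigma$ in momentum space but to handle $r_\sigma$ in position space, so that, once the two compactly supported cut-offs $v$ have been inserted, the resulting integral in $(r_\sigma, r_\sigma^\prime)$ factorises and any overlap between $\mathrm{supp}\, v(r_\sigma)$ and $\mathrm{supp}\, v(r_\nu)$ contributes only a finite geometric constant. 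Beyond this coordinate bookkeeping, no new analytic input beyond Sections~3 and~4.1 is required, and the two announced estimates (\ref{Norm estimate 1 Off-Diagonal Lambda}) and (\ref{Norm estimate 2 Off-Diagonal Lambda}) follow.
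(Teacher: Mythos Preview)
Your plan contains a genuine gap: the off-diagonal kernel does \emph{not} reduce to a one-dimensional Green function as in the diagonal case, and the method of Proposition~4.1 cannot be carried over directly. In the diagonal case the two scaling operators $U_\epsilon^\sigma$ and $U_\epsilon^{\sigma\ast}$ act on the \emph{same} relative coordinate $r_\sigma$, and this is precisely what produces the identity
\[
\frac{U_\epsilon^\sigma}{\sqrt{\epsilon}}\left[-\partial_{r_\sigma}^2 - (2\mu_\sigma)(z-Q_\sigma)\right]^{-1}\frac{U_\epsilon^{\sigma\ast}}{\sqrt{\epsilon}} = \epsilon\left[-\partial_{r_\sigma}^2 - \epsilon^2(2\mu_\sigma)(z-Q_\sigma)\right]^{-1},
\]
whence the explicit $G^{(1)}$ kernel and the clean $1/\sqrt{|z|}$ prefactor. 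When $\sigma\neq\nu$, however, $u_\epsilon^{\nu\ast}$ scales the variable $r_\nu$, which in $\sigma$-coordinates is an affine combination involving $R_\sigma$ and one or two of the $y_k$; consequently the resolvent sandwiched between the two scalings is genuinely $n$-dimensional and no reduction to $G^{(1)}$ takes place. The paper computes the kernel explicitly and shows that, after a partial Fourier transform in the ``spectator'' variables, what remains is an integral operator whose kernel is built from $G^{(3)}_{z-Q_\nu}$ in the one-shared-index case and from $G^{(4)}_{z-Q_\nu}$ in the disjoint case. The uniform bound then comes from the Schur test applied to these higher-dimensional Green kernels (Appendix~4), which is an essentially different mechanism from the $L^2$ bound on the $G^{(1)}$ kernel you invoke.

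There is also a misreading of what (\ref{Norm estimate 1 Off-Diagonal Lambda}) and (\ref{Norm estimate 2 Off-Diagonal Lambda}) assert. Both are uniform-in-$\epsilon$ bounds of the form $C|g|/\sqrt{|z|}$; they differ only in the geometric sub-case (pairs sharing one particle versus disjoint pairs), and the distinct constants $\max_i m_i^{3/2}$ and $\max_i m_i^2$ simply reflect the different dimensions $3$ and $4$ of the reduced Green function. Neither estimate is an $O(\epsilon)$ convergence rate; that analysis belongs to Section~4.4 and uses a separate argument (splitting $G_z^{(3)}-G_z^{(3)}$ via an intermediate point $\tilde X_{\sigma\nu,\epsilon}$ and invoking \cite{05_GHL}). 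Your second paragraph is therefore aimed at the wrong target.
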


\begin{proof}
    First of all, let $\sigma, \nu \in \mathcal{I}: \, \sigma \neq \nu$ be. Then
    \begin{align*}
        \left[ \Lambda_{\epsilon}\left(z\right)_{\text{off}} \right]_{\sigma \nu} & = - g A_{\epsilon}^{\sigma} R_{H_0}\left(z\right) A_{\epsilon}^{\nu \, \ast} = - g \left( v \otimes \mathds{1} \right) \frac{U_{\epsilon}^{\sigma}}{\sqrt{\epsilon}} U_{\sigma} R_{H_0}\left(z\right) \left[ \left( v \otimes \mathds{1} \right) \frac{U_{\epsilon}^{\nu}}{\sqrt{\epsilon}} U_{\nu} \right]^{\ast} = \\
        & = - g \left( v \otimes \mathds{1} \right) \frac{U_{\epsilon}^{\sigma}}{\sqrt{\epsilon}} U_{\sigma} R_{H_0} \left(z\right) U_{\nu}^{\ast} \frac{U_{\epsilon}^{\nu \, \ast}}{\sqrt{\epsilon}} \left( v \otimes \mathds{1} \right): \chi_{\nu} \longrightarrow \chi_{\sigma}.
    \end{align*}
    \noindent To simplify the notation, without harming generality, $\sigma = (12)$ is assumed; for all $\psi \in \chi_{\nu}$, $\left[\Lambda_{\epsilon}\left(z\right)_{\text{off}}\right]_{\sigma \nu} \psi \in \chi_{\sigma}$, hence
    \begin{align*}
        & \left( \left[ \Lambda_{\epsilon}\left(z\right)_{\text{off}} \right]_{\sigma \nu} \psi \right) \left( r_{\sigma}, R_{\sigma}, x_3, \ldots, x_n \right) = \\
        & = g \left\{ - \left[ \left( v \otimes \mathds{1} \right) \frac{U_{\epsilon}^{\sigma}}{\sqrt{\epsilon}} U_{\sigma} R_{H_0} \left(z\right) U_{\nu}^{\ast} \frac{U_{\epsilon}^{\nu \, \ast}}{\sqrt{\epsilon}} \left( v \otimes \mathds{1} \right) \right] \psi \right\} \left( r_{\sigma}, R_{\sigma}, x_3, \ldots, x_n \right) \\
        & = - g v\left(r_{\sigma}\right) \left\{ \left[ U_{\sigma} R_{H_0} \left(z\right) U_{\nu}^{\ast} \frac{U_{\epsilon}^{\nu \, \ast}}{\sqrt{\epsilon}} \left( v \otimes \mathds{1} \right) \right] \psi \right\} \left(\epsilon r_{\sigma}, R_{\sigma}, x_3, \ldots, x_n \right).
    \end{align*}
    \noindent It is recalled that
    \begin{equation*}
        U_{(12)}: \, \Psi \in L^2\left( \mathbb{R}^n, dx_1 \ldots dx_n \right) \longmapsto U_{(12)}\Psi \in L^2\left( \mathbb{R}^n, dr_{(12)} dR_{(12)} dx_3 \ldots dx_n \right)
    \end{equation*}
    \noindent with
    \begin{equation*}
        \left(U_{(12)}\Psi\right) \left( r_{(12)}, R_{(12)}, x_3, \ldots, x_n \right) \equiv \Psi \left( R_{(12)} - \frac{m_2}{m_1+m_2}r_{(12)}, R_{(12)} + \frac{m_1}{m_1+m_2}r_{(12)}, \, x_3, \ldots, x_n \right),
    \end{equation*}
    
    \noindent therefore
    
    \small{\begin{align*}
        & \left( \left[ \Lambda_{\epsilon}\left(z\right)_{\text{off}} \right]_{\sigma \nu} \psi \right) \left( r_{\sigma}, R_{\sigma}, x_3, \ldots, x_n \right) = \\
        & = - g v\left( r_{\sigma} \right) \left[ R_{H_0}\left(z\right) U_{\nu}^{\ast} \frac{U_{\epsilon}^{\nu \, \ast}}{\sqrt{\epsilon}} \left(v \otimes \mathds{1} \right) \psi \right] \left( R_{\sigma} - \frac{m_2}{m_1+m_2}\epsilon r_\sigma, \, R_\sigma + \frac{m_1}{m_1+m_2} \epsilon r_\sigma, \, x_3, \ldots, x_n \right) \\
        \begin{split}
            & = - g v\left( r_{\sigma} \right) \int_{\mathbb{R}^n} \, \left\{ \left[ R_{H_0} \left(z\right) \right] \left( R_{\sigma} - \frac{\epsilon m_2}{m_1 + m_2}r_{\sigma}, R_{\sigma} + \frac{\epsilon m_1}{m_1 + m_2}r_{\sigma}, \, x_3, \ldots, x_n, x_1^{\prime}, \ldots, x_n^{\prime} \right) \cdot \right. \\ 
            &\phantom{ = {- v\left( r_{\sigma} \right) \int_{\mathbb{R}^n} \, \left\{ \right.}} \left. \cdot \left[ U_{\nu}^{\ast} \frac{U_{\epsilon}^{\nu \, \ast}}{\sqrt{\epsilon}} \left(v \otimes \mathds{1} \right) \psi \right] \left(x_1^{\prime}, \ldots, x_n^{\prime}\right) \right\} dx_1^{\prime} \cdots dx_n^{\prime}
        \end{split} \\
        \begin{split}
            & = - g v\left( r_{\sigma} \right) \int_{\mathbb{R}^n} \, \left\{ \left[ R_{H_0} \left(z\right) \right] \left( R_{\sigma} - \frac{\epsilon m_2}{m_1 + m_2}r_{\sigma}, R_{\sigma} + \frac{\epsilon m_1}{m_1 + m_2}r_{\sigma}, \, x_3, \ldots, x_n, x_1^{\prime}, \ldots, x_n^{\prime} \right) \cdot \right. \\ 
            &\phantom{ = {- v\left( r_{\sigma} \right) \int_{\mathbb{R}^n} \, \left\{ \right.}} \left. \cdot \left[ \frac{U_{\epsilon}^{\nu \, \ast}}{\sqrt{\epsilon}} \left( v \otimes \mathds{1} \right) \psi \right] \left( x_1^{\prime}, \ldots, x_{\nu_1}^{\prime} - x_{\nu_2}^{\prime}, \ldots, \frac{m_{\nu_1} x_{\nu_1}^{\prime} + m_{\nu_2} x_{\nu_2}^{\prime} }{m_{\nu_1} + m_{\nu_2}}, \ldots, x_n^{\prime} \right) \right\} dx_1^{\prime} \cdots dx_n^{\prime}
        \end{split} \\
        \begin{split}
            & = - g v\left( r_{\sigma} \right) \int_{\mathbb{R}^n} \, \left\{ \left[ R_{H_0} \left(z\right) \right] \left( R_{\sigma} - \frac{\epsilon m_2}{m_1 + m_2}r_{\sigma}, R_{\sigma} + \frac{\epsilon m_1}{m_1 + m_2}r_{\sigma}, \, x_3, \ldots, x_n, x_1^{\prime}, \ldots, x_n^{\prime} \right) \frac{1}{\epsilon} v\left( \frac{x_{\nu_1}^{\prime} - x_{\nu_2}^{\prime}}{\epsilon} \right) \cdot \right. \\ 
            &\phantom{ = {- v\left( r_{\sigma} \right) \int_{\mathbb{R}^n} \, \left\{ \right.}} \left. \cdot \, \psi \left( \frac{x_{\nu_1}^{\prime} - x_{\nu_2}^{\prime}}{\epsilon}, \, \frac{m_{\nu_1} x_{\nu_1}^{\prime} + m_{\nu_2}x_{\nu_2}^{\prime}}{m_{\nu_1} + m_{\nu_2}}, \, x_1^{\prime}, \ldots, \hat{x}_{\nu_1}^{\prime}, \ldots, \hat{x}_{\nu_2}^{\prime}, \ldots, x_{n}^{\prime} \right) \right\} dx_1^{\prime} \cdots dx_n^{\prime}
        \end{split}\\
        \begin{split}
            & = - g v\left( r_{\sigma} \right) \int_{\mathbb{R}^n} \, \left\{ \left[ R_{H_0} \left(z\right) \right] \left( R_{\sigma} - \frac{\epsilon m_2}{m_1 + m_2}r_{\sigma}, R_{\sigma} + \frac{\epsilon m_1}{m_1 + m_2}r_{\sigma}, \, x_3, \ldots, x_n, x_1^{\prime}, \ldots, x_n^{\prime} \right) \frac{1}{\epsilon} v\left( \frac{x_{\nu_1}^{\prime} - x_{\nu_2}^{\prime}}{\epsilon} \right) \cdot \right. \\ 
            &\phantom{ = {- v\left( r_{\sigma} \right) \int_{\mathbb{R}^n} \, \left\{ \right.}} \left. \cdot \, \psi \left( \frac{x_{\nu_1}^{\prime} - x_{\nu_2}^{\prime}}{\epsilon}, \, x_{\nu_1}^{\prime} - \frac{\epsilon m_{\nu_2}}{m_{\nu_1} + m_{\nu_2}} \frac{x_{\nu_1}^{\prime} - x_{\nu_2}^{\prime}}{\epsilon} , \, x_1^{\prime}, \ldots, \hat{x}_{\nu_1}^{\prime}, \ldots, \hat{x}_{\nu_2}^{\prime}, \ldots, x_{n}^{\prime} \right) \right\} dx_1^{\prime} \cdots dx_n^{\prime}
        \end{split}\\
        \begin{split}
            & = - g v\left( r_{\sigma} \right) \int_{\mathbb{R}^{n+1}} \, \left\{ \left[ R_{H_0} \left(z\right) \right] \left( R_{\sigma} - \frac{\epsilon m_2}{m_1 + m_2}r_{\sigma}, R_{\sigma} + \frac{\epsilon m_1}{m_1 + m_2}r_{\sigma}, \, x_3, \ldots, x_n, x_1^{\prime}, \ldots, x_n^{\prime} \right) \frac{1}{\epsilon} v\left( r_{\nu}^{\prime} \right) \cdot \right. \\ 
            &\phantom{ = {- v\left( r_{\sigma} \right) \int_{\mathbb{R}^n} \, \left\{ \right.}}  \cdot \, \psi \left( r_{\nu}^{\prime}, \, x_{\nu_1}^{\prime} - \frac{\epsilon m_{\nu_2}}{m_{\nu_1} + m_{\nu_2}} r_{\nu}^{\prime} , \, x_1^{\prime}, \ldots, \hat{x}_{\nu_1}^{\prime}, \ldots, \hat{x}_{\nu_2}^{\prime}, \ldots, x_{n}^{\prime} \right) \cdot \\ &\phantom{ = {- v\left( r_{\sigma} \right) \int_{\mathbb{R}^n} \, \left\{ \right.}}  \cdot \, \left. \delta\left( r_{\nu}^{\prime} - \frac{x_{\nu_1}^{\prime} - x_{\nu_2}^{\prime}}{\epsilon} \right) \right\} dx_1^{\prime} \cdots dx_n^{\prime}dr_{\nu}^{\prime}
        \end{split}\\
    \end{align*}
    
    \begin{align*}
        \begin{split}
             & \equiv \left( \frac{1}{\epsilon} \, \delta\left( r_{\nu}^{\prime} - \frac{x_{\nu_1}^{\prime} - x_{\nu_2}^{\prime}}{\epsilon} \right) \equiv \delta \left(\epsilon r_{\nu}^{\prime} - \left( x_{\nu_1}^{\prime} - x_{\nu_2}^{\prime} \right) \right) \equiv \delta\left(x_{\nu_1}^{\prime} - x_{\nu_2}^{\prime} - \epsilon r_{\nu}^{\prime} \right) \right. \equiv \\
            & \phantom{ = {\equiv \left(\right.}} \equiv \left. \delta\left( \left( x_{\nu_1}^{\prime} - \frac{\epsilon m_{\nu_2}}{m_{\nu_1} + m_{\nu_2}} r_{\nu}^{\prime} \right) - \left( x_{\nu_2}^{\prime} + \frac{\epsilon m_{\nu_1}}{m_{\nu_1} + m_{\nu_2}}r_{\nu}^{\prime} \right) \right) \right) \equiv
        \end{split}\\
        \begin{split}
            & \equiv \int_{\mathbb{R}^{n+2}} \, \left\{ - g v\left(r_{\sigma}\right)  \left[ R_{H_0} \left(z\right) \right]\left( R_{\sigma} - \frac{\epsilon m_2}{m_1+m_2}r_{\sigma}, R_{\sigma} + \frac{\epsilon m_1}{m_1+m_2}r_{\sigma}, x_3, \ldots, x_n, x_1^{\prime}, \ldots, x_{n}^{\prime} \right) v\left( r_{\nu}^{\prime} \right) \right. \\
            & \phantom{ = { \equiv \int_{\mathbb{R}^n} \, \left\{ \right. } } \left. \delta\left( R_{\nu}^{\prime} - x_{\nu_1}^{\prime} + \frac{\epsilon m_{\nu_2}}{m_{\nu_1} + m_{\nu_2}} r_{\nu}^{\prime} \right) \delta\left( R_{\nu}^{\prime} - x_{\nu_2}^{\prime} - \frac{\epsilon m_{\nu_1}}{m_{\nu_1} + m_{\nu_2}}r_{\nu}^{\prime} \right) \right\} \cdot  \\
            & \phantom{ = { \equiv \int_{\mathbb{R}^n} \, \left\{ \right. } } \cdot \psi \left( r_{\nu}^{\prime}, R_{\nu}^{\prime}, x_1^{\prime}, \ldots, \hat{x}_{\nu_1}^{\prime}, \ldots, \hat{x}_{\nu_2}^{\prime}, \ldots, x_{n}^{\prime} \right) dr_{\nu}^{\prime} dR_{\nu}^{\prime} dx_{1}^{\prime} \cdots dx_{n}^{\prime}.
        \end{split}
     \end{align*}}

    \normalsize
    \noindent It is further observed that $\sigma \neq \nu$ may nonetheless imply $\nu_i \in \left\{1,2\right\}, \, i=1$ or $2$, hence the following cases are discussed.
    
    \vspace{5mm}
    \noindent $\boxed{\sigma = (12), \, \nu = (1\nu_2), \; \nu_2 \geq 3}$
    
    \small{\begin{align*}
        & \left[ \left(\Lambda_{\epsilon}\left(z\right)_{\text{off}}\right)_{\sigma \nu} \psi \right] \left( r_{\sigma},R_{\sigma},x_3,\ldots,x_n \right) = \\
        \begin{split}
            & \equiv \int_{\mathbb{R}^n} \, \left\{ - g v\left(r_{\sigma}\right)  \left[ R_{H_0} \left(z\right) \right]\left( R_{\sigma} - \frac{\epsilon m_2}{m_1+m_2}r_{\sigma}, R_{\sigma} + \frac{\epsilon m_1}{m_1+m_2}r_{\sigma}, x_3, \ldots, x_n, x_1^{\prime}, \ldots, x_{n}^{\prime} \right) v\left( r_{\nu}^{\prime} \right) \right. \\
            & \phantom{ = { \equiv \int_{\mathbb{R}^n} \, \left\{ \right. } } \left. \delta\left( R_{\nu}^{\prime} - x_1^{\prime} + \frac{\epsilon m_{\nu_2}}{m_1 + m_{\nu_2}} r_{\nu}^{\prime} \right) \delta\left( R_{\nu}^{\prime} - x_{\nu_2}^{\prime} - \frac{\epsilon m_1}{m_1 + m_{\nu_2}}r_{\nu}^{\prime} \right) \right\} \cdot  \\
            & \phantom{ = { \equiv \int_{\mathbb{R}^n} \, \left\{ \right. } } \cdot \psi \left( r_{\nu}^{\prime}, R_{\nu}^{\prime}, \hat{x}_1^{\prime}, x_2^{\prime}, \ldots, \hat{x}_{\nu_2}^{\prime}, \ldots, x_{n}^{\prime} \right) dr_{\nu}^{\prime} dR_{\nu}^{\prime} dx_{1}^{\prime} \cdots dx_{n}^{\prime} \equiv
        \end{split}\\
        & \equiv \left( \text{by integrating with respect to } x_{1}^{\prime}, \, x_{\nu_2}^{\prime} \right) \equiv \\
        \begin{split}
            & \equiv \int_{\mathbb{R}^n} \, \left\{ - g v\left( r_{\sigma} \right) \left[ R_{H_0}\left( z \right) \right] \left( R_{\sigma} - \frac{\epsilon m_2}{m_1 + m_2}r_{\sigma} - R_{\nu}^{\prime} - \frac{\epsilon m_{\nu_2}}{m_1 + m_{\nu_2}}r_{\nu}^{\prime}, \, R_{\sigma} + \frac{\epsilon m_1}{m_1 + m_2}r_{\sigma} - x_2^{\prime}, \, x_3 - x_3^{\prime}, \, \ldots, \right. \right. \\
            & \phantom{ = {\equiv \int_{\mathbb{R}^n} \, \left\{\right.}} \left. \left. x_{\nu_2} - R_{\nu}^{\prime} + \frac{\epsilon m_1}{m_1+m_{\nu_2}}r_{\nu}, \ldots, x_n - x_n^{\prime} \right) v\left(r_{\nu}^{\prime} \right) \right\} \cdot \\
            & \phantom{ = {\equiv \int_{\mathbb{R}^n} \, \left\{\right.}} \cdot \psi \left( r_{\nu}^{\prime}, R_{\nu}^{\prime}, \hat{x}_1^{\prime}, x_2^{\prime}, \ldots, \hat{x}_{\nu_2}^{\prime}, \ldots, x_{n}^{\prime} \right) dr_{\nu}^{\prime} dR_{\nu}^{\prime} dx_{2}^{\prime} \cdots d\hat{x}_{\nu_2}^{\prime} \cdots dx_{n}^{\prime}.
        \end{split}
    \end{align*}}
    
    \normalsize
    \noindent It is clearly seen that, with respect to the tuple of variables $\underline{Y}_{\nu} \equiv \left( x_3, \ldots, \hat{x}_{\nu_2}, \ldots, x_n \right) \in \mathbb{R}^{n-3}$, $\left( \Lambda_{\epsilon}\left(z\right)_{\text{off}} \right)_{\sigma \nu}$ behaves as a convolution operator; set $\chi^{-}_{\nu} \equiv L^2\left( \mathbb{R}^{n-3}, dx_3 \cdots d\hat{x}_{\nu_2} \cdots dx_{n} \right)$ and denoted by $\mathfrak{F}_{\underline{Y}_{\nu}}$ the Fourier operator on $\chi^{-}_{\nu}$, the operator 
    \begin{equation}
        \left( \mathds{1} \otimes \mathfrak{F}_{\underline{Y}_{\nu}} \right) \left( \Lambda_{\epsilon}\left(z\right)_{\text{off}} \right)_{\sigma \nu} \left( \mathds{1} \otimes \mathfrak{F}_{\underline{Y}_{\nu}}^{-1} \right) \doteq \left[ \Lambda_{\epsilon, \, \underline{P}_{\nu}}\left(z\right)_{\text{off}} \right]_{\sigma \nu}
    \end{equation}
    \noindent will be multiplicative with respect to the conjugate tuple $\underline{P}_{(1\nu_2)} \equiv \left( p_3,\ldots, \hat{p}_{\nu_2},\ldots,p_n \right)$; concerning the remaining variables, it behaves as an integral operator whose kernel is
    \begin{equation*}
        - 2^{\frac{3}{2}} g \sqrt{m_1m_2m_{\nu_2}} \, v\left(r_{\sigma}\right) G_{z-Q_{\nu}}^{(3)} \left( X_{\sigma \nu, \epsilon} \right) v\left( r_{\nu}^{\prime} \right) \equiv C(g,m_1,m_2,m_{\nu_2}) \; v\left(r_{\sigma}\right) G_{z-Q_{\nu}}^{(3)} \left( X_{\sigma \nu, \epsilon} \right) v\left( r_{\nu}^{\prime} \right),
    \end{equation*}
    \noindent and
    \begin{equation*}
        X_{\sigma \nu, \, \epsilon} = \left(
        \begin{matrix}
            \sqrt{2m_1} \left[ R_{\sigma} - R_{\nu}^{\prime} - \epsilon \left( \frac{m_2}{m_1+m_2}r_{\sigma} - \frac{m_{\nu_2}}{m_1 + m_{\nu_2}} r_{\nu}^{\prime} \right) \right]\\
            \sqrt{2m_2} \left[ R_{\sigma} - x_2^{\prime} + \epsilon \left( \frac{m_1}{m_1 + m_2} \right) r_{\sigma} \right]\\
            \sqrt{2m_{\nu_2}} \left[ x_{\nu_2} - R_{\nu}^{\prime} + \epsilon\left( \frac{m_1}{m_1 + m_{\nu_2}} \right) r_{\nu}^{\prime} \right]
        \end{matrix} \right), \quad \quad Q_{\nu} = \sum_{\underset{k \neq \nu_2}{k=3}}^{n} \, \frac{p_k^2}{2m_k}.
    \end{equation*}
    \noindent Given $\eta \in L^2\left( \mathbb{R}^3, dr_{\nu} dR_{\nu} dx_2 \right), \, \xi \in \tilde{\chi}^{-}_{\nu} \equiv \mathfrak{F}_{\underline{Y}_{\nu}} \chi^{-}_{\nu}$ arbitrarily,
    
    \footnotesize{\begin{align*}
        & \norm{ \left[\Lambda_{\epsilon, \, \underline{P}_{\nu}}\left(z\right)_{\text{off}}\right]_{\sigma \nu} \eta \otimes \xi }_2^2 = C^2 \,  \int_{\mathbb{R}^n} \, d\underline{P}_{\nu} dr_{\sigma}dR_{\sigma}dx_{\nu_2} \, \abs{ \int_{\mathbb{R}^3} \, dr_{\nu}^{\prime}dR_{\nu}^{\prime} dx_{\nu_2}^{\prime} \, v\left(r_{\sigma}\right) v\left(r_{\nu}^{\prime}\right) G_{z-Q_{\nu}}^{(3)} \left( X_{\sigma \nu, \epsilon} \right) \eta\left(r_{\nu}^{\prime}, R_{\nu}^{\prime}, x_{2}^{\prime} \right) \xi\left( \underline{P}_{\nu} \right) }^2 \\
        & = C^2 \,  \int_{\mathbb{R}^n} \, d\underline{P}_{\nu} dr_{\sigma}dR_{\sigma}dx_{\nu_2} \abs{ \int_{\mathbb{R}} \, dr_{\nu}^{\prime} \left\{ v\left(r_{\sigma}\right) v\left(r_{\nu}^{\prime} \right) \int_{\mathbb{R}^2} dR_{\nu}^{\prime}dx_2^{\prime} \left[ G_{z-Q_{\nu}}^{(3)} \left( X_{\sigma \nu, \epsilon} \right) \eta\left(r_{\nu}^{\prime},R_{\nu}^{\prime},x_2^{\prime} \right) \xi\left(\underline{P}_{\nu}\right) \right] \right\} }^2 \\
        & \leq C^2 \, \int_{\mathbb{R}^n} \, d\underline{P}_{\nu}dr_{\sigma}dR_{\sigma}dx_{\nu_2} \left[ \int_{\mathbb{R}} \, dr_{\nu}^{\prime} \left\{ \, v\left(r_{\sigma}\right)v\left(r_{\nu}^{\prime}\right) \abs{\int_{\mathbb{R}^2} \, dR_{\nu}^{\prime} dx_2^{\prime} \; G_{z-Q_{\nu}}^{(3)}\left( X_{\sigma\nu,\epsilon} \right) \eta\left(r_{\nu}^{\prime},R_{\nu}^{\prime},x_2^{\prime} \right) \xi\left( \underline{P}_{\nu}\right) }  \right\} \right]^2 \\
        & \leq C^2 \, \int_{\mathbb{R}^n} \, d\underline{P}_{\nu}dr_{\sigma}dR_{\sigma}dx_{\nu_2} \left\{ \left[ \int_{\mathbb{R}} dr_{\nu}^{\prime} \, V\left(r_{\sigma}\right) V\left(r_{\nu}^{\prime}\right) \right] \int_{\mathbb{R}} dr_{\nu}^{\prime} \abs{\int_{\mathbb{R}^2} \, G_{z-Q_{\nu}}^{(3)} \left(X_{\sigma \nu,\epsilon} \right) \eta\left(r_{\nu}^{\prime},R_{\nu}^{\prime},x_2^{\prime} \right) \xi\left( \underline{P}_{\nu}\right) }^2 \right\} \\
        & \leq C^2 \, \left\{ \underset{r_{\sigma} \in \mathbb{R}} {\sup} \; \int_{\mathbb{R}^n} \, d\underline{P}_{\nu}dR_{\sigma}dx_{\nu_2}dr_{\nu}^{\prime} \, \left[ \int_{\mathbb{R}^2} \, dR_{\nu}^{\prime}dx_{2}^{\prime} \; G_{z-Q_{\nu}}^{(3)} \left(X_{\sigma\nu,\epsilon} \right) \abs{\eta\left(r_{\nu}^{\prime}, R_{\nu}^{\prime}, x_2^{\prime} \right)} \abs{\xi\left( \underline{P}_{\nu} \right)} \right]^2 \right\}.
    \end{align*}}

    \normalsize
    \noindent Let the following coordinate transformation be
    \begin{equation*}
        \begin{cases}
            \overline{R}_{\nu}^{\prime} = R_{\nu}^{\prime} + \epsilon \left( \frac{m_2}{M_{\sigma}}r_{\sigma} - \frac{m_{\nu_2}}{M_{\nu}}r_{\nu}^{\prime} \right) \\
            \overline{x}_2^{\prime} = x_2^{\prime} - \epsilon \left( \frac{m_1}{M_{\sigma}} \right)r_{\sigma} \\
            \overline{x}_{\nu_2} = x_{\nu_2} + \epsilon \left( \frac{m_2}{M_{\sigma}}r_{\sigma} - r_{\nu}^{\prime} \right)
        \end{cases}.
    \end{equation*}

    \noindent What follows holds.

    \small{\begin{align*}
        & \underset{r_{\sigma} \in \mathbb{R}} {\sup} \; \int_{\mathbb{R}^n} \, d\underline{P}_{\nu}dR_{\sigma}dx_{\nu_2}dr_{\nu}^{\prime} \, \left[ \int_{\mathbb{R}^2} \, dR_{\nu}^{\prime}dx_{2}^{\prime} \; G_{z-Q_{\nu}}^{(3)} \left(X_{\sigma\nu,\epsilon} \right) \abs{\eta\left(r_{\nu}^{\prime}, R_{\nu}^{\prime}, x_2^{\prime} \right)} \abs{\xi\left( \underline{P}_{\nu} \right)} \right]^2 \equiv \\
        \begin{split}
            & \equiv \underset{r_{\sigma} \in \mathbb{R}}{\sup} \; \int_{\mathbb{R}^n} \, d\underline{P}_{\nu} dR_{\sigma} d\overline{x}_{\nu_2} dr_{\nu}^{\prime} \, \left[ \int_{\mathbb{R}^2} \, d\overline{R}_{\nu}^{\prime} d\overline{x}_2^{\prime} \; G_{z-Q_{\nu}}^{(3)}\left( \sqrt{2m_1}\left(R_{\sigma} - \overline{R}_{\nu}^{\prime} \right), \sqrt{2m_2}\left( R_{\sigma} - \overline{x}_2^{\prime} \right), \sqrt{2m_{\nu_2}}\left( \overline{x}_{\nu_2} - \overline{R}_{\nu}^{\prime} \right)  \right) \right. \\
            &\phantom{ = {\equiv \underset{r_{\sigma} \in \mathbb{R}}{\sup} \; \int_{\mathbb{R}^n} \, d\underline{P}_{\nu} dR_{\sigma} d\overline{x}_{\nu_2} dr_{\nu}^{\prime} \, \left[\right. \int_{\mathbb{R}^2} \,} } \left. \abs{ \eta \left( r_{\nu}^{\prime}, \, \overline{R}_{\nu}^{\prime} - \epsilon \left( \frac{m_2}{M_{\sigma}}r_{\sigma} - \frac{m_{\nu_2}}{M_{\nu}} r_{\nu}^{\prime} \right), \, \overline{x}_2^{\prime} + \epsilon \left( \frac{m_1}{M_{\sigma}}\right) r_{\sigma} \right) } \abs{\xi \left( \underline{P}_{\nu} \right)} \right]^2
        \end{split}\\
        \begin{split}
            & \leq \underset{r_{\sigma} \in \mathbb{R}}{\sup} \; \int_{\mathbb{R}^n} \, d\underline{P}_{\nu} dR_{\sigma} d\overline{x}_{\nu_2} dr_{\nu}^{\prime} \, \left[ \int_{\mathbb{R}^2} \, d\overline{R}_{\nu}^{\prime} d\overline{x}_2^{\prime} \; G_{z}^{(3)}\left( \sqrt{2m_1}\left(R_{\sigma} - \overline{R}_{\nu}^{\prime} \right), \sqrt{2m_2}\left( R_{\sigma} - \overline{x}_2^{\prime} \right), \sqrt{2m_{\nu_2}}\left( \overline{x}_{\nu_2} - \overline{R}_{\nu}^{\prime} \right)  \right) \right. \\
            &\phantom{ = {\equiv \underset{r_{\sigma} \in \mathbb{R}}{\sup} \; \int_{\mathbb{R}^n} \, d\underline{P}_{\nu} dR_{\sigma} d\overline{x}_{\nu_2} dr_{\nu}^{\prime} \, \left[\right. \int_{\mathbb{R}^2} \,} } \left. \abs{ \eta \left( r_{\nu}^{\prime}, \, \overline{R}_{\nu}^{\prime} - \epsilon \left( \frac{m_2}{M_{\sigma}}r_{\sigma} - \frac{m_{\nu_2}}{M_{\nu}} r_{\nu}^{\prime} \right), \, \overline{x}_2^{\prime} + \epsilon \left( \frac{m_1}{M_{\sigma}}\right) r_{\sigma} \right) } \abs{\xi \left( \underline{P}_{\nu} \right)} \right]^2
        \end{split}\\
    \end{align*}
    \begin{align*}
        \begin{split}
            & \leq \left\{ \underset{r_{\sigma} \in \mathbb{R}}{\sup} \; \int_{\mathbb{R}^{3}} \, dR_{\sigma} d\overline{x}_{\nu_2} dr_{\nu}^{\prime} \, \left[ \int_{\mathbb{R}^2} \, d\overline{R}_{\nu}^{\prime} d\overline{x}_2^{\prime} \; G_{z}^{(3)}\left( \sqrt{2m_1}\left(R_{\sigma} - \overline{R}_{\nu}^{\prime} \right), \sqrt{2m_2}\left( R_{\sigma} - \overline{x}_2^{\prime} \right), \sqrt{2m_{\nu_2}}\left( \overline{x}_{\nu_2} - \overline{R}_{\nu}^{\prime} \right)  \right) \right. \right. \\
            &\phantom{ = {\equiv \underset{r_{\sigma} \in \mathbb{R}}{\sup} \; \int_{\mathbb{R}^n} \, d\underline{P}_{\nu} dR_{\sigma} } }  \left. \left. \abs{ \eta \left( r_{\nu}^{\prime}, \, \overline{R}_{\nu}^{\prime} - \epsilon \left( \frac{m_2}{M_{\sigma}}r_{\sigma} - \frac{m_{\nu_2}}{M_{\nu}} r_{\nu}^{\prime} \right), \, \overline{x}_2^{\prime} + \epsilon \left( \frac{m_1}{M_{\sigma}}\right) r_{\sigma} \right) } \right]^2 \right\} \left( \int_{\mathbb{R}^{n-3}} \, d\underline{P}_{\nu} \abs{\xi \left( \underline{P}_{\nu} \right)}^2 \right) \\
            & \leq \norm{F}^2 \norm{\eta \otimes \xi}^2.
        \end{split}
    \end{align*}}
    
    \normalsize
    \noindent Eventually\footnote{Appendix 4 enters the argument.}, 
    \begin{equation}\label{Norm estimate 1 Off-Diagonal Lambda}
        \norm{\left[ \Lambda_{\epsilon}\left( z \right)_{\text{off}} \right]_{\sigma \nu}}_{\mathfrak{B}\left( \chi_{\sigma}, \chi_{\nu} \right)} \leq \abs{C} \norm{F} \leq \frac{\abs{C}}{2\sqrt{2\abs{z}}} \leq \left( \underset{i}{\max} \; m_i^{\frac{3}{2}} \right) \frac{\abs{g}}{\sqrt{\abs{z}}},
    \end{equation}
    \noindent with $\sigma = (12), \, \nu = (1\nu_2), \, \nu_2 \geq 3$ and independently of $\epsilon > 0$. An analogous argument holds for $\sigma = (12), \, \nu = (2\nu_2), \; \nu_2 \geq 3$. 
    
    \vspace{5mm}
    \noindent $\boxed{\sigma = (12), \, \nu = (\nu_1\nu_2), \; 3 \leq \nu_1 < \nu_2 \leq n}$
    \small{\begin{align*}
        & \left\{ \left[ \Lambda_{\epsilon}\left( z \right)_{\text{off}} \right]_{\sigma \nu} \psi \right\} \left( r_{\sigma}, R_{\sigma}, x_3, \ldots, x_n \right) = \\
        \begin{split}
            & = \int_{\mathbb{R}^{n+2}} \, \left\{ - g v\left(r_{\sigma}\right)  \left[ R_{H_0} \left(z\right) \right]\left( R_{\sigma} - \frac{\epsilon m_2}{m_1+m_2}r_{\sigma}, R_{\sigma} + \frac{\epsilon m_1}{m_1+m_2}r_{\sigma}, x_3, \ldots, x_n, x_1^{\prime}, \ldots, x_{n}^{\prime} \right) v\left( r_{\nu}^{\prime} \right) \right. \\
            & \phantom{ = { \equiv \int_{\mathbb{R}^n} \, \left\{ \right. } } \left. \delta\left( R_{\nu}^{\prime} - x_{\nu_1}^{\prime} + \frac{\epsilon m_{\nu_2}}{m_{\nu_1} + m_{\nu_2}} r_{\nu}^{\prime} \right) \delta\left( R_{\nu}^{\prime} - x_{\nu_2}^{\prime} - \frac{\epsilon m_{\nu_1}}{m_{\nu_1} + m_{\nu_2}}r_{\nu}^{\prime} \right) \right\} \cdot  \\
            & \phantom{ = { \equiv \int_{\mathbb{R}^n} \, \left\{ \right. } } \cdot \psi \left( r_{\nu}^{\prime}, R_{\nu}^{\prime}, x_1^{\prime}, \ldots, \hat{x}_{\nu_1}^{\prime}, \ldots, \hat{x}_{\nu_2}^{\prime}, \ldots, x_{n}^{\prime} \right) dr_{\nu}^{\prime} dR_{\nu}^{\prime} dx_{1}^{\prime} \cdots dx_{n}^{\prime}
        \end{split}\\
        \begin{split}
            & = \int_{\mathbb{R}^n} \, \left\{ - g v\left(r_{\sigma} \right) \cdot \right.
        \end{split}\\
        \begin{split}
            & \phantom{ = {\int_{\mathbb{R}^n} \, \left\{\right.} } \cdot \left[ R_{H_0} \left(z\right) \right] \left( R_{\sigma} - \frac{\epsilon m_2}{m_1+m_2}r_{\sigma} - x_1^{\prime}, \, R_{\sigma} + \frac{\epsilon m_1}{m_1 + m_2}r_{\sigma} - x_2^{\prime}, \, x_3 - x_3^{\prime}, \ldots, \, x_{\nu_1} - R_{\nu}^{\prime} - \frac{\epsilon m_{\nu_2}}{m_{\nu_1} + m_{\nu_2}} r_{\nu}^{\prime}, \right.
        \end{split}\\
        \begin{split}
            & \phantom{ = {\int_{\mathbb{R}^n} \, \left\{ \left[ R_{H_0} \left(z\right) \right] \left( \right. \right.} } \left. \ldots, \, x_{\nu_2} - R_{\nu}^{\prime} + \frac{\epsilon m_{\nu_1}}{m_{\nu_1} + m_{\nu_2}}r_{\nu}^{\prime}, \ldots, x_n - x_n^{\prime} \right) \cdot
        \end{split}\\
        \begin{split}
            & \phantom{ = {\int_{\mathbb{R}^n} \, \left\{ \right.} } \cdot \left. \psi \left( r_{\nu}^{\prime}, R_{\nu}^{\prime}, x_1^{\prime}, \ldots, \hat{x}_{\nu_1}^{\prime}, \ldots, \hat{x}_{\nu_2}^{\prime}, \ldots, x_{n}^{\prime} \right) \right\}  dr_{\nu}^{\prime} dR_{\nu}^{\prime} dx_{1}^{\prime} \cdots d\hat{x}_{\nu_1}^{\prime} \cdots d\hat{x}_{\nu_2}^{\prime} \cdots dx_n.
        \end{split}
    \end{align*}}
    
    \normalsize
    \noindent $\left[ \Lambda_{\epsilon}\left( z \right)_{\text{off}} \right]_{\sigma \nu}$ behaves as a convolution operator with respect to $\underline{Y}_{\nu} = \left( x_3,\ldots, \hat{x}_{\nu_1}, \ldots, \hat{x}_{\nu_2}, \ldots, x_n \right) \in \mathbb{R}^{n-4}$, therefore, by introducing
    \begin{equation*}
        \chi^{-}_{\nu} \doteq L^2\left( \mathbb{R}^{n-4}, \, dx_3 \ldots d\hat{x}_{\nu_1} \ldots d\hat{x}_{\nu_2} \ldots dx_n  \right)
    \end{equation*}
    \noindent and the corresponding Fourier operator $\mathfrak{F}_{\underline{Y}_{\nu}}$ on it, 
    \begin{equation*}
        \left[ \Lambda_{\epsilon, \, \underline{P}_{\nu}} \left(z\right)_{\text{off}} \right]_{\sigma \nu} \doteq \left( \mathds{1} \otimes \mathfrak{F}_{\underline{Y}_{\nu}} \right) \left[ \Lambda_{\epsilon}\left(z\right)_{\text{off}} \right]_{\sigma \nu} \left( \mathds{1} \otimes \mathfrak{F}_{\underline{Y}_{\nu}}^{-1} \right)
    \end{equation*}
    \noindent would be multiplicative in $\underline{P}_{\nu} = \left(p_3, \ldots, \hat{p}_{\nu_1}, \ldots, \hat{p}_{\nu_2}, \ldots, p_n \right)$; on the other hand, $\left[ \Lambda_{\epsilon, \, \underline{P}_{\nu}} \left(z\right)_{\text{off}} \right]_{\sigma \nu}$ is a integral operator on $L^2\left( \mathbb{R}^4, dr_{\nu} dR_{\nu} dx_1 dx_2 \right)$, with kernel
    \begin{equation*}
        - 4 g \sqrt{m_1m_2m_{\nu_1}m_{\nu_2}} \, v\left( r_{\sigma} \right) G_{z-Q_{\nu}}^{(4)} \left( X_{\sigma \nu, \epsilon} \right) v\left( r_{\nu}^{\prime} \right) \equiv C\left(g,m_1,m_2,m_{\nu_1},m_{\nu_2}\right) \, v\left( r_{\sigma} \right) G_{z-Q_{\nu}}^{(4)} \left( X_{\sigma \nu, \epsilon} \right) v\left( r_{\nu}^{\prime} \right),
    \end{equation*}
    \noindent where
    \begin{equation*}
        X_{\sigma \nu, \epsilon} = 
        \left(\begin{matrix}
            & \sqrt{2m_1} \, \left( R_{\sigma} - \frac{\epsilon m_2}{m_1 + m_2}r_{\sigma} - x_1^{\prime} \right) \\
            & \sqrt{2m_2} \, \left( R_{\sigma} + \frac{\epsilon m_1}{m_1 + m_2}r_{\sigma} - x_2^{\prime} \right) \\
            & \sqrt{2m_{\nu_1}} \, \left( x_{\nu_1} - R_{\nu}^{\prime} - \frac{\epsilon m_{\nu_2}}{m_{\nu_1} + m_{\nu_2}} r_{\nu}^{\prime} \right) \\
            & \sqrt{2m_{\nu_2}} \, \left( x_{\nu_2} - R_{\nu}^{\prime} + \frac{\epsilon m_{\nu_1}}{m_{\nu_1} + m_{\nu_2}} r_{\nu}^{\prime} \right)
        \end{matrix}\right), \quad \quad Q_{\nu} = \sum_{\underset{k \neq \nu_1, \, \nu_2 }{k=3}}^n \, \frac{p^2_k}{2m_k}.
    \end{equation*}
    
    \noindent Therefore, arbitrarily given $\eta \in L^2\left( \mathbb{R}^4, dr_{\nu} dR_{\nu} dx_1 dx_2 \right), \, \xi \in \tilde{\chi}^{-}_{\nu}$,
    
    \small{\begin{align*}
        & \norm{ \left[ \Lambda_{\epsilon, \underline{P}_{\nu}} \left(z\right)_{\text{off}} \right]_{\sigma \nu} \eta \otimes \xi }_2^{2} = \\
        & = C^2 \int_{\mathbb{R}^n} \, d\underline{P}_{\nu} dr_{\sigma} dR_{\sigma} dx_{\nu_1} dx_{\nu_2} \, \abs{ \int_{\mathbb{R}^4} \, dr_{\nu}^{\prime}dR_{\nu}^{\prime}dx_1^{\prime}dx_2^{\prime} \; v\left( r_{\sigma} \right) G_{z-Q_{\nu}}^{(4)} \left( X_{\sigma \nu, \epsilon} \right) v\left( r_{\nu}^{\prime} \right) \eta\left(r_{\nu}^{\prime}, R_{\nu}^{\prime}, x_1^{\prime}, x_2^{\prime} \right) \xi\left( \underline{P}_{\nu} \right) }^2 \\
        & \leq C^2 \int_{\mathbb{R}^n} \, d\underline{P}_{\nu} dr_{\sigma} dR_{\sigma} dx_{\nu_1} dx_{\nu_2} \, \left\{ \int_{\mathbb{R}} \, dr_{\nu}^{\prime} v\left(r_{\sigma}\right) v\left( r_{\nu}^{\prime} \right) \abs{ \int_{\mathbb{R}^3} \, dR_{\nu}^{\prime}dx_1^{\prime} dx_{2}^{\prime} \, G_{z-Q_{\nu} }^{(4)} \left( X_{\sigma \nu, \epsilon} \right) \eta\left(r_{\nu}^{\prime}, R_{\nu}^{\prime}, x_1^{\prime}, x_2^{\prime} \right) \xi \left( \underline{P}_{\nu} \right) } \right\}^2 \\
        & \leq \left( \text{by using H{\"o}lder inequality together with the fact that} \, \int_{\mathbb{R}} \, V = 1 \right) \\
        & \leq C^2 \underset{r_{\sigma} \in \mathbb{R}}{\sup} \, \int_{\mathbb{R}^{n-1}} d\underline{P}_{\nu} dR_{\sigma} dx_{\nu_1} dx_{\nu_2}  \, \int_{\mathbb{R}} \, dr_{\nu}^{\prime} \, \abs{ \int_{\mathbb{R}^3} \, dR_{\nu}^{\prime}dx_1^{\prime}dx_2^{\prime} \, G_{z-Q_{\nu}}^{(4)} \left( X_{\sigma \nu, \epsilon} \right) \eta\left( r_{\nu}^{\prime}, R_{\nu}^{\prime}, x_1^{\prime}, x_2^{\prime} \right) \xi\left(\underline{P}_{\nu} \right) }^2.
    \end{align*}}
    
    \normalsize
    \noindent The coordinate transformation
    \begin{equation*}
        \begin{cases}
            \overline{R}_{\nu}^{\prime} & = R_{\nu}^{\prime} + \epsilon \left( \frac{m_1}{m_1+m_2} r_{\sigma} - \frac{m_{\nu_1}}{m_{\nu_1} + m_{\nu_2}} r_{\nu}^{\prime} \right) \\
            \overline{x}_{\nu_1} & = x_{\nu_1} + \epsilon \left( \frac{m_1}{m_1 + m_2} r_{\sigma} - r_{\nu}^{\prime} \right) \\
            \overline{x}_{\nu_2} & = x_{\nu_2} + \epsilon \frac{m_1}{m_1+m_2} r_{\sigma} \\
            \overline{x}_1^{\prime} & = x_1^{\prime} + \epsilon \frac{m_2}{m_1 + m_2} r_{\sigma} \\
            \overline{x}_2^{\prime} & = x_2^{\prime} - \epsilon \frac{m_1}{m_1+m_2} r_{\sigma}
        \end{cases}
    \end{equation*}
    
    \noindent allows for
    
    \small{\begin{align*}
        & \underset{r_{\sigma} \in \mathbb{R}}{\sup} \, \int_{\mathbb{R}^{n-1}} d\underline{P}_{\nu} dR_{\sigma} dx_{\nu_1} dx_{\nu_2}  \, \int_{\mathbb{R}} \, dr_{\nu}^{\prime} \, \abs{ \int_{\mathbb{R}^3} \, dR_{\nu}^{\prime}dx_1^{\prime}dx_2^{\prime} \, G_{z-Q_{\nu}}^{(4)} \left( X_{\sigma \nu, \epsilon} \right) \eta\left( r_{\nu}^{\prime}, R_{\nu}^{\prime}, x_1^{\prime}, x_2^{\prime} \right) \xi\left(\underline{P}_{\nu} \right) }^2 \equiv \\
        \begin{split}
            & \equiv \underset{r_{\sigma} \in \mathbb{R}}{\sup} \, \int_{\mathbb{R}^n} \, d\underline{P}_{\nu} dR_{\sigma} d\overline{x}_{\nu_1} d\overline{x}_{\nu_2} dr_{\nu}^{\prime} \, \bigg\lvert  \int_{\mathbb{R}^3} \, d\overline{R}_{\nu}^{\prime} d\overline{x}_1^{\prime} d\overline{x}_2^{\prime} \; \left[\right. \tilde{\eta}\left( r_{\nu}^{\prime}, \overline{R}_{\nu}^{\prime}, \overline{x}_1^{\prime}, \overline{x}_2^{\prime} \right) \xi\left( \underline{P}_{\nu} \right)  \\
            &\phantom{ = {} } \left. G_{z-Q_{\nu}}^{(4)} \left( \sqrt{2m_1} \left( R_{\sigma} - \overline{x}_1^{\prime} \right), \sqrt{2m_2} \left( R_{\sigma} - \overline{x}_2^{\prime} \right), \sqrt{2m_{\nu_1}} \left( \overline{x}_{\nu_1} - \overline{R}_{\nu}^{\prime} \right), \sqrt{2m_{\nu_2}} \left( \overline{x}_{\nu_2} - \overline{R}_{\nu}^{\prime} \right) \right) \right] \bigg\rvert ^{2}
        \end{split}\\
        \begin{split}
            & \leq \underset{r_{\sigma} \in \mathbb{R}}{\sup} \, \int_{\mathbb{R}^n} \, d\underline{P}_{\nu} dR_{\sigma} d\overline{x}_{\nu_1} d\overline{x}_{\nu_2} dr_{\nu}^{\prime} \, \left[  \int_{\mathbb{R}^3} \, d\overline{R}_{\nu}^{\prime} d\overline{x}_1^{\prime} d\overline{x}_2^{\prime} \;  \abs{\tilde{\eta}\left( r_{\nu}^{\prime}, \overline{R}_{\nu}^{\prime}, \overline{x}_1^{\prime}, \overline{x}_2^{\prime} \right) \xi\left( \underline{P}_{\nu} \right)} \right.  \\
            &\phantom{ = {} } \left. G_z^{(4)} \left( \sqrt{2m_1} \left( R_{\sigma} - \overline{x}_1^{\prime} \right), \sqrt{2m_2} \left( R_{\sigma} - \overline{x}_2^{\prime} \right), \sqrt{2m_{\nu_1}} \left( \overline{x}_{\nu_1} - \overline{R}_{\nu}^{\prime} \right), \sqrt{2m_{\nu_2}} \left( \overline{x}_{\nu_2} - \overline{R}_{\nu}^{\prime} \right) \right) \right]^2
        \end{split} \\
        & \leq \left( \text{by using H{\"o}lder inequality} \right)\\
        \begin{split}
            & \leq \left( \int_{\mathbb{R}^{n-4}} \,  d\underline{P}_{\nu} \, \abs{\xi\left( \underline{P}_{\nu} \right) }^2 \right) \cdot  \underset{r_{\sigma} \in \mathbb{R}}{\sup} \, \int_{\mathbb{R}} dr_{\nu}^{\prime} \int_{\mathbb{R}^3} \, dR_{\sigma} d\overline{x}_{\nu_1} d\overline{x}_{\nu_2} \left[ \int_{\mathbb{R}^3} \, d\overline{R}_{\nu}^{\prime} d\overline{x}_1^{\prime} d\overline{x}_2^{\prime} \, \abs{\tilde{\eta}\left( r_{\nu}^{\prime}, \overline{R}_{\nu}^{\prime}, \overline{x}_1^{\prime}, \overline{x}_2^{\prime} \right)} \cdot \right.
        \end{split}\\
        \begin{split}
            &\phantom{ = {  } } \left. \cdot \, G_z^{(4)} \left( \sqrt{2m_1} \left( R_{\sigma} - \overline{x}_1^{\prime} \right), \sqrt{2m_2} \left( R_{\sigma} - \overline{x}_2^{\prime} \right), \sqrt{2m_{\nu_1}} \left( \overline{x}_{\nu_1} - \overline{R}_{\nu}^{\prime} \right), \sqrt{2m_{\nu_2}} \left( \overline{x}_{\nu_2} - \overline{R}_{\nu}^{\prime} \right) \right) \right]^2,
        \end{split}
    \end{align*}}
    
    \normalsize
    \noindent where
    
    \small{\begin{equation*}
        \tilde{\eta}\left( r_{\nu}^{\prime}, \overline{R}_{\nu}^{\prime}, \overline{x}_1^{\prime}, \overline{x}_2^{\prime} \right) = \eta\left( r_{\nu}^{\prime}, \, \overline{R}_{\nu}^{\prime} - \epsilon\left( \frac{m_1}{m_1+m_2}r_{\sigma} - \frac{m_{\nu_1}}{m_{\nu_1} + m_{\nu_2}} r_{\nu}^{\prime} \right), \, \overline{x}_1^{\prime} - \epsilon \frac{m_2}{m_1+m_2}r_{\sigma}, \, \overline{x}_2^{\prime} + \epsilon \frac{m_1}{m_1+m_2} r_{\sigma} \right).
    \end{equation*}}
    
    \normalsize
    \noindent Consequently, by Appendix 4, independently of $\epsilon >0$,
    
    \begin{equation}\label{Norm estimate 2 Off-Diagonal Lambda}
        \norm{ \left[\Lambda_{\epsilon}\left( z \right)_{\text{off}} \right]_{\sigma \nu} }_{\mathfrak{B}\left( \chi_{\sigma}, \chi_{\nu} \right)} \leq \left( \underset{i}{\max} \, m_i^2 \right) \frac{\abs{g}}{\sqrt{\abs{z}}}.
    \end{equation}
\end{proof}

\begin{remark}
    \justifying
    Summarizing, set $K \doteq \max \left[ \left( \underset{i}{\max} \, m_i^{\frac{3}{2}}, \, \underset{i}{\max} \, m_i^2 \right) \right]$,
    \begin{equation*}
        \norm{ \left[ \Lambda_{\epsilon} \left( z \right)_{\text{off}} \right]_{\sigma \nu} }_{\mathfrak{B}\left( \chi_{\sigma}, \chi_{\nu} \right)} \leq K \frac{\abs{g}}{\sqrt{\abs{z}}},
    \end{equation*}
    \noindent for all $\epsilon > 0, \, z < 0, \, \sigma \neq \nu$, hence
    \begin{equation*}
        \underset{\sigma,\, \nu}{\max} \, \norm{ \left[ \Lambda_{\epsilon} \left( z \right)_{\text{off}} \right]_{\sigma \nu} }_{\mathfrak{B}\left( \chi_{\sigma}, \chi_{\nu} \right)} \leq K \frac{\abs{g}}{\sqrt{\abs{z}}}.
    \end{equation*}
    By also taking into account the diagonal contribution and denoting by $\norm{\cdot}_{\oplus}$ the operator norm of $\mathfrak{B}\left( \chi \right)$,
    \begin{align*}
        \norm{ \left[ \Lambda_{\epsilon}\left( z \right)_{\text{diag}} \right]^{-1} \left[ \Lambda_{\epsilon}\left(z\right)_{\text{off}} \right] }_{\oplus} & \leq \norm{ \left[ \Lambda_{\epsilon}\left( z \right)_{\text{diag}} \right]^{-1} }_{\oplus} \, \norm{ \left[ \Lambda_{\epsilon}\left(z\right)_{\text{off}} \right] }_{\oplus} \leq \\
        & \leq \frac{n \left( n-1 \right)}{2} \cdot \left[1 - \frac{ \mathfrak{C}\abs{g} }{ \sqrt{\abs{z}} } \right]^{-1} \cdot K \cdot \frac{\abs{g}}{ \sqrt{ \abs{z} } },
    \end{align*}
    \noindent resulting in $\norm{ \left[ \Lambda_{\epsilon}\left( z \right)_{\text{diag}} \right]^{-1} \left[ \Lambda_{\epsilon}\left(z\right)_{\text{off}} \right] }_{\oplus} < 1$ as long as $ z < - g^2 \left[ \frac{n\left(n-1\right)}{2} K + \mathfrak{C} \right]^2 \doteq z_0 $. Consequently, $ \left\{ \mathds{1} + \left[ \Lambda_{\epsilon}\left( z \right)_{\text{diag}} \right]^{-1} \left[ \Lambda_{\epsilon}\left(z\right)_{\text{off}} \right] \right\} $ is invertible in $\mathfrak{B}\left( \chi \right)$, for all $\epsilon > 0$. $\hfill \square$
\end{remark}

\subsection{Computing $\Lambda_{\epsilon}\left(z\right)_{\text{off}}$ as $\epsilon \downarrow 0$}

\vspace{2mm}
$\boxed{ \sigma=(12), \, \nu = (1\nu_2), \, 3 \leq \nu_2 \leq n }$

\begin{theorem}
    For all $z<0$, given
    \begin{equation*}
        \left[ \Lambda_{0, \, \underline{P}_{\nu}}\left( z \right)_{\text{off}} \right]_{\sigma \nu}: \, L^2\left( \mathbb{R}^3,dr_{\nu} dR_{\nu} dx_2 \right) \otimes \Tilde{\chi}^{-}_{\nu} \longrightarrow L^2\left( \mathbb{R}^3, dr_{\sigma} dR_{\sigma} dx_{\nu_2} \right) \otimes \Tilde{\chi}^{-}_{\nu}
    \end{equation*}
    \noindent defined by
    \begin{equation*}
        \left(\left[ \Lambda_{0, \, \underline{P}_{\nu}}\left( z \right)_{\text{off}} \right]_{\sigma \nu} \psi\right) \left( r_{\sigma}, R_{\sigma}, x_{\nu_2}, \underline{P}_{\nu} \right) \doteq C v\left(r_{\sigma}\right) \int_{\mathbb{R}^3} \, dr_{\nu}^{\prime} dR_{\nu}^{\prime} dx_2^{\prime} \;    G_{z-Q_{\nu}}^{(3)} \left( X_{\sigma \nu, \, 0} \right) v\left(r_{\nu}^{\prime}\right) \psi\left( r_{\nu}^{\prime}, R_{\nu}^{\prime}, x_{2}^{\prime}, \underline{P}_{\nu} \right),
    \end{equation*}
    \noindent for all $\psi \in L^2\left( \mathbb{R}^3,dr_{\nu} dR_{\nu} dx_2 \right) \otimes \Tilde{\chi}^{-}_{\nu}$, where
    \begin{equation*}
        X_{\sigma \nu, \, 0} = 
        \left(\begin{matrix}
            & \sqrt{2m_1} \left(R_{\sigma} - R_{\nu}^{\prime}\right)\\
            & \sqrt{2m_2} \left( R_{\sigma} - x_2^{\prime} \right)\\
            & \sqrt{2m_{\nu_2}} \left( x_{\nu_2} - R_{\nu}^{\prime} \right)
        \end{matrix} \right), \quad Q_{\nu} = \sum_{\underset{k \neq \nu_2}{k=3}}^n \, \frac{p_k^2}{2m_k}, \quad C = - \left( 2 \right)^{\frac{3}{2}} g \sqrt{m_1m_2m_{\nu_2}},
    \end{equation*}
    
    \begin{equation*}
        \underset{\epsilon \downarrow 0}{\lim} \; \norm{ \left[ \Lambda_{\epsilon, \underline{P}_{\nu}} \left( z \right)_{\text{off}} \right]_{\sigma \nu} - \left[ \Lambda_{0, \, \underline{P}_{\nu}} \left( z \right)_{\text{off}} \right]_{\sigma \nu}}_{\mathfrak{B}\left( \chi_{\sigma}, \chi_{\nu} \right)} = 0.
    \end{equation*}
\end{theorem}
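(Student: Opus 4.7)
The plan is to mirror the preceding theorem's proof, working with the kernel difference $G^{(3)}_{z-Q_{\nu}}(X_{\sigma\nu,\epsilon}) - G^{(3)}_{z-Q_{\nu}}(X_{\sigma\nu,0})$ in place of $G^{(3)}_{z-Q_{\nu}}(X_{\sigma\nu,\epsilon})$ throughout. First, I would observe that $\Delta_{\epsilon}\doteq[\Lambda_{\epsilon,\underline{P}_{\nu}}(z)_{\text{off}}]_{\sigma\nu} - [\Lambda_{0,\underline{P}_{\nu}}(z)_{\text{off}}]_{\sigma\nu}$ remains multiplicative in $\underline{P}_{\nu}$ and integral in the remaining variables, with kernel $Cv(r_{\sigma})v(r_{\nu}^{\prime})[G^{(3)}_{z-Q_{\nu}}(X_{\sigma\nu,\epsilon}) - G^{(3)}_{z-Q_{\nu}}(X_{\sigma\nu,0})]$. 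Applying the same H{\"o}lder step in $r_{\nu}^{\prime}$ (which relies only on $\int V = 1$), followed by the same coordinate change on the primed variables as in the preceding theorem, would reduce the problem to bounding an expression of the form
\[
\norm{\Delta_{\epsilon}(\eta\otimes\xi)}^{2}_{2}\leq C^{2}\norm{\eta\otimes\xi}^{2}_{2}\cdot\underset{r_{\sigma}\in\text{supp}\,v}{\sup}\,\Phi(r_{\sigma},\epsilon),
\]
uniformly in $\underline{P}_{\nu}$, leaving me with the task of showing $\Phi(r_{\sigma},\epsilon)\to 0$ as $\epsilon\downarrow 0$.

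Second, to extract an explicit $\epsilon$-prefactor at the kernel level, I would bound the pointwise difference of Green functions via the Fundamental Theorem of Calculus, in the very spirit of the inequality $|e^{-\epsilon x}-1|\leq\epsilon x$ which was decisive for the diagonal analogue:
\[
|G^{(3)}_{w}(X_{\epsilon}) - G^{(3)}_{w}(X_{0})| \leq |X_{\epsilon}-X_{0}|\int_{0}^{1}|\nabla G^{(3)}_{w}|\bigl(X_{0}+s(X_{\epsilon}-X_{0})\bigr)\,ds.
\]
Since $X_{\sigma\nu,\epsilon}-X_{\sigma\nu,0}$ is linear in $(r_{\sigma},r_{\nu}^{\prime})$ and $v$ is compactly supported, the factor $|X_{\epsilon}-X_{0}|$ is uniformly $O(\epsilon)$ whenever $v(r_{\sigma})v(r_{\nu}^{\prime})\neq 0$, pulling an $\epsilon$ outside the integral.

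Third, I would close the estimate by invoking a Schur-type bound, uniform in $w\leq z<0$, on the integral operator with kernel $|\nabla G^{(3)}_{w}|$, of the same nature as the one for $G^{(3)}_{w}$ used in Appendix 4. Together with the $O(\epsilon)$-smallness already extracted and the inequality $|z-Q_{\nu}|\geq|z|$, this would yield $\Phi(r_{\sigma},\epsilon)\leq K\epsilon^{2}$ uniformly in $r_{\sigma}\in\text{supp}\,v$ and $\underline{P}_{\nu}$, proving the claim.

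The main obstacle will be the Schur/operator-norm estimate on the kernel $|\nabla G^{(3)}_{w}|$: in dimension three the gradient behaves like $|x|^{-2}e^{-\sqrt{|w|}|x|}$ near the origin, a borderline singularity. However, by coupling it with the $O(\epsilon)$-smallness of $|X_{\epsilon}-X_{0}|$ and exploiting the compactness of $\text{supp}\,v$ (which excludes the large-$r_{\sigma},r_{\nu}^{\prime}$ regime), I expect to push the estimate through along the same coordinate-change plus Schur-test template already used in the preceding theorem.
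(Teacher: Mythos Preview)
Your plan diverges from the paper's at the decisive step, and the divergence is not cosmetic: it introduces a singularity that the Schur template cannot absorb.

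The paper never differentiates $G^{(3)}_{z}$. Instead it inserts an intermediate point $\tilde{X}_{\sigma\nu,\epsilon}$ (agreeing with $X_{\sigma\nu,\epsilon}$ in the first coordinate and with $X_{\sigma\nu,0}$ in the remaining two), splits the difference by the triangle inequality, and then uses the heat-kernel representation to \emph{drop} the common first coordinate via the trivial bound $e^{-a^2/4t}\leq 1$. This reduces each piece to a difference of the form $G^{(3)}_{z}(0,\cdot,\cdot)_{\text{shifted}}-G^{(3)}_{z}(0,\cdot,\cdot)$, which is exactly the structure handled in \cite{05_GHL}, Proposition~4.5, after an affine change of variables. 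No gradient appears.

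Your FTC route replaces this by the pointwise bound $|G^{(3)}_{w}(X_{\epsilon})-G^{(3)}_{w}(X_{0})|\leq |X_{\epsilon}-X_{0}|\int_{0}^{1}|\nabla G^{(3)}_{w}|(X_{s})\,ds$, and then asks for a Schur bound on the kernel $|\nabla G^{(3)}_{w}|$. But $G^{(3)}_{w}(X)=\frac{e^{-\sqrt{|w|}\,|X|}}{4\pi|X|}$, so $|\nabla G^{(3)}_{w}(X)|\sim |X|^{-2}$ near the origin. Running the exact computation of Appendix~4 with this kernel in place of $G^{(3)}_{z}$, the radial integral becomes
\[
\int_{0}^{\infty}\frac{\rho\,e^{-\sqrt{2}\alpha\sqrt{\rho^{2}+c^{2}}}}{\rho^{2}+c^{2}}\,d\rho=\int_{c}^{\infty}\frac{e^{-\sqrt{2}\alpha u}}{u}\,du\sim -\log c\quad(c\downarrow 0),
\]
where $c$ is (up to constants) the transverse distance $|x-y|$. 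The Schur integral therefore fails to be uniformly bounded in the output variables, and the $O(\epsilon)$ prefactor from $|X_{\epsilon}-X_{0}|$ is of no help---it multiplies the Schur constant, it does not regularize it. Your own remark that the singularity is ``borderline'' is the warning sign: in this geometry it is on the wrong side of borderline.

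If you want to salvage your strategy you would need either a weighted Schur test tuned to kill the logarithm, or an interpolation argument that trades the $\epsilon$ against a fractional power of the Green function (e.g.\ bounding $|G(X_{\epsilon})-G(X_{0})|\leq C|X_{\epsilon}-X_{0}|^{\theta}(|X_{\epsilon}|^{-1-\theta}+|X_{0}|^{-1-\theta})e^{-c|X|}$ for some $\theta<1$), which does stay Schur-integrable over the $2$-plane. The paper's intermediate-point plus monotonicity trick is the cleaner way around the obstacle.
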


\begin{proof}
    Let $\eta \in L^2\left( \mathbb{R}^3,dr_{\nu}dR_{\nu}x_2 \right), \, \xi \in \Tilde{\chi}^{-}_{\nu}$ be arbitrary.
    \small{\begin{align*}
        & \frac{1}{C^2} \, \norm{ \left\{ \left[ \Lambda_{\epsilon, \underline{P}_{\nu}} \left( z \right)_{\text{off}} \right]_{\sigma \nu} - \left[ \Lambda_{0, \, \underline{P}_{\nu}} \left( z \right)_{\text{off}} \right]_{\sigma \nu} \right\} \eta \otimes \xi }_2^2 = \\
        & = \int_{\mathbb{R}^n} \, d\underline{P}_{\nu}dr_{\sigma}dR_{\sigma}dx_{\nu_2} \, \abs{ \int_{\mathbb{R}^3} \, dr_{\nu}^{\prime}dR_{\nu}^{\prime}dx_2^{\prime} \, v\left(r_{\sigma}\right) \left[ G_{z-Q_{\nu}}^{(3)}\left( X_{\sigma \nu, \epsilon} \right) - G_{z-Q_{\nu}}^{(3)}\left( X_{\sigma \nu, \, 0} \right) \right] v\left(r_{\nu}^{\prime}\right) \eta\left( r_{\nu}^{\prime}, R_{\nu}^{\prime},x_2^{\prime} \right) \xi\left( \underline{P}_{\nu} \right)}^2 \\
        \begin{split}
            & = \int_{\mathbb{R}^n} \, d\underline{P}_{\nu}dr_{\sigma}dR_{\sigma}dx_{\nu_2} \, V\left( r_{\sigma} \right) \bigg\lvert \int_{\mathbb{R}} \,dr_{\nu}^{\prime} \, v\left(r_{\nu}^{\prime} \right) \cdot \\
            &\phantom{ = {= \int_{\mathbb{R}^n} \, d\underline{P}_{\nu}dr_{\sigma}dR_{\sigma}dx_2 \, V\left( r_{\sigma} \right) \bigg\lvert} } \cdot \int_{\mathbb{R}^2} \, dR_{\nu}^{\prime}dx_2^{\prime} \left[ G_{z-Q_{\nu}}^{(3)}\left( X_{\sigma \nu, \epsilon} \right) - G_{z-Q_{\nu}}^{(3)}\left( X_{\sigma \nu, \, 0} \right) \right] \eta\left( r_{\nu}^{\prime}, R_{\nu}^{\prime},x_2^{\prime} \right) \xi\left( \underline{P}_{\nu} \right) \bigg\rvert ^2
        \end{split}\\
        \begin{split}
            & \leq \int_{\mathbb{R}^n} \, d\underline{P}_{\nu}dr_{\sigma}dR_{\sigma}dx_{\nu_2} \, V\left( r_{\sigma} \right) \left\{ \int_{\mathbb{R}} \,dr_{\nu}^{\prime} \, v\left(r_{\nu}^{\prime} \right) \frac{1 + \abs{r_{\nu}^{\prime}}^\frac{1}{2}}{1 + \abs{r_{\nu}^{\prime}}^\frac{1}{2}} \cdot \right. \\
            &\phantom{ = { = \int_{\mathbb{R}^n} \, d\underline{P}_{\nu}dr_{\sigma}dR_{\sigma}dx_2 \, V\left( r_{\sigma} \right) \left\{ \right.} } \left. \cdot \abs{ \int_{\mathbb{R}^2} \, dR_{\nu}^{\prime}dx_2^{\prime} \left[ G_{z-Q_{\nu}}^{(3)}\left( X_{\sigma \nu, \epsilon} \right) - G_{z-Q_{\nu}}^{(3)}\left( X_{\sigma \nu, \, 0} \right) \right] \eta\left( r_{\nu}^{\prime}, R_{\nu}^{\prime},x_2^{\prime} \right) \xi\left( \underline{P}_{\nu} \right) } \right\}^2
        \end{split}\\
        \begin{split}
            & \leq 2 \int_{\mathbb{R}^n} \, d\underline{P}_{\nu}dr_{\sigma}dR_{\sigma}dx_{\nu_2} \, V\left( r_{\sigma} \right) \left[ \int_{\mathbb{R}} \, dr_{\nu}^{\prime} \left(1 + \abs{r_{\nu}^{\prime}}\right) V\left( r_{\nu}^{\prime} \right) \right] \cdot \\
            &\phantom{ = {\leq \int_{\mathbb{R}^n} \, d\underline{P}_{\nu}dr_{\sigma} \left[\right.} } \cdot \int_{\mathbb{R}} dr_{\nu}^{\prime} \frac{1}{1 + \abs{r_{\nu}^{\prime}}} \, \abs{ \int_{\mathbb{R}^2} \, dR_{\nu}^{\prime}dx_2^{\prime} \left[ G_{z-Q_{\nu}}^{(3)}\left( X_{\sigma \nu, \epsilon} \right) - G_{z-Q_{\nu}}^{(3)}\left( X_{\sigma \nu, \, 0} \right) \right] \eta\left( r_{\nu}^{\prime}, R_{\nu}^{\prime},x_2^{\prime} \right) \xi\left( \underline{P}_{\nu} \right) }^2
        \end{split}\\
        \begin{split}
            & \equiv I\left(V,\frac{1}{2}\right) \int_{\mathbb{R}^n} \, d\underline{P}_{\nu} dR_{\sigma} dx_{\nu_2} \, \int_{\mathbb{R}^2} dr_{\sigma} dr_{\nu}^{\prime} \, \left[\frac{V\left( r_{\sigma} \right)}{1 + \abs{r_{\nu}^{\prime}}}\right] \cdot \\
            &\phantom{ = { \equiv \int_{\mathbb{R}^n} \, d\underline{P}_{\nu} dR_{\sigma} dx_2 \, \int_{\mathbb{R}^2} dr_{\sigma} dr_{\nu}^{\prime} } } \, \cdot \left[ \int_{\mathbb{R}^2} dR_{\nu}^{\prime}dx_2^{\prime}  \abs{G_{z-Q_{\nu}}^{(3)}\left( X_{\sigma \nu,\epsilon} \right) - G_{z-Q_{\nu}}^{(3)}\left( X_{\sigma, \nu, \, 0} \right)} \abs{\eta\left( r_{\nu}^{\prime}, R_{\nu}^{\prime}, x_2^{\prime} \right) \xi\left( \underline{P}_{\nu} \right)} \right]^2
        \end{split}\\
        \begin{split}
            & \leq I\left(V,\frac{1}{2}\right) \int_{\mathbb{R}^n} \, d\underline{P}_{\nu} dR_{\sigma} dx_{\nu_2} \, \int_{\mathbb{R}^2} dr_{\sigma} dr_{\nu}^{\prime} \, \left[\frac{V\left( r_{\sigma} \right)}{1 + \abs{r_{\nu}^{\prime}}} \right] \cdot \\
            &\phantom{ = { \equiv \int_{\mathbb{R}^n} \, d\underline{P}_{\nu} dR_{\sigma} dx_2 \, \int_{\mathbb{R}^2} dr_{\sigma} dr_{\nu}^{\prime} } } \, \cdot \left[ \int_{\mathbb{R}^2} dR_{\nu}^{\prime}dx_2^{\prime}  \abs{G_{z}^{(3)}\left( X_{\sigma \nu,\epsilon} \right) - G_{z}^{(3)}\left( X_{\sigma, \nu, \, 0} \right)} \abs{\eta\left( r_{\nu}^{\prime}, R_{\nu}^{\prime}, x_2^{\prime} \right) \xi\left( \underline{P}_{\nu} \right)} \right]^2
        \end{split}\\
        \begin{split}
            & \equiv I\left(V,\frac{1}{2}\right) \left[ \int_{\mathbb{R}^{n-3}} d\underline{P}_{\nu} \, \abs{\xi \left( \underline{P}_{\nu} \right)}^2 \right] \left\{\int_{\mathbb{R}^2} \, dr_{\sigma}dr_{\nu}^{\prime} \, \left[ \frac{V\left(r_{\sigma}\right)}{1 + \abs{r_{\nu}^{\prime}}} \right] \cdot \right.\\
            &\phantom{ = { \equiv I\left(V,s\right) \left[ \int_{\mathbb{R}^{n-3}} d\underline{P}_{\nu} \, \abs{\xi \left( \underline{P}_{\nu} \right)}^2 \right] } } \left. \cdot \int_{\mathbb{R}^2} \, dR_{\sigma} dx_{\nu_2} \left[ \int_{\mathbb{R}^2} dR_{\nu}^{\prime}dx_2^{\prime} \, \abs{ G_{z}^{(3)}\left(X_{\sigma \nu, \epsilon}\right) - G_{z}^{(3)}\left(X_{\sigma \nu, \, 0}\right) } \abs{\eta\left( r_{\nu}^{\prime}, R_{\nu}^{\prime},x_2^{\prime} \right)} \right]^2 \right\}
        \end{split}
    \end{align*}}
    \normalsize
    \noindent where
    \begin{equation*}
        I\left( V, \frac{1}{2} \right) = 2 \int_{\mathbb{R}} dr \, \left( 1 + \abs{r} \right) V\left(r\right) < \infty.
    \end{equation*}
    \noindent Let then $F_{r_{\sigma}, r_{\nu}^{\prime}, \epsilon}: L^2\left(\mathbb{R}^2,dR_{\nu}dx_2\right) \longrightarrow L^2\left(\mathbb{R}^2,dR_{\sigma}dx_{\nu_2}\right)$ be defined by
    \begin{equation*}
        \left[F_{r_{\sigma}, r_{\nu}^{\prime}, \epsilon}\varphi\right] \left( R_{\sigma},x_{\nu_2} \right) \doteq \int_{\mathbb{R}^2} dR_{\nu}^{\prime}dx_2^{\prime} \, \left[ G_{z}^{(3)}\left(X_{\sigma \nu, \, \epsilon}\right) - G_{z}^{(3)}\left(X_{\sigma \nu, \, 0}\right) \right]\varphi\left( R_{\nu}^{\prime},x_{2}^{\prime} \right).
    \end{equation*}
    \noindent The Schur test is going to be used to ascertain whether it is bounded or not. By introducing
    \begin{equation*}
        \Tilde{X}_{\sigma \nu,\epsilon} = 
        \left( 
        \begin{matrix}
            \sqrt{2m_1} \left[ R_{\sigma} - R_{\nu}^{\prime} - \epsilon \left( \frac{m_2}{m_1+m_2} r_{\sigma} - \frac{m_{\nu_2}}{m_{\nu_1} + m_{\nu2}} r_{\nu}^{\prime} \right) \right] \\
            \sqrt{2m_2} \left( R_{\sigma} - x_2^{\prime} \right) \\
            \sqrt{2m_{\nu_2}} \left( x_{\nu_2} - R_{\nu}^{\prime} \right)
        \end{matrix}
        \right),
    \end{equation*}
    \noindent the following procedure is adopted.
    \begin{align*}
        & \int_{\mathbb{R}^2} \, dR_{\nu}^{\prime}dx_{2}^{\prime} \, \abs{G_{z}^{(3)}\left( X_{\sigma \nu, \epsilon} \right) - G_{z}^{(3)}\left( X_{\sigma \nu, \, 0} \right) } \leq \int_{\mathbb{R}^2} \, dR_{\nu}^{\prime}dx_{2}^{\prime} \, \abs{G_{z}^{(3)}\left( X_{\sigma \nu, \epsilon} \right) - G_{z}^{(3)}\left( \Tilde{X}_{\sigma \nu, \epsilon} \right) } + \\
        & + \int_{\mathbb{R}^2} \, dR_{\nu}^{\prime}dx_{2}^{\prime} \abs{G_{z}^{(3)} \left( \Tilde{X}_{\sigma \nu, \epsilon} \right) - G_{z}^{(3)} \left( X_{\sigma \nu, \, 0} \right)} \equiv \boxed{A} + \boxed{B} \, .
    \end{align*}
    \noindent By observing that
    
    \small{\begin{align*}
        & \abs{ G_{z}^{(3)}\left(X_{\sigma \nu, \epsilon}\right) - G_{z}^{(3)}\left(\tilde{X}_{\sigma \nu,\epsilon} \right) } = \\
        & \bigg\lvert \int_0^{\infty} \, \left\{ e^{ - \frac{ \left\{ \sqrt{2m_1} \left[ R_{\sigma} - R_{\nu}^{\prime} - \epsilon \left( \frac{m_2}{m_1+m_2}r_{\sigma} - \frac{m_{\nu_2}}{m_{\nu_1} + m_{\nu_2}}r_{\nu}^{\prime} \right) \right] \right\}^2 + \left\{ \sqrt{2m_2} \left( R_{\sigma} - x_2^{\prime} + \epsilon \frac{m_1}{m_1+m_2}r_{\sigma} \right) \right\}^2 + \left\{ \sqrt{2m_{\nu_2}} \left[ x_{\nu_2} - R_{\nu}^{\prime} + \epsilon \left( \frac{m_{\nu_1}}{m_{\nu_1} + m_{\nu_2}} \right)r_{\nu}^{\prime} \right] \right\}^2 }{4t} + zt } \right. \\
        & \left. - e^{ - \frac{ \left\{ \sqrt{2m_1} \left[ R_{\sigma} - R_{\nu}^{\prime} - \epsilon \left( \frac{m_2}{m_1+m_2} r_{\sigma} - \frac{m_{\nu_2}}{m_{\nu_1} + m_{\nu_2}} r_{\nu}^{\prime} \right) \right] \right\}^2 + \left\{ \sqrt{2m_2} \left( R_{\sigma} - x_2^{\prime} \right) \right\}^2 + \left\{ \sqrt{2m_{\nu_2}} \left( x_{\nu_2} - R_{\nu}^{\prime} \right) \right\}^2 }{4t} + zt } \right\} \frac{dt}{ \left( 4\pi t \right)^{\frac{3}{2}}} \bigg\rvert \leq \\
        \begin{split}
            & \leq \bigg\lvert \int_0^{\infty} \, \left[ e^{ - \frac{ \left[ \sqrt{2m_2} \left( R_{\sigma} - x_2^{\prime} + \epsilon \frac{m_1}{m_1+m_2}r_{\sigma} \right) \right]^2 + \left\{ \sqrt{2m_{\nu_2}} \left[ x_{\nu_2} - R_{\nu}^{\prime} + \epsilon \left( \frac{m_{\nu_1}}{m_{\nu_1} + m_{\nu_2}} \right) r_{\nu}^{\prime} \right] \right\}^2 }{4t} + zt } + \right. \\
            &\phantom{ = { \bigg\lvert \int_0^{\infty} \, \left[\right. } } \left. - e^{ - \frac{ \left[ \sqrt{2m_2} \left( R_{\sigma} - x_2^{\prime} \right) \right]^2 + \left[ \sqrt{2m_{\nu_2}} \left( x_{\nu_2} - R_{\nu}^{\prime} \right) \right]^2 }{4t} + zt } \right] \frac{dt}{\left( 4\pi t \right)^{\frac{3}{2}}} \bigg\rvert \leq 
        \end{split}\\
        \begin{split}
            & \leq \bigg\lvert G_{z}^{(3)} \left( 0, \, \sqrt{2m_{2}} \left[ R_{\sigma} - x_2^{\prime} + \epsilon \left( \frac{m_1}{m_1+m_2} \right) r_{\sigma} \right], \sqrt{2m_{\nu_2}} \left[ x_{\nu_2} - R_{\nu}^{\prime} + \epsilon \left( \frac{m_{\nu_1}}{m_{\nu_1} + m_{\nu_2}} \right) r_{\nu}^{\prime} \right] \right) + \\
            &\phantom{ = { \leq \bigg\lvert G_{z}^{(3)} \left( 0, \, \sqrt{2m_{2}} \left[ R_{\sigma} - x_2^{\prime} + \epsilon \left( \frac{m_1}{m_1+m_2} \right) r_{\sigma} \right], \right. } } - G_{z}^{(3)} \left( 0, \, \sqrt{2m_{2}} \left( R_{\sigma} - x_2^{\prime} \right), \sqrt{2m_{\nu_2}} \left( x_{\nu_2} - R_{\nu}^{\prime} \right) \right) \bigg\rvert,
        \end{split}
    \end{align*}}
    
    \normalsize
    \noindent concerning $\boxed{A} \,$, what follows holds.
    
    \small{\begin{align*}
        & \int_{\mathbb{R}^2} dR_{\nu}^{\prime}dx_2^{\prime} \, \abs{ G_{z}^{(3)} \left( X_{\sigma \nu, \epsilon} \right) - G_{z}^{(3)} \left( \tilde{X}_{\sigma \nu, \epsilon} \right) } \leq \\
        \begin{split}
            & \leq \int_{\mathbb{R}^2} dR_{\nu}^{\prime}dx_2^{\prime} \, \bigg\lvert G_{z}^{(3)} \left( 0, \, \sqrt{2m_{2}} \left[ R_{\sigma} - x_2^{\prime} + \epsilon \left( \frac{m_1}{m_1+m_2} \right) r_{\sigma} \right], \sqrt{2m_{\nu_2}} \left[ x_{\nu_2} - R_{\nu}^{\prime} + \epsilon \left( \frac{m_{\nu_1}}{m_{\nu_1} + m_{\nu_2}} \right) r_{\nu}^{\prime} \right] \right) + \\
            &\phantom{ = { \leq \bigg\lvert G_{z}^{(3)} \left( 0, \, \sqrt{2m_{2}} \left[ R_{\sigma} - x_2^{\prime} + \epsilon \left( \frac{m_1}{m_1+m_2} \right) r_{\sigma} \right], \right. } } - G_{z}^{(3)} \left( 0, \, \sqrt{2m_{2}} \left( R_{\sigma} - x_2^{\prime} \right), \sqrt{2m_{\nu_2}} \left( x_{\nu_2} - R_{\nu}^{\prime} \right) \right) \bigg\rvert
        \end{split}\\
        \begin{split}
            & = \int_{\mathbb{R}^2} dR_{\nu}^{\prime}dx_2^{\prime} \bigg\lvert \int_0^{\infty} \, \left[ e^{ - \frac{ \left[ \sqrt{2m_2} \left( R_{\sigma} - x_2^{\prime} + \epsilon \frac{m_1}{m_1+m_2}r_{\sigma} \right) \right]^2 + \left\{ \sqrt{2m_{\nu_2}} \left[ x_{\nu_2} - R_{\nu}^{\prime} + \epsilon \left( \frac{m_{\nu_1}}{m_{\nu_1} + m_{\nu_2}} \right) r_{\nu}^{\prime} \right] \right\}^2 }{4t} + zt } + \right. \\
            &\phantom{ = { \bigg\lvert \int_0^{\infty} \, \left[\right. } } \left. - e^{ - \frac{ \left[ \sqrt{2m_2} \left( R_{\sigma} - x_2^{\prime} \right) \right]^2 + \left[ \sqrt{2m_{\nu_2}} \left( x_{\nu_2} - R_{\nu}^{\prime} \right) \right]^2 }{4t} + zt } \right] \frac{dt}{\left( 4\pi t \right)^{\frac{3}{2}}} \bigg\rvert. 
        \end{split}
    \end{align*}}
    
    \normalsize
    \noindent By using
    \begin{equation*}
        \begin{cases}
            \overline{x}_2^{\prime} & = \sqrt{2m_2} \, x_2^{\prime} \\
            \overline{R}_{\nu}^{\prime} & = \sqrt{2m_{\nu_2}} \,  R_{\nu}^{\prime}
        \end{cases}
    \end{equation*}
    \noindent the following estimate holds
    \begin{align*}
        \begin{split}
            & \boxed{A} \leq \int_{\mathbb{R}^2} \frac{d\overline{x}_2^{\prime} d\overline{R}_{\nu}^{\prime}}{\sqrt{2m_2}\sqrt{2m_{\nu_2}}} \, \bigg\lvert \int_0^{\infty} \, \left\{ e^{ - \frac{ \left[ \overline{x}_2^{\prime} - \sqrt{2m_2} \left( R_{\sigma} + \epsilon \frac{m_1}{m_1+m_2}r_{\sigma} \right) \right]^2 + \left[ \overline{R}_{\nu}^{\prime} - \sqrt{2m_{\nu_2}} \left( x_{\nu_2} + \epsilon \frac{m_{\nu_1}}{m_{\nu_1} + m_{\nu_2}} r_{\nu}^{\prime} \right) \right]^2 }{4t} + zt } + \right. \\
            &\phantom{={\boxed{A} \leq \int_{\mathbb{R}^2} \frac{d\overline{x}_2^{\prime} d\overline{R}_{\nu}^{\prime}}{\sqrt{2m_2}\sqrt{2m_{\nu_2}}} \, \bigg\lvert \int_0^{\infty} \, \left\{ \right.}} \left. - e^{ - \frac{ \left( \overline{x}_2^{\prime} - \sqrt{2m_2} R_{\sigma} \right)^2 + \left( \overline{R}_{\nu}^{\prime} - \sqrt{2m_{\nu_2}} x_{\nu_2} \right)^2 }{4t} + zt } \right\} \frac{dt}{\left( 4\pi t \right)^{\frac{3}{2}}} \bigg\rvert \equiv
        \end{split}\\
        & \equiv \left( 
        \begin{cases}
            x_2^{\prime} & = \overline{x}_2^{\prime} - \sqrt{2m_2}R_{\sigma}\\
            R_{\nu}^{\prime} & = \overline{R}_{\nu}^{\prime} - \sqrt{2m_{\nu_2}} x_{\nu_2}
        \end{cases}
        \right)\\
        \begin{split}
            & \equiv \int_{\mathbb{R}^2}  \frac{dx_2^{\prime}dR_{\nu}^{\prime}}{2 \sqrt{m_2m_{\nu_2}}} \, \bigg\lvert \, G_{z}^{(3)}\left( 0, x_2^{\prime} - \sqrt{2m_2} \left( \frac{\epsilon m_1}{m_1+m_2}\right)r_{\sigma}, \, R_{\nu}^{\prime} - \sqrt{2m_{\nu_2}} \left( \frac{\epsilon m_{\nu_1}}{m_{\nu_1}+m_{\nu_2}} \right) r_{\nu}^{\prime} \right) + \\
            & \phantom{ = { \equiv \int_{\mathbb{R}^2}  \frac{dx_2^{\prime}dR_{\nu}^{\prime}}{2 \sqrt{m_2m_{\nu_2}}} \, \bigg\lvert } } - G_{z}^{(3)} \left(0, x_2^{\prime}, R_{\nu}^{\prime} \right) \bigg\rvert.
        \end{split}
    \end{align*}
    
    \noindent What obtained mimics the structure of what reported in \cite{05_GHL}, Proposition 4.5; analogous arguments hold true for $\boxed{B}$ all the same, hence $F_{r_{\sigma},r_{\nu}^{\prime},\epsilon}$ is a bounded operator and 
    \begin{equation*}
        \underset{\epsilon \downarrow 0}{\lim} \; \norm{ \left[ \Lambda_{\epsilon, \underline{P}_{\nu}} \left( z \right)_{\text{off}} \right]_{\sigma \nu} - \left[ \Lambda_{0, \, \underline{P}_{\nu}} \left( z \right)_{\text{off}} \right]_{\sigma \nu}}_{\mathfrak{B}\left( \chi_{\sigma}, \chi_{\nu} \right)} = 0.
    \end{equation*}
\end{proof}

\begin{remark}
    Similarly proven results hold for $\sigma = (12), \, \nu = (2\nu_2), \, \nu_2 \geq 3$. $\hfill \square$ 
\end{remark}

\newpage
\noindent $\boxed{ \sigma = ( 12 ), \nu = (\nu_1 \nu_2), \, 3 \leq \nu_1 < \nu_2 \leq n }$

\begin{theorem}
    For all $z<0$, given
    \begin{equation*}
        \left[ \Lambda_{0, \underline{P}_{\nu}}\left(z\right)_{\text{off}} \right]_{\sigma \nu}: L^2\left( \mathbb{R}^4, dx_1dx_2dR_{\nu}dr_{\nu} \right) \otimes \Tilde{\chi}^{-}_{\nu} \longrightarrow L^2\left( \mathbb{R}^4,dr_{\sigma}dR_{\sigma}dx_{\nu_1}dx_{\nu_2} \right) \otimes \Tilde{\chi}^{-}_{\nu}
    \end{equation*}
    \noindent defined by
    \small{\begin{equation*}
        \left( \left[\Lambda_{0,\underline{P}_{\nu}}\left( z \right)_{\text{off}} \right]_{\sigma \nu} \psi \right)\left( r_{\sigma}, R_{\sigma}, x_{\nu_1}, x_{\nu_2} \right) \doteq C v\left( r_{\sigma} \right) \int_{\mathbb{R}^4} dr_{\nu}^{\prime}dR_{\nu}^{\prime}dx_{1}^{\prime}dx_2^{\prime} \, G_{z-Q_{\nu}}^{(4)}\left( X_{\sigma \nu, 0} \right) v\left( r_{\nu}^{\prime} \right) \psi\left( r_{\nu}^{\prime}, R_{\nu}^{\prime}, x_1^{\prime}, x_2^{\prime} \right),
    \end{equation*}}
    \normalsize
    
    \noindent with $\psi \in L^2\left( \mathbb{R}^4, dx_1dx_2dR_{\nu}dr_{\nu} \right) \otimes \Tilde{\chi}^{-}_{\nu}$ and
    \begin{equation*}
        X_{\sigma \nu,0} = \left(
        \begin{matrix}
            \sqrt{2m_1} \left( R_{\sigma} - x_1^{\prime} \right)\\
            \sqrt{2m_2} \left( R_{\sigma} - x_2^{\prime} \right)\\
            \sqrt{2m_{\nu_1}} \left( x_{\nu_1} - R_{\nu}^{\prime} \right)\\
            \sqrt{2m_{\nu_2}} \left( x_{\nu_2} - R_{\nu}^{\prime} \right)
        \end{matrix}
        \right), \quad Q_{\nu} = \sum_{ \underset{k \neq \nu_1, \nu_2}{k = 3} }^{n} \, \frac{p_k^2}{2m_k}, \quad C = - 4 g \sqrt{m_1m_2m_{\nu_1}m_{\nu_2}},
    \end{equation*}
    
    \begin{equation*}
        \underset{\epsilon \downarrow 0}{\lim} \; \norm{ \left[ \Lambda_{\epsilon, \underline{P}_{\nu}} \left( z \right)_{\text{off}} \right]_{\sigma \nu} - \left[ \Lambda_{0, \, \underline{P}_{\nu}} \left( z \right)_{\text{off}} \right]_{\sigma \nu}}_{\mathfrak{B}\left( \chi_{\sigma}, \chi_{\nu} \right)} = 0.
    \end{equation*}
\end{theorem}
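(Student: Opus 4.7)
The plan is to mirror, step by step, the argument of the preceding theorem, exploiting the decisive simplification that $\sigma=(12)$ and $\nu=(\nu_1\nu_2)$ with $\nu_1,\nu_2\geq 3$ share no particle, so the $\epsilon$-shifts on the $\sigma$-side (first two components of $X_{\sigma\nu,\epsilon}$) and those on the $\nu$-side (last two components) are fully disentangled. First I would fix $\eta\in L^2(\mathbb{R}^4,dr_{\nu}dR_{\nu}dx_1dx_2)$ and $\xi\in\tilde{\chi}^{-}_{\nu}$ arbitrary and expand $\big\|\big\{[\Lambda_{\epsilon,\underline{P}_{\nu}}(z)_{\text{off}}]_{\sigma\nu}-[\Lambda_{0,\underline{P}_{\nu}}(z)_{\text{off}}]_{\sigma\nu}\big\}\eta\otimes\xi\big\|_{2}^{2}$ as an integral over the $n$ output variables. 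I would then apply H\"older's inequality in $r_{\nu}^{\prime}$ via the identical trick $1\equiv(1+|r_{\nu}^{\prime}|^{1/2})/(1+|r_{\nu}^{\prime}|^{1/2})$ to split off the finite factor $I(V,\tfrac{1}{2})=2\int_{\mathbb{R}}(1+|r|)V(r)\,dr$, and next use the monotonicity $G^{(4)}_{z-Q_{\nu}}\leq G^{(4)}_{z}$ (valid since $Q_{\nu}\geq 0$ and $z<0$) to eliminate the $\underline{P}_{\nu}$-dependence, so that $\|\xi\|_{2}^{2}$ factors out cleanly from the remaining integral.

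The next move is to introduce the intermediate argument
\begin{equation*}
    \tilde{X}_{\sigma\nu,\epsilon}=\begin{pmatrix}
        \sqrt{2m_1}\,(R_{\sigma}-x_1^{\prime})\\
        \sqrt{2m_2}\,(R_{\sigma}-x_2^{\prime})\\
        \sqrt{2m_{\nu_1}}\left[x_{\nu_1}-R_{\nu}^{\prime}-\frac{\epsilon m_{\nu_2}}{m_{\nu_1}+m_{\nu_2}}r_{\nu}^{\prime}\right]\\
        \sqrt{2m_{\nu_2}}\left[x_{\nu_2}-R_{\nu}^{\prime}+\frac{\epsilon m_{\nu_1}}{m_{\nu_1}+m_{\nu_2}}r_{\nu}^{\prime}\right]
    \end{pmatrix},
\end{equation*}
which deletes the $\sigma$-side shifts while retaining those on the $\nu$-side, and to bound the difference as
$|G^{(4)}_{z}(X_{\sigma\nu,\epsilon})-G^{(4)}_{z}(X_{\sigma\nu,0})|\leq\boxed{A}+\boxed{B}$, with $\boxed{A}=|G^{(4)}_{z}(X_{\sigma\nu,\epsilon})-G^{(4)}_{z}(\tilde{X}_{\sigma\nu,\epsilon})|$ and $\boxed{B}=|G^{(4)}_{z}(\tilde{X}_{\sigma\nu,\epsilon})-G^{(4)}_{z}(X_{\sigma\nu,0})|$. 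In $\boxed{A}$ only the first two coordinates vary, in $\boxed{B}$ only the last two; using $\exp[-(\cdot)^2/(4t)]\leq 1$ on the two unchanged coordinates (exactly as the preceding theorem does to pass from a 3D kernel to a $G_{z}^{(3)}(0,\ldots)$ difference) each piece is bounded by a 4D Green's function with two vanishing entries, i.e.\ effectively a 2D Green's function difference.

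After the affine rescalings $\overline{x}_i^{\prime}=\sqrt{2m_i}\,x_i^{\prime}$, $\overline{R}_{\nu}^{\prime}=\sqrt{2m_{\nu_1}}R_{\nu}^{\prime}$ (or $\sqrt{2m_{\nu_2}}R_{\nu}^{\prime}$) and the analogous translations, each of $\boxed{A}$ and $\boxed{B}$ takes the canonical form $|G^{(2)}_{z}(a,b)-G^{(2)}_{z}(a+\delta_{a}(\epsilon),b+\delta_{b}(\epsilon))|$ already handled in \cite{05_GHL}, Proposition 4.5. I would then define the auxiliary operator $F_{r_{\sigma},r_{\nu}^{\prime},\epsilon}:L^2(\mathbb{R}^3,dR_{\nu}^{\prime}dx_1^{\prime}dx_2^{\prime})\to L^2(\mathbb{R}^3,dR_{\sigma}dx_{\nu_1}dx_{\nu_2})$ with kernel given by the resulting difference and apply the Schur test to obtain simultaneously boundedness of $F_{r_{\sigma},r_{\nu}^{\prime},\epsilon}$ and $\|F_{r_{\sigma},r_{\nu}^{\prime},\epsilon}\|\underset{\epsilon\downarrow 0}{\longrightarrow}0$ uniformly in $(r_{\sigma},r_{\nu}^{\prime})$ ranging in the compact support of $v\otimes v$. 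Integrating this uniform decay against $V(r_{\sigma})$ and $V(r_{\nu}^{\prime})/(1+|r_{\nu}^{\prime}|)$, together with the factor $\|\xi\|^{2}$ already extracted, yields $\|\eta\otimes\xi\|_{2}$-bounded decay to zero, and hence the claimed operator-norm convergence by density of elementary tensors.

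The main obstacle is purely bookkeeping: the four-dimensional Green's function difference, expanded in all shifts, contains many cross terms, and the coordinate transformations needed to reduce the $\boxed{A}$ and $\boxed{B}$ integrals to \cite{05_GHL}'s canonical form must be performed carefully so that the Jacobians remain $\epsilon$-independent and the uniformity of the Schur bounds in $(r_{\sigma},r_{\nu}^{\prime})$ is preserved. Once those computations are arranged, the actual convergence mechanism is identical to the preceding theorem, and no genuinely new analytical input is required beyond the cited \cite{05_GHL}, Proposition 4.5.
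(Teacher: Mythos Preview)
Your overall strategy---H\"older in $r_\nu'$, monotonicity $G^{(4)}_{z-Q_\nu}\le G^{(4)}_z$, an intermediate point $\tilde X_{\sigma\nu,\epsilon}$, and a Schur test on the resulting kernel---is exactly the paper's. The gap is in your choice of intermediate point and the subsequent ``drop the two unchanged coordinates'' step. With your $2{+}2$ split, the bound on $\boxed{A}$ becomes $\bigl|G^{(4)}_z(a_1,a_2,0,0)-G^{(4)}_z(b_1,b_2,0,0)\bigr|$, where $a_i,b_i$ depend only on $R_\sigma,x_1',x_2',r_\sigma$. This majorant is \emph{independent of} $R_\nu',x_{\nu_1},x_{\nu_2}$. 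But the auxiliary operator you want to Schur-test maps $L^2(\mathbb{R}^3,dR_\nu'dx_1'dx_2')\to L^2(\mathbb{R}^3,dR_\sigma dx_{\nu_1}dx_{\nu_2})$: the input integral $\int dR_\nu'\,dx_1'\,dx_2'$ diverges (integrand constant in $R_\nu'$), and the output integral $\int dR_\sigma\,dx_{\nu_1}\,dx_{\nu_2}$ diverges too (integrand constant in $x_{\nu_1},x_{\nu_2}$). The same failure occurs symmetrically for $\boxed{B}$. In the preceding theorem only \emph{one} of three coordinates was dropped, which still left both Schur integrals tied to all the relevant variables; dropping two of four here is one too many.

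The paper avoids this by choosing a $3{+}1$ split: its intermediate point $\tilde X_{\sigma\nu,\epsilon}$ removes the $\epsilon$-shifts from the first \emph{three} components and keeps only the fourth. Then $\boxed{A}$ is a difference in three coordinates with the fourth common; dropping just that one shared coordinate yields a $G^{(4)}_z(\cdot,\cdot,\cdot,0)$-type difference that still depends on five of the six Schur variables, and after the obvious affine changes reduces to the canonical form treated in \cite{05_GHL}, Proposition~4.8 (not 4.5). The term $\boxed{B}$, which varies in a single coordinate, is handled analogously. So the fix is to use the paper's asymmetric intermediate point, drop only one coordinate at a time, and invoke Proposition~4.8 rather than~4.5.
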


\begin{proof}
    Let $\eta \in L^2\left( \mathbb{R}^4, dx_1dx_2dr_{\nu}dR_{\nu} \right), \xi \in \Tilde{\chi}^{-}_{\nu}$ be arbitrary.
    
    \small{\begin{align*}
        & \norm{ \left\{ \left[ \Lambda_{\epsilon, \underline{P}_{\nu}} \left( z \right)_{\text{off}} \right]_{\sigma \nu} - \left[ \Lambda_{0, \, \underline{P}_{\nu}} \left( z \right)_{\text{off}} \right]_{\sigma \nu} \right\} \eta \otimes \xi}_2^2 = \\
        \begin{split}
            & = \int_{\mathbb{R}^n} d\underline{P}_{\nu}dr_{\sigma}dR_{\sigma}dx_{\nu_1}dx_{\nu_2} \, \bigg\lvert C \int_{\mathbb{R}^4} dx_1^{\prime}dx_2^{\prime}dr_{\nu}^{\prime}dR_{\nu}^{\prime} \left\{ v\left(r_{\sigma}\right) v\left(r_{\nu}^{\prime}\right) \left[ G_{z-Q_{\nu}}^{(4)} \left(X_{\sigma \nu, \epsilon}\right) - G_{z-Q_{\nu}}^{(4)} \left(X_{\sigma \nu, \, 0}\right) \right] \right. \\
            &\phantom{ = {= \int_{\mathbb{R}^n} d\underline{P}_{\nu}dr_{\sigma}dR_{\sigma}dx_{\nu_1}dx_{\nu_2} \, \bigg\lvert C \int_{\mathbb{R}^4} dx_1^{\prime}dx_2^{\prime}dr_{\nu}^{\prime}dR_{\nu}^{\prime} \left\{ \right. } } \left. \eta\left( x_1^{\prime},x_2^{\prime},r_{\nu}^{\prime},R_{\nu}^{\prime} \right) \xi\left( \underline{P}_{\nu} \right) \right\} \bigg\rvert^2
        \end{split}\\
        \begin{split}
            & = C^2 \, \int_{\mathbb{R}^n} d\underline{P}_{\nu}dr_{\sigma}dR_{\sigma}dx_{\nu_1}dx_{\nu_2} \, V\left( r_{\sigma} \right) \bigg\lvert \int_{\mathbb{R}^4} dx_1^{\prime} dx_2^{\prime} dr_{\nu}^{\prime} dR_{\nu}^{\prime} \, \left\{ v\left( r_{\nu}^{\prime} \right) \left[ G_{z-Q_{\nu}}^{(4)} \left(X_{\sigma \nu, \epsilon}\right) - G_{z-Q_{\nu}}^{(4)} \left(X_{\sigma \nu, \, 0}\right) \right] \right. \\
            &\phantom{ = {= C^2 \, \int_{\mathbb{R}^n} d\underline{P}_{\nu}dr_{\sigma}dR_{\sigma}dx_{\nu_1}dx_{\nu_2} \, V\left( r_{\sigma} \right) \bigg\lvert \int_{\mathbb{R}^4} dx_1^{\prime} dx_2^{\prime} dr_{\nu}^{\prime} dR_{\nu}^{\prime} \, \left\{ \right. } } \left. \eta\left(x_1^{\prime}, x_2^{\prime}, r_{\nu}^{\prime}, R_{\nu}^{\prime} \right) \xi\left( \underline{P}_{\nu} \right) \right\} \bigg\rvert^2
        \end{split}\\
        \begin{split}
            & \leq C^2 \int_{\mathbb{R}^n} d\underline{P}_{\nu}dr_{\sigma}dR_{\sigma}dx_{\nu_1}dx_{\nu_2} \, V\left(r_{\sigma}\right) \left\{ \int_{\mathbb{R}} dr_{\nu}^{\prime} v\left(r_{\nu}^{\prime}\right) \frac{1 + \abs{r_{\nu}^{\prime}}^\frac{1}{2}}{1 + \abs{r_{\nu}^{\prime}}^\frac{1}{2}} \right. \\ 
            & \phantom{ = { \leq C^2 \int_{\mathbb{R}^n} d\underline{P}_{\nu} dr_{\sigma} dR_{\sigma} dx_{\nu_1}} } \left. \bigg\lvert \int_{\mathbb{R}^3} dR_{\nu}^{\prime} dx_2^{\prime} dx_1^{\prime} \left[ G_{z-Q_{\nu}}^{(4)} \left(X_{\sigma \nu, \epsilon}\right) - G_{z-Q_{\nu}}^{(4)} \left(X_{\sigma \nu, \, 0}\right) \right] \eta\left(x_1^{\prime}, x_2^{\prime}, r_{\nu}^{\prime}, R_{\nu}^{\prime} \right) \xi\left( \underline{P}_{\nu} \right) \bigg\rvert \right\}^2
        \end{split}\\
        \begin{split}
            & \leq C^2 I\left(V, \frac{1}{2}\right) \int_{\mathbb{R}^{n-1}} d\underline{P}_{\nu} dR_{\sigma} dx_{\nu_1} dx_{\nu_2} \, \left\{ \int_{\mathbb{R}^2} dr_{\nu}^{\prime} dr_{\sigma} \frac{V\left( r_{\sigma} \right)}{1 + \abs{r_{\nu}^{\prime}}} \right.  \cdot \\
            &\phantom{ = {\leq C^2} } \left. \cdot \bigg\lvert \int_{\mathbb{R}^3} \, dR_{\nu}^{\prime}dx_1^{\prime}dx_2^{\prime} \left[ G_{z-Q_{\nu}}^{(4)} \left(X_{\sigma \nu, \epsilon}\right) - G_{z-Q_{\nu}}^{(4)} \left(X_{\sigma \nu, \, 0}\right) \right] \eta\left(x_1^{\prime}, x_2^{\prime}, r_{\nu}^{\prime}, R_{\nu}^{\prime} \right) \xi\left( \underline{P}_{\nu} \right) \bigg\rvert^2 \right\} 
        \end{split}\\
        \begin{split}
            & \leq C^2 I\left(V, \frac{1}{2}\right) \left( \int_{\mathbb{R}^{n-4}} d\underline{P}_{\nu} \abs{\xi\left(\underline{P}_{\nu} \right)}^2 \right) \, \left\{ \int_{\mathbb{R}^2} dr_{\nu}^{\prime} dr_{\sigma} \frac{V\left( r_{\sigma} \right)}{1 + \abs{r_{\nu}^{\prime}}} \right.  \cdot \\
            &\phantom{ = {\leq C^2} } \left. \cdot \int_{\mathbb{R}^3} dR_{\sigma} dx_{\nu_1} dx_{\nu_2} \bigg\lvert \int_{\mathbb{R}^3} \, dR_{\nu}^{\prime}dx_1^{\prime}dx_2^{\prime} \left[ G_{z}^{(4)} \left(X_{\sigma \nu, \epsilon}\right) - G_{z}^{(4)} \left(X_{\sigma \nu, \, 0}\right) \right] \eta\left(x_1^{\prime}, x_2^{\prime}, r_{\nu}^{\prime}, R_{\nu}^{\prime} \right)  \bigg\rvert^2 \right\}
        \end{split}
    \end{align*}}
    
    \normalsize
    \noindent It is then considered the linear map $K: \, \psi \in L^2\left( \mathbb{R}^3,dx_1dx_2dR_{\nu} \right) \longmapsto K\psi \in L^2\left( \mathbb{R}^3, dx_{\nu_1}dx_{\nu_2}dR_{\sigma} \right)$ defined by
    \begin{equation*}
        \left(K\psi\right) \left( x_{\nu_1}, x_{\nu_2}, R_{\sigma} \right) = \int_{\mathbb{R}^3} \, dR_{\nu}^{\prime}dx_1^{\prime}dx_2^{\prime} \left[ G_{z-Q_{\nu}}^{(4)} \left(X_{\sigma \nu, \epsilon}\right) - G_{z-Q_{\nu}}^{(4)} \left(X_{\sigma \nu, \, 0}\right) \right] \psi\left(x_1^{\prime}, x_2^{\prime},  R_{\nu}^{\prime} \right).
    \end{equation*}
    \noindent By introducing the point
    \begin{equation*}
        \tilde{X}_{\sigma \nu, \epsilon} = \left( 
        \begin{matrix}
            \sqrt{2m_1} \left( x_1^{\prime} - R_{\sigma} \right)\\
            \sqrt{2m_2} \left( x_2^{\prime} - R_{\sigma} \right)\\
            \sqrt{2m_{\nu_1}} \left( R_{\nu}^{\prime} - x_{\nu_1} \right)\\
            \sqrt{2m_{\nu_2}} \left[ R_{\nu}^{\prime} - x_{\nu_2} - \epsilon \left( \frac{m_{\nu_1}}{m_{\nu_1} + m_{\nu_2}} \right) r_{\nu}^{\prime} \right]
        \end{matrix} \right),
    \end{equation*}
    \noindent to check whether $K$ is bounded or not, the Schur test is referred to again.
    
    \begin{align*}
        & \int_{\mathbb{R}^3} dR_{\nu}^{\prime}dx_1^{\prime}dx_2^{\prime} \; \abs{G_{z-Q_{\nu}}^{(4)} \left(X_{\sigma \nu, \epsilon}\right) - G_{z-Q_{\nu}}^{(4)} \left(X_{\sigma \nu, \, 0}\right)} \leq \int_{\mathbb{R}^3} dR_{\nu}^{\prime}dx_1^{\prime}dx_2^{\prime} \; \abs{G_{z}^{(4)} \left(X_{\sigma \nu, \epsilon}\right) - G_{z}^{(4)} \left(X_{\sigma \nu, \, 0}\right)} \equiv \\ & = \int_{\mathbb{R}^3} dR_{\nu}^{\prime}dx_{1}^{\prime} dx_{2}^{\prime} \; \abs{ G_{z}^{(4)}\left( X_{\sigma \nu,\epsilon}  \right) - G_{z}^{(4)}\left( \tilde{X}_{\sigma \nu,\epsilon} \right) + G_{z}^{(4)}\left( \tilde{X}_{\sigma \nu,\epsilon} \right) - G_{z}^{(4)} \left( X_{\sigma \nu,\, 0} \right) } \leq \\
        & \leq \int_{\mathbb{R}^3} \, dR_{\nu}^{\prime}dx_{1}^{\prime} dx_{2}^{\prime} \; \abs{ G_{z}^{(4)}\left( X_{\sigma \nu,\epsilon}  \right) - G_{z}^{(4)}\left( \tilde{X}_{\sigma \nu,\epsilon} \right) } + \int_{\mathbb{R}^3} dR_{\nu}^{\prime}dx_{1}^{\prime} dx_{2}^{\prime} \; \abs{ G_{z}^{(4)}\left( \tilde{X}_{\sigma \nu,\epsilon} \right) - G_{z}^{(4)} \left( X_{\sigma \nu,\, 0} \right) } \equiv \\
        & \equiv \boxed{A} + \boxed{B} \, .
    \end{align*}
    \noindent Then
    
    \small{\begin{align*}
        & \boxed{A} = \int_{\mathbb{R}^3} \, dR_{\nu}^{\prime}dx_1^{\prime}dx_2^{\prime} \; \abs{ G_{z}^{(4)}\left( X_{\sigma \nu,\epsilon}  \right) - G_{z}^{(4)}\left( \tilde{X}_{\sigma \nu,\epsilon} \right) } \leq \\
        \begin{split}
            & \leq \int_{\mathbb{R}^3} \frac{ dx_1^{\prime} dx_2^{\prime} dR_{\nu}^{\prime} }{\sqrt{6 m_1 m_2 m_{\nu_1}}} \; \bigg\lvert G_{z}^{(4)} \left( x_1^{\prime} + \epsilon \left( \frac{\sqrt{2m_1}m_2}{m_1 + m_2} \right) r_{\sigma}, \, x_2^{\prime} - \epsilon \left( \frac{\sqrt{2m_2} m_1}{m_1 + m_2} \right) r_{\sigma}, R_{\nu}^{\prime} + \epsilon \left( \frac{\sqrt{2m_{\nu_1}} m_{\nu_2} }{m_{\nu_1} + m_{\nu_2}} \right) r_{\nu}^{\prime}, 0 \right) + \\
            &\phantom{ = { \leq \int_{\mathbb{R}^3} \frac{ dx_1^{\prime} dx_2^{\prime} dR_{\nu}^{\prime} }{\sqrt{6 m_1 m_2 m_{\nu_1}}} \; \bigg\lvert } } - G_{z}^{(4)} \left( x_1^{\prime}, x_2^{\prime}, R_{\nu}^{\prime}, 0 \right) \bigg\rvert,
        \end{split}
    \end{align*}}

    \normalsize
    \noindent by having respectively used the coordinate transformations
    \begin{equation*}
        \begin{cases}
            \overline{x}_1^{\prime} = \sqrt{2m_1} x_1^{\prime}\\
            \overline{x}_2^{\prime} = \sqrt{2m_2} x_2^{\prime}\\
            \overline{R}_{\nu}^{\prime} = \sqrt{2m_{\nu_1}} R_{\nu}^{\prime}
        \end{cases} \quad \quad \text{and} \quad \quad
        \begin{cases}
            x_1^{\prime} = \overline{x}_1^{\prime} - \sqrt{2m_1}R_{\sigma}\\
            x_2^{\prime} = \overline{x}_2^{\prime} - \sqrt{2m_2}R_{\sigma}\\
            R_{\nu}^{\prime} = \overline{R}_{\nu}^{\prime} - \sqrt{2m_{\nu_1}} x_{\nu_1}
        \end{cases}.
    \end{equation*}
    \noindent From this point on, it is possible to proceed as in \cite{05_GHL}, Proposition 4.8, to eventually state that $K$ is a bounded operator. Analogous arguments apply to $\boxed{B}$, therefore 
    \begin{equation*}
        \underset{\epsilon \downarrow 0}{\lim} \; \norm{ \left[ \Lambda_{\epsilon, \underline{P}_{\nu}} \left( z \right)_{\text{off}} \right]_{\sigma \nu} - \left[ \Lambda_{0, \, \underline{P}_{\nu}} \left( z \right)_{\text{off}} \right]_{\sigma \nu}}_{\mathfrak{B}\left( \chi_{\sigma}, \chi_{\nu} \right)} = 0.
    \end{equation*}
\end{proof}

\begin{corollary}
    For all $z < z_0 \doteq - g^2 \left[ \frac{n(n-1)}{2} K + \mathfrak{C} \right]^2$, set $\Lambda_0 \left(z\right) \doteq \Lambda_0 \left( z \right)_{\text{diag}} + \Lambda_0\left( z \right)_{\text{off}}$,
    \begin{equation*}
        \underset{\epsilon \downarrow 0}{\lim} \; \Lambda_{\epsilon}\left( z \right)^{-1} = \Lambda_0\left(z\right)^{-1} \equiv \left\{ \mathds{1} + \left[\Lambda_0\left(z\right)_{\text{diag}} \right]^{-1} \Lambda_0\left(z\right)_{\text{off}} \right\}^{-1} \left[ \Lambda_0 \left(z\right)_{\text{diag}} \right]^{-1}
    \end{equation*}
    \noindent in $\mathfrak{B}\left( \chi \right)$. Consequently
    \begin{equation*}
        \underset{\epsilon \downarrow 0}{\lim} \; \left( H_{\epsilon} - z \mathds{1} \right)^{-1} = R_{H_{0}}\left(z\right) + g \sum_{\sigma, \nu \in \mathcal{I}} \, \left[S^{(\sigma)}\left( z \right)\right]^{\ast} \left[ \Lambda_0\left(z\right)^{-1} \right]_{\sigma \nu} \left[S^{(\nu)}\left(z\right)\right] \doteq R(z). 
    \end{equation*}
    $\hfill \blacksquare$
\end{corollary}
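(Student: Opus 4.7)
The plan is to pass to the $\epsilon \downarrow 0$ limit directly in the resolvent formula (\ref{epsilon Konno-Kuroda}). Since $z \in \mathbb{R}$, one has $\overline{z} = z$ and $R_{H_0}(\overline{z}) = R_{H_0}(z)$, so the claim reduces to establishing three separate norm limits: (i) $A_\epsilon^\sigma R_{H_0}(z) \to S^\sigma(z)$ and its adjoint limit; (ii) $\Lambda_\epsilon(z)^{-1} \to \Lambda_0(z)^{-1}$ in $\mathfrak{B}(\chi)$; and then (iii) combining them to identify $R(z)$.

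Step (i) is immediate from the remark following Lemma \ref{Limit of AR}, which gives $A_\epsilon^\sigma R_{H_0}(z) \to S^\sigma(z)$ in $\mathfrak{B}(L^2(\mathbb{R}^n),\chi_\sigma)$; adjoints converge in the same norm, so $[A_\epsilon^\sigma R_{H_0}(z)]^{\ast} \to [S^\sigma(z)]^{\ast}$. For step (ii), I would exploit the block decomposition (\ref{Inversion formula Lambda-Matrix}). The diagonal contribution was handled in the previous subsection: the Neumann expansion of $[\mathds{1} - \phi_\epsilon^\sigma(z)]^{-1}$ together with the $\epsilon$-uniform bound $\|\phi_\epsilon^\sigma(z)\| \le \mathfrak{C}|g|/\sqrt{|z|} < 1$ (valid for $z < -\mathfrak{C}^2 g^2$) and the convergence $\phi_\epsilon^\sigma(z) \to \phi_0^\sigma(z)$ yield $[\Lambda_\epsilon(z)_{\text{diag}}]^{-1} \to [\Lambda_0(z)_{\text{diag}}]^{-1}$ in $\mathfrak{B}(\chi)$. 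The off-diagonal part converges by the two theorems just established (the two cases $\nu = (1\nu_2)/(2\nu_2)$ and $\nu = (\nu_1\nu_2)$), so $\Lambda_\epsilon(z)_{\text{off}} \to \Lambda_0(z)_{\text{off}}$ in $\mathfrak{B}(\chi)$.

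The main obstacle is inverting $\mathds{1} + [\Lambda_\epsilon(z)_{\text{diag}}]^{-1}\Lambda_\epsilon(z)_{\text{off}}$ in an $\epsilon$-uniform way. I would observe that, for $z < z_0$, the norm estimate at the end of the previous subsection yields
\begin{equation*}
    \bigl\| [\Lambda_\epsilon(z)_{\text{diag}}]^{-1}\Lambda_\epsilon(z)_{\text{off}} \bigr\|_{\oplus} \le \tfrac{n(n-1)}{2} K |g| \bigl[\sqrt{|z|}-\mathfrak{C}|g|\bigr]^{-1} \equiv \theta(z) < 1
\end{equation*}
uniformly in $\epsilon$, and the same bound holds in the $\epsilon=0$ case. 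Hence both
$\{\mathds{1}+[\Lambda_\epsilon(z)_{\text{diag}}]^{-1}\Lambda_\epsilon(z)_{\text{off}}\}^{-1}$ and its $\epsilon=0$ analogue admit Neumann series in $\mathfrak{B}(\chi)$ that are dominated term by term by the geometric series $\sum_n \theta(z)^n$. Since $[\Lambda_\epsilon(z)_{\text{diag}}]^{-1}\Lambda_\epsilon(z)_{\text{off}} \to [\Lambda_0(z)_{\text{diag}}]^{-1}\Lambda_0(z)_{\text{off}}$ in norm, each term converges, and the Weierstrass M-test allows interchanging limit and sum, yielding
\begin{equation*}
    \bigl\{\mathds{1}+[\Lambda_\epsilon(z)_{\text{diag}}]^{-1}\Lambda_\epsilon(z)_{\text{off}}\bigr\}^{-1} \xrightarrow[\epsilon\downarrow 0]{} \bigl\{\mathds{1}+[\Lambda_0(z)_{\text{diag}}]^{-1}\Lambda_0(z)_{\text{off}}\bigr\}^{-1}
\end{equation*}
in $\mathfrak{B}(\chi)$. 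Combining this with the convergence of $[\Lambda_\epsilon(z)_{\text{diag}}]^{-1}$ through (\ref{Inversion formula Lambda-Matrix}) gives $\Lambda_\epsilon(z)^{-1} \to \Lambda_0(z)^{-1}$.

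For step (iii), inserting all the norm limits into (\ref{epsilon Konno-Kuroda}) and using that products of norm-convergent bounded operators converge in norm (which needs the factors to be $\epsilon$-uniformly bounded, automatic from the estimates above and from $\|A_\epsilon^\sigma R_{H_0}(z)\| \le \|S^\sigma(z)\|+o(1)$), one obtains
\begin{equation*}
    (H_\epsilon - z\mathds{1})^{-1} \xrightarrow[\epsilon\downarrow 0]{} R_{H_0}(z) + g \sum_{\sigma,\nu\in\mathcal{I}} [S^\sigma(z)]^{\ast}[\Lambda_0(z)^{-1}]_{\sigma\nu}[S^\nu(z)]
\end{equation*}
in norm, which is the asserted operator $R(z)$. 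The finitely many summands in $\mathcal{I}$ cause no difficulty. The only subtle point, as noted, is the $\epsilon$-uniform control that legitimises the Neumann-series argument; once that is secured, the rest is algebraic bookkeeping.
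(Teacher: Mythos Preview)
Your argument is correct and follows exactly the route the paper has set up: the paper states this corollary without proof (the $\blacksquare$ signals it is an immediate consequence of the preceding estimates and convergence results), and your write-up simply makes explicit the Neumann-series/Weierstrass-M-test step that the paper leaves to the reader. The $\epsilon$-uniform bound $\theta(z)<1$ you extract is precisely the content of the remark closing Section~4.3, so nothing new is needed.
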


\begin{remark}
    \justifying
    By recalling the self-adjoint operator $\left( H, \mathcal{D}_{H} \right)$ introduced in Appendix 3, \cite{05_GHL}, Appendix C allows to state that $H_{\epsilon}$ converges to $H$ in the strong resolvent sense, as $\epsilon \downarrow 0$. Consequently, as long as $z < z_0$, $R_{H}\left( z \right) = \left( H - z \mathds{1} \right)^{-1} = R(z)$, i.e. if $z<z_0$,
    \begin{equation*}
        \norm{ R_{H}\left( z \right) - R_{H_{\epsilon}}\left( z \right) } \underset{\epsilon \downarrow 0}{\longrightarrow} 0.
    \end{equation*}
    $\hfill \square$
\end{remark}

\begin{theorem}\label{Norm Resolvent Convergence}
    $H_{\epsilon} \underset{\epsilon \downarrow 0}{\longrightarrow} H$ in the norm resolvent sense.
\end{theorem}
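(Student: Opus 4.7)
The plan is to wrap up as norm resolvent convergence the pointwise (in $z$) norm convergence already obtained in the remark preceding the theorem, where $\|R_{H_\epsilon}(z) - R_H(z)\|_{\mathfrak{B}(L^2(\mathbb{R}^n))} \to 0$ was established for every real $z < z_0$. Under the standard definition of norm resolvent convergence---convergence in operator norm of the resolvents at some (equivalently, any) point of the common resolvent set $\rho(H) \cap \rho(H_\epsilon)$ for $\epsilon$ small---this is already the statement to be proved, and the argument reduces to a soft functional-analytic restatement of that remark.

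First I would fix an arbitrary $z_1 < z_0$ and take as input the identity $R_H(z_1) = R(z_1) = \lim_{\epsilon \downarrow 0} R_{H_\epsilon}(z_1)$ in operator norm, obtained by combining the Konno--Kuroda-type formula (\ref{epsilon Konno-Kuroda}) with the $\epsilon \downarrow 0$ analyses of $T_\epsilon^{\sigma}(z_1)$ and of the diagonal and off-diagonal blocks of $\Lambda_\epsilon(z_1)$ carried out in the previous sections. As a norm-convergent sequence in $\mathfrak{B}(L^2(\mathbb{R}^n))$, $\{R_{H_\epsilon}(z_1)\}_\epsilon$ is automatically uniformly norm-bounded for $\epsilon$ sufficiently small.

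Second, to promote norm convergence at the single basepoint $z_1$ to norm convergence at an arbitrary $w \in \rho(H) \cap \rho(H_\epsilon)$, I would invoke the first resolvent identity. On the open disc $|w - z_1| < \|R_{H_\epsilon}(z_1)\|^{-1}$---whose radius is uniform in $\epsilon$ for $\epsilon$ small, by the preceding boundedness---both $R_{H_\epsilon}(w)$ and $R_H(w)$ admit the Neumann expansion $R(\cdot) = \sum_{k=0}^{\infty} (w - z_1)^{k} R(z_1)^{k+1}$ around $z_1$. Subtracting term by term via the telescoping identity $A^{k+1} - B^{k+1} = \sum_{j=0}^{k} A^{j}(A - B) B^{k-j}$ and applying the triangle inequality bounds $\|R_{H_\epsilon}(w) - R_H(w)\|$ by a constant multiple of $\|R_{H_\epsilon}(z_1) - R_H(z_1)\|$, which tends to zero. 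A standard chaining argument in the connected open set $\mathbb{C} \setminus \mathbb{R} \subseteq \rho(H) \cap \rho(H_\epsilon)$ then propagates the convergence to every non-real $w$, which is norm resolvent convergence in any of the usual formulations.

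The only point requiring care---and it presents no genuine obstacle---is the uniformity in $\epsilon$ of the basepoint resolvent norm $\|R_{H_\epsilon}(z_1)\|$, which is automatic from the already established norm convergence at $z_1$. All the analytical substance of the theorem has been absorbed by the $\epsilon \downarrow 0$ analysis of $T_\epsilon^{\sigma}$, $\phi_\epsilon^{\sigma}$ and $[\Lambda_\epsilon(z)_{\mathrm{off}}]_{\sigma\nu}$ in the preceding sections; at this stage the theorem is a purely functional-analytic corollary of the last remark.
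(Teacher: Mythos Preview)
Your proposal is correct and follows essentially the same route as the paper: take a real base point $z_1<z_0$ where norm convergence of the resolvents is already known, use the Neumann expansion of $R_{H_\epsilon}(w)$ and $R_H(w)$ about $z_1$, control the difference term by term, and then chain through $\mathbb{C}\setminus\mathbb{R}$. The paper is terser (it simply invokes ``continuity of the $n^{\text{th}}$-power function'' and picks $\delta<|z|$), whereas you spell out the telescoping identity $A^{k+1}-B^{k+1}=\sum_{j}A^{j}(A-B)B^{k-j}$ and make explicit that the uniform-in-$\epsilon$ radius of convergence comes from the boundedness of $\{R_{H_\epsilon}(z_1)\}_\epsilon$ inherited from its norm convergence; this is a minor presentational difference, not a methodological one.
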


\begin{proof}
    Let $z \in \left( - \infty, z_0 \right)$ be arbitrary and $\delta>0$ such that $\delta < \abs{z}$. Let then $\omega_{\pm} \doteq z \pm i\delta$ be; $H_{\epsilon},H$ are self-adjoint operators, for all $\epsilon > 0$, hence $\omega_{\pm} \in \rho\left(H_{\epsilon}\right) \cap \rho\left(H\right)$ for all $\epsilon > 0$ and $R_{H_{\epsilon}}\left( \omega_{\pm} \right) - R_{H}\left( \omega_{\pm} \right)$ makes sense. Eventually, the Neumann series expansion allows for
    \begin{align*}
        \norm{ R_{H_{\epsilon}}\left( \omega_{\pm} \right) - R_{H}\left( \omega_{\pm} \right) } & \leq \norm{ \sum_{n \in \mathbb{N}_0} \left( \omega_{\pm} - z \right)^{n} R_{H_{\epsilon}}\left( z \right)^{n+1} - \sum_{n \in \mathbb{N}_0} \left( \omega_{\pm} - z \right)^{n} R_{H}\left( z \right)^{n+1} } \leq \\
        & \leq \sum_{n \in \mathbb{N}_0} \delta^{n} \norm{ R_{H_{\epsilon}}\left( z \right)^{n+1} - R_{H}\left( z \right)^{n+1} } \underset{\epsilon \downarrow 0}{\longrightarrow} 0,
    \end{align*}
    because of the $n^{\text{th}}$-power function continuity. Repeating the process, the result holds for all $z \in \mathbb{C} \setminus \mathbb{R}$.
\end{proof}

\begin{corollary}\label{N-body result}
    The self-adjoint operator $\left(H, \mathcal{D}_{H}\right)$ is affiliated to $\mathcal{R}\left( \mathbb{R}^{2n},\sigma \right)$.
\end{corollary}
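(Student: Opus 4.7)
The plan is to combine Theorem~\ref{Norm Resolvent Convergence} with the fact, recalled in the introduction and established in \cite{04_BG}, Proposition 4.1, that every approximating resolvent $R_{H_{\epsilon}}(z)$ lies in $\pi_S\left[\mathcal{R}(\mathbb{R}^{2n},\sigma)\right]$ for $\epsilon>0$ and $z\in i\mathbb{R}\setminus\{0\}$. The key structural observation is that, since the Schr{\"o}dinger representation $\pi_S$ is a faithful $\ast$-representation of the resolvent algebra, it is isometric, hence $\pi_S\left[\mathcal{R}(\mathbb{R}^{2n},\sigma)\right]$ is a $\text{C}^{\ast}$-subalgebra of $\mathfrak{B}\left(L^{2}(\mathbb{R}^{n})\right)$ and in particular norm-closed.

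First I would fix an arbitrary $z\in i\mathbb{R}\setminus\{0\}$. Since $H_{\epsilon}$ and $H$ are self-adjoint, such a $z$ belongs to $\rho(H_{\epsilon})\cap\rho(H)$, so the quantities $R_{H_{\epsilon}}(z)$ and $R_{H}(z)$ are well-defined bounded operators. Theorem~\ref{Norm Resolvent Convergence} then yields $R_{H_{\epsilon}}(z)\to R_{H}(z)$ in operator norm as $\epsilon\downarrow 0$, while the entire approximating family lies in the norm-closed set $\pi_S\left[\mathcal{R}(\mathbb{R}^{2n},\sigma)\right]$. Closedness of the latter therefore forces the limit $R_{H}(z)$ to belong to $\pi_S\left[\mathcal{R}(\mathbb{R}^{2n},\sigma)\right]$ as well, which is precisely the defining requirement of affiliation.

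The argument needs no new estimates, since all the analytic labor has already been invested in sections 3 and 4 and in the proof of Theorem~\ref{Norm Resolvent Convergence}; the corollary is effectively a soft-analytic consequence of norm closure. The only verification worth flagging is that the values $z\in i\mathbb{R}\setminus\{0\}$ singled out by \cite{04_BG}, Proposition 4.1 indeed lie in the regime of validity of the norm resolvent convergence statement. This is immediate from the last line of the proof of Theorem~\ref{Norm Resolvent Convergence}, where the Neumann-series bootstrap extends the convergence from the initial strip $\{z\pm i\delta:z<z_{0},\,\delta<|z|\}$ to the whole of $\mathbb{C}\setminus\mathbb{R}$, which in particular covers the purely imaginary nonzero axis. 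If one preferred to work directly with $z<z_{0}$ (where an explicit formula for $R(z)$ is available from the corollary preceding Theorem~\ref{Norm Resolvent Convergence}), the same conclusion could be reached there and then propagated to the whole resolvent set by the first resolvent identity, invoking norm-closedness once more.
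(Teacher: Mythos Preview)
Your proposal is correct and follows essentially the same approach as the paper: invoke Theorem~\ref{Norm Resolvent Convergence} to get norm convergence of $R_{H_{\epsilon}}(z)$ to $R_{H}(z)$ for $z\in i\mathbb{R}\setminus\{0\}$, use \cite{04_BG} Proposition~4.1 to place each $R_{H_{\epsilon}}(z)$ in $\pi_S[\mathcal{R}(\mathbb{R}^{2n},\sigma)]$, and conclude by norm-closedness. The paper's proof is a two-line telegraphic version of exactly this; your explicit mention of why $\pi_S[\mathcal{R}(\mathbb{R}^{2n},\sigma)]$ is closed (faithfulness of $\pi_S$) and your check that the imaginary axis is covered by the Neumann-series bootstrap are useful clarifications that the paper leaves implicit.
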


\begin{proof}
    By Proposition \ref{Norm Resolvent Convergence}, $\norm{R_{H_{\epsilon}}\left(z\right) - R_{H}\left(z\right)} \underset{\epsilon \downarrow 0}{\longrightarrow} 0$ for all $z \in i \mathbb{R} \setminus \left\{0\right\}$, i.e., because of \cite{04_BG} prop. 4.1, $\left(H - i \lambda \mathds{1}\right)^{-1} \in \pi_S\left[ \mathcal{R}\left(\mathbb{R}^{2n},\sigma\right) \right]$ for all $\lambda \in \mathbb{R}\setminus\{0\}$.
\end{proof}

\begin{theorem}\label{C*-dynamical system}
    Let $\mathfrak{K}_0$ be the $\text{C}^{\ast}-$subalgebra of $\pi_S\left[ \mathcal{R}\left( \mathbb{R}^{2n},\sigma \right) \right]$ generated by $\mathcal{B}_{\infty} \left( L^2\left( \mathbb{R}^n \right) \right)$ and the identity operator. $\left( \mathcal{K}_0 \equiv \pi_S^{-1}\left(\mathfrak{K}_0\right), \, \mathbb{R}, \, \beta \right)$, where
        \begin{equation*}
            \beta: \, t \in \mathbb{R} \longmapsto \beta_{t} \in \text{Aut}\left( \mathcal{K}_0 \right),
        \end{equation*}
    \noindent with
        \begin{equation*}
            \beta_{t}: \, a \in \mathcal{K}_0 \longmapsto \beta_{t}\left( a \right) \doteq \pi_S^{-1}\left[U(t)^{\ast} \, \pi_S\left(a\right) \, U(t)\right] \in \mathcal{K}_0,
        \end{equation*}
    \noindent and $U(t) \equiv \exp{ \left( -itH \right)}$, is a $\text{C}^{\ast}-$dynamical system.
\end{theorem}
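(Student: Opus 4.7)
The plan is to verify the three defining properties of a $\text{C}^{\ast}$-dynamical system: (i) each $\beta_{t}$ is a $\ast$-automorphism of $\mathcal{K}_{0}$; (ii) $\beta$ is a group homomorphism from $(\mathbb{R},+)$ into $\text{Aut}(\mathcal{K}_{0})$; (iii) the orbit map $t \mapsto \beta_{t}(a)$ is norm continuous for every $a \in \mathcal{K}_{0}$. Since the Schr{\"o}dinger representation is faithful, $\pi_{S}$ is a $\ast$-isomorphism of $\mathcal{K}_{0}$ onto $\mathfrak{K}_{0}$, so it is equivalent to work with the spatial automorphism $\alpha_{t}: B \mapsto U(t)^{\ast} B U(t)$ on $\mathfrak{K}_{0}$ and set $\beta_{t} = \pi_{S}^{-1} \circ \alpha_{t} \circ \pi_{S}$.

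For step (i), I would first invoke Corollary \ref{N-body result} and Stone's theorem to conclude that $U(t) = e^{-itH}$ is a strongly continuous one-parameter unitary group on $L^{2}(\mathbb{R}^{n})$. Since every element of $\mathfrak{K}_{0}$ has the form $K + \lambda \mathds{1}$ with $K \in \mathcal{B}_{\infty}(L^{2}(\mathbb{R}^{n}))$ and $\lambda \in \mathbb{C}$ (the compacts being a norm-closed two-sided ideal, their sum with $\mathbb{C}\mathds{1}$ is already a unital $\text{C}^{\ast}$-subalgebra, hence equal to $\mathfrak{K}_{0}$), and since unitary conjugation preserves both $\mathcal{B}_{\infty}(L^{2}(\mathbb{R}^{n}))$ (two-sided ideal) and $\mathds{1}$, one obtains $\alpha_{t}(\mathfrak{K}_{0}) \subseteq \mathfrak{K}_{0}$. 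Applying the same argument to $-t$ gives surjectivity, so $\alpha_{t}$ is a $\ast$-automorphism of $\mathfrak{K}_{0}$, whence $\beta_{t}$ is a $\ast$-automorphism of $\mathcal{K}_{0}$.

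Step (ii) is immediate: $U(t+s) = U(t)U(s)$ and $U(0) = \mathds{1}$ translate verbatim into $\beta_{t+s} = \beta_{t} \circ \beta_{s}$ and $\beta_{0} = \text{id}_{\mathcal{K}_{0}}$. For step (iii), by the isometry of $\pi_{S}$, norm continuity of $t \mapsto \beta_{t}(a)$ reduces to norm continuity of $t \mapsto U(t)^{\ast}\pi_{S}(a)U(t)$. Writing $\pi_{S}(a) = K + \lambda \mathds{1}$ as above, the $\lambda \mathds{1}$ term is invariant, and continuity of the compact term rests on the standard fact that, for a strongly continuous unitary group $U(t)$ and every $K \in \mathcal{B}_{\infty}(L^{2}(\mathbb{R}^{n}))$, the map $t \mapsto U(t)^{\ast} K U(t)$ is norm continuous. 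I would prove this by approximating $K$ in operator norm by finite-rank operators and invoking the uniform strong continuity of $U(t)$ on norm-compact subsets of the Hilbert space.

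The main obstacle is the verification, in step (i), that $\mathfrak{K}_{0}$ is actually contained in $\pi_{S}\bigl[\mathcal{R}(\mathbb{R}^{2n},\sigma)\bigr]$ so that $\pi_{S}^{-1}$ is legitimately applicable: this relies on the known result from \cite{01_BG}, \cite{04_BG} that, in the Schr{\"o}dinger representation with non-degenerate symplectic form, the compact operators are contained in the image of the resolvent algebra, from which the unital $\text{C}^{\ast}$-subalgebra $\mathfrak{K}_{0}$ indeed sits inside $\pi_{S}\bigl[\mathcal{R}(\mathbb{R}^{2n},\sigma)\bigr]$. Once this inclusion is granted, the remainder of the argument is structural bookkeeping.
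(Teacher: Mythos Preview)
Your proof is correct: the three verifications (invariance of $\mathfrak{K}_0$ under unitary conjugation, the group law, and norm continuity via finite-rank approximation of compacts) are the standard argument, and your observation that the inclusion $\mathcal{B}_{\infty}\bigl(L^{2}(\mathbb{R}^{n})\bigr) \subset \pi_{S}\bigl[\mathcal{R}(\mathbb{R}^{2n},\sigma)\bigr]$ must be imported from \cite{01_BG}, \cite{04_BG} is exactly the one nontrivial external input.

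The paper does not give its own argument here at all; it simply cites \cite{06_M}, prop.~3.6, where presumably the same structural verification is carried out. Your write-up is therefore a self-contained expansion of what the paper leaves to that reference, and there is no methodological difference to discuss.
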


\begin{proof}
    See \cite{06_M}, prop. 3.6.
\end{proof}

\begin{theorem}
    \justifying
    Let $f_0\left( H \right)$ be the (commutative) $\text{C}^{\ast}-$subalgebra of $\pi_S\left[ \mathcal{R}\left( \mathbb{R}^{2n},\sigma \right) \right]$ generated by $R_{H}\left( z \right)$, $z \in i\mathbb{R} \setminus \left\{ 0 \right\}$. Denoted by $\mathfrak{S}_0$ the $\text{C}^{\ast}-$subalgebra of $\pi_S\left[ \mathcal{R}\left( \mathbb{R}^{2n},\sigma \right) \right]$ generated by $f_0\left( H \right)$ and $\mathfrak{K}_0$, 
    \begin{equation*}
        e^{ itH } \, a \, e^{-itH} \in \mathfrak{S}_0, \quad \forall a \in \mathfrak{S}_0.
    \end{equation*}
\end{theorem}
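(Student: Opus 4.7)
The plan is to show that the inner automorphism $\alpha_t : a \longmapsto e^{itH}\, a \, e^{-itH}$, which is a well-defined $*$-automorphism of $\mathcal{B}\bigl(L^2(\mathbb{R}^n)\bigr)$, maps each of the two generating sets of $\mathfrak{S}_0$ into $\mathfrak{S}_0$; the conclusion then follows from the fact that $\alpha_t$ is a norm-continuous $*$-homomorphism and $\mathfrak{S}_0$ is the norm closure of the $*$-algebra generated by $f_0(H) \cup \mathfrak{K}_0$.

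First I would handle $f_0(H)$. By construction $f_0(H)$ is the commutative $\text{C}^\ast$-algebra generated by the resolvents $R_H(z)$, $z \in i\mathbb{R}\setminus\{0\}$. Since $H$ is self-adjoint, the Borel functional calculus gives $[R_H(z), e^{-itH}] = 0$ for every such $z$ and every $t \in \mathbb{R}$. Consequently $\alpha_t(R_H(z)) = R_H(z) \in f_0(H)$, and by taking $*$-polynomials and norm limits one obtains $\alpha_t(b) = b$ for all $b \in f_0(H)$; in particular $\alpha_t\bigl(f_0(H)\bigr) \subseteq f_0(H) \subseteq \mathfrak{S}_0$.

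Next I would handle $\mathfrak{K}_0$. Proposition \ref{C*-dynamical system} (via the identification $\pi_S \circ \beta_t = \alpha_t \circ \pi_S$ on $\mathcal{K}_0$) asserts precisely that $\alpha_t\bigl(\mathfrak{K}_0\bigr) = \mathfrak{K}_0 \subseteq \mathfrak{S}_0$. Thus $\alpha_t$ sends the full generating set $f_0(H) \cup \mathfrak{K}_0$ into $\mathfrak{S}_0$. Since $\alpha_t$ is multiplicative, linear, $*$-preserving and isometric, its action extends to arbitrary $*$-polynomials in these generators and, by norm continuity together with closedness of $\mathfrak{S}_0$, to the whole $\mathfrak{S}_0$. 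Therefore $\alpha_t(a) \in \mathfrak{S}_0$ for every $a \in \mathfrak{S}_0$ and every $t \in \mathbb{R}$.

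There is no real analytic obstacle here, as the argument is purely algebraic once two ingredients are in place: the commutation $[R_H(z), e^{-itH}] = 0$, and Proposition \ref{C*-dynamical system}. The only point that requires a little care is checking that $\alpha_t$ is legitimately a well-defined $*$-automorphism on the relevant concrete $\text{C}^\ast$-subalgebra of $\pi_S[\mathcal{R}(\mathbb{R}^{2n},\sigma)]$ (so that the closure argument applies), but since $e^{\pm itH}$ are unitaries on $L^2(\mathbb{R}^n)$ and $\mathfrak{S}_0$ is concretely represented there, this is automatic.
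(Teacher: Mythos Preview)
Your proposal is correct and follows essentially the same approach as the paper, which simply cites \cite{05_GHL}, remark~1 together with Proposition~\ref{C*-dynamical system}. You have in effect spelled out the content of that citation: invariance of $f_0(H)$ follows from $[R_H(z),e^{-itH}]=0$, invariance of $\mathfrak{K}_0$ is exactly Proposition~\ref{C*-dynamical system}, and the extension to $\mathfrak{S}_0$ is the standard closure argument for $*$-automorphisms on a generated $\text{C}^\ast$-algebra.
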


\begin{proof}
    The result follows from \cite{05_GHL}, remark 1 and proposition \ref{C*-dynamical system}.
\end{proof}

%%%%%%%%%%%%%%%%%%%%%%%%%%%%%%%%%%%%%%%%%%%%%%%%%%%%%%%%%%%%%%%%%%%%%%%%%%%%%%%%%%%%%%%%%%%%%%%%%%%%%%%%%%%%%%%%%%%%%%%%%%

\section*{Conclusions}
\justifying

This paper shows that the Hamiltonian of $n \in \mathbb{N}: n \geq 2$ distinguishable spinless, non-relativistic particles interacting via a two-body delta potential moving in one spatial dimension is affiliated to the resolvent algebra $\mathcal{R}\left( \mathbb{R}^{2n}, \sigma \right)$. Moreover, a $C^{\ast}-$dynamical system is singled out, together with a subalgebra $\pi_S^{-1}\left( \mathfrak{S}_0 \right)$ of $\mathcal{R}\left( \mathbb{R}^{2n}, \sigma \right)$, stable under Heisenberg time evolution. Nevertheless, the time evolution stability of the whole algebra is still an open problem.

%%%%%%%%%%%%%%%%%%%%%%%%%%%%%%%%%%%%%%%%%%%%%%%%%%%%%%%%%%%%%%%%%%%%%%%%%%%%%%%%%%%%%%%%%%%%%%%%%%%%%%%%%%%%%%%%%%%%%%%%%%

\newpage
\section*{Appendix 1 - The Konno-Kuroda Formula}
\justifying

\begin{theorem}\label{Konno-Kuroda Theorem}
    \justifying
    Let $\mathcal{H}, \mathcal{K}$ be complex Hilbert spaces. Let $\left( H_0,\mathcal{D}_{H_0} \right)$ be a self-adjoint operator on $\mathcal{H}$ and let $A: \, \mathcal{H} \longrightarrow \mathcal{K}$ be a bounded operator. Given the self-adjoint operator $\left( H_g \doteq H_0 - gA^{\ast}A, \mathcal{D}_{H_0} \right)$ on $\mathcal{H}$, $g \in \mathbb{R}\setminus\{0\}$, for all $z \in \rho\left(H_0\right) \cap \rho\left(H_g\right)$,
    \begin{enumerate}
        \item $\left[ \mathds{1}_{\mathcal{K}} - \phi\left(z\right) \right]^{-1} = \mathds{1}_{\mathcal{K}} + M\left(z\right)$, with $\phi\left(z\right) = gAR_{H_0}\left(z\right)A^{\ast}$ and $M\left(z\right) = gAR_{H_g}\left(z\right)A^{\ast}$,
        \item $ R_{H_g}\left(z\right) = R_{H_0}\left(z\right) + g R_{H_0}\left(z\right)A^{\ast}\left[\mathds{1}_{\mathcal{K}} - \phi\left(z\right)\right]^{-1}AR_{H_0}\left(z\right)$.
    \end{enumerate}
\end{theorem}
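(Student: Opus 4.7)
The plan is to derive both statements from a single application of the second resolvent identity, which, since $H_g - H_0 = -gA^{\ast}A$ on $\mathcal{D}_{H_0}$, reads
\begin{equation*}
    R_{H_g}(z) - R_{H_0}(z) = g\,R_{H_0}(z)\,A^{\ast}A\,R_{H_g}(z) = g\,R_{H_g}(z)\,A^{\ast}A\,R_{H_0}(z),
\end{equation*}
valid for $z \in \rho(H_0)\cap\rho(H_g)$ and both expressions making sense on all of $\mathcal{H}$ because $A$ is bounded and $\mathrm{Ran}\,R_{H_g}(z)=\mathcal{D}_{H_0}$.

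For part (1), I would first establish invertibility of $\mathds{1}_{\mathcal{K}} - \phi(z)$ with inverse $\mathds{1}_{\mathcal{K}} + M(z)$ by direct multiplication. Computing
\begin{equation*}
    \left[\mathds{1}_{\mathcal{K}} - \phi(z)\right]\left[\mathds{1}_{\mathcal{K}} + M(z)\right] = \mathds{1}_{\mathcal{K}} + M(z) - \phi(z) - \phi(z)M(z),
\end{equation*}
and inserting the definitions of $\phi(z)$ and $M(z)$, the three operator-valued terms collapse into $g\,A\bigl[R_{H_g}(z) - R_{H_0}(z) - g\,R_{H_0}(z)A^{\ast}A\,R_{H_g}(z)\bigr]A^{\ast}$, which vanishes by the second resolvent identity. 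The opposite product is handled by the symmetric version of the identity, yielding $\left[\mathds{1}_{\mathcal{K}} + M(z)\right]\left[\mathds{1}_{\mathcal{K}} - \phi(z)\right] = \mathds{1}_{\mathcal{K}}$ as well, so $\left[\mathds{1}_{\mathcal{K}} - \phi(z)\right]^{-1}$ exists in $\mathfrak{B}(\mathcal{K})$ and equals $\mathds{1}_{\mathcal{K}} + M(z)$.

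For part (2), I would apply $A$ on the left of the second resolvent identity to obtain
\begin{equation*}
    A\,R_{H_g}(z) = A\,R_{H_0}(z) + \phi(z)\,A\,R_{H_g}(z),
\end{equation*}
i.e.\ $\left[\mathds{1}_{\mathcal{K}} - \phi(z)\right]A\,R_{H_g}(z) = A\,R_{H_0}(z)$; inverting by part (1) gives $A\,R_{H_g}(z) = \left[\mathds{1}_{\mathcal{K}} - \phi(z)\right]^{-1}A\,R_{H_0}(z)$. Substituting this back into $R_{H_g}(z) = R_{H_0}(z) + g\,R_{H_0}(z)A^{\ast}A\,R_{H_g}(z)$ yields the announced formula.

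The only delicate point is the invertibility claim in part (1): one must be careful that the bracketed expression collapses by the second resolvent identity on the correct domain. Since $A$ and $A^{\ast}$ are bounded and both $R_{H_0}(z),R_{H_g}(z)\in\mathfrak{B}(\mathcal{H})$, all products are everywhere-defined bounded operators and the identity may be applied without domain subtleties, so the two-sided verification is purely algebraic once the second resolvent identity is in place.
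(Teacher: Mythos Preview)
Your proof is correct and follows essentially the same route as the paper: both parts rest on the second resolvent identity, and your verification of part (1) by direct two-sided multiplication is exactly what the paper does. For part (2) there is a cosmetic difference: the paper applies the second resolvent identity twice to $R_{H_g}(z)$ (adding and subtracting $gR_{H_0}(z)A^{\ast}AR_{H_0}(z)$) so that the factor $\mathds{1}_{\mathcal{K}}+M(z)$ appears directly, whereas you first isolate $AR_{H_g}(z)$ and then substitute; the two derivations are equivalent one-line rearrangements of each other. The paper additionally spells out the Kato--Rellich argument for the self-adjointness of $(H_g,\mathcal{D}_{H_0})$, which you take as given from the statement.
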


\begin{proof}
    \justifying
    It is first observed that $A^{\ast}A$ is a bounded self-adjoint operator on $\mathcal{H}$, hence the same holds for $V \equiv g A^{\ast}A$. Particularly, for all $x \in \mathcal{D}_{H_0}$,
    \begin{equation*}
        \norm{Vx} \leq \norm{V} \norm{x} \leq \epsilon \norm{H_0x} + \norm{V} \norm{x}
    \end{equation*}
    \noindent for all $\epsilon \in \mathbb{R}^{+}_0$; consequently, $\left( H_g, \mathcal{D}_{H_0} \right)$ is self-adjoint by the Kato-Rellich theorem. Let then $z \in \rho\left(H_0\right) \cap \rho\left(H_g\right)$ be arbitrary.
    \begin{enumerate}
        \item By direct inspection, the second resolvent formula allows for
        \begin{equation*}
            \left[ \mathds{1}_{\mathcal{K}} - \phi\left(z\right) \right] \left[ \mathds{1}_{\mathcal{K}} + M\left(z\right) \right] = \mathds{1}_{\mathcal{K}} = \left[ \mathds{1}_{\mathcal{K}} + M\left(z\right) \right] \left[ \mathds{1}_{\mathcal{K}} - \phi\left(z\right) \right].
        \end{equation*}
        \item 
        \begin{align*}
            R_{H_g}\left(z\right) & = R_{H_0}\left(z\right) +  \left[R_{H_g}\left(z\right) - R_{H_0}\left(z\right)\right] = \left( \text{by the second resolvent formula} \right) \\
            & = R_{H_0}\left(z\right) + g R_{H_0}\left(z\right) A^{\ast} A R_{H_g}\left(z\right) + g R_{H_0}\left(z\right) A^{\ast} A R_{H_0}\left(z\right) - g R_{H_0}\left(z\right) A^{\ast} A R_{H_0}\left(z\right) = \\
            & = R_{H_0}\left(z\right) + g R_{H_0}\left(z\right) A^{\ast} A R_{H_0}\left(z\right) + g R_{H_0}\left(z\right) A^{\ast} A \left[R_{H_g}\left(z\right) - R_{H_0}\left(z\right)\right] = \\
            & = \left( \text{by the second resolvent formula again} \right) = \\
            & = R_{H_0}\left(z\right) + g R_{H_0}\left(z\right) A^{\ast} A R_{H_0}\left(z\right) + g^2 R_{H_0}\left(z\right) A^{\ast} A R_{H_g}\left(z\right) A^{\ast} A R_{H_0}\left(z\right) = \\
            & = R_{H_0}\left(z\right) + g R_{H_0}\left(z\right)A^{\ast} \left[ \mathds{1}_{\mathcal{K}} + M\left(z\right) \right]AR_{H_0}\left(z\right) \equiv \\ & = R_{H_0}\left(z\right) + g R_{H_0}\left(z\right)A^{\ast} \left[ \mathds{1}_{\mathcal{K}} - \phi\left(z\right) \right]^{-1} A R_{H_0}\left(z\right),
        \end{align*}
        \noindent allowing to express the resolvent of $\left(H_g,\mathcal{D}_{H_0}\right)$ at $z \in \rho\left(H_0\right) \cap \rho\left(H_g\right)$ in terms of $R_{H_0}\left(z\right)$, $A$ and $A^{\ast}$ only.
    \end{enumerate}
\end{proof}

\begin{corollary}
    \justifying
    Given $n \in \mathbb{N}$, let $\mathcal{H}, \mathcal{K}_{i}$, $i=1,\ldots,n$ be complex Hilbert spaces. Let $\left(H_0,\mathcal{D}_{H_0}\right)$ be a self-adjoint operator on $\mathcal{H}$ and let $A_i: \mathcal{H} \longrightarrow \mathcal{K}_i$, $i=1,\ldots,n$, be bounded operators. Given $g \in \mathbb{R} \setminus \{0\}$ and considered the self-adjoint operator $\left(H_g = H_0 - g\sum_{i=1}^{n} A_i^{\ast}A_i, \mathcal{D}_{H_0} \right)$ on $\mathcal{H}$, for all $z \in \rho\left(H_0\right) \cap \left(H_g\right)$,
    \begin{equation}\label{Konno-Kuroda Formula}
        R_{H_g}\left( z \right) = R_{H_0}\left( z \right) + g \sum_{i,j = 1}^n R_{H_0}\left(z\right) A_i^{\ast} \left[\Lambda\left(z\right)^{-1}\right]_{ij} A_j R_{H_0}\left(z\right)
    \end{equation}
    \noindent where $\Lambda\left(z\right)_{ij} \doteq \delta_{ij} - gA_iR_{H_0}\left(z\right)A_j^{\ast}: \mathcal{K}_j \longrightarrow \mathcal{K}_i$.
\end{corollary}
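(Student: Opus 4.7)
The plan is to reduce the corollary to the single-operator Konno-Kuroda Theorem \ref{Konno-Kuroda Theorem} already proved, by collapsing the family $\{A_i\}_{i=1}^n$ into a single bounded operator valued in a direct-sum target. Concretely, first I would introduce the Hilbert space $\mathcal{K} \doteq \bigoplus_{i=1}^{n} \mathcal{K}_i$, with its canonical inner product, and define the bounded linear operator $A: \mathcal{H} \to \mathcal{K}$ by $A \psi \doteq (A_1 \psi, \ldots, A_n \psi)$ for all $\psi \in \mathcal{H}$. A direct computation of adjoints gives $A^{\ast}(\xi_1,\ldots,\xi_n) = \sum_{i=1}^{n} A_i^{\ast} \xi_i$, hence
$$A^{\ast} A = \sum_{i=1}^{n} A_i^{\ast} A_i$$
as bounded self-adjoint operators on $\mathcal{H}$. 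The perturbed Hamiltonian $\left( H_g = H_0 - g A^{\ast} A,\, \mathcal{D}_{H_0} \right)$ thus coincides with the one in the corollary and fits exactly into the hypotheses of Theorem \ref{Konno-Kuroda Theorem}.

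Next, I would invoke part 2 of the theorem to obtain, for arbitrary $z \in \rho(H_0) \cap \rho(H_g)$,
$$R_{H_g}(z) = R_{H_0}(z) + g R_{H_0}(z) A^{\ast} \left[\mathds{1}_{\mathcal{K}} - \phi(z)\right]^{-1} A R_{H_0}(z),$$
with $\phi(z) = g A R_{H_0}(z) A^{\ast}$ bounded on $\mathcal{K}$, and with $[\mathds{1}_{\mathcal{K}} - \phi(z)]^{-1}$ existing by part 1. The direct-sum structure of $\mathcal{K}$ allows to regard $\phi(z)$ as a matrix of bounded operators whose entries are exactly $[\phi(z)]_{ij} = g A_i R_{H_0}(z) A_j^{\ast} \in \mathfrak{B}(\mathcal{K}_j, \mathcal{K}_i)$. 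Consequently $\mathds{1}_{\mathcal{K}} - \phi(z)$ is precisely the operator matrix $\Lambda(z)$ defined in the corollary, its inverse decomposes into the blocks $[\Lambda(z)^{-1}]_{ij}$, and existence of the inverse is automatic from the theorem.

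Finally, expanding the composition $R_{H_0}(z) A^{\ast} \Lambda(z)^{-1} A R_{H_0}(z)$ according to the block structure of $A$, $A^{\ast}$, and $\Lambda(z)^{-1}$ produces
$$g \sum_{i,j=1}^{n} R_{H_0}(z) A_i^{\ast} \left[\Lambda(z)^{-1}\right]_{ij} A_j R_{H_0}(z),$$
which is (\ref{Konno-Kuroda Formula}). There is no genuine obstacle: the only items demanding care are the verification of the formula for $A^{\ast}$ on direct-sum vectors and the bookkeeping of the block decomposition of $\phi(z)$, both of which are routine linear-algebraic verifications once $\mathcal{K} = \bigoplus_i \mathcal{K}_i$ has been introduced.
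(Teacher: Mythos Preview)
Your proposal is correct and follows exactly the same approach as the paper: introduce $\mathcal{K} = \bigoplus_i \mathcal{K}_i$, assemble the $A_i$ into a single bounded $A:\mathcal{H}\to\mathcal{K}$ with $A^{\ast}A = \sum_i A_i^{\ast}A_i$, apply Theorem~\ref{Konno-Kuroda Theorem}, and read off the block structure. If anything, your write-up is more explicit than the paper's, which simply says ``straightforward computations allow to get formula~(\ref{Konno-Kuroda Formula}).''
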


\begin{proof}
    \justifying
    By introducing the Hilbert space $\mathcal{K} \doteq \underset{i}{\bigoplus} \, \mathcal{K}_i$, let $A: \mathcal{H} \longrightarrow \mathcal{K}$ be the bounded operator such that $A\psi \doteq \left(A_i \psi\right)_{i=1}^n$, for all $\psi \in \mathcal{H}$. For all $\psi \in \mathcal{D}_{H_0}$,
    \begin{equation*}
        H_g\psi = H_0\psi - g \sum_{i=1}^n A_i^{\ast}A_i\psi \equiv H_0\psi -gA^{\ast}A\psi,
    \end{equation*}
    \noindent and Proposition \ref{Konno-Kuroda Theorem} can be applied. Straightforward computations allow to get formula (\ref{Konno-Kuroda Formula}).
\end{proof}

%%%%%%%%%%%%%%%%%%%%%%%%%%%%%%%%%%%%%%%%%%%%%%%%%%%%%%%%%%%%%%%%%%%%%%%%%%%%%%%%%%%%%%%%%%%%%%%%%%%%%%%%%%%%%%%%%%%%%%%%%%%%%%%%

\section*{Appendix 2 - Trace Operator on Hyper-planes for Sobolev Functions and Related}
\justifying

\begin{lemma}
    \justifying
    Let $\psi \in C^{\infty}_{c} \left( \mathbb{R}^{n+1} \simeq \mathbb{R} \times \mathbb{R}^{n} \right)$ be a real function. For all $x \in \mathbb{R}$, $\psi_x \in C^{\infty}_{c} \left( \mathbb{R}^{n} \right)$, where
    \begin{equation*}
        \psi_{x}: \, \textbf{y} \in \mathbb{R}^{n} \longmapsto \psi_{x} \left( \textbf{y} \right) \doteq \psi \left(x, \textbf{y} \right) \in \mathbb{R}.
    \end{equation*}
\end{lemma}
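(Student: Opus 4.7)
The plan is to verify separately the two defining properties of $C^{\infty}_{c}(\mathbb{R}^n)$-membership for $\psi_x$: smoothness and compact support.

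For smoothness, I would exploit the hypothesis $\psi \in C^{\infty}(\mathbb{R}^{n+1})$, which guarantees that every mixed partial derivative $\partial_{x}^{\alpha_0} \partial_{\textbf{y}}^{\alpha} \psi$ exists and is jointly continuous on $\mathbb{R}^{n+1}$. In particular, for any multi-index $\alpha \in \mathbb{N}_0^n$, the map $(x,\textbf{y}) \mapsto (\partial_{\textbf{y}}^{\alpha} \psi)(x,\textbf{y})$ is continuous, so freezing the first coordinate gives $\partial_{\textbf{y}}^{\alpha} \psi_{x}(\textbf{y}) = (\partial_{\textbf{y}}^{\alpha} \psi)(x,\textbf{y})$, which is continuous in $\textbf{y}$. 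Since this holds for every $\alpha$, $\psi_{x} \in C^{\infty}(\mathbb{R}^n)$.

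For compact support, let $K \doteq \mathrm{supp}(\psi) \subset \mathbb{R}^{n+1}$, which is compact by hypothesis, and let $K_{x} \doteq K \cap (\{x\} \times \mathbb{R}^n)$; $K_x$ is closed in $K$ and hence compact. Denote by $\pi_{\textbf{y}}: \mathbb{R} \times \mathbb{R}^n \to \mathbb{R}^n$ the projection onto the second factor; since $\pi_{\textbf{y}}$ is continuous, the image $\pi_{\textbf{y}}(K_{x})$ is a compact subset of $\mathbb{R}^n$. By construction, $\psi_{x}(\textbf{y}) = 0$ whenever $(x,\textbf{y}) \notin K$, so $\mathrm{supp}(\psi_x) \subseteq \pi_{\textbf{y}}(K_x)$, which is compact. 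In the degenerate case in which $x$ does not belong to the $x$-projection of $K$, $K_x = \emptyset$ and $\psi_x \equiv 0$, whose support is empty and therefore trivially compact.

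Main obstacle: there is no real obstacle, as this is a standard bookkeeping verification. The only subtlety worth flagging is the need to invoke \emph{joint} continuity of the mixed partials $\partial_{\textbf{y}}^{\alpha} \psi$ on $\mathbb{R}^{n+1}$, so that continuity of $\partial_{\textbf{y}}^{\alpha} \psi_{x}$ on $\mathbb{R}^n$ follows at once by restriction; an iterated-partials viewpoint would be unnecessarily cumbersome.
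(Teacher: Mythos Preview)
Your proof is correct and follows the same overall strategy as the paper: check compact support via a projection argument, and check smoothness by identifying $\partial_{\textbf{y}}^{\alpha}\psi_{x}$ with the restriction of $\partial_{\textbf{y}}^{\alpha}\psi$ to the slice $\{x\}\times\mathbb{R}^n$.

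The only differences are presentational. For the support, the paper projects all of $K$ onto $\mathbb{R}^n$ and uses this single compact set for every $x$, whereas you intersect first with $\{x\}\times\mathbb{R}^n$ and then project, giving a (possibly smaller) $x$-dependent compact set; both are equally valid. For smoothness, the paper proceeds more pedantically: it verifies continuity sequentially, then computes the first partials explicitly as limits of difference quotients, shows these are continuous, and inducts. You short-circuit this by invoking at once that all $\partial_{\textbf{y}}^{\alpha}\psi$ are jointly continuous on $\mathbb{R}^{n+1}$, which immediately yields continuity of $\partial_{\textbf{y}}^{\alpha}\psi_{x}$ on $\mathbb{R}^n$. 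Your route is cleaner; the paper's is more explicit but arrives at the same place.
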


\begin{proof}
    \justifying
    Let $K \equiv \text{supp}\, \psi$ be and let $\pi_{\mathbb{R}^n}\left(K\right)$ be the $\mathbb{R}^n$ projection of $K$; by definition of product topology, $\pi_{\mathbb{R}^n}\left(K\right)$ is compact in $\mathbb{R}^n$. If $\textbf{y} \in \mathbb{R}^n: \, \textbf{y} \notin \pi_{\mathbb{R}^n}\left( K \right)$, $\left( x, \textbf{y} \right) \notin K$, i.e. $\psi_{x}\left(\textbf{y}\right) = 0$. 
    
    \noindent Given whatever $\textbf{y} \in \mathbb{R}^n$, let $\left\{ \textbf{y}_m\right\}_m \subset \mathbb{R}^n$ be such that $\textbf{y}_m \underset{n}{\longrightarrow} \textbf{y}$; $\psi_x$ is continuous at $\textbf{y}$ if and only if $\underset{m}{\lim} \, \psi_x\left( \textbf{y}_m \right) = \psi_x\left( \textbf{y} \right) $. However, $\left(x,\textbf{y}_m\right) \underset{m}{\longrightarrow} \left(x,\textbf{y}\right)$, hence the continuity of $\psi$ implies $\psi_x\left( \textbf{y}_m \right) \equiv \psi\left( x, \textbf{y} \right) \underset{m}{\longrightarrow} \psi\left( x,\textbf{y} \right) \equiv \psi_x\left( \textbf{y} \right)$; in other words, $\psi_x$ is at least a continuous function of compact support. 

    \noindent Let then $\textbf{y} \in \mathbb{R}^n$ be arbitrary; for all $j = 1, \ldots, n$, what follows holds.
    \begin{equation*}
        \frac{ \psi_{x} \left( \textbf{y} + t \textbf{e}_{j} \right) - \psi_{x} \left( \textbf{y} \right) }{t} = \frac{\psi \left( x, \textbf{y} + t \textbf{e}_{j} \right) - \psi\left( x, \textbf{y} \right)}{t} \underset{t \rightarrow 0}{\longrightarrow} \left( \frac{\partial \psi}{\partial \textbf{f}_{j+1}} \right) \left(x, \textbf{y} \right)
    \end{equation*}
    i.e.
    \begin{equation*}
        \left(\frac{\partial \psi_{x}}{\partial \textbf{e}_{j}}\right) \left( \textbf{y} \right) = \left( \frac{\partial \psi}{ \partial \textbf{f}_{j+1} }\right) \left(x, \textbf{y} \right),
    \end{equation*}
    where
    \begin{equation*}
        \textbf{e}_{j} = \left( \underbrace{0, \ldots, \underbrace{1}_{j-th}, 0, \ldots, 0}_{n} \right), \quad \textbf{f}_{j+1} = \left(0, \textbf{e}_j \right). 
    \end{equation*}
    The continuity of $ \left( \partial_j \psi_x \right)$ is proved as above, hence induction gives $\psi_x \in C^{\infty}_{c} \left( \mathbb{R}^{n} \right)$.
\end{proof}

\begin{remark}
    Given $\psi_x$ as above, $\psi_x \in C^{\infty}_{c}\left(\mathbb{R}^n \right)$ implies $\psi_x \in L^2\left(\mathbb{R}^n \right)$, hence
    \begin{equation}\label{varphifrompsi}
        \varphi: \, x \in \mathbb{R} \longmapsto \varphi \left( x \right) \doteq \int_{\mathbb{R}^n} \, \abs{\psi_x \left(\textbf{y}\right)}^2 \, d\lambda^{(n)}\left( \textbf{y}\right) \equiv \int_{\mathbb{R}^n} \, \psi\left(x, \textbf{y}\right)^2 \, d\lambda^{(n)}\left(\textbf{y}\right) \in \mathbb{R}^+_{0}
    \end{equation}
    is well-defined. Particularly, set $K^{\prime} = \pi_{\mathbb{R}^n} \left(K\right)$,
    \begin{equation*}
        \varphi: \, x \in \mathbb{R} \longmapsto \varphi \left( x \right) \equiv \int_{K^\prime} \, \psi_{x}^2 \left( \textbf{y} \right) d\lambda^{(n)}\left( \textbf{y} \right).
    \end{equation*}
    $\hfill \square$
\end{remark}

\begin{lemma}
    \justifying
    Let $\psi \in C^{\infty}_{c} \left( \mathbb{R}^{n+1} \simeq \mathbb{R} \times \mathbb{R}^{n} \right) $ be a real function and $\varphi$ as in (\ref{varphifrompsi}), $\varphi \in C^{\infty}_{c} \left( \mathbb{R} \right)$.
\end{lemma}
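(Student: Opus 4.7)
The plan is to prove compact support and smoothness separately, both essentially leveraging that $\psi$ is supported in a compact set $K \doteq \text{supp}\,\psi \subset \mathbb{R}^{n+1}$.

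For the support, I would denote by $\pi_{\mathbb{R}}$ the projection onto the first coordinate and set $K'' \doteq \pi_{\mathbb{R}}(K)$, which is compact by continuity of the projection map. If $x \in \mathbb{R} \setminus K''$, then for every $\textbf{y} \in \mathbb{R}^n$ one has $(x,\textbf{y}) \notin K$, hence $\psi(x,\textbf{y}) = 0$; consequently $\psi_x \equiv 0$ and $\varphi(x) = 0$. This immediately yields $\text{supp}\,\varphi \subseteq K''$, so $\varphi$ has compact support in $\mathbb{R}$.

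For smoothness I would apply the standard theorem on differentiation under the integral sign. Setting $K' = \pi_{\mathbb{R}^n}(K)$, the integral
\[
\varphi(x) = \int_{K'} \psi(x,\textbf{y})^2 \, d\lambda^{(n)}(\textbf{y})
\]
has integrand $(x,\textbf{y}) \mapsto \psi(x,\textbf{y})^2$ which is jointly smooth on $\mathbb{R}^{n+1}$ and compactly supported in $K$; in particular, by Leibniz, the partial derivatives $\partial_x^k[\psi(x,\textbf{y})^2]$ exist, are continuous, and are uniformly bounded on $\mathbb{R}^{n+1}$. Since $K'$ has finite Lebesgue measure, each such derivative admits the constant (in $x$) integrable dominating function $M_k \cdot \mathds{1}_{K'}$, where $M_k \doteq \sup_{\mathbb{R}^{n+1}} \abs{\partial_x^k[\psi^2]}$. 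The difference-quotient/dominated-convergence argument then permits the exchange
\[
\varphi^{(k)}(x) = \int_{K'} \partial_x^k \left[ \psi(x,\textbf{y})^2 \right] \, d\lambda^{(n)}(\textbf{y}),
\]
which is continuous in $x$ (again by dominated convergence applied to continuity), proving $\varphi \in C^\infty(\mathbb{R})$ by induction on $k$.

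I do not anticipate a substantive obstacle: the argument is entirely routine, and the only technical point is the justification of the limit-integral exchange, which is immediate here since the integrand is smooth with compact support jointly in $(x,\textbf{y})$, trivially providing the dominating function required. Combining the two steps yields $\varphi \in C^\infty_c(\mathbb{R})$, as claimed.
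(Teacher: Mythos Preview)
Your proposal is correct and follows essentially the same approach as the paper: compact support via the projection $\pi_{\mathbb{R}}(K)$, and smoothness by differentiating under the integral sign using the uniform bound on $\partial_x^k[\psi^2]$ (coming from compact support) as the dominating function over the finite-measure set $K'$, with induction on $k$. The paper separates continuity from differentiability as an explicit preliminary step, but otherwise the argument is identical.
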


\begin{proof}
    Let $\pi_{\mathbb{R}}\left(K\right)$ be the compact projection of $K \equiv \text{supp} \, \psi$ on the real line; if $x \notin \pi_{\mathbb{R}}\left( K \right)$, $\left(x, \textbf{y} \right) \notin K$ for all $\textbf{y} \in \mathbb{R}^n$, hence $\psi_x^2\left( \textbf{y} \right) = 0$ and, correspondingly, $\varphi\left(x\right) = 0$. 
    
    \noindent Given arbitrarily $x \in \mathbb{R}$, let $\left\{ x_m \right\}_m \subset \mathbb{R}$ be such that it converges to $x$. Since $\psi^2$ is continuous of compact support, there exists $C>0$ such that $\abs{ \psi^2_{x_m} \left( \textbf{y} \right)}  \leq C$ for all $\textbf{y} \in \mathbb{R}^n, \, m \in \mathbb{N}$. The dominated convergence theorem then implies
    \begin{align*}
        \underset{m}{\lim} \, \varphi\left( x_m \right) & = \underset{m}{\lim} \, \int_{K^{\prime}} \, \psi_{x_m}^2 \left( \textbf{y} \right) \, d\lambda^{(n)}\left( \textbf{y} \right) = \int_{K^{\prime}} \underset{m}{\lim} \left[\, \psi_{x_m}^2 \left( \textbf{y} \right)\right] \, d\lambda^{(n)}\left( \textbf{y} \right) = \\ 
        & = \int_{K^{\prime}} \, \psi_{x}^2 \left( \textbf{y} \right) \, d\lambda^{(n)}\left( \textbf{y} \right) = \varphi(x),
    \end{align*}
    i.e., the arbitrarity of $x \in \mathbb{R}$ gives that $\varphi$ is a continuous function of compact support. 

    \noindent $\left( \partial \psi_x^2 \right) \in C^{\infty}_{c}(\mathbb{R}^{1+n})$, hence uniformly bounded over $K$, giving
    \begin{equation*}
        \varphi^{\prime}\left( x \right) = \int_{K^{\prime}} \, \left( \frac{\partial \psi^2}{\partial x} \right)(x,\textbf{y}) \, d\lambda^{(n)}\left(\textbf{y}\right), \quad \forall x \in \mathbb{R}.
    \end{equation*}
    By repeating the process, $\varphi^{\prime}$ is a continuous function of compact support; inductively, $\varphi \in C_{0}^{\infty}\left( \mathbb{R} \right)$.
\end{proof}

\begin{lemma}
    Let $\psi \in C^{\infty}_{c}\left( \mathbb{R}^{1+n} \right)$ be a complex function.
    \begin{equation*}
        \underset{r \in \mathbb{R}}{\sup} \, \int_{\mathbb{R}^n} \, \abs{\psi\left( r, \textbf{x} \right)}^2 d\lambda^{(n)}\left( \textbf{x} \right) \leq \norm{\psi}^2_{H^1\left( \mathbb{R}^{n+1}\right)}
    \end{equation*}
    holds.
\end{lemma}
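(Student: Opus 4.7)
The plan is to exploit the one-dimensional fundamental theorem of calculus in the $r$-variable, coupled with the elementary inequality $2ab\leq a^{2}+b^{2}$, to bound the $\mathbb{R}^{n}$-slice norm of $\psi$ pointwise in $r$ by a quantity independent of $r$.

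First, I would fix an arbitrary $r\in\mathbb{R}$ and $\mathbf{x}\in\mathbb{R}^{n}$. Because $\psi\in C^{\infty}_{c}(\mathbb{R}^{1+n})$, the map $s\mapsto|\psi(s,\mathbf{x})|^{2}$ is smooth and compactly supported in $s$, so in particular it vanishes as $s\to -\infty$. Hence
\begin{equation*}
    |\psi(r,\mathbf{x})|^{2}=\int_{-\infty}^{r}\frac{\partial}{\partial s}|\psi(s,\mathbf{x})|^{2}\,ds=\int_{-\infty}^{r}2\,\operatorname{Re}\!\left[\overline{\psi(s,\mathbf{x})}\,\frac{\partial\psi}{\partial s}(s,\mathbf{x})\right]ds.
\end{equation*}
Taking absolute values, enlarging the interval of integration to all of $\mathbb{R}$, and applying $2|ab|\leq|a|^{2}+|b|^{2}$ pointwise yields
\begin{equation*}
    |\psi(r,\mathbf{x})|^{2}\leq\int_{\mathbb{R}}\left(|\psi(s,\mathbf{x})|^{2}+\Big|\tfrac{\partial\psi}{\partial s}(s,\mathbf{x})\Big|^{2}\right)ds.
\end{equation*}

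Next, I would integrate both sides in $\mathbf{x}$ over $\mathbb{R}^{n}$ and invoke Tonelli (the integrand is nonnegative and, thanks to the compact support of $\psi$, everything is finite):
\begin{equation*}
    \int_{\mathbb{R}^{n}}|\psi(r,\mathbf{x})|^{2}\,d\lambda^{(n)}(\mathbf{x})\leq\int_{\mathbb{R}^{1+n}}\left(|\psi(s,\mathbf{x})|^{2}+\Big|\tfrac{\partial\psi}{\partial s}(s,\mathbf{x})\Big|^{2}\right)ds\,d\lambda^{(n)}(\mathbf{x}).
\end{equation*}
The right-hand side is bounded above by $\|\psi\|^{2}_{L^{2}}+\|\partial_{s}\psi\|^{2}_{L^{2}}\leq\|\psi\|^{2}_{H^{1}(\mathbb{R}^{n+1})}$, since the full $H^{1}$-norm includes the squared $L^{2}$-norms of every first-order partial derivative, not just $\partial_{s}$.

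Finally, since the right-hand side is independent of $r$, I would take the supremum over $r\in\mathbb{R}$ on the left and obtain the claimed inequality. There is no real obstacle here; the only subtlety worth flagging is the use of compact support to justify the vanishing at $-\infty$ and to exchange the order of integration without further hypotheses, which makes the argument self-contained within the smooth compactly supported class before any density extension.
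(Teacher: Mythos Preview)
Your proof is correct and follows essentially the same strategy as the paper: apply the fundamental theorem of calculus in the $r$-variable to $|\psi|^{2}$ and then use $2|ab|\leq a^{2}+b^{2}$ to absorb the cross term. The only cosmetic difference is that the paper splits $\psi$ into real and imaginary parts and applies Cauchy--Schwarz after integrating over $\mathbf{x}$, whereas you work directly with the complex modulus and apply the AM--GM inequality pointwise before integrating; your version is slightly more streamlined but the content is the same.
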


\begin{proof}
    Let $r \in \mathbb{R}$ be. Trivially,
    \begin{equation*}
        \int_{\mathbb{R}^n} \, \abs{\psi \left(r, \textbf{x} \right)}^2 \, d\lambda^{(n)}\left( \textbf{x} \right) = \int_{\mathbb{R}^n} \, {\psi}_{R}^2 \left(r, \textbf{x} \right) \, d\lambda^{(n)}\left( \textbf{x} \right) + \int_{\mathbb{R}^n} \, {\psi}_{I}^2 \left(r, \textbf{x} \right) \, d\lambda^{(n)}\left( \textbf{x} \right),
    \end{equation*}
    where ${\psi}_{R} \equiv \Re{\psi}$ and $\psi_{I} \equiv \Im{\psi}$. Then
    \begin{align*}
        & \int_{\mathbb{R}^n} \, \psi_i^2\left(r,\textbf{x}\right) \, d\lambda^{(n)}\left( \textbf{x} \right) \equiv \varphi\left(r\right) = \int_{-\infty}^{r} \, \varphi^{\prime}\left( s \right) d\lambda^{(1)}(s) \leq \int_{\mathbb{R}} \, \abs{\varphi^{\prime}(s)} d\lambda^{(1)}(s) \equiv \\
        & \equiv \int_{\mathbb{R}} \, \abs{ \left[ \frac{d}{dr} \int_{\mathbb{R}^n} \, \psi_i^2\left(r,\textbf{x}\right) \, d\lambda^{(n)}\left( \textbf{x} \right) \right](s)} \, d\lambda^{(1)}(s) \leq \\
        & \leq 2 \int_{\mathbb{R}^{n+1}} \, \abs{\psi_i\left( \textbf{y}\right) \left( \partial_r \psi_i \right)\left( \textbf{y} \right)} d\lambda^{(n+1)}(\textbf{y}) \leq \left( \text{by using H{\"o}lder inequality} \right) \leq 2 \norm{\psi_i}_2 \norm{\partial_r \psi_i}_2 \leq \\
        & \leq \norm{\psi_i}_2^2 + \norm{\left(\partial_r \psi\right)_i}_2^2
    \end{align*}
    with $i = R, I$, therefore
    \begin{align*}
        & \int_{\mathbb{R}^n} \, \abs{\psi\left( r, \textbf{x}\right)}^2 d\lambda^{(n)}(\textbf{x}) \leq \left( \norm{\psi_R}_2^2 + \norm{\psi_I}_2^2 \right) + \left( \norm{\left(\partial_r \psi\right)_R }_2^2  + \norm{\left( \partial_r \psi\right)_I}_2^2 \right) \equiv \\ & \equiv \norm{\psi}_2^2 + \norm{\partial_r \psi}_2^2 \leq \norm{\psi}_2^2 + \norm{ \, \norm{\nabla \psi}}_2^2 \equiv \norm{\psi}_{H^{1}\left( \mathbb{R}^{n+1} \right)}^2.
    \end{align*}
    The right hand side of the foregoing inequality is independent of $r \in \mathbb{R}$, hence
    \begin{equation*}
        \underset{r}{\sup} \, \int_{\mathbb{R}^n} \, \abs{\psi\left(r, \textbf{x} \right)}^2 d\lambda^{(n)}(\textbf{x}) \leq \norm{\psi}_{H^{1}\left( \mathbb{R}^{n+1} \right)}^2.  
    \end{equation*}
\end{proof}

\begin{theorem}
    \justifying
    The map $\Tilde{\tau}_0: \, \psi \in C^{\infty}_c\left( \mathbb{R}^{1+n} \right) \longmapsto \Tilde{\tau}_0 \psi \in L^2\left(\mathbb{R}^n\right)$, with $\left(\Tilde{\tau}_0 \psi \right)(\textbf{x}) = \psi(0,\textbf{x})$, for all $\textbf{x} \in \mathbb{R}^n$, results in a linear, densely defined bounded operator from $H^{1}\left( \mathbb{R}^{n+1} \right)$ to $L^2\left(\mathbb{R}^n\right)$.
\end{theorem}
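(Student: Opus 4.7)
The plan is to verify the three properties—linearity, dense definedness, and boundedness—in that order, with essentially all the work already done by the preceding lemma.

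First I would observe that linearity is immediate from the pointwise definition: if $\psi_1, \psi_2 \in C_c^{\infty}(\mathbb{R}^{1+n})$ and $\alpha, \beta \in \mathbb{C}$, then $(\alpha\psi_1 + \beta\psi_2)(0,\textbf{x}) = \alpha \psi_1(0,\textbf{x}) + \beta \psi_2(0,\textbf{x})$ for all $\textbf{x} \in \mathbb{R}^n$, so $\tilde{\tau}_0(\alpha\psi_1 + \beta\psi_2) = \alpha \tilde{\tau}_0\psi_1 + \beta \tilde{\tau}_0\psi_2$. Dense definedness follows from the classical fact (Meyers--Serrin / standard mollification) that $C_c^{\infty}(\mathbb{R}^{1+n})$ is dense in $H^1(\mathbb{R}^{1+n})$ with respect to the Sobolev norm; hence $\mathcal{D}(\tilde{\tau}_0) = C_c^{\infty}(\mathbb{R}^{1+n})$ is a dense subspace of the domain space $H^1(\mathbb{R}^{1+n})$.

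The crux is the boundedness estimate, and this is where the previous lemma does all the heavy lifting. For any $\psi \in C_c^{\infty}(\mathbb{R}^{1+n})$, I would specialize the sup-bound to $r=0$:
\begin{equation*}
    \norm{\tilde{\tau}_0 \psi}_{L^2(\mathbb{R}^n)}^2 = \int_{\mathbb{R}^n} \abs{\psi(0,\textbf{x})}^2 d\lambda^{(n)}(\textbf{x}) \leq \underset{r \in \mathbb{R}}{\sup} \int_{\mathbb{R}^n} \abs{\psi(r,\textbf{x})}^2 d\lambda^{(n)}(\textbf{x}) \leq \norm{\psi}_{H^1(\mathbb{R}^{1+n})}^2.
\end{equation*}
In particular, $\tilde{\tau}_0\psi \in L^2(\mathbb{R}^n)$, so the codomain is correct, and the operator norm (on the dense subspace) is bounded by $1$. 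By the bounded linear transformation theorem, $\tilde{\tau}_0$ extends uniquely to a bounded linear operator $H^1(\mathbb{R}^{1+n}) \to L^2(\mathbb{R}^n)$.

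There is no genuine obstacle here, as the hard analytic content is entirely contained in the preceding lemma (the one-dimensional fundamental-theorem-of-calculus argument combined with Hölder's inequality on $\int \psi_i \partial_r \psi_i$). The only minor point worth emphasizing is that the argument shows the estimate first on the dense subspace $C_c^{\infty}(\mathbb{R}^{1+n})$ of smooth compactly supported functions—where the pointwise evaluation $\psi(0,\cdot)$ is unambiguous—and it is the bounded extension that then realizes the trace operator on the whole of $H^1(\mathbb{R}^{1+n})$.
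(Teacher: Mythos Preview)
Your proposal is correct and follows essentially the same approach as the paper: the paper also dismisses linearity and dense definedness as clear, then establishes boundedness via the identical chain of inequalities $\norm{\tilde{\tau}_0\psi}_{L^2}^2 = \int \abs{\psi(0,\textbf{x})}^2 \leq \sup_r \int \abs{\psi(r,\textbf{x})}^2 \leq \norm{\psi}_{H^1}^2$ using the preceding lemma. Your write-up is slightly more explicit (spelling out the linearity computation and invoking the BLT theorem for the extension), but the substance is the same.
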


\begin{proof}
    $\Tilde{\tau}_0$ is clearly well-defined, linear and densely defined. Concerning boundedness, one has
    \begin{align*}
        \norm{\Tilde{\tau}_0 \psi}_{L^2\left( \mathbb{R}^n \right)}^2 & = \int_{\mathbb{R}^n} \, \abs{\left( \Tilde{\tau}_0 \psi \right) \left(\textbf{x} \right)}^2 d\lambda^{(n)}\left( \textbf{x} \right) = \int_{\mathbb{R}^n} \, \abs{\psi\left(0,\textbf{x}\right)}^2 d\lambda^{(n)}(\textbf{x}) \leq \\ & \leq \underset{r \in \mathbb{R}}{\sup} \, \int_{\mathbb{R}^n} \, \abs{\psi\left(r, \textbf{x}\right)}^2 d\lambda^{(n)}(\textbf{x}) \leq \norm{\psi}_{H^1\left( \mathbb{R}^{n+1}\right)}^2.
    \end{align*}
\end{proof}

\begin{remark}
    \justifying
    The foregoing proposition allows for a bounded, norm-preserving extension $\tau_0$ of $\Tilde{\tau}_0$ to all $H^1\left( \mathbb{R}^{n+1} \right)$. $\hfill \square$
\end{remark}

\begin{definition}
    \justifying
    Given $n \in \mathbb{N}$ and $i,j \in \left\{ 1, \ldots , n \right\}: \; i < j$, by considering $U_{(ij)}$ as in (\ref{Coordinate transformation operator}), $\tau_{(ij)} \doteq \tau_0 U_{(ij)}$ denotes the corresponding \textbf{trace operator}. $\hfill \square$
\end{definition}

\begin{remark}
    \justifying
    Let $\varphi \in C^{\infty}_c\left( \mathbb{R}^{n+1} \right)$ be. It is observed that
    \begin{align*}
        \left( \tau_0\varphi \right) \left( \textbf{y} \right) = \varphi \left(0, \textbf{y} \right) = \frac{1}{\sqrt{2\pi}} \int_{\mathbb{R}} \, \left( \mathfrak{F}_{\mathbb{R}} \varphi \right) \left( p, \textbf{y} \right) e^{ip0} d\lambda^{(1)}\left(p\right) \equiv \frac{1}{\sqrt{2\pi}} \int_{\mathbb{R}} \, \left( \mathfrak{F}_{\mathbb{R}} \varphi \right) \left( p, \textbf{y} \right) d\lambda^{(1)}\left(p\right),
    \end{align*}
    \noindent where $\mathfrak{F}$ is the Fourier-Plancherel operator on $L^2\left( \mathbb{R},d\lambda^{(1)}\left(p\right)\right)$. As a consequence $\hfill \square$
\end{remark}

\begin{definition}
    \justifying
    Given $H^1\left(\mathbb{R},d\lambda^{(1)}\left(p\right)\right) = \mathfrak{F}H^1\left( \mathbb{R}, d\lambda^{(1)}\left(x\right) \right)$, the \textbf{Fourier trace operator}
    \begin{equation*}
        \hat{\tau}_0: \, H^1\left(\mathbb{R}, d\lambda^{(1)}\left(p\right) \right) \otimes H^1\left( \mathbb{R}^n, d\lambda^{(n)}\left( \textbf{y} \right) \right) \longrightarrow L^2\left( \mathbb{R}^{n}, d\lambda^{(n)}\left( \textbf{y} \right) \right)
    \end{equation*}
    is introduced, where
    \begin{equation*}
        \left(\hat{\tau}_0 \xi\right) \left( \textbf{y} \right) = \frac{1}{\sqrt{2\pi}} \int_{\mathbb{R}} \xi \left( p, \textbf{y} \right) d\lambda^{(1)}\left(p\right).
    \end{equation*}
    $\hfill \square$
\end{definition}

\begin{lemma}
    \justifying
    What follows holds.
    \begin{enumerate}
        \item $\mathfrak{F}_{\mathbb{R}^n} \tau_0 = \tau_0 \left( \mathds{1}_{\mathbb{R}} \otimes \mathfrak{F}_{\mathbb{R}^n} \right)$.
        \item $\mathfrak{F}_{\mathbb{R}^n} \tau_0 = \hat{\tau}_0 \mathfrak{F}_{\mathbb{R}^{n+1}}$.
        \item $\tau_0 = \hat{\tau}_0 \left( \mathfrak{F}_{\mathbb{R}} \otimes \mathds{1}_{\mathbb{R}^{n}} \right)$.
    \end{enumerate}
\end{lemma}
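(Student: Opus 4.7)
The plan is to verify all three identities first on the dense subspace $C_c^\infty(\mathbb{R}^{n+1})$ by a direct Fubini/Fourier-inversion computation, and then extend by continuity using the boundedness of $\tau_0$ (just established), the unitarity of the Fourier-Plancherel operators, and the boundedness of $\hat\tau_0$ on its defining domain. All three claims are of the same flavor: each is an elementary commutation relation on smooth compactly supported functions, promoted to operator level by density.

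For (1), taking $\psi \in C_c^\infty(\mathbb{R}^{n+1})$ and unwinding the definitions via Fubini gives
\[
[\mathfrak{F}_{\mathbb{R}^n}\tau_0\psi](\mathbf{p}) \;=\; \frac{1}{(2\pi)^{n/2}}\int_{\mathbb{R}^n} \psi(0,\mathbf{y})\, e^{-i\mathbf{p}\cdot\mathbf{y}}\, d\mathbf{y} \;=\; [\tau_0(\mathds{1}_\mathbb{R}\otimes\mathfrak{F}_{\mathbb{R}^n})\psi](\mathbf{p}),
\]
the right-hand side being nothing but the partial Fourier transform in the last $n$ variables, subsequently evaluated at $r=0$. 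Extension to $H^1(\mathbb{R}^{n+1})$ is automatic since $\mathds{1}_\mathbb{R}\otimes\mathfrak{F}_{\mathbb{R}^n}$ preserves $H^1(\mathbb{R}^{n+1})$ (the Fourier transform in the $\mathbf{y}$-variables commutes with $\partial_r$ and is unitary on $L^2(\mathbb{R}^n)$), while both $\tau_0$ and $\mathfrak{F}_{\mathbb{R}^n}$ are bounded with target $L^2(\mathbb{R}^n)$.

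For (3), one computes for $\psi \in C_c^\infty(\mathbb{R}^{n+1})$
\[
[\hat\tau_0(\mathfrak{F}_\mathbb{R}\otimes\mathds{1}_{\mathbb{R}^n})\psi](\mathbf{y}) \;=\; \frac{1}{2\pi}\int_\mathbb{R}\int_\mathbb{R} e^{-ipr}\psi(r,\mathbf{y})\, dr\, dp \;=\; \psi(0,\mathbf{y}) \;=\; (\tau_0\psi)(\mathbf{y}),
\]
the middle equality being the Fourier inversion formula applied to the Schwartz-class function $r\mapsto\psi(r,\mathbf{y})$ at $r=0$; this is precisely the manipulation highlighted in the remark immediately preceding the definition of $\hat\tau_0$. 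Density of $C_c^\infty$ in the domain, together with the continuity of $\mathfrak{F}_\mathbb{R}\otimes\mathds{1}_{\mathbb{R}^n}$ and of $\hat\tau_0$, extends the identity. Once (1) and (3) are in hand, identity (2) is purely algebraic: using the tensor factorization $\mathfrak{F}_{\mathbb{R}^{n+1}} = \mathfrak{F}_\mathbb{R}\otimes\mathfrak{F}_{\mathbb{R}^n}$,
\[
\hat\tau_0\mathfrak{F}_{\mathbb{R}^{n+1}} \;=\; \hat\tau_0(\mathfrak{F}_\mathbb{R}\otimes\mathds{1}_{\mathbb{R}^n})(\mathds{1}_\mathbb{R}\otimes\mathfrak{F}_{\mathbb{R}^n}) \;\stackrel{(3)}{=}\; \tau_0(\mathds{1}_\mathbb{R}\otimes\mathfrak{F}_{\mathbb{R}^n}) \;\stackrel{(1)}{=}\; \mathfrak{F}_{\mathbb{R}^n}\tau_0.
\]

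The main obstacle I foresee is purely bookkeeping on domains: one must verify that $\mathfrak{F}_{\mathbb{R}^{n+1}}\psi$ and $(\mathfrak{F}_\mathbb{R}\otimes\mathds{1}_{\mathbb{R}^n})\psi$ actually lie in the natural domain of $\hat\tau_0$, namely $H^1(\mathbb{R})\otimes H^1(\mathbb{R}^n)$, whenever $\psi\in H^1(\mathbb{R}^{n+1})$, so that the density-based extensions are consistent and the chain of equalities in (2) makes literal sense on the full domain rather than only on $C_c^\infty$. On $C_c^\infty$ none of this is problematic, and each identity reduces to the same Fubini/Fourier-inversion manipulation; once the functional-analytic framework for $\hat\tau_0$ is pinned down, the proof is essentially a three-line computation followed by continuous extension.
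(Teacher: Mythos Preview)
Your proposal is correct and follows essentially the same approach as the paper: verify each identity on $C_c^\infty(\mathbb{R}^{n+1})$ by a direct Fubini/Fourier-inversion computation. The only cosmetic difference is that the paper proves (2) by a direct computation (inserting Fourier inversion in the first variable into the expression for $\mathfrak{F}_{\mathbb{R}^n}\tau_0\varphi$ and regrouping), whereas you derive (2) algebraically from (1) and (3) via the factorization $\mathfrak{F}_{\mathbb{R}^{n+1}}=\mathfrak{F}_\mathbb{R}\otimes\mathfrak{F}_{\mathbb{R}^n}$; the underlying manipulation is the same, and your added remarks on continuous extension and the domain of $\hat\tau_0$ are in fact more careful than what the paper spells out.
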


\begin{proof}
    \justifying
    Let $\varphi \in C^{\infty}_c\left( \mathbb{R}^{n+1} \right)$ be arbitrary.
    \begin{enumerate}
        \item For all $\textbf{P} \in \mathbb{R}^n$,
        \begin{align*}
            \left[ \mathfrak{F}_{\mathbb{R}^n} \left( \tau_0 \varphi \right) \right]\left( \textbf{P} \right) & = \int_{\mathbb{R}^n} \, \left( \tau_0 \varphi \right)\left( \textbf{x} \right) e^{-i \textbf{P} \cdot \textbf{x}} \frac{d \textbf{x}}{\left( 2\pi \right)^{\frac{n}{2}}} = \int_{\mathbb{R}^n} \, \varphi \left( 0, \textbf{x} \right) e^{-i \textbf{P} \cdot \textbf{x}} \frac{d \textbf{x}}{\left( 2\pi \right)^{\frac{n}{2}}} = \\
            & = \left[ \int_{\mathbb{R}^n} \, \varphi \left( y, \textbf{x} \right) e^{-i \textbf{P} \cdot \textbf{x}} \frac{d \textbf{x}}{\left( 2\pi \right)^{\frac{n}{2}}} \right]_{y = 0} = \left\{ \left[ \left( \mathds{1}_{\mathbb{R}} \otimes \mathfrak{F}_{\mathbb{R}^{n}} \right) \varphi \right] \left( y, \textbf{P} \right)  \right\}_{y=0} = \\
            & = \left\{ \left[\tau_0 \left( \mathds{1}_{\mathbb{R}} \otimes \mathfrak{F}_{\mathbb{R}^n} \right) \right] \varphi \right\} \left( \textbf{P} \right).
        \end{align*}
        \item For all $\textbf{P} \in \mathbb{R}^n$,
            \begin{align*}
                \left[ \mathfrak{F}_{\mathbb{R}^n} \left( \tau_0 \varphi \right) \right]\left( \textbf{P} \right) & = \int_{\mathbb{R}^n} \, \varphi \left( 0, \textbf{x} \right) e^{-i \textbf{P} \cdot \textbf{x}} \frac{d \textbf{x}}{\left( 2\pi \right)^{\frac{n}{2}}} = \int_{\mathbb{R}^n} \, \left[ \int_{\mathbb{R}} \, \left( \mathfrak{F}_{\mathbb{R}} \varphi \right) \left( p, \textbf{x} \right) \frac{dp}{\sqrt{2\pi}} \right] e^{-i \textbf{P} \cdot \textbf{x}} \frac{d\textbf{x}}{\left( 2\pi \right)^{\frac{n}{2}}} = \\
                & = \int_{\mathbb{R}} \left[ \mathfrak{F}_{\mathbb{R}^n} \left( \mathfrak{F}_{\mathbb{R}}\varphi \right) \right] \left(p, \textbf{P}\right) \frac{dp}{\sqrt{2\pi}} \equiv \int_{\mathbb{R}} \, \left(\mathfrak{F}_{\mathbb{R}^{n+1}} \varphi\right)\left(p, \textbf{P}\right) \frac{dp}{\sqrt{2\pi}} \equiv \\
                & \equiv \left[ \hat{\tau}_0 \left( \mathfrak{F}_{\mathbb{R}^{n+1}} \varphi \right) \right]\left( \textbf{P} \right)
            \end{align*}
            \item For all $\textbf{y} \in \mathbb{R}^n$,
            \begin{equation*}
                \left( \tau_0\varphi \right) \left( \textbf{y} \right) = \frac{1}{\sqrt{2\pi}} \int_{\mathbb{R}} \, \left( \mathfrak{F}_{\mathbb{R}} \varphi \right) \left( p, \textbf{y} \right) d\lambda^{(1)}\left(p\right) \equiv  \left[\hat{\tau}_0 \left( \mathfrak{F}_{\mathbb{R}} \otimes \mathds{1}_{\mathbb{R}^n} \right) \varphi \right] \left( \textbf{y} \right). 
            \end{equation*}
    \end{enumerate}
\end{proof}

\begin{remark}
    \justifying
    The Fourier trace operator is bounded. $\hfill \square$
\end{remark}

%%%%%%%%%%%%%%%%%%%%%%%%%%%%%%%%%%%%%%%%%%%%%%%%%%%%%%%%%%%%%%%%%%%%%%%%%%%%%%%%%%%%%%%%%%%%%%%%%%%%%%%%%%%%%%%%%%%%%%%%%%%%%%%%

\newpage
\section*{Appendix 3 - Quadratic Form Investigations}
\justifying

\begin{theorem}
    \justifying
    Given $n \in \mathbb{N}: \, n \geq 2$, let $m_1, \ldots, m_n \in \mathbb{R}^+$ be and correspondingly $a_j = (2m_j)^{-1}, \, j = 1, \ldots, n$. Consider, further, $g \in \mathbb{R} \setminus \{0\}$; the map $\left( t, \mathcal{D}_t \right)$, such that $\mathcal{D}_t = H^1\left( \mathbb{R}^n \right)$ and $t: \, (\varphi,\psi) \in \mathcal{D}_t \times \mathcal{D}_t \longmapsto t(\varphi,\psi) \in \mathbb{C}$ with
    \small{\begin{equation*}
        t\left( \varphi, \psi \right) = \sum_{i=1}^n \, a_j \int_{\mathbb{R}^n} \, \left[ \overline{ \partial_j \varphi } \; \partial_j \psi \right] d\lambda^{(n)} - g \sum_{1\leq i < j \leq n} \int_{\mathbb{R}^{n-1}} \, \left[ \overline{ \tau_{(ij)} \varphi} \; \tau_{(ij)} \psi \right] d\lambda^{(n-1)}
    \end{equation*}}
    \normalsize 
    \noindent results in a sesquilinear, densely defined, hermitian, lower semi-bounded, closed form on $L^2\left( \mathbb{R}^n \right)$.
\end{theorem}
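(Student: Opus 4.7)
The plan is to verify the five properties in sequence. Sesquilinearity, dense definition, and hermiticity are immediate from the basic structure of $t$, while lower semi-boundedness and closedness will both follow, via the KLMN theorem, from a relative form-bound of the interaction against the free kinetic form. Sesquilinearity is clear: both summands are $L^2$-inner products composed with the linear operators $\partial_j: H^1(\mathbb{R}^n) \to L^2(\mathbb{R}^n)$ and $\tau_{(ij)}: H^1(\mathbb{R}^n) \to L^2(\mathbb{R}^{n-1})$ (the latter bounded by the trace theorem in Appendix 2). Density of $\mathcal{D}_t = H^1(\mathbb{R}^n)$ in $L^2(\mathbb{R}^n)$ follows from $C_c^\infty(\mathbb{R}^n) \subset H^1(\mathbb{R}^n)$ being $L^2$-dense. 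Hermiticity results from the pointwise identity $\overline{\overline{u} v} = \overline{v} u$ applied to each integrand after swapping $\varphi$ and $\psi$.

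Write $t = t_0 + t_V$, with $t_0(\varphi, \psi) \doteq \sum_{j=1}^n a_j \int_{\mathbb{R}^n} \overline{\partial_j \varphi}\, \partial_j \psi \, d\lambda^{(n)}$ (the free kinetic form, which is closed and non-negative on $H^1(\mathbb{R}^n)$) and $t_V$ the symmetric interaction term. The key analytic input is an $\epsilon$-trace inequality
\[
\norm{\tau_{(ij)} \varphi}_{L^2(\mathbb{R}^{n-1})}^2 \leq \delta\, \norm{\partial_{r_{(ij)}} U_{(ij)} \varphi}_{L^2(\mathbb{R}^n)}^2 + \delta^{-1} \norm{\varphi}_{L^2(\mathbb{R}^n)}^2 \qquad (\delta > 0),
\]
to be obtained by applying the estimate $\norm{\tau_0 \psi}^2 \leq 2 \norm{\psi}_2 \norm{\partial_r \psi}_2$ extracted from Appendix 2 to $U_{(ij)}\varphi$ and then invoking Young's inequality $2ab \leq \delta a^2 + \delta^{-1} b^2$. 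Combining with the identity $\norm{\partial_{r_{(ij)}} U_{(ij)} \varphi}^2 \leq 2\mu_{(ij)}\, t_0(\varphi, \varphi)$ (which follows from the explicit form of $H_0^\sigma$ recorded at the start of Section 3, with $\mu_{(ij)} = m_i m_j / (m_i + m_j)$), one gets $\norm{\tau_{(ij)} \varphi}^2 \leq 2\mu_{(ij)} \delta\, t_0(\varphi, \varphi) + \delta^{-1} \norm{\varphi}^2$.

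Summing over the $\binom{n}{2}$ interacting pairs yields
\[
\abs{t_V(\varphi, \varphi)} \leq 2\abs{g}\binom{n}{2} \bar{\mu}\, \delta\, t_0(\varphi, \varphi) + \abs{g}\binom{n}{2} \delta^{-1} \norm{\varphi}^2,
\]
where $\bar{\mu} \doteq \max_{i<j} \mu_{(ij)}$. Choosing $\delta$ small enough that $\alpha \doteq 2\abs{g}\binom{n}{2} \bar{\mu}\, \delta < 1$ produces $t(\varphi, \varphi) \geq (1-\alpha)\, t_0(\varphi, \varphi) - \beta \norm{\varphi}^2 \geq -\beta \norm{\varphi}^2$ for some $\beta > 0$, which is lower semi-boundedness. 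Closedness is then automatic from the KLMN theorem: $t_V$ is a symmetric form on the form-domain of the closed non-negative form $t_0$, with relative bound strictly less than $1$, hence $t = t_0 + t_V$ is closed on the same domain $H^1(\mathbb{R}^n)$.

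The main obstacle is the clean execution of the $\epsilon$-trace inequality in the pair coordinates. The trace result in Appendix 2 is stated for the canonical hyperplane $\{r = 0\} \subset \mathbb{R} \times \mathbb{R}^n$, so one must first make the identification $\tau_{(ij)} = \tau_0 \circ U_{(ij)}$, use the fact that $U_{(ij)}$ is unitary with Jacobian equal to $1$ (Remark 2.1), and correctly transport the directional derivative $\partial_{r_{(ij)}}$ under the change of coordinates. Once this bookkeeping is in place, the remaining steps (Young's inequality, the change-of-variables of the kinetic form already recorded in Section 3, and KLMN) are routine.
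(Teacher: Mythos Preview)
Your argument is correct. The underlying analytic input---the $\epsilon$-trace inequality $\norm{\tau_{(ij)}\varphi}^2 \leq \delta\,\norm{\partial_{r_{(ij)}} U_{(ij)}\varphi}^2 + \delta^{-1}\norm{\varphi}^2$, combined with the bound $\norm{\partial_{r_{(ij)}} U_{(ij)}\varphi}^2 \leq 2\mu_{(ij)}\, t_0(\varphi,\varphi)$ coming from the relative-coordinate decomposition of $H_0$---is exactly what the paper uses as well. The organizational difference is that you invoke the KLMN theorem to obtain lower semi-boundedness and closedness simultaneously and uniformly in the sign of $g$, whereas the paper splits into the cases $g<0$ and $g>0$ and, for closedness, argues by hand that the form norm $q_t(\psi) + C\norm{\psi}^2$ is equivalent to $\norm{\psi}_{H^1}^2$ by producing explicit two-sided bounds. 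Your route is cleaner and avoids the case distinction; the paper's route is more explicit about constants but is in essence a reproof of the KLMN mechanism in this particular instance. Both approaches rest on the same infinitesimal form-boundedness of the trace terms.
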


\begin{proof}
    \justifying
    Clearly $H^{1}\left( \mathbb{R}^n \right)$ is dense in $\left(L^2\left( \mathbb{R}^n\right), \, \norm{\cdot}_{L^2\left( \mathbb{R}^n\right)}\right)$. Now, given $\varphi, \psi \in H^{1}\left( \mathbb{R}^n \right)$,
    \begin{align*}
        t(\varphi,\psi) & = \sum_{i=1}^n \, a_j \int_{\mathbb{R}^n} \, \overline{\left( \partial_j \varphi \right)} \left( \partial_j \psi \right) - g \sum_{i<j} \, \int_{\mathbb{R}^{n-1}} \, \overline{\left( \tau_{(ij)} \varphi \right)} \left( \tau_{(ij)} \psi \right) = \\
        & = \overline{\sum_{i=1}^n \, a_j \int_{\mathbb{R}^n} \, \overline{\left( \partial_j \psi \right)} \left( \partial_j \varphi \right) - g \sum_{i<j} \, \int_{\mathbb{R}^{n-1}} \, \overline{\left( \tau_{(ij)} \psi \right)} \left( \tau_{(ij)} \varphi \right)} = \overline{t(\psi,\varphi)},
    \end{align*}
    \noindent i.e. $t$ is hermitian. To prove it is lower semi-bounded, let $\psi \in H^{1}\left( \mathbb{R}^n \right)$ be.
    \begin{align*}
        q_t(\psi) & = \sum_{j=1}^n \, a_j \int_{\mathbb{R}^n} \, \abs{\partial_j \psi}^2 - g \sum_{i<j} \, \int_{\mathbb{R}^{n-1}} \, \abs{\tau_{(ij)}\psi}^2 \geq \left( a \equiv \min \left\{ a_1, \ldots, a_n \right\} \right) \\ & \geq a \int_{\mathbb{R}^n} \, \norm{\nabla \psi}^2 - g \sum_{i<j} \, \norm{\tau_{(ij)} \psi}_{L^2\left( \mathbb{R}^{n-1}\right)}^2.
    \end{align*}
    If $g < 0$, then
    \begin{equation*}
        q_t(\psi) \geq a \int_{\mathbb{R}^n} \, \norm{\nabla \psi}^2 + \abs{g} \sum_{i<j} \, \norm{\tau_{(ij)} \psi}_{L^2\left( \mathbb{R}^{n-1}\right)}^2 \geq 0 = 0 \,\norm{\psi}_{H^1(\mathbb{R}^n)}^2.
    \end{equation*}
    On the other hand, if $g > 0$, given $\mu >0$, there exists $C_{\mu} > 0$ such that $\norm{\tau_{(ij)} \psi} \leq \mu \norm{ \, \norm{\nabla \psi} } + C_{\mu} \norm{\psi}$ for all $i,j \in \{1,\ldots,n\}: \, i<j$. Consequently
    \begin{equation*}
        g \sum_{i<j} \, \norm{\tau_{(ij)} \psi}^2 \leq g n(n-1)\left[ \mu^2 \norm{ \, \norm{\nabla \psi}}^2 + C_{\mu}^2 \norm{\psi}_{H^1(\mathbb{R}^n)}^2 \right]
    \end{equation*}
    and
    \begin{equation*}
        q_t(\psi) \geq \left[ a - g n(n-1)\mu^2 \right] \norm{ \, \norm{\nabla \psi}}^2 - g n(n-1)C_{\mu}^2 \norm{\psi}_{H^1(\mathbb{R}^n)}^2.
    \end{equation*}
    By choosing $\mu = \sqrt{a \left[ g n(n-1) \right]^{-1}}$,
    \begin{equation*}
        q_t(\psi) \geq \left[ - g n(n-1)C_{\mu}^2 \right] \norm{\psi}_{H^1}^2 \equiv m(g,n,a) \norm{\psi}_{H^1}^2,
    \end{equation*}
    allowing to state that, for all $g \in \mathbb{R} \setminus \{0\}$, $t$ is lower semi-bounded. Eventually, given $\psi \in H^{1}\left( \mathbb{R}^n\right)$, what follows holds.
    \begin{itemize}
        \item $\boxed{g < 0}$
        \begin{align*}
            q_t(\psi) & = \sum_{i=1}^n \, a_j \norm{\partial_j \psi}_{L^2\left( \mathbb{R}^n \right)}^2 + \abs{g} \sum_{i<j} \, \norm{\tau_{(ij)}\psi}_{L^2\left( \mathbb{R}^{n-1} \right)}^2 \leq \left( A \equiv \underset{j}{\max} \, a_j, \quad K = \underset{(ij)}{\max} \norm{\tau_{(ij)}}^2 \right)  \\ & \leq A \sum_{j=1}^n \, \norm{\partial_j \psi}_{L^2 \left( \mathbb{R}^n \right)}^2 + \frac{\abs{g}n(n-1) K}{2} \norm{\psi}_{H^1\left( \mathbb{R}^n \right)}^2 \leq \left( B \equiv \max\left\{A, \, \frac{\abs{g}n(n-1) K}{2} \right\} \right) \\
            & \leq \left(2 B\right) \, \norm{\psi}_{H^1\left( \mathbb{R}^n \right)}^2.
        \end{align*}
        However, by recalling that $a = \underset{j}{\min}  \, a_j$,
        \begin{align*}
            \norm{\psi}_{H^1 \left(\mathbb{R}^n \right)}^2 & = \norm{\psi}_{L^2 \left(\mathbb{R}^n \right)}^2 + \sum_{i=1}^n \, \norm{\partial_i \psi}_{L^2 \left(\mathbb{R}^n \right)}^2 \leq \norm{\psi}_{L^2\left(\mathbb{R}^n \right)}^2 + \frac{1}{a} \left[ \sum_{i=1}^n a_i \norm{\partial_i \psi}_{L^2\left(\mathbb{R}^n \right)}^2 \right] \leq \\
            & \leq \norm{\psi}_{L^2\left(\mathbb{R}^n \right)}^2 + \frac{1}{a} \left[ \sum_{i=1}^n \, a_i \norm{\partial_i \psi}_{L^2\left(\mathbb{R}^n \right)}^2 - g \sum_{i<j} \, \norm{\tau_{(ij)} \psi}_{L^2\left( \mathbb{R}^{n-1} \right)}^2 \right] \leq \\
            & \leq \norm{\psi}_{L^2 \left(\mathbb{R}^n \right)}^2 + \frac{1}{a} \,  q_t\left( \psi \right).
        \end{align*}
        \item $\boxed{g >0}$
        \begin{align*}
            q_t(\psi) & = \sum_{j=1}^n \, a_j \norm{\partial_j \psi}_{L^2\left( \mathbb{R}^n \right)}^2 - g \sum_{i<j} \, \norm{\tau_{(ij)} \psi}_{L^2\left( \mathbb{R}^{n-1} \right)}^2 \leq \\
            & \leq \sum_{j=1}^n \, a_j \norm{\partial_j \psi}_{L^2\left( \mathbb{R}^n \right)}^2 \leq A \, \norm{\psi}_{H^1\left( \mathbb{R}^n \right)}^2,
        \end{align*}
        Then,
        \begin{align*}
            \norm{\psi}_{H^1\left( \mathbb{R}^n \right)}^2 & = \norm{\psi}_{L^2\left( \mathbb{R}^n \right)}^2 + \sum_{j=1}^n \, \norm{\partial_j \psi}_{L^2\left( \mathbb{R}^n \right)}^2 \leq \norm{\psi}_{L^2\left( \mathbb{R}^n \right)}^2 + \frac{1}{a} \, \sum_{j=1}^n \, a_j \norm{\partial_j \psi}_{L^2\left( \mathbb{R}^n \right)}^2 \equiv \\
            & \equiv \norm{\psi}_{L^2\left( \mathbb{R}^n \right)}^2 + \frac{1}{a} \, \left[q_t\left(\psi\right) + g \sum_{i<j} \, \norm{\tau_{(ij)} \psi}_{L^2\left( \mathbb{R}^{n-1} \right)}^2\right] \leq \\
            & \leq \norm{\psi}_{L^2\left( \mathbb{R}^n \right)}^2 + \frac{1}{a} \, q_{t}\left(\psi\right) + g \frac{n(n-1)K}{2a} \norm{\psi}_{H^1\left( \mathbb{R}^n \right)}^2
        \end{align*}
        leading to $\left[1 - g \frac{Kn(n-1)}{2a}\right] \norm{\psi}_{H^1\left( \mathbb{R}^n \right)}^2 \leq \norm{\psi}_{L^2\left( \left( \mathbb{R}^n \right) \right)}^2 + \frac{1}{a} \, q_t(\psi)$.
    \end{itemize}
\end{proof}

\begin{remark}\label{Remark Hamiltonian}
    The foregoing proposition guarantees the existence of a unique self-adjoint operator $\left( H, \mathcal{D}_H \right)$ on $L^2\left( \mathbb{R}^n\right)$, to be understood as the Hamiltonian of the system considered in section 4, whose corresponding sesquilinear form is $\left( t, \mathcal{D}_t \right)$ indeed. $\hfill \square$
\end{remark}

\newpage
\section*{Appendix 4 - Boundedness Results}

\begin{theorem}\label{App. 4 Prima Prop.}
    Let $F: \, \psi \in L^2\left( \mathbb{R}^2, d\overline{R}_{\nu} d\overline{x}_2 \right) \longmapsto F\psi \in L^2\left( \mathbb{R}^2,dR_{\sigma}d\overline{x}_{\nu_2} \right)$ be the linear operator defined via the position
    \begin{align*}
        & \left[F\psi\right]\left( R_{\sigma},\overline{x}_{\nu_2} \right) \equiv \int_{\mathbb{R}^2} \, d\overline{R}_{\nu}^{\prime} d\overline{x}_{2}^{\prime} \left[ G_z^{(3)} \left( \sqrt{2m_1}\left( R_{\sigma} - \overline{R}_{\nu}^{\prime}, \right), \, \sqrt{2m_2} \left( R_{\sigma} - \overline{x}_2^{\prime} \right), \, \sqrt{2m_{\nu_2}} \left(\overline{x}_{\nu_2} - \overline{R}_{\nu}^{\prime} \right) \right) \right. \\
        \begin{split}
            &\phantom{ = { \left[F\psi\right]\left( R_{\sigma},\overline{x}_{\nu_2} \right) \equiv \int_{\mathbb{R}^2} \, d\overline{R}_{\nu}^{\prime} d\overline{x}_{2}^{\prime} \left[ \right. } } \left. \psi \left( \overline{R}_{\nu}^{\prime}, \overline{x}_2^{\prime} \right) \right]
        \end{split}
    \end{align*}
    \noindent with $z<0$, $m_1,m_2,m_{\nu_2} \in \mathbb{R}^+$. Then, $F$ is bounded.
\end{theorem}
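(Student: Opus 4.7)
The plan is to apply Schur's test. Since $G_z^{(3)}$ is nonnegative and the kernel
\[
    K\left( R_{\sigma}, \overline{x}_{\nu_2}; \overline{R}_{\nu}^{\prime}, \overline{x}_2^{\prime} \right) \doteq G_z^{(3)}\left( \sqrt{2m_1}\left( R_{\sigma} - \overline{R}_{\nu}^{\prime} \right),\, \sqrt{2m_2}\left( R_{\sigma} - \overline{x}_2^{\prime} \right),\, \sqrt{2m_{\nu_2}}\left( \overline{x}_{\nu_2} - \overline{R}_{\nu}^{\prime} \right) \right)
\]
is symmetric in the two variable groups under relabelling of parameters, it suffices to estimate
\[
    M_1 \doteq \underset{R_{\sigma}, \overline{x}_{\nu_2}}{\sup}\; \int_{\mathbb{R}^2}\, K\, d\overline{R}_{\nu}^{\prime} d\overline{x}_2^{\prime}, \quad M_2 \doteq \underset{\overline{R}_{\nu}^{\prime}, \overline{x}_2^{\prime}}{\sup}\; \int_{\mathbb{R}^2}\, K\, dR_{\sigma}\, d\overline{x}_{\nu_2},
\]
and deduce $\norm{F} \leq \sqrt{M_1 M_2}$.

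For $M_1$, I fix $\left( R_{\sigma}, \overline{x}_{\nu_2} \right)$ and change variables $u \doteq \sqrt{2m_1}\left( R_{\sigma} - \overline{R}_{\nu}^{\prime} \right), \; v \doteq \sqrt{2m_2}\left( R_{\sigma} - \overline{x}_2^{\prime} \right)$, with Jacobian $\left( 4m_1 m_2 \right)^{-\frac{1}{2}}$. The third slot of $G_z^{(3)}$ becomes $\sqrt{m_{\nu_2}/m_1}\, u + c$, where $c = \sqrt{2m_{\nu_2}}\left( \overline{x}_{\nu_2} - R_{\sigma} \right)$ collects the $\left(R_\sigma, \overline{x}_{\nu_2}\right)$-dependence. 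The integral thus reduces to $\int_{\mathbb{R}^2}\, G_z^{(3)}\left( u, v, \alpha u + c \right)\, du\, dv$ with $\alpha \doteq \sqrt{m_{\nu_2}/m_1}$. Using the heat-kernel representation
\[
    G_z^{(3)}\left( x_1, x_2, x_3 \right) = \int_0^{\infty} \, \frac{e^{-\left( x_1^2 + x_2^2 + x_3^2 \right)/\left( 4t \right) + zt}}{ \left( 4\pi t \right)^{\frac{3}{2}} }\, dt,
\]
the $v$-integral absorbs one factor $\sqrt{4\pi t}$, reducing $G_z^{(3)}$ to $G_z^{(2)}\left( u, \alpha u + c \right)$; completing the square $u^2 + \left( \alpha u + c \right)^2 = \left( 1 + \alpha^2\right) \left( u + \beta \right)^2 + c^2/\left( 1 + \alpha^2 \right)$, the $u$-integral then yields $G_z^{(1)}\left( c/\sqrt{1+\alpha^2} \right)/\sqrt{1 + \alpha^2}$, i.e. an explicit exponential bounded uniformly in $c$ by $\left[ 2\sqrt{|z|\left( 1 + \alpha^2 \right)} \right]^{-1}$. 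Hence $M_1 < \infty$ independently of $\left( R_{\sigma}, \overline{x}_{\nu_2} \right)$.

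The estimate of $M_2$ is structurally identical: fix $\left( \overline{R}_{\nu}^{\prime}, \overline{x}_2^{\prime} \right)$ and substitute $u \doteq \sqrt{2m_1}\left( R_{\sigma} - \overline{R}_{\nu}^{\prime} \right),\; w \doteq \sqrt{2m_{\nu_2}}\left( \overline{x}_{\nu_2} - \overline{R}_{\nu}^{\prime} \right)$; this time the \emph{second} slot of $G_z^{(3)}$ becomes affine in $u$, with a constant $d \doteq \sqrt{2m_2}\left( \overline{R}_{\nu}^{\prime} - \overline{x}_2^{\prime} \right)$, and the same two-step projection collapses the integral to a uniformly bounded quantity. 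Combining both bounds via Schur's test yields boundedness of $F$ with an explicit estimate in terms of $m_1, m_2, m_{\nu_2}$ and $|z|$.

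The only genuinely delicate point is the successive integration $\int G_z^{(3)} \to G_z^{(2)} \to G_z^{(1)}$ when one of the arguments is an affine function of another integration variable: the completion of the square must be carried out so that the residual Gaussian factor reconstructs the lower-dimensional resolvent kernel at a rescaled point. Once this reduction is performed, the whole argument is uniform in the parameters, so the bounds $M_1, M_2$ are finite independently of $z<0$ and depend only polynomially on the masses, which is all the Schur test needs.
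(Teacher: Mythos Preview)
Your proof is correct and uses the same overall framework as the paper (Schur's test), but the way you evaluate the two kernel integrals is genuinely different. The paper first simplifies by bounding $G_z^{(3)}$ from above with all masses replaced by $m=\min(m_1,m_2,m_{\nu_2})$; this produces a kernel $K(x,y;x',y')$ that is literally symmetric under $(x,y)\leftrightarrow(x',y')$, so only one integral needs to be computed. That integral is then handled via the explicit closed form $G_z^{(3)}(X)=\frac{e^{-\sqrt{|z|}\,|X|}}{4\pi|X|}$ together with an affine change of variables that diagonalises the quadratic $|X|^2$, followed by polar coordinates, yielding the uniform bound $\norm{F}\le (2\sqrt{2|z|})^{-1}$.

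You instead keep the original masses, use the heat-kernel representation, and perform a clean dimensional reduction $G_z^{(3)}\to G_z^{(2)}\to G_z^{(1)}$ by Gaussian integration, completing the square to handle the affine coupling between slots. This is arguably more systematic and does not discard the mass dependence; the trade-off is that you must treat $M_1$ and $M_2$ separately (since the kernel is only symmetric after a relabelling of masses), and you should not forget the Jacobian factors $(4m_1m_2)^{-1/2}$ and $(4m_1m_{\nu_2})^{-1/2}$ when stating the final bound. Both routes give an estimate of order $|z|^{-1/2}$, which is what is needed downstream.
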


\begin{proof}
    The Schur test will be employed.
    \begin{align*}
        & G_z^{(3)} \left( \sqrt{2m_1}\left( R_{\sigma} - \overline{R}_{\nu}^{\prime}, \right), \, \sqrt{2m_2} \left( R_{\sigma} - \overline{x}_2^{\prime} \right), \, \sqrt{2m_{\nu_2}} \left(\overline{x}_{\nu_2} - \overline{R}_{\nu}^{\prime} \right) \right) = \\
        & = \int_0^{\infty} \, e^{ - \frac{ 2m_1 \left( R_{\sigma} - \overline{R}^{\prime}_{\nu} \right)^2 +  2m_2 \left( R_{\sigma} - \overline{x}_2^{\prime} \right)^2 + 2m_{\nu_2} \left( \overline{x}_{\nu_2} - \overline{R}_{\nu}^{\prime} \right)^2 }{4t} + zt} \frac{dt}{\left(4\pi t\right)^{\frac{3}{2}}} \leq \left( m \equiv \min \left(m_1,m_2,m_{\nu_2}\right) \right) \leq \\ & \leq \int_0^{\infty} \, e^{ - \frac{ 2m \left( R_{\sigma} - \overline{R}^{\prime}_{\nu} \right)^2 +  2m \left( R_{\sigma} - \overline{x}_2^{\prime} \right)^2 + 2m \left( \overline{x}_{\nu_2} - \overline{R}_{\nu}^{\prime} \right)^2 }{4t} + zt} \frac{dt}{\left(4\pi t\right)^{\frac{3}{2}}} \equiv \\
        & \equiv \int_0^{\infty} \, e^{- \frac{\left(x-x^{\prime}\right)^2 + \left(x-y^{\prime}\right)^2 + \left(y-x^{\prime}\right)^2}{4t} + zt} \frac{dt}{\left(4\pi t\right)^{\frac{3}{2}}} \equiv K \left( x,y; x^{\prime}, y^{\prime} \right),
    \end{align*}

    \noindent by having set
    \begin{equation*}
        \begin{cases}
            x & = \sqrt{2m} \, R_{\sigma} \\
            x^{\prime} & = \sqrt{2m} \, \overline{R}_{\nu}^{\prime} \\
            y^{\prime} & = \sqrt{2m} \, \overline{x}_2^{\prime} \\
            y & = \sqrt{2m} \, \overline{x}_{\nu_2}
        \end{cases}.
    \end{equation*}

    \noindent Trivially, $ K\left( x,y; \, x^{\prime}, y^{\prime} \right) = K\left( x^{\prime}, y^{\prime}; \, x,y  \right)$. On the other hand, set $\alpha \equiv \sqrt{\abs{z}}$,
    \begin{equation*}
        \int_{\mathbb{R}^2} \, K\left( x,y; \, x^{\prime}, y^{\prime} \right) dx^{\prime} dy^{\prime} = \int_{\mathbb{R}^2} \, \frac{e^{- \alpha \sqrt{\left(x-x^{\prime}\right)^2 + \left( y-x^{\prime} \right)^2 + \left( x-y^{\prime}\right)^2} } }{4\pi \sqrt{\left(x-x^{\prime}\right)^2 + \left( y-x^{\prime} \right)^2 + \left( x-y^{\prime}\right)^2}} dx^{\prime}dy^{\prime},
    \end{equation*}
    \noindent i.e. it does not exist whenever $x=y=x^{\prime}=y^{\prime}$. Since $\left\{\left(x,y\right) \in \mathbb{R}^2 \vert x = y \right\}$ is a set of $\lambda^{(2)}-$measure zero, $x \neq y$ will be assumed. The coordinate transformation
    \begin{equation*}
        \begin{cases}
            \overline{x}^{\prime} = x^{\prime} - \frac{x+y}{2}\\
            \overline{y}^{\prime} = \frac{y^{\prime} - x}{\sqrt{2}}
        \end{cases} \iff
        \begin{cases}
            x^{\prime} = \overline{x}^{\prime} + \frac{x+y}{2}\\
            y^{\prime} = \sqrt{2} \overline{y}^{\prime} + x
        \end{cases},
    \end{equation*}
    that gives $dx^{\prime}dy^{\prime} = \sqrt{2}d\overline{x}^{\prime}d\overline{y}^{\prime}$ and $\sqrt{\left( x - x^{\prime} \right)^2 + \left( y - x^{\prime} \right)^2 + \left( x - y^{\prime} \right)^2} = \sqrt{2 \left[ \overline{x}^{\prime \, 2} + \overline{y}^{\prime \, 2} + \frac{\left(y-x\right)^2}{4} \right]}$, allows for
    
    \begin{align*}
        \int_{\mathbb{R}^2} \, \abs{K\left(x,y; x^{\prime},y^{\prime}\right)} dx^{\prime}dy^{\prime} & = \int_{\mathbb{R}^2} \, \frac{e^{ - \sqrt{2} \alpha  \sqrt{ \overline{x}^{\prime \, 2} + \overline{y}^{\prime \, 2} + \frac{\left(y-x\right)^2}{4}} }}{ \sqrt{\overline{x}^{\prime \, 2} + \overline{y}^{\prime \, 2} + \frac{\left(y-x\right)^2}{4}} } \frac{d\overline{x}^{\prime}d\overline{y}^{\prime}}{4\pi} = \left( \small{\text{by integrating in polar coordinates}}\right) = \\
        & = \frac{1}{4\pi} \int_0^{\infty} \int_{0}^{2\pi} \, \frac{e^{-\alpha\sqrt{2}\sqrt{\rho^2 + \frac{(y-x)^2}{4}} } }{\sqrt{\rho^2 + \frac{(y-x)^2}{4}}} \rho d\rho d\theta \equiv \frac{1}{2} \int_0^{\infty} \, \frac{e^{-\alpha\sqrt{2}\sqrt{\rho^2 + \frac{(y-x)^2}{4}} } }{\sqrt{\rho^2 + \frac{(y-x)^2}{4}}} \rho d\rho < \\ & < \frac{1}{2} \int_0^{\infty} \, e^{-\alpha\sqrt{2}\rho} d\rho = \frac{1}{2\sqrt{2\abs{z}}}.
    \end{align*}
    \noindent In the end, $\norm{F} \leq \frac{1}{2\sqrt{2\abs{z}}}$.
\end{proof}

\begin{theorem}
    Let $B: \varphi \in L^2\left( \mathbb{R}^3, \, d\overline{R}_{\nu}d\overline{x}_1 d\overline{x}_2 \right) \longmapsto B\varphi \in L^2\left( \mathbb{R}^3, dR_{\sigma} d\overline{x}_{\nu_1} d\overline{x}_{\nu_2} \right)$ be the linear operator defined by
    
    \small{\begin{align*}
        & \left[B\varphi\right] \left( R_{\sigma}, \overline{x}_{\nu_1}, \overline{x}_{\nu_2} \right) = \\
        \begin{split}
            & = \int_{\mathbb{R}^3} \, d\overline{R}_{\nu}^{\prime} d\overline{x}_1^{\prime} d\overline{x}_2^{\prime} \; \left[ G_z^{(4)} \left( \sqrt{2m_1} \left( R_{\sigma} - \overline{x}_{1}^{\prime} \right), \, \sqrt{2m_2} \left( R_{\sigma} - \overline{x}_2^{\prime} \right), \, \sqrt{2m_{\nu_1}} \left( \overline{x}_{\nu_1} - \overline{R}_{\nu}^{\prime} \right), \, \sqrt{2m_{\nu_2}} \left( \overline{x}_{\nu_2} - \overline{R}_{\nu}^{\prime} \right) \right) \right. \\
            &\phantom{ = {\int_{\mathbb{R}^3} \, d\overline{R}_{\nu}^{\prime} d\overline{x}_1^{\prime} d\overline{x}_2^{\prime} \; \left[\right. } } \left.     \varphi\left( \overline{R}_{\nu}^{\prime}, \, \overline{x}_1^{\prime}, \, \overline{x}_{2}^{\prime} \right) \right]
        \end{split}
    \end{align*}}
    \normalsize for all $z<0$, $m_1, m_2, m_{\nu_1}, m_{\nu_2} \in \mathbb{R}^+$. $B$ is a bounded operator.
\end{theorem}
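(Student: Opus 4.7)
The plan is to mirror the strategy of Proposition~\ref{App. 4 Prima Prop.} and establish boundedness via the Schur test, the main new ingredient being the bookkeeping for the 4-dimensional heat kernel and the fact that the input and output variables couple through two ``centre'' variables $R_{\sigma}$ and $\overline{R}_{\nu}^{\prime}$ rather than a single one. First I would majorise the kernel by replacing each mass with the minimum $m \doteq \min\{m_1,m_2,m_{\nu_1},m_{\nu_2}\}$. Since the Gaussian $e^{-|X|^2/(4t)}$ is monotone in $|X|^2$ and $|X_{\sigma\nu,0}|^2 \geq 2m\bigl[(R_{\sigma}-\overline{x}_1^{\prime})^2 + (R_{\sigma}-\overline{x}_2^{\prime})^2 + (\overline{x}_{\nu_1}-\overline{R}_{\nu}^{\prime})^2 + (\overline{x}_{\nu_2}-\overline{R}_{\nu}^{\prime})^2\bigr]$, the heat-kernel representation of $G_{z}^{(4)}$ yields a pointwise bound. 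A linear rescaling
\begin{equation*}
    x = \sqrt{2m}R_{\sigma}, \quad y=\sqrt{2m}\overline{x}_{\nu_1}, \quad y_2=\sqrt{2m}\overline{x}_{\nu_2}, \quad y^{\prime}=\sqrt{2m}\overline{R}_{\nu}^{\prime}, \quad x_i^{\prime}=\sqrt{2m}\overline{x}_i^{\prime}
\end{equation*}
introduces a Jacobian $(2m)^{-3/2}$ on each Schur marginal and reduces the kernel to $G_{z}^{(4)}(x-x_1^{\prime},x-x_2^{\prime},y-y^{\prime},y_2-y^{\prime})$.

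Next I would compute the first Schur marginal, the $d\overline{R}_{\nu}^{\prime}d\overline{x}_1^{\prime}d\overline{x}_2^{\prime}$ integral at fixed $(R_{\sigma},\overline{x}_{\nu_1},\overline{x}_{\nu_2})$. Writing $G_z^{(4)}$ in terms of its heat-kernel integral, exchanging the order of integration via Fubini, and performing the Gaussian integrals in $x_1^{\prime},x_2^{\prime},y^{\prime}$: the two integrations in $x_i^{\prime}$ are free translations and each produces $\sqrt{4\pi t}$; the integration in $y^{\prime}$ must handle the coupled quadratic $(y-y^{\prime})^2+(y_2-y^{\prime})^2 = 2(y^{\prime}-\tfrac{y+y_2}{2})^2 + \tfrac{(y_2-y)^2}{2}$, producing $\sqrt{2\pi t}\,e^{-(y_2-y)^2/(8t)}$. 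Collecting, setting $h=y_2-y$ and making the substitution $s=2t$, one recognises the one-dimensional resolvent kernel:
\begin{equation*}
    \int_0^{\infty} \, \frac{e^{-h^2/(8t)+zt}}{\sqrt{8\pi t}}\,dt = \tfrac{1}{2}\,G_{z/2}^{(1)}(h) = \frac{e^{-\sqrt{|z|/2}\,|h|}}{2\sqrt{2|z|}} \leq \frac{1}{2\sqrt{2|z|}}.
\end{equation*}

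By symmetry (i.e.\ interchanging the roles of the primed and unprimed variables) the dual Schur marginal $\int dR_{\sigma}\,d\overline{x}_{\nu_1}\,d\overline{x}_{\nu_2}$ has exactly the same structure with residual offset $h^{\prime} = x_1^{\prime}-x_2^{\prime}$ and admits the same bound. Combining the two Schur bounds with the Jacobian factor yields
\begin{equation*}
    \|B\|_{\mathfrak{B}(L^2,L^2)} \leq \frac{1}{(2m)^{3/2}}\cdot\frac{1}{2\sqrt{2|z|}}.
\end{equation*}
The main obstacle is conceptual rather than computational: one must recognise that, although the four-vector $\tilde{X}$ has four components, only three ``free'' translations are available on the input side, so one of the integrations (the $y^{\prime}$-integral) couples two Gaussian factors and leaves the residual offset $h$ that produces a decaying but non-trivial one-dimensional kernel. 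Once this is identified, the computation is a direct analogue of Proposition~\ref{App. 4 Prima Prop.} and no further subtlety arises.
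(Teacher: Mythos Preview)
Your proposal is correct and follows essentially the same route as the paper: majorise by the minimum-mass kernel, rescale, and apply the Schur test. The only cosmetic difference is that the paper completes the square in all three integration variables simultaneously and evaluates the resulting radial integral in spherical coordinates, whereas you factor the Gaussian, integrate variable by variable, and recognise the residual as $\tfrac{1}{2}G^{(1)}_{z/2}$; both computations land on the same bound $\bigl(2\sqrt{2|z|}\bigr)^{-1}$ for each Schur marginal.
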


\begin{proof}
    By proceeding as in proposition \ref{App. 4 Prima Prop.}, it does not harm generality focusing on
    \begin{equation*}
        K\left(x,y,w; \, x^{\prime},y^{\prime},w^{\prime} \right) = G_z^{(4)} \left( x-x^{\prime}, y-x^{\prime},w-y^{\prime}, w-w^{\prime} \right).
    \end{equation*}
    \noindent Then
    \begin{align*}
        & \int_{\mathbb{R}^3} \, dx^{\prime} dy^{\prime} dw^{\prime} \, \abs{G_z^{(4)} \left( x-x^{\prime}, y-x^{\prime},w-y^{\prime}, w-w^{\prime} \right)} = \\
        & = \int_{\mathbb{R}^3} \, dx^{\prime} dy^{\prime} dw^{\prime} \, \int_0^{\infty} \, \frac{dt}{\left(4\pi t\right)^2} \, \exp{ \left\{ - \frac{\left(x-x^{\prime}\right)^2 + \left(y-x^{\prime}\right)^2 + \left(w-y^{\prime}\right)^2 + \left(w-w^{\prime}\right)^2}{4t} + zt \right\} }.
    \end{align*}
    By considering the coordinate transformation
    \begin{equation*}
        \begin{cases}
            \overline{x}^{\prime} &= x^{\prime} - \frac{x+y}{2}\\
            \overline{y}^{\prime} &= \frac{y^{\prime} - w}{\sqrt{2}}\\
            \overline{w}^{\prime} &= \frac{w^{\prime} - w}{\sqrt{2}}
        \end{cases} \iff
        \begin{cases}
            x^{\prime} &= \overline{x}^{\prime} + \frac{x+y}{2}\\
            y^{\prime} &= \sqrt{2}\overline{y}^{\prime} + w\\
            w^{\prime} &= \sqrt{2}\overline{w}^{\prime} + w
        \end{cases},
    \end{equation*}
    \noindent $dx^{\prime} dy^{\prime} dz^{\prime} = 2 d\overline{x}^{\prime} d\overline{y}^{\prime} d\overline{z}^{\prime}$ and $\left(x-x^{\prime}\right)^2 + \left(y-x^{\prime}\right)^2 + \left(w-y^{\prime}\right)^2 + \left(w-w^{\prime}\right)^2 = 2 \left[ \overline{x}^{\prime \, 2} + \overline{y}^{\prime \, 2} + \overline{w}^{\prime \, 2} + \frac{\left( x-y \right)^2}{4} \right]$,  what follows holds
    \small{\begin{align*}
        & \int_{\mathbb{R}^{3}} \, dx^{\prime} dy^{\prime} dw^{\prime} \, \int_0^{\infty} \, \frac{dt}{\left(4\pi t\right)^2} \exp{ \left\{ - \frac{\left(x-x^{\prime}\right)^2 + \left(y-x^{\prime}\right)^2 + \left(w-y^{\prime}\right)^2 + \left(w-w^{\prime}\right)^2}{4t} + zt \right\} } = \\
        & = \int_{\mathbb{R}^3} \, d\overline{x}^{\prime}d\overline{y}^{\prime}d\overline{w}^{\prime} \, \int_0^{\infty} \, \frac{dt}{8\pi^2 t^2} \, \exp{\left\{ - \frac{\overline{x}^{\prime \, 2} + \overline{y}^{\prime \, 2} + \overline{w}^{\prime \, 2} + \frac{\left(y-x\right)^2}{4}}{2t} + zt \right\}} = \\
        & = \int_0^{\infty} \rho^2 d\rho \int_0^{\pi} \, \sin\theta d\theta \int_0^{2\pi}d\phi \int_0^{\infty} \, \frac{dt}{8\pi^2t^2} \, \exp{ \left\{ -\frac{\rho^2}{2t} \right\} } \, \exp{ \left\{ -\frac{\left(y-x\right)^2}{8t} \right\} } \exp{ \left\{zt\right\} } \leq \\
        & \leq \int_0^{\infty} \, \frac{dt}{2\pi t^2} \, e^{zt} \left[ \int_0^{\infty} \, d\rho \, \rho^2 \exp{\left\{ - \frac{\rho^2}{2t} \right\}}  \right] = \frac{\sqrt{\pi}}{4} \, \int_0^{\infty} \, \frac{dt}{2\pi} \, \frac{e^{zt}}{t^2} \, \left(2t\right)^{\frac{3}{2}} = \frac{1}{2\sqrt{2\abs{z}}} < \infty
    \end{align*}}
    \normalsize The Schur test then gives $\norm{B} \leq \left( 2\sqrt{2\abs{z}} \right)^{-1}$.
\end{proof}

%%%%%%%%%%%%%%%%%%%%%%%%%%%%%%%%%%%%%%%%%%%%%%%%%%%%%%%%%%%%%%%%%%%%%%%%%%%%%%%%%%%%%%%%%%%%%%%%%%%%%%%%%%%%%%%%%%%%%%%%%%%%%%%%

\newpage
\section*{Appendix 5 - $\left(H,\mathcal{D}_{H}\right)$ is the unique self-adjoint extension of $\left( H_0, \ker \tau \right)$}
\justifying

\begin{lemma}
    \justifying
    Let $\mathcal{H}$ be a Hilbert space and let $\left( A, \mathcal{D}_{A} \right)$ be a closed operator. For all $z \in \rho\left(A\right)$, $R_{A}\left(z\right): \, \mathcal{H} \longrightarrow \left( \mathcal{D}_{A}, \norm{\cdot}_{A} \right)$ is a bounded operator.
\end{lemma}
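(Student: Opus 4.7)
The plan is to work directly from the definition of the graph norm on $\mathcal{D}_{A}$, namely $\norm{x}_{A}^{2} \doteq \norm{x}^{2} + \norm{Ax}^{2}$, and to exploit the fact that, by hypothesis, $R_{A}(z) \in \mathfrak{B}(\mathcal{H})$, i.e. it is a bounded operator on $\mathcal{H}$ in the usual sense. First I would recall why this latter fact holds: by definition of the resolvent set, $A - z\mathds{1}: \mathcal{D}_{A} \longrightarrow \mathcal{H}$ is bijective, and since $A$ is closed and $z\mathds{1}$ is everywhere defined and bounded, $A - z\mathds{1}$ is closed as well; the closed graph theorem therefore forces its inverse $R_{A}(z)$ to be bounded from $\mathcal{H}$ (with the norm $\norm{\cdot}$) to $\mathcal{H}$.

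Next, for arbitrary $y \in \mathcal{H}$, set $x \doteq R_{A}(z) y \in \mathcal{D}_{A}$. The algebraic identity
\begin{equation*}
    A x = (A - z\mathds{1}) R_{A}(z) y + z R_{A}(z) y = y + z R_{A}(z) y
\end{equation*}
gives the bound
\begin{equation*}
    \norm{Ax} \leq \norm{y} + \abs{z}\, \norm{R_{A}(z)}_{\mathfrak{B}(\mathcal{H})} \, \norm{y} = \bigl(1 + \abs{z}\, \norm{R_{A}(z)}_{\mathfrak{B}(\mathcal{H})} \bigr) \norm{y}.
\end{equation*}
Combining this with $\norm{x} \leq \norm{R_{A}(z)}_{\mathfrak{B}(\mathcal{H})}\, \norm{y}$ would then yield
\begin{equation*}
    \norm{R_{A}(z) y}_{A}^{2} \leq \Bigl[\norm{R_{A}(z)}_{\mathfrak{B}(\mathcal{H})}^{2} + \bigl(1 + \abs{z}\, \norm{R_{A}(z)}_{\mathfrak{B}(\mathcal{H})}\bigr)^{2}\Bigr] \norm{y}^{2},
\end{equation*}
which is exactly the desired boundedness of $R_{A}(z): \mathcal{H} \longrightarrow (\mathcal{D}_{A}, \norm{\cdot}_{A})$, with an explicit estimate on the operator norm.

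The only subtle point is invoking the closed graph theorem to get $R_{A}(z) \in \mathfrak{B}(\mathcal{H})$ from closedness of $A$; once that is in hand, the rest is just the elementary algebraic manipulation above. There is no real analytical obstacle, so the proof is essentially a bookkeeping exercise with the graph norm.
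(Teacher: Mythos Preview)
Your proof is correct and follows essentially the same approach as the paper: both use the algebraic identity $A R_{A}(z) = \mathds{1} + z R_{A}(z)$ to control $\norm{A R_{A}(z)\psi}$ and then combine this with the boundedness of $R_{A}(z)$ on $\mathcal{H}$ to bound the graph norm. The only cosmetic difference is that the paper expands $\norm{R_{A}(z)\psi}_{A}^{2}$ via inner products and bounds the cross term, whereas you apply the triangle inequality to $\norm{A R_{A}(z)\psi}$ directly; your version is arguably cleaner, and your explicit invocation of the closed graph theorem for $R_{A}(z) \in \mathfrak{B}(\mathcal{H})$ is a detail the paper leaves implicit.
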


\begin{proof}
    \justifying
    Let $\psi \in \mathcal{H}$ be arbitrary. 
    \begin{align*}
        & \norm{ R_{A}\left(z\right) \psi }_{A}^2 = \langle R_{A}\left(z\right) \psi, R_{A}\left(z\right) \psi \rangle_{A} = \langle R_{A}\left(z\right) \psi, R_{A}\left(z\right) \psi \rangle + \langle A R_{A}\left(z\right) \psi, A R_{A}\left(z\right) \psi \rangle = \\
        & = \langle R_{A}\left(z\right) \psi, R_{A}\left(z\right) \psi \rangle + \langle \left( A - z + z \right) R_{A}\left(z\right) \psi, \left( A - z + z \right) R_{A}\left(z\right) \psi \rangle = \\
        & = \left( 1 + \abs{z}^2 \right) \norm{ R_{A}\left( z \right) \psi }^2 + \norm{\psi}^2 + 2\Re \left( z \langle \psi, R_{A}\left(z\right) \psi \rangle \right) \leq \\
        & \leq \left( 1 + \abs{z}^2 \right) \norm{R_{A}\left(z\right)\psi}^2 + \norm{\psi}^2 + \abs{z} \left[ \norm{\psi}^2 + \norm{R_A\left(z\right)\psi}^2 \right] \leq C_{A}\left(z\right) \norm{\psi}^2.
    \end{align*}
\end{proof}

\begin{lemma}
    \justifying
    Given $n \in \mathbb{N}: n \geq 2$, let $\left( H_0, \mathcal{D}_{H_0} \right)$ be the free Hamiltonian on $L^2\left( \mathbb{R}^n \right)$. $\tau_0$ is $H_0-$bounded.
\end{lemma}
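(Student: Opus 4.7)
The plan is to chain together the Appendix 2 trace estimate with a standard Kato-type interpolation for the free Hamiltonian. First I would verify the domain containment $\mathcal{D}_{H_0}=H^2(\mathbb{R}^n)\subset H^1(\mathbb{R}^n)=\mathcal{D}_{\tau_0}$, which is immediate from Sobolev embedding, so that $\tau_0\psi$ makes sense for every $\psi\in\mathcal{D}_{H_0}$. Then I would invoke the bound established in Appendix 2, namely $\|\tau_0\psi\|_{L^2(\mathbb{R}^{n-1})}\leq\|\psi\|_{H^1(\mathbb{R}^n)}$, so that the question reduces to controlling the $H^1$-norm of $\psi$ by $\|H_0\psi\|$ and $\|\psi\|$.

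Next I would exploit the strict positivity of the inverse masses. Writing $a_j\doteq(2m_j)^{-1}>0$ and $a\doteq\min_j a_j$, integration by parts on $\mathcal{D}_{H_0}$ gives $\langle\psi,H_0\psi\rangle=\sum_j a_j\|\partial_j\psi\|_2^2\geq a\|\nabla\psi\|_2^2$, hence $\|\nabla\psi\|_2^2\leq a^{-1}\langle\psi,H_0\psi\rangle$. Combined with Cauchy--Schwarz and Young's inequality, for every $\varepsilon>0$ one obtains $\langle\psi,H_0\psi\rangle\leq\|\psi\|\,\|H_0\psi\|\leq\tfrac{\varepsilon}{2}\|H_0\psi\|^2+\tfrac{1}{2\varepsilon}\|\psi\|^2$, so that
\begin{equation*}
\|\psi\|_{H^1}^2 \;=\; \|\psi\|_2^2+\|\nabla\psi\|_2^2 \;\leq\; \Bigl(1+\tfrac{1}{2\varepsilon a}\Bigr)\|\psi\|_2^2 \;+\; \tfrac{\varepsilon}{2a}\|H_0\psi\|_2^2 .
\end{equation*}

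Finally, taking square roots and using $\sqrt{\alpha+\beta}\leq\sqrt{\alpha}+\sqrt{\beta}$ yields
\begin{equation*}
\|\tau_0\psi\|_{L^2(\mathbb{R}^{n-1})} \;\leq\; \sqrt{\tfrac{\varepsilon}{2a}}\,\|H_0\psi\|_2 \;+\; \sqrt{1+\tfrac{1}{2\varepsilon a}}\,\|\psi\|_2,
\end{equation*}
for every $\varepsilon>0$. This establishes not only $H_0$-boundedness but in fact \emph{infinitesimal} $H_0$-boundedness of $\tau_0$, which is what is needed for the Kato--Rellich machinery invoked later. I do not expect any real obstacle: the step that requires the most care is choosing the right formulation of the Appendix 2 estimate, since that appendix states it in the notation of $\mathbb{R}^{n+1}=\mathbb{R}\times\mathbb{R}^n$, and one has to reinterpret it for the hyperplane associated with the coordinates $(r_\sigma,R_\sigma,y_1,\ldots)$; this is a bookkeeping matter rather than a genuine difficulty, since the trace bound is coordinate-free after the unitary $U_\sigma$.
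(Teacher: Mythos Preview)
Your proposal is correct and follows essentially the same route as the paper: domain containment $H^2\subset H^1$, the Appendix~2 trace estimate $\|\tau_0\psi\|\leq\|\psi\|_{H^1}$, then integration by parts to bound $\|\nabla\psi\|^2$ by $a^{-1}\langle\psi,H_0\psi\rangle$, and finally Cauchy--Schwarz on $\langle\psi,H_0\psi\rangle$. The only difference is cosmetic: the paper applies Young with fixed constant $1$ and stops at $\|\tau_0\psi\|^2\leq K\|\psi\|_{H_0}^2$, whereas you keep the $\varepsilon$ free and thereby extract the \emph{infinitesimal} $H_0$-bound, a slightly sharper conclusion obtained at no extra cost.
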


\begin{proof}
    \justifying
    Showing the statement amounts in proving that $\tau_0: \left( \mathcal{D}_{H_0}, \norm{\cdot}_{H_0} \right) \longrightarrow \left( L^2\left( \mathbb{R}^{n-1} \right), \norm{\cdot}_{L^2} \right)$ is continuous. First of all, it is observed that $\mathcal{D}_{H_0} = H^2\left( \mathbb{R}^n \right) \subset H^1\left( \mathbb{R}^n \right) \equiv \mathcal{D}_{\tau_0}$. Then, for all $\psi \in \mathcal{D}_{H_0}$,
    \begin{equation*}
        \norm{\tau_0 \psi}_{L^2}^2 \leq \norm{\psi}_{H^1}^2 = \norm{ \psi }_{L^2}^2 + \norm{ \, \norm{\Vec{\nabla}\psi} }_{L^2}^2.
    \end{equation*}
    In particular,
    \begin{equation*}
        \norm{ \, \norm{\Vec{\nabla}\psi} }_{L^2}^2 = \int_{\mathbb{R}^n} \, \overline{\Vec{\nabla}\psi} \cdot \Vec{\nabla} \psi = \left( H^1\left( \mathbb{R}^n \right) = H^1_0\left( \mathbb{R}^n \right) \right) = - \int_{\mathbb{R}^n} \, \overline{\psi} \Delta \psi \equiv C \int_{\mathbb{R}^n} \overline{\psi} H_0 \psi = C \langle \psi, H_0 \psi \rangle.
    \end{equation*}
    \noindent Since $\left(H_0, \mathcal{D}_{H_0}\right)$ is a positive operator,
    \begin{equation*}
        \langle \psi, H_0 \psi \rangle \leq \norm{\psi} \norm{H_0 \psi} \leq \norm{\psi}^2 + \norm{H_0\psi}^2,
    \end{equation*}
    therefore
    \begin{equation*}
        \norm{\tau_0 \psi}_{L^2}^2 \leq K \norm{ \psi }_{H_0}^2.
    \end{equation*}
\end{proof}

\begin{corollary}
    \justifying
    For all $\sigma \in \mathcal{I}$, $z \in \rho\left(H_0\right)$, the linear map
    \begin{equation*}
        G_{\sigma}\left(z\right) \doteq \tau_{\sigma}R_{H_0}\left( z \right): \,  L^2\left( \mathbb{R}^n,dx_1 \cdots dx_n \right) \longrightarrow \chi_{\sigma}^{\left( \text{red} \right)}
    \end{equation*}
    is bounded. Analogously, for all $\nu \in \mathcal{I}$,
    \begin{equation*}
        \tau_{\nu}G_{\sigma}^{\ast}\left(z\right): \, \chi_{\sigma}^{\left( \text{red} \right)} \longrightarrow \chi_{\nu}^{\left( \text{red} \right)}
    \end{equation*}
    is bounded. $\hfill \blacksquare$
\end{corollary}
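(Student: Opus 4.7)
The corollary bundles two assertions; I would treat them in turn.

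For the first, the plan is a direct composition argument. The first lemma of Appendix~5 guarantees that $R_{H_0}(z) : L^2(\mathbb{R}^n) \to (\mathcal{D}_{H_0}, \|\cdot\|_{H_0})$ is bounded whenever $z \in \rho(H_0)$, while the second lemma gives that $\tau_0$ is $H_0$-bounded, i.e.\ $\tau_0 : (\mathcal{D}_{H_0}, \|\cdot\|_{H_0}) \to L^2(\mathbb{R}^{n-1})$ is bounded. Since $\tau_\sigma = \tau_0 U_{(ij)}$ with $U_{(ij)}$ unitary and $U_{(ij)} R_{H_0}(z) = R_{H_0^\sigma}(z) U_{(ij)}$, one rewrites $G_\sigma(z) = \tau_0\, R_{H_0^\sigma}(z)\, U_{(ij)}$ and closes by composing the three bounded factors; the operator norm is at most the product of the individual bounds.

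For the second assertion, I would argue by duality. Since $G_\sigma(z)$ is bounded by the first part, its Hilbert-space adjoint $G_\sigma^*(z) : \chi_\sigma^{(\text{red})} \to L^2(\mathbb{R}^n)$ is bounded as well, and formally factors as $G_\sigma^*(z) = R_{H_0}(\bar z)\, \tau_\sigma^*$. The subtlety is that $\tau_\sigma^*$ is itself unbounded on $L^2(\mathbb{R}^n)$ --- its range consists of distributions concentrated on the hyperplane $r_\sigma = 0$ --- but the resolvent regularises it by two derivatives. Concretely, in the Sobolev scale associated to $H_0$, the classical trace $\tau_\sigma : H^1(\mathbb{R}^n) \to \chi_\sigma^{(\text{red})}$ of Appendix~2 dualises to a bounded map $\tau_\sigma^* : \chi_\sigma^{(\text{red})} \to H^{-1}(\mathbb{R}^n)$, while $R_{H_0}(\bar z)$ extends to a bounded map $H^{-1}(\mathbb{R}^n) \to H^1(\mathbb{R}^n)$. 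Composing gives $G_\sigma^*(z) : \chi_\sigma^{(\text{red})} \to H^1(\mathbb{R}^n)$ bounded, and a final composition with the classical trace $\tau_\nu : H^1(\mathbb{R}^n) \to \chi_\nu^{(\text{red})}$ yields the claim.

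The main obstacle is justifying the $H^{-1} \to H^1$ extension of the resolvent in a self-contained fashion within the paper's framework. A pragmatic alternative, closer in spirit to the rest of the text, is to compute the integral kernel of $\tau_\nu R_{H_0}(\bar z) \tau_\sigma^*$ explicitly and apply a Schur test on it: this is a strict simplification of the estimates already carried out for $[\Lambda_\epsilon(z)_{\text{off}}]_{\sigma \nu}$ in Section~4 (one simply drops the cut-off factors $v$ and $v_\epsilon$), and reduces to the Green's-function bounds of Appendix~4.
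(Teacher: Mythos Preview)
Your argument is correct. The paper gives no proof at all for this Corollary --- it closes with $\blacksquare$ immediately after the statement --- intending both claims to follow at once from the two preceding lemmas. For $G_\sigma(z)$, your composition argument is exactly what is meant: the first lemma makes $R_{H_0}(z)$ bounded into the graph-norm space, the second makes $\tau_0$ bounded out of it, and $U_\sigma$ is unitary. For $\tau_\nu G_\sigma^*(z)$, however, you rightly flag a point the paper glosses over: those two lemmas only control $\tau_\nu$ on $(\mathcal{D}_{H_0},\|\cdot\|_{H_0}) = H^2$, whereas $G_\sigma^*(z)$ a priori lands merely in $L^2$, so the straight composition does not close. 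Your Sobolev-duality route --- using that $\tau_\sigma$ is already bounded from $H^1$ by Appendix~2, hence $\tau_\sigma^*$ maps into $H^{-1}$, and that $R_{H_0}(\bar z)$ gains two derivatives $H^{-1}\to H^1$ --- genuinely fills this gap; your Schur-test alternative via the explicit kernel is equally valid and indeed closer to the paper's toolbox (the relevant kernel bounds are exactly those of Appendix~4, and the explicit identification $\tau_\sigma G_\sigma^*(\bar z)=D^{(\sigma)}(z)$ and $\tau_\sigma G_\nu^*(\bar z)$ is carried out later in Appendix~5). Either way you supply more than the paper does.
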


\begin{lemma}
    \justifying
    For all $\sigma \in \mathcal{I}$, $z<0$, $S^{\sigma}\left( z \right) = v\left( \cdot \right) G_{\sigma}\left( z \right) \in \mathcal{B}\left( L^2\left(\mathbb{R}^n\right), \chi_{\sigma} \right)$.
\end{lemma}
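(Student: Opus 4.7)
The plan is to unfold both sides of the asserted identity in the Fourier picture along $\underline{Y}_{\sigma}$ and match them term by term, after which boundedness is cosmetic. Fix $\psi\in L^{2}(\mathbb{R}^{n})$ and set $\tilde{\psi}\doteq\left(\mathds{1}\otimes\mathfrak{F}_{\underline{Y}_{\sigma}}\right)U_{\sigma}\psi$. By the definitions of $S^{\sigma}(z)$ and $T_{0}^{\sigma}(z)$,
\[
S^{\sigma}(z)\psi=\left(\mathds{1}\otimes\mathfrak{F}_{\underline{Y}_{\sigma}}^{-1}\right)T_{0,\,\underline{P}_{\sigma}}^{\sigma}(z)\tilde{\psi},
\]
so the first step is simply to read the explicit formula for $T_{0,\underline{P}_{\sigma}}^{\sigma}(z)$: it factorizes as
\[
\left[T_{0,\underline{P}_{\sigma}}^{\sigma}(z)\tilde{\psi}\right]\!\left(r_{\sigma},\underline{P}_{\sigma}\right)=v(r_{\sigma})\,\Phi(\underline{P}_{\sigma}),\qquad\Phi(\underline{P}_{\sigma})\doteq 2\mu_{\sigma}\!\int_{\mathbb{R}}G^{(1)}_{2\mu_{\sigma}(z-Q_{\sigma})}(-r_{\sigma}^{\prime})\,\tilde{\psi}(r_{\sigma}^{\prime},\underline{P}_{\sigma})\,dr_{\sigma}^{\prime}.
\]

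The second (key) step is to recognize $\Phi$ as the Fourier image of a trace of the free resolvent. In the $\underline{Y}_{\sigma}$-Fourier picture, $R_{H_{0}^{\sigma}}(z)$ acts on $L^{2}(\mathbb{R},dr_{\sigma})\otimes\tilde{\chi}_{\sigma}^{(\mathrm{red})}$ as the convolution in $r_{\sigma}$ with kernel $2\mu_{\sigma}G^{(1)}_{2\mu_{\sigma}(z-Q_{\sigma})}(r_{\sigma}-r_{\sigma}^{\prime})$; evaluating at $r_{\sigma}=0$ reproduces precisely $\Phi(\underline{P}_{\sigma})$. Since $R_{H_{0}}(z)$ maps $L^{2}(\mathbb{R}^{n})$ into $H^{2}(\mathbb{R}^{n})\subset H^{1}(\mathbb{R}^{n})=\mathcal{D}_{\tau_{0}}$, the trace is legitimate. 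Combining this with the intertwining $U_{\sigma}R_{H_{0}}(z)=R_{H_{0}^{\sigma}}(z)U_{\sigma}$ and the commutation $\mathfrak{F}_{\mathbb{R}^{n-1}}\tau_{0}=\tau_{0}(\mathds{1}\otimes\mathfrak{F}_{\mathbb{R}^{n-1}})$ from Appendix 2 yields
\[
\Phi(\underline{P}_{\sigma})=\left[\mathfrak{F}_{\underline{Y}_{\sigma}}\tau_{0}R_{H_{0}^{\sigma}}(z)U_{\sigma}\psi\right]\!(\underline{P}_{\sigma})=\left[\mathfrak{F}_{\underline{Y}_{\sigma}}\tau_{\sigma}R_{H_{0}}(z)\psi\right]\!(\underline{P}_{\sigma})=\left[\mathfrak{F}_{\underline{Y}_{\sigma}}G_{\sigma}(z)\psi\right]\!(\underline{P}_{\sigma}).
\]
Therefore $T_{0,\underline{P}_{\sigma}}^{\sigma}(z)\tilde{\psi}=v\otimes\mathfrak{F}_{\underline{Y}_{\sigma}}G_{\sigma}(z)\psi$, and applying $(\mathds{1}\otimes\mathfrak{F}_{\underline{Y}_{\sigma}}^{-1})$ gives $S^{\sigma}(z)\psi=v(\cdot)\,G_{\sigma}(z)\psi$, which is the claimed identity.

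Boundedness is then immediate: since $\int_{\mathbb{R}}v^{2}=1$, one has $\|v\otimes G_{\sigma}(z)\psi\|_{\chi_{\sigma}}=\|G_{\sigma}(z)\psi\|_{\chi_{\sigma}^{(\mathrm{red})}}$, and $G_{\sigma}(z)\in\mathfrak{B}(L^{2}(\mathbb{R}^{n}),\chi_{\sigma}^{(\mathrm{red})})$ by the corollary preceding the lemma (which itself rests on the $H_{0}$-boundedness of $\tau_{0}$). The only point to handle with care is the interchange of Fourier transform and trace; there is no substantive obstacle here, since the identities involved are rigorously established in Appendix 2 on $C_{c}^{\infty}(\mathbb{R}^{n+1})$ and extend by density and continuity of $R_{H_{0}}(z):L^{2}(\mathbb{R}^{n})\to H^{2}(\mathbb{R}^{n})$, of $\tau_{0}:H^{1}(\mathbb{R}^{n})\to L^{2}(\mathbb{R}^{n-1})$, and of the multiplication operator $v(\cdot)$. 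This bookkeeping is the mildest technical step, not a real obstacle.
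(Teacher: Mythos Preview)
Your proof is correct and follows essentially the same route as the paper: both read off the explicit kernel of $T_{0,\underline{P}_{\sigma}}^{\sigma}(z)$, identify the $r_{\sigma}$-independent factor as the evaluation at $r_{\sigma}=0$ of $R_{H_{0}^{\sigma}}(z)$ applied to the Fourier-conjugated function, and then use the commutation $\mathfrak{F}_{\underline{Y}_{\sigma}}\tau_{0}=\tau_{0}(\mathds{1}\otimes\mathfrak{F}_{\underline{Y}_{\sigma}})$ together with the intertwining $U_{\sigma}R_{H_{0}}(z)=R_{H_{0}^{\sigma}}(z)U_{\sigma}$ to arrive at $S^{\sigma}(z)=v(\cdot)\tau_{0}U_{\sigma}R_{H_{0}}(z)=v(\cdot)G_{\sigma}(z)$. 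The only cosmetic difference is that you start from $\psi\in L^{2}(\mathbb{R}^{n})$ and push forward, whereas the paper starts in the already-Fourier-transformed space and pulls back; the ingredients and their order are the same.
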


\begin{proof}
    \justifying
    Given arbitrarily $\psi \in L^2\left( \mathbb{R},dr_{\sigma} \right) \otimes \tilde{\chi}_{\sigma}^{(\text{red})}$,
    \begin{align*}
        & \left[ T_{0, \underline{P}_{\sigma}}^{\sigma}\left(z\right) \psi \right] \left( r_{\sigma}, \underline{P}_{\sigma} \right) = v\left(r_{\sigma}\right) \left(2\mu_{\sigma}\right) \int_{\mathbb{R}} \, G^{(1)}_{(2\mu_{\sigma})\left(z-Q_{\sigma}\right)} \left( - r_{\sigma}^{\prime}\right) \psi\left( r_{\sigma}^{\prime}, \underline{P}_{\sigma} \right) dr_{\sigma}^{\prime} = \\
        & = v\left( r _{\sigma} \right) \left[ \left( 2\mu_{\sigma} \right) \int_{\mathbb{R}} \, G^{(1)}_{(2\mu_{\sigma})\left(z-Q_{\sigma}\right)} \left( r_{\sigma} - r_{\sigma}^{\prime}\right) \psi\left( r_{\sigma}^{\prime}, \underline{P}_{\sigma} \right) dr_{\sigma}^{\prime} \right]_{r_{\sigma} = 0} = v\left(r_{\sigma}\right) \left\{ \tau_0 \left[ R_{\tilde{H}_{0}^{\sigma}}\left(z\right) \psi \right] \right\}\left(\underline{P}_{\sigma}\right)
    \end{align*}
    where
    \begin{equation*}
        \tilde{H}_{0}^{\sigma} = -\frac{1}{2\mu_{\sigma}} \frac{\partial^2}{\partial r_{\sigma}^2} + Q_{\sigma}\mathds{1}.
    \end{equation*}
    Then
    \begin{align*}
        & \left[ T_{0, \underline{P}_{\sigma}}^{\sigma}\left(z\right) \psi \right] \left( r_{\sigma}, \underline{P}_{\sigma} \right) = v\left(r_{\sigma}\right) \left\{ \tau_0 \left[ \left( \mathds{1} \otimes \mathfrak{F}_{\underline{Y}_\sigma} \right) R_{H_{0}^{\sigma}}\left(z\right) \left( \mathds{1} \otimes \mathfrak{F}_{\underline{Y}_\sigma}^{-1} \right) \psi \right] \right\}\left(\underline{P}_{\sigma}\right) = \\
        & = v\left(r_{\sigma}\right)  \left[ \mathfrak{F}_{\underline{Y}_\sigma} \tau_0 R_{H_{0}^{\sigma}}\left(z\right) \left( \mathds{1} \otimes \mathfrak{F}_{\underline{Y}_\sigma}^{-1} \right) \psi \right] \left(\underline{P}_{\sigma}\right) \equiv \left[  \left( \mathds{1} \otimes \mathfrak{F}_{\underline{Y}_\sigma} \right) v\left( \cdot \right) \tau_0 R_{H_0^{\sigma}}\left(z\right) \left( \mathds{1} \otimes \mathfrak{F}_{\underline{Y}_\sigma}^{-1} \right) \right] \left( r_{\sigma}, \underline{P}_{\sigma} \right),
    \end{align*}
    hence, by definition, $T_0^{\sigma}\left(z\right) = v\left( \cdot \right) \tau_0 R_{H_0^{\sigma}}\left(z\right)$. In the end
    \begin{equation*}
        S^{\sigma}\left(z\right) = T_0^{\sigma}\left(z\right)U_{\sigma} = v\left(\cdot\right) \tau_0 R_{H_0^{\sigma}}\left(z\right)U_{\sigma} = v\left( \cdot \right) \tau_0 U_{\sigma} R_{H_0}\left(z\right) \equiv v\left(\cdot\right) G_{\sigma}\left( z \right),
    \end{equation*}
    i.e., for all $\psi \in L^2\left( \mathbb{R}^n, dx_1\cdots dx_n \right)$,
    \begin{equation*}
        \left[S^{\sigma}\left(z\right)\psi\right] \left( r_{\sigma}, R_{\sigma}, x_1, \ldots, \hat{x}_{\sigma_1}, \ldots, \hat{x}_{\sigma_2}, \ldots, x_n \right) = v\left(r_{\sigma}\right) \left[G_{\sigma}\left( z \right)\psi\right]\left( R_{\sigma}, x_1, \ldots, \hat{x}_{\sigma_1}, \ldots, \hat{x}_{\sigma_2}, \ldots, x_n \right).
    \end{equation*}
\end{proof}

\begin{lemma}\label{Factorization of Diagional Contribution}
    \justifying
    For all $\sigma \in \mathcal{I}$, $z<0$, $\phi_0^{\sigma}\left(z\right) = \langle v_{\sigma}, \cdot \rangle v_{\sigma} \otimes g D^{(\sigma)}\left( z \right)$, where
    \begin{equation*}
        D^{(\sigma)}\left( z \right) = \left( \sqrt{\frac{\mu_\sigma}{2}} \right) \mathfrak{F}_{\underline{Y}_\sigma}^{-1} \frac{\mathds{1}}{\sqrt{Q_{\sigma} - z}} \mathfrak{F}_{\underline{Y}_\sigma} \in \mathcal{B}\left( \chi_{\sigma}^{\left( \text{red} \right)} \right).
    \end{equation*}
\end{lemma}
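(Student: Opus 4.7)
The plan is to recognize $\phi_{0,\,\underline{P}_{\sigma}}^{\sigma}(z)$ as a genuine tensor product operator on $L^2(\mathbb{R},dr_\sigma)\otimes\tilde\chi_\sigma^{(\mathrm{red})}$, and then transport this factorization back through $\mathds{1}\otimes\mathfrak{F}_{\underline{Y}_\sigma}^{-1}$. Concretely, I would start from the integral kernel in the definition of $\phi_{0,\,\underline{P}_\sigma}^\sigma(z)$ and test it against a pure tensor $\eta\otimes\xi$ with $\eta\in L^2(\mathbb{R},dr_\sigma)$, $\xi\in\tilde\chi_\sigma^{(\mathrm{red})}$. Because $z<0$ and $Q_\sigma\geq 0$ (the momentum-squared operators are positive), one has $|z-Q_\sigma|=Q_\sigma-z$, so the factor $1/\sqrt{|z-Q_\sigma|}$ depends only on $\underline{P}_\sigma$ and pulls outside the $r_\sigma'$ integral.

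The resulting expression has the product form
\begin{equation*}
\bigl[\phi_{0,\,\underline{P}_\sigma}^\sigma(z)(\eta\otimes\xi)\bigr](r_\sigma,\underline{P}_\sigma)
=\bigl[\langle v,\eta\rangle\,v(r_\sigma)\bigr]\cdot\Bigl[g\sqrt{\tfrac{\mu_\sigma}{2}}\,\tfrac{1}{\sqrt{Q_\sigma-z}}\,\xi(\underline{P}_\sigma)\Bigr],
\end{equation*}
where $v$ is real and even, so $\langle v,\eta\rangle=\int v\,\eta\,dr_\sigma$. The first bracket is the rank-one operator $\langle v_\sigma,\cdot\rangle v_\sigma$ applied to $\eta$, and the second bracket is multiplication by $g\sqrt{\mu_\sigma/2}/\sqrt{Q_\sigma-z}$ applied to $\xi$. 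Consequently, on elementary tensors
\begin{equation*}
\phi_{0,\,\underline{P}_\sigma}^\sigma(z)=\langle v_\sigma,\cdot\rangle v_\sigma\otimes g\sqrt{\tfrac{\mu_\sigma}{2}}\,\tfrac{\mathds{1}}{\sqrt{Q_\sigma-z}},
\end{equation*}
and since this identity holds on the dense subspace spanned by elementary tensors and both sides are bounded (the previous lemma gives a uniform bound), it extends by continuity to all of $L^2(\mathbb{R},dr_\sigma)\otimes\tilde\chi_\sigma^{(\mathrm{red})}$.

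To conclude, I would apply the definition $\phi_0^\sigma(z)=(\mathds{1}\otimes\mathfrak{F}_{\underline{Y}_\sigma}^{-1})\phi_{0,\,\underline{P}_\sigma}^\sigma(z)(\mathds{1}\otimes\mathfrak{F}_{\underline{Y}_\sigma})$ and use the elementary identity $(\mathds{1}\otimes U^{-1})(A\otimes B)(\mathds{1}\otimes U)=A\otimes U^{-1}BU$ for tensor product operators. This immediately yields
\begin{equation*}
\phi_0^\sigma(z)=\langle v_\sigma,\cdot\rangle v_\sigma\otimes g\sqrt{\tfrac{\mu_\sigma}{2}}\,\mathfrak{F}_{\underline{Y}_\sigma}^{-1}\,\tfrac{\mathds{1}}{\sqrt{Q_\sigma-z}}\,\mathfrak{F}_{\underline{Y}_\sigma}=\langle v_\sigma,\cdot\rangle v_\sigma\otimes gD^{(\sigma)}(z).
\end{equation*}
There is no genuine obstacle here: the whole statement is a direct reading of the integral kernel defining $\phi_{0,\,\underline{P}_\sigma}^\sigma(z)$. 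The only point requiring a little care is justifying that the tensor-product identity extends from elementary tensors to the full Hilbert space, which follows from boundedness established in the preceding lemma, and keeping track of the fact that the Fourier conjugation acts only on the second tensor factor.
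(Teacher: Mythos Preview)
Your proposal is correct and follows essentially the same route as the paper: evaluate $\phi_{0,\underline{P}_\sigma}^{\sigma}(z)$ on elementary tensors $\eta\otimes\xi$, read off the tensor-product factorization, and then pass through the conjugation by $\mathds{1}\otimes\mathfrak{F}_{\underline{Y}_\sigma}$. You are in fact slightly more explicit than the paper about the density/boundedness extension and the tensor identity $(\mathds{1}\otimes U^{-1})(A\otimes B)(\mathds{1}\otimes U)=A\otimes U^{-1}BU$, but the argument is the same.
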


\begin{proof}
    \justifying
    Arbitrarily given $\sigma \in \mathcal{I}$, $z<0$, for all $\varphi \in L^2\left( \mathbb{R},dr_{\sigma} \right) \otimes \tilde{\chi}_{\sigma}^{(\text{red})}$, 
    \begin{equation*}
        \left[ \phi_{0,\underline{P}_{\sigma}}^{\sigma} \left(z\right) \varphi \right] \left( r_{\sigma}, \underline{P}_{\sigma} \right) = g \left( \sqrt{\frac{\mu_{\sigma}}{2}} \right) v\left( r_{\sigma} \right) \int_{\mathbb{R}} \, \frac{v\left( r_{\sigma}^{\prime} \right) \varphi\left( r_{\sigma}^{\prime}, \underline{P}_{\sigma} \right)}{\sqrt{Q_{\sigma} - z}} dr_{\sigma}^{\prime}.
    \end{equation*}
    In particular, if $\varphi \equiv \alpha \otimes \xi$, $\alpha \in L^2\left( \mathbb{R},dr_{\sigma} \right)$, $\xi \in \Tilde{\chi}^{(\text{red})}_{\sigma}$, then
    \begin{align*}
        \left[ \phi_{0,\underline{P}_{\sigma}}^{\sigma} \left(z\right) \varphi \right] \left( r_{\sigma}, \underline{P}_{\sigma} \right) & = \left[v\left(r_{\sigma}\right) \int_{\mathbb{R}} v\left( r_{\sigma}^{\prime} \right) \alpha\left( r_{\sigma}^{\prime} \right) dr_{\sigma}^{\prime}\right] g \left( \sqrt{\frac{\mu_{\sigma}}{2}} \right) \frac{\xi\left( \underline{P}_{\sigma} \right)}{\sqrt{Q_{\sigma} - z}} \equiv \\
        & = \left[\langle v_{\sigma}, \cdot \rangle v_{\sigma} \otimes g \left( \sqrt{\frac{\mu_{\sigma}}{2}} \right) \frac{\mathds{1}}{\sqrt{Q_{\sigma} - z}}\right] \alpha \otimes \xi,
    \end{align*}
    therefore, by definition, $\phi_0^{\sigma}\left(z\right) = \langle v, \cdot \rangle v \otimes g D^{(\sigma)}\left( z \right)$. As a consequence,
    \begin{equation*}
        \left[ \Lambda_{0}\left( z \right)_{\text{diag}} \right]_{\sigma \sigma} = \mathds{1}_{\chi_{\sigma}} - \phi_0^{\sigma}\left(z\right) \equiv \mathds{1}_{\chi_{\sigma}} - \langle v_{\sigma},\cdot \rangle v_{\sigma} \otimes g D^{(\sigma)}\left(z\right)
    \end{equation*}
    and, by direct inspection,
    \begin{equation*}
        \left[ \Lambda_{0}\left( z \right)_{\text{diag}}^{-1} \right]_{\sigma \sigma} = \mathds{1}_{\chi_{\sigma}} + \langle v_{\sigma},\cdot \rangle v_{\sigma} \otimes \left\{ g D^{(\sigma)}\left(z\right) \left[ \mathds{1}_{\chi^{(\text{red})}_{\sigma}} - g D^{(\sigma)}\left(z\right) \right]^{-1} \right\}.
    \end{equation*}
    where the operator inversion in curly brackets is understood in $\mathcal{B}\left( \chi_{\sigma}^{(\text{red})} \right)$.
\end{proof}

\begin{lemma}\label{Factorization of Off Diagonal Elements}
    \justifying
    For all $\sigma, \nu \in \mathcal{I}$, $z<0$, $\left[\Lambda_{0}\left(z\right)_{\text{off}}\right]_{\sigma \nu} = \langle v_{\nu},\cdot \rangle v_{\sigma} \otimes \left[ - g \tau_{\sigma} G_{\nu}^{\ast}\left( \overline{z} \right) \right]$. 
\end{lemma}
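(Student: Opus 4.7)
The idea is to compute $-g\tau_{\sigma}G_{\nu}^{\ast}(\bar{z})$ explicitly and recognise it as the operator that, tensored with $\langle v_{\nu},\cdot\rangle v_{\sigma}$ on the relative-coordinate factor, reproduces the integral kernels of $\left[\Lambda_{0}(z)_{\text{off}}\right]_{\sigma\nu}$ in the two cases (pairs sharing one particle, disjoint pairs) treated in the preceding two theorems.

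First I would unwind the adjoint. Self-adjointness of $H_{0}$ yields $R_{H_{0}}(\bar{z})^{\ast}=R_{H_{0}}(z)$, so the definition $G_{\nu}(z)\doteq\tau_{\nu}R_{H_{0}}(z)$ gives
\[
G_{\nu}^{\ast}(\bar{z})=\left[\tau_{\nu}R_{H_{0}}(\bar{z})\right]^{\ast}=R_{H_{0}}(z)\tau_{\nu}^{\ast},\qquad\tau_{\sigma}G_{\nu}^{\ast}(\bar{z})=\tau_{\sigma}R_{H_{0}}(z)\tau_{\nu}^{\ast}.
\]
Here $\tau_{\nu}^{\ast}\in\mathfrak{B}(\chi_{\nu}^{(\text{red})},H^{-1}(\mathbb{R}^{n}))$; since $R_{H_{0}}(z)$ maps $H^{-1}(\mathbb{R}^{n})$ into $H^{1}(\mathbb{R}^{n})$ and $\tau_{\sigma}$ is bounded on $H^{1}(\mathbb{R}^{n})$, the composition $\tau_{\sigma}R_{H_{0}}(z)\tau_{\nu}^{\ast}$ is a well-defined element of $\mathfrak{B}(\chi_{\nu}^{(\text{red})},\chi_{\sigma}^{(\text{red})})$. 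Heuristically $\tau_{\nu}^{\ast}$ inserts a $\delta(r_{\nu}^{\prime})$ in the $(r_{\nu},R_{\nu},\underline{Y}_{\nu})$ coordinates, exactly the distributional object emerging in the $\epsilon\downarrow 0$ limit carried out above.

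Next I would compute the kernel of this composition. Passing to the $(r_{\sigma},R_{\sigma},\underline{Y}_{\sigma})$ and $(r_{\nu},R_{\nu},\underline{Y}_{\nu})$ coordinate systems via $U_{\sigma}$ and $U_{\nu}$ and Fourier-transforming the spectator variables $\underline{Y}_{\nu}$, the identity $U_{\sigma}H_{0}=H_{0}^{\sigma}U_{\sigma}$ decouples $R_{H_{0}}(z)$ into a product of one-dimensional free resolvents acting in independent variables. Applying $\tau_{\sigma}$ sets $r_{\sigma}=0$ and the distributional factor produced by $\tau_{\nu}^{\ast}$ collapses the $r_{\nu}^{\prime}$ integration. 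What remains is $G_{z-Q_{\nu}}^{(d)}(X_{\sigma\nu,0})$ with $d=3$ if $|\sigma\cap\nu|=1$ and $d=4$ if $\sigma\cap\nu=\emptyset$, multiplied by the Jacobian factors $\prod\sqrt{2m_{i}}$ produced by rescaling each $-\partial_{x_{i}}^{2}/(2m_{i})$ into $-\partial_{u_{i}}^{2}$. These factors reproduce precisely the constants $C$ appearing in the two preceding propositions.

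Finally, the rank-one factorisation on the $L^{2}(\mathbb{R},dr)$ factor is read off from the formulas for $\left[\Lambda_{0,\underline{P}_{\nu}}(z)_{\text{off}}\right]_{\sigma\nu}$: the kernel is $v(r_{\sigma})\,\kappa\,v(r_{\nu}^{\prime})$ with $\kappa$ independent of $r_{\sigma},r_{\nu}^{\prime}$, and this is exactly the kernel of $\langle v_{\nu},\cdot\rangle v_{\sigma}\otimes T$ with $T$ having kernel $\kappa$. Combining with the previous step gives $T=-g\tau_{\sigma}G_{\nu}^{\ast}(\bar{z})$ and the announced identity follows. The main obstacle is a clean treatment of $\tau_{\nu}^{\ast}$ and of the coordinate transformations; this is sidestepped by testing both sides on the dense set of simple tensors $\alpha\otimes\beta$ with $\alpha\in C_{c}^{\infty}(\mathbb{R},dr_{\nu})$ and $\beta\in\chi_{\nu}^{(\text{red})}$ smooth, on which the identification reduces to the pointwise kernel identities already established, and then extending by density using the uniform boundedness of both sides.
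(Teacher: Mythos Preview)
Your plan is correct and follows essentially the same route as the paper: factor out the rank-one piece $\langle v_{\nu},\cdot\rangle v_{\sigma}$ on simple tensors, then identify the remaining $\chi_{\nu}^{(\text{red})}\to\chi_{\sigma}^{(\text{red})}$ operator with $-g\,\tau_{\sigma}G_{\nu}^{\ast}(\bar z)=-g\,\tau_{\sigma}R_{H_0}(z)\tau_{\nu}^{\ast}$ by matching integral kernels in the two cases $|\sigma\cap\nu|=1$ and $\sigma\cap\nu=\emptyset$.

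The only difference worth noting is how the adjoint $\tau_{\nu}^{\ast}$ is handled. You invoke the $H^{1}/H^{-1}$ duality and the heuristic ``$\tau_{\nu}^{\ast}$ inserts $\delta(r_{\nu}')$''; the paper avoids negative-order Sobolev spaces entirely by computing $[\tau_{0}R_{\tilde H_{0}^{\nu}}(z-Q_{\nu})]^{\ast}$ directly via the $L^{2}$ pairing, obtaining an explicit integral kernel, and then composing with $\tau_{0}\overline U_{\sigma}\overline U_{\nu}^{-1}$ to land on $G^{(d)}_{z-Q_{\nu}}(X_{\sigma\nu,0})$. Your abstract route is cleaner conceptually; the paper's is more elementary and self-contained. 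Also, your phrase ``decouples $R_{H_0}(z)$ into a product of one-dimensional free resolvents'' is imprecise---the resolvent of a sum is not a tensor product of resolvents---but what you actually need (and what the paper does) is to Fourier-transform the spectator variables so that only a low-dimensional resolvent $R_{\tilde H_{0}^{\nu}}(z-Q_{\nu})$ remains, which you do say in the next clause.
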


\begin{proof}
    \justifying
    To simplify the notation, it will not harm generality assuming $\sigma = \left(12\right)$. The case $\nu = (1\nu_2), \; 3 \leq \nu_2 \leq n$ is considered first. It is then recalled that
    \begin{equation*}
        \left[ \Lambda_{0,\underline{P}_{\nu}} \left(z\right)_{\text{off}} \right]_{\sigma \nu}: \, L^2\left( \mathbb{R}^3, dr_{\nu}dR_{\nu}dx_2 \right) \otimes \tilde{\chi}_{\nu}^{-} \longrightarrow L^2\left( \mathbb{R}^3, dr_{\sigma}dR_{\sigma}dx_{\nu_2} \right) \otimes \tilde{\chi}_{\nu}^{-}
    \end{equation*}
    where, for all $\psi \in L^2\left( \mathbb{R}^3, dr_{\nu}dR_{\nu}dx_2 \right) \otimes \tilde{\chi}_{\nu}^{-}$,
    \begin{align*}
        & \left\{ \left[ \Lambda_{0,\underline{P}_{\nu}} \left(z\right)_{\text{off}} \right]_{\sigma \nu} \psi \right\}\left( r_{\sigma}, R_{\sigma}, x_{\nu_2}, \underline{P}_{\nu} \right) = \\
        & = - g \sqrt{2m_1 2m_2 2m_{\nu_2}} \, v\left(r_{\sigma}\right) \int_{\mathbb{R}^3} \, dr_{\nu}^{\prime}dR_{\nu}^{\prime}dx_2^{\prime} \; G_{\left( z - Q_{\nu} \right)}^{(3)}\left(X_{\sigma, \nu, \, 0}\right) v\left( r_{\nu}^{\prime} \right) \psi\left( r_{\nu}^{\prime}, R_{\nu}^{\prime}, x_2^{\prime}, \underline{P}_{\nu} \right).
    \end{align*}
    Should $\psi \equiv \alpha \otimes \delta$, where $\alpha \in L^2\left( \mathbb{R}, dr_{\sigma} \right)$, $\delta \in L^2\left( \mathbb{R}^2,dR_{\sigma} dx_{\nu_2} \right) \otimes \tilde{\chi}_{\nu}^{-}$, it would be
    \begin{align*}
        & \left\{ \left[ \Lambda_{0,\underline{P}_{\nu}} \left(z\right)_{\text{off}} \right]_{\sigma \nu} \alpha \otimes \delta \right\}\left( r_{\sigma}, R_{\sigma}, x_{\nu_2}, \underline{P}_{\nu} \right) = \\
        & = \langle v_{\nu}, \alpha \rangle v_{\sigma}\left( r_{\sigma} \right) \left\{ - g \int_{\mathbb{R}^2} \, dR_{\nu}^{\prime} dx_2^{\prime} \left[\sqrt{2m_1 2m_2 2m_{\nu_2}} G_{\left( z - Q_{\nu} \right)}\left( X_{\sigma \nu, 0} \right) \delta\left( R_{\nu}^{\prime}, x_2^{\prime}, \underline{P}_{\nu} \right) \right] \right\}.
    \end{align*}
    Concerning the integral in curly brackets, set
    \begin{equation*}
        \tilde{H}_0^{\nu} = - \frac{1}{2\mu_{\nu}} \frac{\partial^2}{\partial r_{\nu}^{2}} - \frac{1}{2M_{\nu}} \frac{\partial^2}{\partial R_{\nu}^2} - \frac{1}{2m_2} \frac{\partial^2}{\partial x_2^2} + Q_{\nu}\mathds{1},
    \end{equation*}
    for all $\varphi \in L^2\left( \mathbb{R}^{n-1}, dR_{\nu}dx_2 d\underline{P}_{\nu} \right)$, $\psi \in L^2\left( \mathbb{R}^n, dr_{\nu} dR_{\nu} dx_{2} d\underline{P}_{\nu} \right)$
    \small
    \begin{align*}
        & \langle \varphi, \left[\tau_0 R_{\tilde{H}_0^{\nu}} \left( z - Q_{\nu} \right) \right] \psi \rangle = \\
        & = \int_{\mathbb{R}^{n-1}} \, \overline{\varphi}\left( R_{\nu}, x_{2}, \underline{P}_{\nu} \right) \left[ \tau_0 R_{\tilde{H}_0^{\nu}}\left( z - Q_{\nu} \right) \psi \right] \left( R_{\nu}, x_2, \underline{P}_{\nu} \right) dR_{\nu} dx_2 d\underline{P}_{\nu} = \\
        & = \int_{\mathbb{R}^{n-1}} \, \overline{\varphi}\left( R_{\nu}, x_{2}, \underline{P}_{\nu} \right)  \int_{\mathbb{R}^3} \, R_{\tilde{H}_0^{\nu}}\left( z - Q_{\nu} \right) \left(r_{\nu}^{\prime}, R_{\nu} - R_{\nu}^{\prime}, x_2 - x_2^{\prime} \right) \psi \left( r_{\nu}^{\prime}, R_{\nu}^{\prime}, x_2^{\prime}, \underline{P}_{\nu} \right) dr_{\nu}^{\prime} dR_{\nu}^{\prime} dx_2^{\prime} dR_{\nu} dx_2 d\underline{P}_{\nu} = \\
        & = \int_{\mathbb{R}^n} \overline{\left[ \int_{\mathbb{R}^2} \overline{R_{\Tilde{H}_0^{\nu}}\left(z-Q_{\nu}\right)\left( r_{\nu}, R_{\nu} - R_{\nu}^{\prime}, x_2 - x_2^{\prime} \right)} \varphi\left( R_{\nu}^{\prime}, x_2^{\prime}, \underline{P}_{\nu} \right) dR_{\nu}^{\prime} dx_2^{\prime} \right]} \psi\left( r_{\nu}, R_{\nu}, x_2, \underline{P}_{\nu} \right) dr_{\nu} dR_{\nu} dx_2 d\underline{P}_{\nu} = \\
        & = \langle \left[ \tau_0 R_{\Tilde{H}_0^{\nu}} \left( z - Q_{\nu} \right) \right]^{\ast} \varphi, \psi \rangle.
    \end{align*}
    \normalsize This implies
    \small
    \begin{align*}
        \left\{ \left[ \tau_0 R_{\tilde{H}_{0}^{\nu}}\left( z - Q_{\nu} \right) \right]^{\ast} \varphi \right\}\left( r_{\nu}, R_{\nu}, x_2, \underline{P}_{\nu} \right) = \int_{\mathbb{R}^2} \overline{R_{\Tilde{H}_0^{\nu}}\left( z - Q_{\nu}\right)\left( r_{\nu}, R_{\nu} - R_{\nu}^{\prime}, x_2 - x_2^{\prime} \right)} \varphi\left( R_{\nu}^{\prime}, x_2^{\prime}, \underline{P}_{\nu} \right) dR_{\nu}^{\prime} dx_2^{\prime}.
    \end{align*}
    \normalsize The coordinate transformation
    \begin{equation*}
        \vec{g}: \, \left( x_1,x_2,x_{\nu_2}, \underline{P}_{\nu} \right) \in \mathbb{R}^n \longmapsto \vec{g}\left( x_1,x_2,x_{\nu_2}, \underline{P}_{\nu} \right) \in \mathbb{R}^n
    \end{equation*}
    where
    \begin{equation*}
        \vec{g}\left( x_1,x_2,x_{\nu_2}, \underline{P}_{\nu} \right) = 
        \begin{cases}
            R_{\nu} & = \frac{m_1 x_1 + m_{\nu_2} x_{\nu_2}}{m_1 + m_{\nu_2}}\\
            r_{\nu} & = x_{\nu_2} - x_1\\
            x_2 & \equiv x_2\\
            \underline{P}_{\nu} & \equiv \underline{P}_{\nu}
        \end{cases},
    \end{equation*}
    is then considered. Let $\overline{U}_{\nu}: \, L^2\left( \mathbb{R}^n, dx_1dx_2dx_{\nu_2}d\underline{P}_{\nu} \right) \longrightarrow L^2\left( \mathbb{R}^n, dr_{\nu}dR_{\nu}dx_2 d\underline{P}_{\nu} \right)$ be the unitary operator implementing such a coordinate transformation. By defining $\overline{U}_{\sigma}$ analogously,
    \footnotesize
    \begin{align*}
        & \left\{ \tau_0 \overline{U}_{\sigma} \overline{U}_{\nu}^{-1} \left[ \tau_0 R_{\tilde{H}_0^{\nu}}\left( z - Q_{\nu} \right) \right]^{\ast} \varphi \right\}\left( R_{\sigma}, x_{\nu_2}, \underline{P}_{\nu} \right) = \\
        & = \int_{\mathbb{R}^2} \, \overline{\left[ \sqrt{2m_1 2m_2 2m_{\nu_2}} G_{\left( z - Q_{\nu} \right)}^{(3)}\left( \sqrt{2m_1}\left(R_{\sigma} - R_{\nu}^{\prime}\right), \sqrt{2m_2}\left( R_{\sigma} - x_2^{\prime} \right), \sqrt{2m_{\nu_2}} \left( x_{\nu_2} - R_{\nu}^{\prime} \right) \right) \right]} \varphi\left( R_{\nu}^{\prime}, x_2^{\prime}, \underline{P}_{\nu} \right) dR_{\nu}^{\prime} dx_{2}^{\prime} \equiv \\
        & = \left\{ \left[\mathds{1}_{L^2\left(\mathbb{R}^2, dR_{\sigma}dx_{\nu_2}\right)} \otimes \mathfrak{F}_{\underline{Y}_{\nu}} \right] \tau_{\sigma}G_{\nu}^{\ast}\left( \overline{z} \right) \right\} \left( R_{\sigma}, x_{\nu_2}, \underline{P}_{\nu} \right).
    \end{align*}
    \normalsize By recalling the definition of $\left[\Lambda_0\left(z\right)_{\text{off}}\right]_{\sigma \nu}$ and collecting everything up here,
    \begin{equation*}
        \left[ \Lambda_0\left( z \right)_{\text{off}} \right]_{\sigma \nu} = \langle v_{\nu}, \cdot \rangle v_{\sigma} \otimes \left[ - g \tau_{\sigma} G_{\nu}^{\ast}\left( \overline{z} \right) \right].
    \end{equation*}
    The case $\nu = \left( \nu_1 \nu_2 \right)$, where $3 \leq \nu_1 < \nu_2 \leq n$ is analogously proved. The operator of interest is
    \begin{equation*}
        \left[ \Lambda_{0,\underline{P}_{\nu}}\left(z\right)_{\text{off}} \right]_{\sigma \nu}: \, L^2\left( \mathbb{R}^4, dx_1 dx_2 dr_{\nu} dR_{\nu} \right) \otimes \Tilde{\chi}_{\nu}^{-} \longrightarrow L^2\left( \mathbb{R}^4, dr_{\sigma} dR_{\sigma} dx_{\nu_1} dx_{\nu_2} \right) \otimes \Tilde{\chi}_{\nu}^{-},
    \end{equation*}
    such that for all $\psi \in L^2\left( \mathbb{R}^4, dx_1 dx_2 dr_{\nu} dR_{\nu} \right) \otimes \Tilde{\chi}_{\nu}^{-}$
    \begin{align*}
        & \left\{ \left[ \Lambda_{0,\underline{P}_{\nu}}\left(z\right)_{\text{off}} \right] \psi \right\}\left( r_{\sigma}, R_{\sigma}, x_{\nu_1}, x_{\nu_2}, \underline{P}_{\nu} \right) = \\
        & = v\left( r_{\sigma} \right) \left[ - g \sqrt{2m_1 2m_2 2m_{\nu_1} 2m_{\nu_2}} \int_{\mathbb{R}^4} \, dr_{\nu}^{\prime} dR_{\nu}^{\prime} dx_1^{\prime} dx_2^{\prime} G_{\left( z - Q_{\nu}\right)}^{(4)}\left( X_{\sigma\nu, 0} \right) v\left( r_{\nu}^{\prime} \right) \psi \left( r_{\nu}^{\prime}, R_{\nu}^{\prime}, x_1^{\prime}, x_2^{\prime}, \underline{P}_{\nu} \right) \right].
    \end{align*}
    For all $\varphi \in L^2\left( \mathbb{R}^{n-1}, dR_{\nu}dx_1dx_2d\underline{P}_{\nu} \right)$,
    \begin{align*}
        & \left\{ \left[ \tau_0 R_{\tilde{H}_0^{\nu}}\left( z - Q_{\nu} \right) \right]^{\ast} \right\}\left( r_{\nu}, R_{\nu}, x_1, x_2, \underline{P}_{\nu} \right) = \\
        & = \int_{\mathbb{R}^3} \, \overline{R_{\tilde{H}_0^{\nu}}\left( z - Q_{\nu} \right) \left( r_{\nu}, R_{\nu} - R_{\nu}^{\prime}, x_1 - x_1^{\prime}, x_2 - x_2^{\prime} \right)} \varphi\left( R_{\nu}^{\prime}, x_1^{\prime}, x_2^{\prime}, \underline{P}_{\nu} \right) dR_{\nu}^{\prime} dx_1^{\prime} dx_2^{\prime}.
    \end{align*}
    Analogously defined $\overline{U}_{\nu}$ and $\overline{U}_{\sigma}$ for the current case, it is possible to pass from
    \begin{equation*}
        2\mu_{\nu} r_{\nu}^2 + 2M_{\nu} \left( R_{\nu} - R_{\nu}^{\prime} \right)^2 + 2m_1 \left( x_1 - x_1^{\prime} \right)^2 + 2m_2 \left( x_2 - x_2^{\prime} \right)^2
    \end{equation*}
    to
    \begin{equation*}
        2m_1\left( R_{\sigma} - x_1^{\prime} \right)^2 + 2m_2\left( R_{\sigma} - x_2^{\prime} \right)^2 + 2m_{\nu_1}\left( x_{\nu_1} - R_{\nu}^{\prime} \right)^2 + 2m_{\nu_2} \left( x_{\nu_2} - R_{\nu}^{\prime} \right)^2
    \end{equation*}
    by acting with $\tau_0 \overline{U}_{\sigma} \overline{U}_{\nu}$ upon $\left[ \tau_0 R_{\tilde{H}_0^{\nu}}\left( z - Q_{\nu} \right) \right]^{\ast}$. In the end
    \begin{equation*}
        \left[ \Lambda_0\left(z\right)_{\text{off}} \right]_{\sigma \nu} = \langle v_{\nu}, \cdot \rangle v_{\sigma} \otimes \left[ - g \tau_{\sigma} G_{\nu}^{\ast}\left( \overline{z} \right) \right].
    \end{equation*}
\end{proof}

\begin{remark}
    \justifying
    The use of complex conjugation, even though irrelevant at the stage of the foregoing proposition, will be clearer in the following. $\hfill \square$
\end{remark}

\begin{definition}
    \justifying
    For all $z \in \rho\left(H_0\right)$, the linear operator
    \begin{equation*}
        G \left(z\right): \psi \in L^2\left( \mathbb{R}^n, dx_1 \cdots dx_n \right) \longmapsto G\left(z\right)\psi \doteq \left(G_{\sigma}\left(z\right) \psi \right)_{\sigma = 1}^{\frac{n(n-1)}{2}} \in \chi^{\left( \text{red} \right)} = \bigoplus_{\sigma} \chi^{\left( \text{red} \right)}_{\sigma} 
    \end{equation*}
    is introduced. Analogously,
    \begin{equation*}
        \tau: \psi \in H^1\left( \mathbb{R}^n, dx_1 \cdots dx_n \right) \longmapsto \tau\psi \doteq \left(\tau_{\sigma} \psi \right)_{\sigma = 1}^{\frac{n(n-1)}{2}} \in \chi^{\left( \text{red} \right)} = \bigoplus_{\sigma} \chi^{\left( \text{red} \right)}_{\sigma}
    \end{equation*}
    is considered. $\hfill \square$
\end{definition}

\begin{theorem}
    \justifying
    There exists a linear operator $\Theta\left( z \right) \in \mathcal{B}\left(\chi^{\left( \text{red} \right)}\right)$, $z<0$, such that
    \begin{equation}\label{Resolvent Formula for Self-Adjoint Extension}
        R_{H}\left( z \right) = R_{H_0}\left( z \right) + g G^{\ast}\left( \overline{z} \right) \Theta^{-1}\left( z \right) G\left( z \right)
    \end{equation}
    for all $z < z_0$.
\end{theorem}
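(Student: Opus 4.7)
The plan is to recast the resolvent identity from the preceding corollary,
\[R_H(z) = R_{H_0}(z) + g\sum_{\sigma,\nu\in\mathcal{I}} [S^\sigma(z)]^* [\Lambda_0(z)^{-1}]_{\sigma\nu}\, S^\nu(z),\]
as a formula living entirely on the reduced space $\chi^{(\text{red})}=\bigoplus_\sigma \chi_\sigma^{(\text{red})}$, by exploiting the common rank-one-in-$r_\sigma$ structure furnished by the cutoff $v$.

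First, for each $\sigma\in\mathcal{I}$ I introduce the linear isometry $V_\sigma:\chi_\sigma^{(\text{red})}\to\chi_\sigma$, $\xi\mapsto v\otimes\xi$ (isometric since $\|v\|_2=1$), and I collect them into $V\doteq\bigoplus_\sigma V_\sigma:\chi^{(\text{red})}\to\chi$, so that $V^*V=\mathds{1}_{\chi^{(\text{red})}}$ and $P\doteq VV^*$ is the orthogonal projection onto $\mathrm{ran}\,V\subset\chi$. The earlier factorization $S^\sigma(z)=v(\cdot)G_\sigma(z)$ then reads $S^\sigma(z)=V_\sigma G_\sigma(z)$, i.e. $S(z)=VG(z)$ and $[S^\sigma(z)]^*=G_\sigma^*(z)V_\sigma^*$. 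Applying Lemmas \ref{Factorization of Diagional Contribution} and \ref{Factorization of Off Diagonal Elements} and unwinding the tensor--block structure, $\Lambda_0(z)$ on $\chi$ takes the compact form
\[\Lambda_0(z)=\mathds{1}_\chi+V\,\tilde{\Theta}(z)\,V^*,\qquad [\tilde{\Theta}(z)]_{\sigma\sigma}=-gD^{(\sigma)}(z),\quad [\tilde{\Theta}(z)]_{\sigma\nu}=-g\,\tau_\sigma G_\nu^*(\bar z)\ \text{for }\sigma\neq\nu.\]

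I then define $\Theta(z)\doteq V^*\Lambda_0(z)V=\mathds{1}_{\chi^{(\text{red})}}+\tilde{\Theta}(z)\in\mathcal{B}(\chi^{(\text{red})})$, bounded thanks to Lemma \ref{Factorization of Diagional Contribution} and the boundedness of $\tau_\sigma G_\nu^*(\bar z)$ established earlier. The key observation is that, because $V^*V=\mathds{1}_{\chi^{(\text{red})}}$, the operator $\Lambda_0(z)$ commutes with $P$ and acts as the identity on $\mathrm{ran}\,P^\perp$; this leads to the Sherman--Morrison--Woodbury-style identity
\[\Lambda_0(z)^{-1}=\mathds{1}_\chi-VV^*+V\,\Theta(z)^{-1}\,V^*,\]
which I would verify by direct multiplication with $\Lambda_0(z)=\mathds{1}_\chi+V\tilde{\Theta}(z)V^*$ and repeated use of $V^*V=\mathds{1}_{\chi^{(\text{red})}}$. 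This simultaneously proves that $\Theta(z)$ is invertible in $\mathcal{B}(\chi^{(\text{red})})$, with inverse $V^*\Lambda_0(z)^{-1}V$, the boundedness of $\Lambda_0(z)^{-1}$ for $z<z_0$ being already established in Section 4.

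Substituting everything back into the resolvent formula gives
\[R_H(z)=R_{H_0}(z)+g\,G^*(z)\,V^*\Lambda_0(z)^{-1}V\,G(z)=R_{H_0}(z)+g\,G^*(z)\,\Theta(z)^{-1}\,G(z),\]
and since $z<z_0$ is real, $G^*(z)=G^*(\bar z)$, yielding (\ref{Resolvent Formula for Self-Adjoint Extension}). The main obstacle is the initial bookkeeping that rewrites $\Lambda_0(z)$ as $\mathds{1}_\chi+V\tilde{\Theta}(z)V^*$: one has to see, behind the two sets of block indices $(\sigma,\nu)$ and the differing tensor-factor structure of the diagonal and off-diagonal contributions, one single isometric compression; once $V$ is correctly identified, the inversion step and the final substitution are straightforward algebraic verifications.
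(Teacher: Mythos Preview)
Your proof is correct and, at its core, identical to the paper's: both recognise that every block of $\Lambda_0(z)$ factors through the rank-one projection $\langle v,\cdot\rangle v$ in the $r_\sigma$-variable, so that the inversion of $\Lambda_0(z)$ reduces to inverting an operator on $\chi^{(\text{red})}$. The paper carries this out by splitting $\Lambda_0(z)$ into diagonal and off-diagonal parts, introducing a Hadamard-product notation $U\circ\Theta(z)^{-1}$ with $U_{\sigma\nu}=\langle v_\nu,\cdot\rangle v_\sigma$, and checking blockwise that $[\Lambda_0(z)^{-1}]_{\sigma\nu}=[U\circ\Theta(z)^{-1}]_{\sigma\nu}+\bigl[(1-\langle v_\sigma,\cdot\rangle v_\sigma)\otimes\mathds{1}\bigr]\delta_{\sigma\nu}$; you reach the \emph{same} formula in one stroke as the Woodbury identity $\Lambda_0(z)^{-1}=(\mathds{1}-VV^*)+V\Theta(z)^{-1}V^*$, once the isometry $V=\bigoplus_\sigma V_\sigma$ is in place. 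Your packaging is tidier and makes the projection structure explicit from the outset; the paper's version is more hands-on but requires tracking two layers of block indices. One small logical point: your sentence ``this simultaneously proves that $\Theta(z)$ is invertible'' is slightly circular as written, since verifying the Woodbury identity by multiplication already presupposes $\Theta(z)^{-1}$ exists. The clean order is: first use $[\Lambda_0(z),P]=0$ and the invertibility of $\Lambda_0(z)$ to conclude that its compression $\Theta(z)=V^*\Lambda_0(z)V$ is invertible with inverse $V^*\Lambda_0(z)^{-1}V$, and \emph{then} the Woodbury formula follows by direct multiplication.
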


\begin{proof}
    \justifying
    As long as $z<z_0$, it is first observed that
    \begin{align*}
        & \left[ \Lambda_0\left( z \right)_{\text{diag}}^{-1} \Lambda_0\left( z \right)_{\text{off}} \right]_{\sigma \nu} = \left( 1 - \delta_{\sigma \nu} \right) \left[ \Lambda_0\left( z \right)_{\text{diag}}^{-1} \right]_{\sigma \sigma} \left[ \Lambda_0\left( z \right)_{\text{off}} \right]_{\sigma \nu} = \\
        & = \left( 1 - \delta_{\sigma \nu} \right) \left\{ \mathds{1}_{\chi^{(\text{red})}_{\sigma}} + \langle v_{\sigma}, \cdot \rangle v_{\sigma} \otimes g D^{(\sigma)}\left( z \right) \left[ \mathds{1}_{\chi^{(\text{red})}_{\sigma}} - g D^{(\sigma)}\left( z \right) \right]^{-1} \right\} \left\{ \langle v_{\nu}, \cdot \rangle v_{\sigma} \otimes \left[ - g \tau_{\sigma} G_{\nu}^{\ast} \left( z \right) \right] \right\} = \\
        & = \langle v_{\nu}, \cdot \rangle v_{\sigma} \otimes \left( 1 - \delta_{\sigma \nu} \right) \left\{ \mathds{1}_{\chi^{(\text{red})}_{\sigma}} + g D^{(\sigma)}\left( z \right) \left[ \mathds{1}_{\chi^{(\text{red})}_{\sigma}} - g D^{(\sigma)}\left( z \right) \right]^{-1} \right\} \left[ - g \tau_{\sigma} G_{\nu}^{\ast}\left(z\right) \right].
    \end{align*}
    Given $z<0$, by introducing the linear operators
    \begin{align}
        \left[\Theta\left(z\right)_{\text{diag}}\right]_{\sigma \nu} & = \left[\mathds{1}_{\chi^{(\text{red})}_{\sigma}} - g D^{(\sigma)} \left(z\right) \right] \delta_{\sigma \nu} \in \mathcal{B}\left( \chi^{(\text{red})}_{\nu}, \chi^{(\text{red})}_{\sigma} \right) \\
        \left[\Theta\left(z\right)_{\text{off}}\right]_{\sigma \nu} & = \left[ - g \tau_{\sigma} G_{\nu}^{\ast}\left( z \right) \right] \left( 1 - \delta_{\sigma \nu}\right) \in \mathcal{B}\left( \chi^{(\text{red})}_{\nu}, \chi^{(\text{red})}_{\sigma} \right)
    \end{align}
    for all $\sigma, \nu \in \mathcal{I}$, as long as $z < z_0$, straightforward computations allow to state that
    \begin{equation*}
        \left\{ \left[ \Theta\left(z\right)_{\text{diag}} \right]_{\sigma \sigma} \right\}^{-1} = \mathds{1}_{\chi^{(\text{red})}_{\sigma}} + g D^{(\sigma)}\left( z \right) \left[ \mathds{1}_{\chi^{(\text{red})}_{\sigma}} - g D^{(\sigma)}\left( z \right) \right]^{-1},
    \end{equation*}
    therefore, by introducing the bounded operators
    \begin{align}
        \Theta\left(z\right)_{\text{diag}} & = \left( \Theta_{\text{diag}}\left(z\right)_{\sigma \nu} \right)_{\sigma, \nu \in \mathcal{I}} \in \mathcal{B}\left( \chi^{(\text{red})} \right)\\
        \Theta\left(z\right)_{\text{off}} & = \left( \Theta_{\text{off}}\left(z\right)_{\sigma \nu} \right)_{\sigma, \nu \in \mathcal{I}} \in \mathcal{B}\left( \chi^{(\text{red})} \right)
    \end{align}
    hence $\Theta\left(z\right) \doteq \Theta\left(z\right)_{\text{diag}} + \Theta\left(z\right)_{\text{off}}$,
    \begin{equation*}
        \left[ \Lambda_0\left( z \right)_{\text{diag}}^{-1} \Lambda_0\left( z \right)_{\text{off}} \right]_{\sigma \nu} = \langle v_{\nu}, \cdot \rangle v_{\sigma} \otimes \left[ \Theta\left(z\right)_{\text{diag}}^{-1} \Theta\left(z\right)_{\text{off}} \right]_{\sigma \nu} \equiv U_{\sigma \nu} \otimes \left[ \Theta\left(z\right)_{\text{diag}}^{-1} \Theta\left(z\right)_{\text{off}} \right]_{\sigma \nu}.
    \end{equation*}
    Further introduced $U \doteq \left( U_{\sigma \nu} \right)_{\sigma, \nu} \equiv \left( \langle v_{\nu}, \cdot \rangle v_{\sigma} \right)_{\sigma, \nu} \in \mathcal{B}\left( \bigoplus_{\nu} L^2\left( \mathbb{R}, dr_{\nu} \right) \right)$, denoted by $\circ$ the Hadamard product,
    \begin{equation}
        \left[ \Lambda_0\left( z \right)_{\text{diag}}^{-1} \Lambda_0\left( z \right)_{\text{off}} \right]_{\sigma \nu} = \left[ U \circ \Theta\left(z\right)_{\text{diag}}^{-1} \Theta\left(z\right)_{\text{off}} \right]_{\sigma \nu} \in \mathcal{B}\left( \chi \right).
    \end{equation}
    Since
    \begin{equation*}
        \Theta\left(z\right)^{-1} = \left[ \mathds{1}_{\chi^{(\text{red})}} +  \Theta\left(z\right)_{\text{diag}}^{-1} \Theta\left(z\right)_{\text{off}} \right]^{-1} \Theta\left(z\right)_{\text{diag}}^{-1},
    \end{equation*}
    a direct computation shows that
    \begin{equation*}
        \left\{ \mathds{1}_{\chi} + U \circ \left[ \Theta\left(z\right)_{\text{diag}}^{-1} \Theta\left(z\right)_{\text{off}} \right] \right\}^{-1} = \mathds{1}_{\chi} - U \circ \left[ \Theta\left(z\right)^{-1} \Theta\left(z\right)_{\text{off}} \right].
    \end{equation*}
    In the end
    \begin{equation}
        \left[ \Lambda_0\left( z \right)^{-1} \right]_{\sigma \nu} = \left[ U \circ \Theta\left( z \right)^{-1} \right]_{\sigma \nu} + \left\{ \left[ \mathds{1}_{L^2\left( \mathbb{R},dr_{\sigma} \right)} - \langle v_{\sigma}, \cdot \rangle v_{\sigma} \right] \otimes \mathds{1}_{\chi^{(\text{red})}_{\sigma}} \right\} \delta_{\sigma \nu},
    \end{equation}
    hence
    \begin{equation}
        \sum_{\sigma, \nu} S^{(\sigma)}\left( \overline{z} \right)^{\ast} \left[ \Lambda_0^{-1}\left( z \right) \right]_{\sigma \nu} S^{(\nu)}\left( z \right) = \sum_{\sigma, \nu} G_{\sigma}^{\ast}\left( \overline{z} \right) \left[\Theta\left(z\right)^{-1}\right]_{\sigma \nu} G_{\nu}\left( z \right) \equiv G^{\ast}\left( \overline{z} \right) \Theta\left(z\right)^{-1} G\left( z \right).
    \end{equation}
\end{proof}

\begin{remark}\label{H is the self-adjoint extension of H_0}
    \justifying
    (\cite{07_CFP}) thm. 2.19 states that, if $\Theta$ could be defined for all $z \in \rho\left(H_0\right)$ in a such a way that
    \begin{enumerate}
        \item $\mathcal{D}_{\Theta\left(z\right)}$ is independent on $z$,
        \item $\Theta\left(z\right)^{\ast} = \Theta\left( \overline{z} \right)$, for all $z \in \rho\left(H_0\right)$,
        \item $\Theta\left(z\right) = \Theta\left( w \right) + g \left(w-z\right) G\left(w\right) G\left(\overline{z}\right)^{\ast}$, for all $z,w \in \rho\left(H_0\right): \; z \neq w$,
        \item $0 \in \rho \left( \Theta\left(z\right) \right)$, for some $z \in \rho\left(H_0\right)$,
    \end{enumerate}
    (\ref{Resolvent Formula for Self-Adjoint Extension}) would then hold for all $z \in \rho\left(H\right) \cap \rho\left(H_0\right)$, allowing for $\left(H, \mathcal{D}_{H}\right)$ to be the unique self-adjoint extension of $\left( H_0, \ker \tau \right)$. What follows proves that this is the case. $\hfill \square$
\end{remark}

\begin{lemma}
    \justifying
    For all $\sigma \in \mathcal{I}$, $z \in \rho\left( H_0 \right)$, $\varphi \in \chi_{\sigma}^{(\text{red})}$
    \begin{align*}
        \left\{ \mathfrak{F}\left[ \tau_0 R_{H_0^{\sigma}}\left( \overline{z} \right) \right]^{\ast} \varphi \right\}\left( p_{\sigma}, P_{\sigma}, \underline{P}_{\sigma} \right) & = \mu_{\sigma} \sqrt{\frac{2}{\pi}} \cdot \frac{\left( \mathfrak{F}_{\underline{Y}_{\sigma}} \varphi \right) \left( P_{\sigma}, \underline{P}_{\sigma} \right) }{ p_{\sigma}^2 + 2\mu_{\sigma}\left( Q_{\sigma} - z \right) } \\ 
        & \equiv \left\{ \left[ \frac{p_{\sigma}^2}{2\mu_{\sigma}} + \left( Q_{\sigma} - z \right) \right]^{-1} \hat{\tau}_0^{\ast} \left( \mathfrak{F}_{\underline{Y}_{\sigma}} \varphi \right) \right\} \left( p_{\sigma}, P_{\sigma}, \underline{P}_{\sigma} \right).
    \end{align*}
\end{lemma}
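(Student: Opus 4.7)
The plan is to reduce the computation to three elementary ingredients: (i) self-adjointness of $H_0^{\sigma}$ to move the adjoint past the resolvent, (ii) the Fourier intertwining identity for the trace operator established in Appendix 2, and (iii) the fact that the free resolvent becomes a multiplication operator in momentum space.

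First, since $\left(H_0^{\sigma}, \mathcal{D}_{H_0}\right)$ is self-adjoint on $\chi_{\sigma}$, one has $R_{H_0^{\sigma}}(\overline{z})^{\ast} = R_{H_0^{\sigma}}(z)$ for every $z \in \rho(H_0)$, whence
\begin{equation*}
    \left[ \tau_0 R_{H_0^{\sigma}}(\overline{z}) \right]^{\ast} = R_{H_0^{\sigma}}(z)\, \tau_0^{\ast}.
\end{equation*}
Therefore it suffices to compute $\mathfrak{F}\, R_{H_0^{\sigma}}(z)\, \tau_0^{\ast} \varphi$, where $\mathfrak{F}$ denotes the full Fourier operator on $L^2(\mathbb{R}, dr_{\sigma}) \otimes \chi_{\sigma}^{(\text{red})}$.

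Next, I would invoke the Fourier trace identity $\mathfrak{F}_{\mathbb{R}^n} \tau_0 = \hat{\tau}_0 \mathfrak{F}_{\mathbb{R}^{n+1}}$ proved in Appendix 2; taking adjoints (and using that the Fourier operators are unitary) yields
\begin{equation*}
    \mathfrak{F}_{\mathbb{R}^{n+1}} \tau_0^{\ast} = \hat{\tau}_0^{\ast} \mathfrak{F}_{\mathbb{R}^n}.
\end{equation*}
Applied to $\varphi \in \chi_{\sigma}^{(\text{red})}$, this gives $\mathfrak{F}\, \tau_0^{\ast} \varphi = \hat{\tau}_0^{\ast}\!\left( \mathfrak{F}_{\underline{Y}_{\sigma}} \varphi \right)$. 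A direct duality computation, of the type already carried out for $u_{\epsilon}^{\sigma}$ in the body of the paper, shows that $\hat{\tau}_0^{\ast}$ acts as the constant lift $(p_{\sigma}, P_{\sigma}, \underline{P}_{\sigma}) \mapsto \tfrac{1}{\sqrt{2\pi}}\,(\mathfrak{F}_{\underline{Y}_{\sigma}}\varphi)(P_{\sigma}, \underline{P}_{\sigma})$.

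Finally, since
\begin{equation*}
    H_0^{\sigma} = -\frac{1}{2\mu_{\sigma}} \frac{\partial^2}{\partial r_{\sigma}^2} - \frac{1}{2M_{\sigma}} \frac{\partial^2}{\partial R_{\sigma}^2} - \sum_{\substack{k=1\\ k\neq i,j}}^n \frac{1}{2m_k} \frac{\partial^2}{\partial x_k^2},
\end{equation*}
the full Fourier conjugation $\mathfrak{F}\, R_{H_0^{\sigma}}(z)\, \mathfrak{F}^{-1}$ is multiplication by $\bigl[ \tfrac{p_{\sigma}^2}{2\mu_{\sigma}} + Q_{\sigma} - z \bigr]^{-1}$. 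Composing these three observations,
\begin{equation*}
    \mathfrak{F}\, R_{H_0^{\sigma}}(z)\, \tau_0^{\ast} \varphi = \left[ \frac{p_{\sigma}^2}{2\mu_{\sigma}} + Q_{\sigma} - z \right]^{-1} \hat{\tau}_0^{\ast}\!\left( \mathfrak{F}_{\underline{Y}_{\sigma}} \varphi \right) = \frac{2\mu_{\sigma}}{p_{\sigma}^2 + 2\mu_{\sigma}(Q_{\sigma} - z)} \cdot \frac{(\mathfrak{F}_{\underline{Y}_{\sigma}} \varphi)(P_{\sigma}, \underline{P}_{\sigma})}{\sqrt{2\pi}},
\end{equation*}
which rearranges to the claimed expression. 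The only genuinely delicate point is justifying that the formal manipulations above make sense as an identity in $L^2$; this is handled by first assuming $\varphi$ in the dense subspace $C^{\infty}_c$ (so that $\mathfrak{F}_{\underline{Y}_{\sigma}}\varphi$ is Schwartz and the right-hand side is square integrable for $z<0$ thanks to the quadratic denominator in $p_{\sigma}$), and then extending by continuity using that $\tau_0^{\ast}$ is bounded from $L^2(\mathbb{R}^{n-1})$ to the $H^{-1}$-type space in which $R_{H_0^{\sigma}}(z)$ restores $L^2$ regularity, in the spirit of the boundedness corollary preceding this lemma.
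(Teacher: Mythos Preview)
Your argument is correct and uses the same three ingredients as the paper: the Fourier--trace intertwining identity from Appendix~2, the fact that $R_{H_0^{\sigma}}(z)$ becomes multiplication by $\bigl[\tfrac{p_{\sigma}^2}{2\mu_{\sigma}}+Q_{\sigma}-z\bigr]^{-1}$ in momentum space, and the explicit form of $\hat{\tau}_0$. The only organizational difference is that the paper computes the pairing $\langle\varphi,\tau_0 R_{H_0^{\sigma}}(\overline{z})\psi\rangle$ directly and reads off the adjoint from the resulting expression, whereas you take adjoints at the operator level first and then compose; these are equivalent, and your density/continuity remark at the end plays the role that the duality pairing plays in the paper's version.
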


\begin{proof}
    \justifying
    For all $\psi \in \chi_{\sigma}$, $\varphi \in \chi_{\sigma}^{(\text{red})}$
    \begin{align*}
        & \langle \varphi, \left[ \tau_0 R_{H_0^{\sigma}}\left( \overline{z} \right) \right] \psi \rangle = \int_{\mathbb{R}^{n-1}} dR_{\sigma} d\underline{Y}_{\sigma} \,  \overline{\varphi}\left( R_{\sigma}, \underline{Y}_{\sigma} \right) \left\{ \left[ \tau_0 R_{H_0^{\sigma}}\left( \overline{z} \right) \right] \psi \right\} \left( R_{\sigma}, \underline{Y}_{\sigma} \right) = \left( \text{Plancherel} \right) \equiv \\
        & = \int_{\mathbb{R}^{n-1}}dP_{\sigma}d\underline{P}_{\sigma} \overline{\left( \mathfrak{F}_{\underline{Y}_{\sigma}} \varphi \right)}\left( P_{\sigma}, \underline{P}_{\sigma} \right) \left\{ \mathfrak{F}_{\underline{Y}_{\sigma}}\left[ \tau_0 R_{H_0^{\sigma}}\left( \overline{z} \right) \psi \right] \right\}\left( P_{\sigma}, \underline{P}_{\sigma} \right) = \\
        & = \int_{\mathbb{R}^{n-1}}dP_{\sigma}d\underline{P}_{\sigma} \overline{\left( \mathfrak{F}_{\underline{Y}_{\sigma}} \varphi \right)}\left( P_{\sigma}, \underline{P}_{\sigma} \right) \hat{\tau_0}\left[ \frac{p_{\sigma}^2}{2\mu_{\sigma}} + \left( Q_{\sigma} - \overline{z} \right) \right]^{-1} \left( \mathfrak{F}\psi \right) \left( p_{\sigma}, P_{\sigma}, \underline{P}_{\sigma} \right) = \\
        & = \int_{\mathbb{R}^{n-1}}dP_{\sigma}d\underline{P}_{\sigma} \overline{\left( \mathfrak{F}_{\underline{Y}_{\sigma}} \varphi \right)}\left( P_{\sigma}, \underline{P}_{\sigma} \right) \left[ \int_{\mathbb{R}} \frac{\left( \mathfrak{F}\psi \right) \left( p_{\sigma}, P_{\sigma}, \underline{P}_{\sigma} \right)}{\frac{p_{\sigma}^2}{2\mu_{\sigma}} + \left( Q_{\sigma} - \overline{z} \right)} \frac{dp_{\sigma}}{\sqrt{2\pi}} \right] = \\
        & = \int_{\mathbb{R}^n} \overline{ \mathfrak{F}^{-1} \left\{ \frac{1}{\sqrt{2\pi}} \left[ \frac{ \left( \mathfrak{F}_{\underline{Y}_{\sigma}} \varphi \right) \left( P_{\sigma}, \underline{P}_{\sigma} \right) }{\frac{p_{\sigma}^2}{2\mu_{\sigma}} + \left( Q_{\sigma} - z \right)} \right] \right\} } \left( r_{\sigma}, R_{\sigma}, \underline{Y}_{\sigma} \right) \psi\left( r_{\sigma}, R_{\sigma}, \underline{Y}_{\sigma} \right) dr_{\sigma}dR_{\sigma}d\underline{Y}_{\sigma} = \\
        & = \langle \left[ \tau_0 R_{H_0^{\sigma}}\left( \overline{z} \right) \right]^{\ast} \varphi, \psi \rangle.
    \end{align*}
\end{proof}

\begin{corollary}
    \justifying
    For all $\sigma \in \mathcal{I},$ $z<0$, $D^{\left( \sigma \right)}\left( z \right) = \tau_{\sigma} G_{\sigma}^{\ast}\left( \overline{z} \right)$.
\end{corollary}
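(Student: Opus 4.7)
My plan is to reduce $\tau_\sigma G_\sigma^*(\overline{z})$ to an operator of the form $\tau_0 [\tau_0 R_{H_0^\sigma}(\overline{z})]^*$, to which the preceding lemma applies directly, and then to compute a single one-dimensional integral in the momentum $p_\sigma$ to identify the result with $D^{(\sigma)}(z)$.

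First I would use $\tau_\sigma = \tau_0 U_\sigma$ together with the intertwining relation $U_\sigma R_{H_0}(z) = R_{H_0^\sigma}(z) U_\sigma$ established at the beginning of Section~3. Taking adjoints and exploiting that $H_0$ and $H_0^\sigma$ are self-adjoint (so $R_{H_0}(\overline{z})^* = R_{H_0}(z)$ and similarly for $H_0^\sigma$), the unitary $U_\sigma$ cancels and one obtains
\begin{equation*}
    \tau_\sigma G_\sigma^*(\overline{z}) \;=\; \tau_0 U_\sigma\bigl[\tau_0 U_\sigma R_{H_0}(\overline{z})\bigr]^{\!*} \;=\; \tau_0\bigl[\tau_0 R_{H_0^\sigma}(\overline{z})\bigr]^{\!*}.
\end{equation*}
This is the decisive algebraic step: after it, no further use of $U_\sigma$ or of the multi-body geometry is needed.

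Next I would invoke the preceding lemma to get an explicit expression for the full Fourier transform of $[\tau_0 R_{H_0^\sigma}(\overline{z})]^*\varphi$, namely
\begin{equation*}
    \bigl\{\mathfrak{F}\bigl[\tau_0 R_{H_0^\sigma}(\overline{z})\bigr]^{\!*}\varphi\bigr\}(p_\sigma, P_\sigma, \underline{P}_\sigma) \;=\; \mu_\sigma \sqrt{\tfrac{2}{\pi}}\;\frac{(\mathfrak{F}_{\underline{Y}_\sigma}\varphi)(P_\sigma,\underline{P}_\sigma)}{p_\sigma^2 + 2\mu_\sigma(Q_\sigma - z)}.
\end{equation*}
Applying $\tau_0$ then amounts to evaluating the inverse Fourier transform at $r_\sigma = 0$, which for the $p_\sigma$-variable reduces to integrating against $\tfrac{1}{\sqrt{2\pi}}\,dp_\sigma$. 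Since $z<0$ and $Q_\sigma\geq 0$, the quantity $2\mu_\sigma(Q_\sigma-z)$ is strictly positive on the joint spectrum, so one can use the standard formula $\int_\mathbb{R} dp\,(p^2+a^2)^{-1} = \pi/a$ pointwise in $(P_\sigma,\underline{P}_\sigma)$.

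Collecting the prefactors $\mu_\sigma\sqrt{2/\pi}\cdot \pi/\sqrt{2\mu_\sigma(Q_\sigma-z)}$ yields exactly $\sqrt{\mu_\sigma/2}\,(Q_\sigma - z)^{-1/2}$, so that
\begin{equation*}
    \tau_0\bigl[\tau_0 R_{H_0^\sigma}(\overline{z})\bigr]^{\!*}\varphi \;=\; \sqrt{\tfrac{\mu_\sigma}{2}}\,\mathfrak{F}_{\underline{Y}_\sigma}^{-1}\,\frac{\mathds{1}}{\sqrt{Q_\sigma - z}}\,\mathfrak{F}_{\underline{Y}_\sigma}\,\varphi \;\equiv\; D^{(\sigma)}(z)\varphi,
\end{equation*}
which is the claimed identity. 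The only genuinely delicate point will be the Fubini justification for exchanging the $p_\sigma$-integration with the $(P_\sigma,\underline{P}_\sigma)$-integration in the inverse Fourier representation; this can be handled either by working first on a dense subspace where $\mathfrak{F}_{\underline{Y}_\sigma}\varphi$ has enough decay in $\underline{P}_\sigma$ to make the joint integral absolutely convergent, and then extending by continuity using the boundedness of $D^{(\sigma)}(z)$ already guaranteed in Lemma~\ref{Factorization of Diagional Contribution}. Everything else is a bookkeeping calculation using only the lemma and the standard Cauchy integral.
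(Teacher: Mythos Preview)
Your proposal is correct and follows essentially the same route as the paper: both reduce $\tau_\sigma G_\sigma^*(\overline{z})$ via the unitary intertwining $U_\sigma R_{H_0}(z)=R_{H_0^\sigma}(z)U_\sigma$ to $\tau_0 R_{H_0^\sigma}(z)\tau_0^*$, and then evaluate the resulting one-dimensional $p_\sigma$-integral in Fourier space to recover the multiplication operator $\sqrt{\mu_\sigma/2}\,(Q_\sigma-z)^{-1/2}$. The only cosmetic difference is that the paper writes the intermediate object as $\hat{\tau}_0[\,\cdot\,]^{-1}\hat{\tau}_0^*$ conjugated by $\mathfrak{F}_{\underline{Y}_\sigma}$, whereas you invoke the preceding lemma directly; your added remark on the Fubini justification is a welcome bit of extra care that the paper omits.
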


\begin{proof}
    \justifying
    Let $z \in \rho\left( H_0 \right)$ be arbitrary.
    \begin{align*}
        G_{\sigma}^{\ast}\left( \overline{z} \right) & = \left[ \tau_{\sigma} R_{H_0}\left( \overline{z} \right) \right]^{\ast} = R_{H_0}\left( \overline{z} \right)^{\ast} \tau_{\sigma}^{\ast} = R_{H_0}\left( z \right) \left( \tau_0 U_{\sigma} \right)^{\ast} = R_{H_0}\left( z \right) U_{\sigma}^{\ast} \tau_{0}^{\ast} = \left[ U_{\sigma} R_{H_0}\left( \overline{z} \right) \right]^{\ast} \tau_{0}^{\ast} = \\
        & = \left[ R_{H_0^{\sigma}}\left( \overline{z} \right) U_{\sigma} \right]^{\ast} \tau_0^{\ast} = U_{\sigma}^{\ast} R_{H_0^{\sigma}}\left( z \right) \tau_0^{\ast}.
    \end{align*}
    Consequently
    \begin{equation*}
        \tau_{\sigma} G_{\sigma}^{\ast}\left( \overline{z} \right) = \tau_0 U_{\sigma}U_{\sigma}^{\ast}R_{H_0^{\sigma}}\left( z \right)\tau_0^{\ast} = \tau_0 R_{H_0^{\sigma}}\left( z \right) \tau_0^{\ast} = \mathfrak{F}_{\underline{Y}_{\sigma}}^{-1} \left\{ \hat{\tau}_0 \left[ \frac{p_{\sigma}^2}{2\mu_{\sigma}} + \left( Q_{\sigma} - z \right) \right]^{-1} \hat{\tau}_0^{\ast} \right\} \mathfrak{F}_{\underline{Y}_{\sigma}},
    \end{equation*}
    hence, for all $\varphi \in \chi^{(\text{red})}_{\sigma}$
    \begin{align*}
        \left\{ \hat{\tau}_0 \left[ \frac{p_{\sigma}^2}{2\mu_{\sigma}} + \left( Q_{\sigma} - z \right) \right]^{-1} \hat{\tau}_0^{\ast} \left( \mathfrak{F}_{\underline{Y}_{\sigma}} \varphi \right) \right\} \left( P_{\sigma}, \underline{P}_{\sigma} \right) & = \frac{\mu_{\sigma}}{\pi} \int_{\mathbb{R}} \, \frac{ \left( \mathfrak{F}_{\underline{Y}_{\sigma}} \varphi \right)\left( P_{\sigma}, \underline{P}_{\sigma} \right) }{ \frac{p_{\sigma}^2}{2\mu_{\sigma}} + 2\mu_{\sigma} \left( Q_{\sigma} - z \right) } dp_{\sigma} = \\
        & = \left\{ \left[ \sqrt{\frac{\mu_\sigma}{2}} \frac{1}{\sqrt{Q_{\sigma} - z}} \right] \left( \mathfrak{F}_{\underline{Y}_{\sigma}}  \varphi \right) \right\} \left( P_{\sigma}, \underline{P}_{\sigma} \right).
    \end{align*}
\end{proof}

\begin{theorem}
    \justifying
    $\left( H, \mathcal{D}_{H} \right)$ is the unique self-adjoint extension of $\left( H_0, \ker \tau \right)$.
\end{theorem}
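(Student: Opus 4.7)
The plan is to apply the criterion recalled in Remark \ref{H is the self-adjoint extension of H_0} (from \cite{07_CFP}, thm.\ 2.19), which demands extending $z \mapsto \Theta(z)$, defined so far only for $z<z_0$, to all $z \in \rho(H_0)$ in such a way that (i) the domain is $z$-independent, (ii) $\Theta(z)^{\ast} = \Theta(\overline{z})$, (iii) the resolvent-type identity $\Theta(z) = \Theta(w) + g(w-z)G(w)G(\overline{z})^{\ast}$ holds, and (iv) $0 \in \rho(\Theta(z_{\ast}))$ for at least one $z_{\ast} \in \rho(H_0)$. The preceding two factorization lemmas, combined with the just-established identity $D^{(\sigma)}(z) = \tau_{\sigma} G_{\sigma}^{\ast}(\overline{z})$, show that the diagonal and off-diagonal blocks of $\Theta(z)$ fit into one single expression,
\begin{equation*}
    [\Theta(z)]_{\sigma\nu} \doteq \delta_{\sigma\nu}\,\mathds{1}_{\chi^{(\text{red})}_{\sigma}} - g\,\tau_{\sigma} G_{\nu}^{\ast}(\overline{z}), \qquad \sigma,\nu \in \mathcal{I},
\end{equation*}
which, by the $H_0$-boundedness of $\tau_0$ and the corollary on the boundedness of $\tau_{\sigma} G_{\nu}^{\ast}(z)$, yields a bounded operator on $\chi^{(\text{red})}$ for every $z \in \rho(H_0)$. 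This unified formula is the object whose four properties I verify.

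Conditions (i) and (ii) come essentially for free. Boundedness of every entry gives $\mathcal{D}_{\Theta(z)} = \chi^{(\text{red})}$, so (i) is automatic. For (ii), the entry-wise adjoint reduces to the identity $G_{\sigma}(\overline{z})\tau_{\nu}^{\ast} = \tau_{\sigma} G_{\nu}(z)^{\ast}$, which drops out of $R_{H_0}(z)^{\ast} = R_{H_0}(\overline{z})$ and the definitions of $G_{\sigma}$ and $\tau_{\nu}$, without any analytic input. For (iii), I would insert the first resolvent identity
\begin{equation*}
    G_{\nu}(\overline{z})^{\ast} - G_{\nu}(\overline{w})^{\ast} = [R_{H_0}(z) - R_{H_0}(w)]\tau_{\nu}^{\ast} = (z-w) R_{H_0}(z) R_{H_0}(w)\tau_{\nu}^{\ast}
\end{equation*}
into the block-wise difference $[\Theta(z) - \Theta(w)]_{\sigma\nu}$ and then use the commutativity of the two resolvents to reassemble the right-hand side as $g(w-z) G_{\sigma}(w) G_{\nu}(\overline{z})^{\ast}$; the computation is purely algebraic.

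The only condition with genuine analytic content is (iv), and here I recycle earlier work rather than compute anything new: for any fixed $z_{\ast} < z_0$ the Hadamard-product factorization
\begin{equation*}
    [\Lambda_0(z)^{-1}]_{\sigma\nu} = [U \circ \Theta(z)^{-1}]_{\sigma\nu} + \left\{[\mathds{1}_{L^2(\mathbb{R},dr_{\sigma})} - \langle v_{\sigma},\cdot\rangle v_{\sigma}]\otimes \mathds{1}_{\chi^{(\text{red})}_{\sigma}}\right\}\delta_{\sigma\nu}
\end{equation*}
established earlier, together with the invertibility of $\Lambda_0(z_{\ast})$ in $\mathcal{B}(\chi)$ inherited from the $\epsilon \downarrow 0$ analysis, forces $\Theta(z_{\ast})$ to be invertible in $\mathcal{B}(\chi^{(\text{red})})$. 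With all four hypotheses in hand, \cite{07_CFP}, thm.\ 2.19 upgrades the resolvent formula (\ref{Resolvent Formula for Self-Adjoint Extension}) to every $z \in \rho(H) \cap \rho(H_0)$ and identifies $(H,\mathcal{D}_H)$ as the unique self-adjoint extension of $(H_0, \ker\tau)$. The main obstacle is not any of the individual verifications but the packaging immediately preceding them: it is the identification $D^{(\sigma)}(z) = \tau_{\sigma}G_{\sigma}^{\ast}(\overline{z})$ that dissolves the diagonal/off-diagonal asymmetry of $\Lambda$, allowing (i)--(iii) to become transparent and (iv) to be read off the earlier $\Lambda_0$-analysis.
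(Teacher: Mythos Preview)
Your proposal is correct and follows essentially the same route as the paper: unify the diagonal and off-diagonal blocks of $\Theta(z)$ via the identification $D^{(\sigma)}(z)=\tau_\sigma G_\sigma^{\ast}(\overline{z})$ into the single formula $[\Theta(z)]_{\sigma\nu}=\delta_{\sigma\nu}\mathds{1}-g\tau_\sigma G_\nu^{\ast}(\overline{z})$, extend it to all of $\rho(H_0)$ by boundedness, verify conditions (i)--(iii) by direct inspection and the first resolvent identity, take (iv) from the earlier $\Lambda_0$/$\Theta$ analysis for $z<z_0$, and invoke \cite{07_CFP}, thm.~2.19 as in Remark~\ref{H is the self-adjoint extension of H_0}. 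The only cosmetic difference is that the paper keeps the diagonal and off-diagonal pieces separate throughout the verification of (ii) and (iii), whereas you treat them uniformly from the outset; your packaging is slightly more economical but the content is identical.
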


\begin{proof}
    \justifying
    By having established that
    \begin{align*}
        \left[ \Theta\left(z\right)_{\text{diag}} \right]_{\sigma \sigma} & = \mathds{1}_{\chi^{(\text{red})}_{\sigma}} - g \tau_{\sigma} G_{\sigma}^{\ast}\left( \overline{z} \right)\\
        \left[ \Theta\left(z\right)_{\text{off}} \right]_{\sigma \nu} & = - g \tau_{\sigma} G_{\nu}^{\ast}\left( \overline{z} \right)
    \end{align*}
    for $z<0$, the first resolvent formula allows to analytically extend these equalities to $\rho\left( H_0 \right)$ entirely, for all $\sigma, \nu \in \mathcal{I}$. The domain of the operators is independent on $z$ and, as long as $z<z_0$, $\Theta\left(z\right)$ is invertible in $\mathcal{B}\left( \chi^{(\text{red})} \right)$. Then,
    \begin{enumerate}
        \item for all $z \in \rho\left( H_0 \right)$, $\sigma \in \mathcal{I}$
        \begin{align*}
            \left[ \Theta\left(z\right)_{(\text{diag})} \right]^{\ast}_{\sigma \sigma} & = \mathds{1}_{\chi^{\text{red}}_{\sigma}} - g \left[ \tau_{\sigma} G_{\sigma}^{\ast}\left( \overline{z} \right) \right]^{\ast} = \mathds{1}_{\chi^{\text{red}}_{\sigma}} - g \tau_{\sigma} R_{H_0}\left( \overline{z} \right) \tau_{\sigma}^{\ast} = \mathds{1}_{\chi^{\text{red}}_{\sigma}} - g \tau_{\sigma} \left[ R_{H_0}\left( z \right)^{\ast} \tau_{\sigma}^{\ast} \right] = \\
            & = \mathds{1}_{\chi^{\text{red}}_{\sigma}} - g \tau_{\sigma} G_{\sigma}^{\ast}\left( z \right) = \left[ \Theta\left( \overline{z} \right)_{\text{diag}} \right]_{\sigma \sigma}.
        \end{align*}
        \item for all $z \in \rho\left( H_0 \right)$, $\sigma, \nu \in \mathcal{I}: \, \sigma \neq \nu$
        \begin{align*}
            \left[ \Theta\left( z \right)_{\text{off}} \right]_{\sigma \nu}^{\ast} & = - g G_{\nu}\left( \overline{z} \right) \tau_{\sigma}^{\ast} = - g \tau_{\nu} R_{H_0}\left( \overline{z} \right) \tau_{\sigma}^{\ast} = - g \tau_{\nu} \left[ R_{H_0}\left( z \right)^{\ast} \tau_{\sigma}^{\ast} \right] = - g \tau_{\nu} G_{\sigma}^{\ast}\left( z \right) = \\
            & = \left[ \Theta\left( \overline{z} \right)_{\text{off}} \right]_{\nu \sigma}.
        \end{align*}
    \end{enumerate}
    Further, given $\sigma \in \mathcal{I}$ arbitrary, let $z,w \in \rho\left( H_0 \right)$ be such that $z \neq w$. The first resolvent formula allows to prove that
    \begin{equation*}
        G_{\sigma}^{\ast}\left( w \right) - G_{\sigma}^{\ast}\left( z \right) = \left( \overline{w} - \overline{z} \right) R_{H_0}\left( \overline{w} \right) R_{H_0}\left( \overline{z} \right) \tau_{\sigma}^{\ast} \equiv \left( \overline{w} - \overline{z} \right) R_{H_0}\left( \overline{w} \right) G_{\sigma}^{\ast}\left( z \right). 
    \end{equation*}
    Then,
    \begin{enumerate}
        \item for all $\sigma \in \mathcal{I}$, $z,w \in \rho\left( H_0 \right)$ as above 
        \begin{equation*}
            \left[\Theta\left( z \right)_{\text{diag}}\right]_{\sigma \sigma} - \left[\Theta\left( w \right)_{\text{diag}}\right]_{\sigma \sigma} = g \tau_{\sigma} \left[ G_{\sigma}^{\ast}\left( \overline{w} \right) - G_{\sigma}^{\ast}\left( \overline{z} \right) \right] \equiv g \left( w -z  \right) G_{\sigma}\left( w \right) G_{\sigma}^{\ast}\left( \overline{z} \right).
        \end{equation*}
        \item for all $\sigma, \nu \in \mathcal{I}: \, \sigma \neq \nu$, $z,w \in \rho\left( H_0 \right)$ as above
        \begin{equation*}
            \left[\Theta\left( z \right)_{\text{off}}\right]_{\sigma \nu} - \left[\Theta\left( w \right)_{\text{off}}\right]_{\sigma \nu} = g \tau_{\sigma} \left[ G_{\nu}^{\ast}\left( \overline{w} \right) - G_{\nu}^{\ast}\left( \overline{z} \right) \right] = g \left( w - z  \right) G_{\sigma}\left( w \right) G_{\nu}^{\ast} \left( \overline{z} \right).
        \end{equation*}
    \end{enumerate}
    The statement then follows from remark \ref{H is the self-adjoint extension of H_0}.
\end{proof}

\begin{lemma}
    \justifying
    Under the foregoing hypothesis, fixed $z \in \rho\left( H_0 \right) \cap \rho\left( H \right)$
    \begin{equation*}
        \mathcal{D}_{H} = \left\{ \psi \in \mathcal{H} \, \bigg \vert \, \exists ! \, \varphi \in \mathcal{D}_{H_0} : \psi = \varphi + g G\left( \overline{z} \right)^{\ast} \Theta\left( z \right)^{-1} \tau \varphi \right\}.
    \end{equation*}
\end{lemma}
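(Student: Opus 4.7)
The plan is to exploit the Kre\u\i n-type resolvent formula (\ref{Resolvent Formula for Self-Adjoint Extension}) together with the key bookkeeping identity
\begin{equation*}
    g\,\tau\, G^{\ast}\!\left(\overline{z}\right) \;=\; \mathds{1}_{\chi^{(\text{red})}} - \Theta\!\left(z\right),
\end{equation*}
which is simply read off from the $(\sigma,\nu)$-entries of $\Theta(z)_{\text{diag}}$ and $\Theta(z)_{\text{off}}$ defined in the preceding theorem (and extended analytically to all of $\rho(H_0)$). I would prove the two set inclusions separately, and then establish uniqueness using this identity.

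\noindent For the forward inclusion, given $\psi\in\mathcal{D}_{H}$, I would set $\varphi \doteq R_{H_0}(z)(H-z\mathds{1})\psi\in \mathcal{D}_{H_0}$. Since $R_{H}(z)(H-z\mathds{1})\psi=\psi$ and $G(z)(H-z\mathds{1})\psi = \tau R_{H_0}(z)(H-z\mathds{1})\psi = \tau\varphi$, applying (\ref{Resolvent Formula for Self-Adjoint Extension}) to $(H-z\mathds{1})\psi$ yields
\begin{equation*}
    \psi \;=\; R_{H_0}(z)(H-z\mathds{1})\psi \,+\, gG^{\ast}\!\left(\overline{z}\right)\Theta(z)^{-1}G(z)(H-z\mathds{1})\psi \;=\; \varphi \,+\, gG^{\ast}\!\left(\overline{z}\right)\Theta(z)^{-1}\tau\varphi.
\end{equation*}
For the reverse inclusion, starting from $\psi=\varphi+gG^{\ast}(\overline{z})\Theta(z)^{-1}\tau\varphi$ with $\varphi\in\mathcal{D}_{H_0}$, I would set $\chi\doteq (H_0-z\mathds{1})\varphi \in \mathcal{H}$. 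Since $G(z)\chi=\tau R_{H_0}(z)(H_0-z\mathds{1})\varphi=\tau\varphi$, formula (\ref{Resolvent Formula for Self-Adjoint Extension}) applied to $\chi$ produces $R_{H}(z)\chi=\varphi+gG^{\ast}(\overline{z})\Theta(z)^{-1}\tau\varphi=\psi$, so $\psi\in\mathrm{Ran}\,R_{H}(z)=\mathcal{D}_{H}$.

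\noindent For uniqueness, assume $\varphi_1,\varphi_2\in\mathcal{D}_{H_0}$ both yield the same $\psi$ and set $\eta\doteq\varphi_1-\varphi_2$. Then $\eta+gG^{\ast}(\overline{z})\Theta(z)^{-1}\tau\eta=0$. Applying $\tau$ to both sides and invoking the identity $g\tau G^{\ast}(\overline{z})=\mathds{1}_{\chi^{(\text{red})}}-\Theta(z)$ gives
\begin{equation*}
    \tau\eta + \bigl[\mathds{1}_{\chi^{(\text{red})}}-\Theta(z)\bigr]\Theta(z)^{-1}\tau\eta \;=\; \Theta(z)^{-1}\tau\eta \;=\; 0,
\end{equation*}
and the invertibility of $\Theta(z)$ (guaranteed for $z\in\rho(H_0)\cap\rho(H)$, since $\mathcal{D}_H$ is characterized via $R_H(z)$) forces $\tau\eta=0$, whence $\eta=-gG^{\ast}(\overline{z})\Theta(z)^{-1}(0)=0$.

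\noindent The only non-trivial ingredient is the identity $g\tau G^{\ast}(\overline{z})=\mathds{1}-\Theta(z)$; this is the main obstacle, and I expect it to be the place where one must be most careful. However it is essentially bookkeeping: by Lemmas \ref{Factorization of Diagional Contribution} and \ref{Factorization of Off Diagonal Elements} (and their analytic continuations to $\rho(H_0)$ established in the last theorem), $g\tau_\sigma G_\sigma^{\ast}(\overline{z}) = \mathds{1}_{\chi_\sigma^{(\text{red})}} - [\Theta(z)_{\text{diag}}]_{\sigma\sigma}$ and $g\tau_\sigma G_\nu^{\ast}(\overline{z}) = -[\Theta(z)_{\text{off}}]_{\sigma\nu}$ for $\sigma\neq\nu$, which assembled together deliver the identity. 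Beyond this, the argument is purely formal manipulation of the resolvent formula.
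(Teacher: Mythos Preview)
Your argument is correct and follows essentially the same route as the paper: the key move---setting $\varphi = R_{H_0}(z)(H-z\mathds{1})\psi$ and reading off the decomposition from the Kre\u\i n resolvent formula---is exactly what the paper does, and your reverse inclusion via $\chi=(H_0-z\mathds{1})\varphi$ is the natural complement that the paper leaves implicit.

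One remark on uniqueness: your argument via the identity $g\,\tau\,G^{\ast}(\overline{z})=\mathds{1}-\Theta(z)$ is valid, but more machinery than needed. The paper's (implicit) route is shorter: from your reverse-inclusion computation you have $\psi=R_{H}(z)(H_0-z\mathds{1})\varphi$, hence $(H_0-z\mathds{1})\varphi=(H-z\mathds{1})\psi$ and therefore $\varphi=R_{H_0}(z)(H-z\mathds{1})\psi$ is determined by $\psi$ alone. This avoids having to justify that $\tau$ may be applied to $G^{\ast}(\overline{z})\xi$ (which is true here, but requires the corollary that $\tau_\nu G_\sigma^{\ast}(z)$ is bounded). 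Your parenthetical justification for the invertibility of $\Theta(z)$ on all of $\rho(H_0)\cap\rho(H)$ is also a bit circular as stated; the actual source is the abstract extension theory of \cite{07_CFP} invoked in Remark~\ref{H is the self-adjoint extension of H_0}.
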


\begin{proof}
    \justifying
    It is well known that $\mathcal{D}_{H} = \text{Ran} \, R_{H}\left( z \right)$ for all $z \in \rho\left( H \right)$, independently of $z$. Infact, considered $z_1 \in \rho\left( H \right)$, let
    \begin{equation*}
        \mathcal{D}_{H}^{\left( z_1 \right)} = \left\{ \psi \in \mathcal{H} \bigg \vert \psi = R_{H}\left( z_1 \right)\varphi, \; \varphi \in \mathcal{H} \right\}
    \end{equation*}
    \noindent be. Let then $z_2 \in \rho\left( H \right)$ be such that $\abs{z_1 - z_2} \leq \norm{R_{H}\left( z_2 \right)}^{-1}$ and let $\psi \in \mathcal{D}_{H}^{\left( z_1 \right)}$ be arbitrary. There exists a unique $\varphi_{z_1} \in \mathcal{H}$ such that $\psi = R_{H}\left( z_1 \right) \varphi_{z_1}$. However, the Neumann expansion formula gives
    \begin{equation*}
        \psi = R_{H}\left( z_1 \right) \varphi_{z_1} = R_{H}\left( z_2 \right) \left[ \underset{n \in \mathbb{N}_0}{\sum} \, \left( z_1 - z_2 \right)^n R_{H}\left( z_2 \right)^n \varphi_{z_1} \right] \equiv R_{H}\left( z_2 \right) \varphi_{z_2},
    \end{equation*}
    \noindent i.e. $\psi \in \mathcal{D}_{H}^{\left( z_2 \right)}$, hence $\mathcal{D}_{H}^{\left( z_1 \right)} \subseteq \mathcal{D}_{H}^{\left( z_2 \right)}$. The roles of $z_1,z_2$ are nevertheless switchable, therefore $\mathcal{D}_{H}^{\left( z_1 \right)} = \mathcal{D}_{H}^{\left( z_2 \right)}$, proving the independence of $\mathcal{D}_{H}$ on $z \in \rho\left(H\right)$. The result then follows by arbitrarily fixing $z \in \rho\left( H \right) \cap \rho\left( H_0 \right)$ and by observing that, given $\psi \in \mathcal{D}_{H}$ generic, $\varphi = R_{H_0}\left( z \right) \left( H - z \right) \psi \in \mathcal{D}_{H_0}$.
\end{proof}

%%%%%%%%%%%%%%%%%%%%%%%%%%%%%%%%%%%%%%%%%%%%%%%%%%%%%%%%%%%%%%%%%%%%%%%%%%%%%%%%%%%%%%%%%%%%%%%%%%%%%%%%%%%%%%%%%%%%%%%%%%%%%%

\newpage

\end{document}